\tikzstyle{guide}=[]
\tikzstyle{operator}=[draw, thick,minimum width=1cm,minimum height=.5cm]
\tikzstyle{wire}=[->]
\tikzstyle{junction}=[fill,circle,inner sep=1pt]
\tikzstyle{scaleboxinv}=[draw, thick, rounded corners, transparent]
\tikzstyle{scalebox}=[scaleboxinv, opaque]
\tikzstyle{focused}=[color=red,text=red]
\newcommand{\eqcounter}{\stepcounter{equation}$(\arabic{equation})$}
\newcolumntype{C}{>{\makebox[1em][r]{\eqcounter}}l}
\newcommand{\inferrulefactor}{.7}
\setlist[description]{font=\normalfont\itshape\textbullet\space}
\ifcurrfile{\themainfile}{./generalized-modality-submit.tex}{
  \togglefalse{fullversion}
}{
  \toggletrue{fullversion}
}
\ifcurrfile{\themainfile}{generalized-modality-submit.tex}{
  \togglefalse{fullversion}
}{
}
\newtheorem{defn}{Definition}
\newtheorem{prop}{Property}
\newtheorem{coro}{Corollary}
\theoremstyle{definition}
\newtheorem{example}{Example}[section]
\newcommand{\IRULE}[3][]{
  \scalebox
  {\inferrulefactor}
  {
    \ifthenelse{\isempty{#1}}
    {\inferrule{#2}{#3}}
    {\inferrule[#1]{#2}{#3}}
  }
}
  \newcommand{\Refapp}[1]{Appendix~\ref{app:#1}}
  \newcommand{\Refapp}[1]{the appendix}
\newcommand{\Reffig}[1]{Figure~\ref{fig:#1}}
\newcommand{\Refsec}[1]{Section~\ref{sec:#1}}
\newcommand{\Refprop}[1]{Property~\ref{prop:#1}}
\newcommand{\Reflemmshort}[1]{\ref{lemm:#1}}
\newcommand{\Reflemm}[1]{Lemma~\Reflemmshort{#1}}
\newcommand{\Refthm}[1]{Theorem~\ref{thm:#1}}
\newcommand{\Refcoro}[1]{Corollary~\ref{coro:#1}}
\newcommand{\Refdef}[1]{Definition~\ref{defn:#1}}
\newcommand{\Refex}[1]{Example~\ref{ex:#1}}
\newcommand{\Refeqshort}[1]{(\ref{eq:#1})}
\newcommand{\Refeq}[1]{Equation~\Refeqshort{#1}}
\newcommand{\Refruleshort}[1]{\rn{#1}}
\newcommand{\Refrule}[1]{rule~\Refruleshort{#1}}
\newcommand{\mt}[1]{\text{#1}}
\newcommand{\mi}[1]{\mathit{#1}}
\newcommand{\mb}[1]{\mathbf{#1}}
\newcommand{\CoreName}{{Core~$\lambda^{\ast}$}}
\newcommand{\Core}{\textsf{\CoreName}}
\newcommand{\defeq}{\triangleq}
\newcommand{\NP}{\omega}
\newcommand{\NI}{\omega+1}
\newcommand{\id}{\mathit{id}}
\newcommand*{\bdiv}{%
  \nonscript\mskip-\medmuskip\mkern5mu%
  \mathbin{\operator@font div}\penalty900\mkern5mu%
  \nonscript\mskip-\medmuskip
}
\newcommand{\dom}[1]{\mathrm{dom}({#1})}
\newcommand{\GROUND}{\mathit{G}}
\newcommand{\SCAL}{\mathit{S}}
\newcommand{\Er}[1]{\ensuremath{\mathbf{U}(#1)}}
\newcommand{\Refines}[2]{\ensuremath{{#1} \sqsubset {#2}}}
\newcommand{\catname}[1]{\mathscr{#1}}
\newcommand{\catC}{\catname{C}}
\newcommand{\catD}{\catname{D}}
\newcommand{\Yo}{\mathbf{y}}
\newcommand{\Psh}[1]{\widehat{#1}}
\newcommand{\TT}{\ensuremath{\mathcal{S}}}
\newcommand{\catop}[1]{#1^{\mi{op}}}
\newcommand{\catpo}{\mathbf{P}}
\def\slashedarrowfill@#1#2#3#4#5{%
  $\m@th\thickmuskip0mu\medmuskip\thickmuskip\thinmuskip\thickmuskip
\relax#5#1\mkern-7mu%
\cleaders\hbox{$#5\mkern-2mu#2\mkern-2mu$}\hfill
\mathclap{#3}\mathclap{#2}%
\cleaders\hbox{$#5\mkern-2mu#2\mkern-2mu$}\hfill
\mkern-7mu#4$%
}
\def\rightslashedarrowfill@{%
  \slashedarrowfill@\relbar\relbar\mapstochar\rightarrow}
\newcommand\xslashedrightarrow[2][]{%
  \ext@arrow 0055{\rightslashedarrowfill@}{#1}{#2}}
\newcommand{\catrel}{\xslashedrightarrow{}}
\newcommand{\lin}[1]{\overline{p}}
\newcommand{\Set}{\mathbf{Set}}
\newcommand{\Bool}{\mathbf{Bool}}
\newcommand{\One}{\mathbf{1}}
\newcommand{\iso}{\cong}
\newcommand{\oCLO}[2]{\mu_{X,Y}}
\newcommand{\WARP}{\mathcal{W}}
\newcommand{\EFF}{\mathcal{E}}
\newcommand{\PER}{\mathcal{P}}
\newcommand{\lLENGTH}[1]{\left\lvert#1\right\rvert}
\newcommand{\lWEIGHT}[1]{\left\lVert#1\right\rVert}
\newcommand{\wID}{\underline{\mi{id}}}
\newcommand{\wZERO}{\underline{0}}
\newcommand{\wONE}{\underline{1}}
\newcommand{\wPRED}{\underline{-1}}
\newcommand{\wSUCC}{\underline{+1}}
\newcommand{\wOM}{\underline{\omega}}
\newcommand{\wFIVE}{\underline{5}}
\newcommand{\opON}{\mathop{\ast}}
\newcommand{\colorKW}{black}
\newcommand{\colorTY}{black}
\newcommand{\colorIDT}{black}
\newcommand{\kw}[1]{\ensuremath{\mb{\color{\colorKW} #1}}}
\newcommand{\kwop}[1]{\mathop{\color{\colorKW}#1}}
\newcommand{\tyid}[1]{\mb{\color{\colorTY} #1}}
\newcommand{\tyop}[1]{\mathrel{\color{\colorTY} #1}}
\newcommand{\idt}[1]{\ensuremath{\mathsf{\color{\colorIDT}#1}}}
\newcommand{\pw}[1]{(#1)^{\mbox{\scalebox{0.6}{$\infty$}}}}
\newcommand{\upw}[2]{{#1}\,\pw{#2}}
\newcommand{\symW}{\mathop{\scalebox{1.5}{\raisebox{-0.2ex}{$\ast$}}}}
\newcommand{\symLATER}{\tyop{\raisebox{0pt}{$\blacktriangleright$}}}
\newcommand{\symSOONER}{\tyop{\raisebox{0pt}{$\blacktriangleleft$}}}
\newcommand{\symCONST}{\tyop{\raisebox{0pt}{$\blacksquare$}}}
\newcommand{\symARR}{\tyop{\to}}
\newcommand{\symPROD}{\tyop{\times}}
\newcommand{\symSUM}{\tyop{+}}
\newcommand{\symCONS}{\kwop{::}}
\newcommand{\W}[1]{{\symW}_{#1}}
\newcommand{\tyI}{\ensuremath{\tyid{Int}}}
\newcommand{\tyB}{\ensuremath{\tyid{Bool}}}
\newcommand{\symS}{\tyid{Stream}}
\newcommand{\tyS}[1]{\ensuremath{\symS\;{#1}}}
\newcommand{\tyW}[2]{\ensuremath{\mathop{\W{#1}}{#2}}}
\newcommand{\tyLATER}[1]{\ensuremath{\mathop{\symLATER}{#1}}}
\newcommand{\tySOONER}[1]{\ensuremath{\mathop{\symSOONER}{#1}}}
\newcommand{\tyCONST}[1]{\ensuremath{\mathop{\symCONST}{#1}}}
\newcommand{\tyARR}[2]{\ensuremath{{#1} \symARR {#2}}}
\newcommand{\tyPROD}[2]{\ensuremath{{#1} \symPROD {#2}}}
\newcommand{\tySUM}[2]{\ensuremath{{#1} \symSUM {#2}}}
\newcommand{\ctxempty}{\cdot}
\newcommand{\synFUN}[3]{\kw{fun}\,({#1} : {#2}).{#3}}
\newcommand{\synPROJ}[2]{\kw{proj}_{#1}\,{#2}}
\newcommand{\synINJ}[3]{\kw{inj}_{#1}^{#2}\,{#3}}
\newcommand{\synCASE}[5]
{\kw{case} \, {#1} \, \kw{of}
  \{ \synINJ{1}{}{#2}.{#3} \mid \synINJ{2}{}{#4}.{#5} \}}
\newcommand{\synREC}[3]{\kw{rec}~({#1} : {#2}).{#3}}
\newcommand{\synBY}[2]{\ensuremath{{#1}\;\kw{by}\;{#2}}}
\newcommand{\synLET}[4]
{\ensuremath{\kw{let}\;{#1} : {#2} = {#3}\;\kw{in}\;{#4}}}
\newcommand{\synLLET}[3]{\ensuremath{\kw{let}\;{#1} : {#2} = {#3}\;\kw{in}}}
\newcommand{\synCOEL}[2]{\ensuremath{{#1};{#2}}}
\newcommand{\synCOER}[2]{\ensuremath{{#1};{#2}}}
\newcommand{\synCONS}[2]{\ensuremath{\vC{#1}{#2}}}
\newcommand{\synHEAD}[1]{\ensuremath{\kw{head}~{#1}}}
\newcommand{\synTAIL}[1]{\ensuremath{\kw{tail}~{#1}}}
\newcommand{\coeID}{\ensuremath{\kw{id}}}
\newcommand{\coeS}[1]{\tyS{#1}}
\newcommand{\coeARR}[2]{\tyARR{#1}{#2}}
\newcommand{\coePROD}[2]{\tyPROD{#1}{#2}}
\newcommand{\coeSUM}[2]{\tySUM{#1}{#2}}
\newcommand{\coeW}[2]{\ensuremath{\mathop{\W{#1}}{#2}}}
\newcommand{\coeWRAP}{\kw{wrap}}
\newcommand{\coeUNWRAP}{\kw{unwrap}}
\newcommand{\coeCONCAT}[2]{\ensuremath{\kw{concat}^{{#1},{#2}}}}
\newcommand{\coeDECAT}[2]{\ensuremath{\kw{decat}^{{#1},{#2}}}}
\newcommand{\coeDISTB}[1]{\kw{dist}_{#1}}
\newcommand{\coeFACTB}[1]{\kw{fact}_{#1}}
\newcommand{\coeDISTP}{\coeDISTB{\symPROD}}
\newcommand{\coeFACTP}{\coeFACTB{\symPROD}}
\newcommand{\coeINFL}{\kw{inflate}}
\newcommand{\coeDEL}[2]{\ensuremath{\kw{delay}^{{#1},{#2}}}}
\newcommand{\structMap}[2]{\ensuremath{\Sigma({#1};{#2})}}
\newcommand{\vC}[2]{\ensuremath{{#1} \mathop{::} {#2}}}
\newcommand{\vCLO}[3]{({#1}.{#2})\{#3\}}
\newcommand{\vINJ}[2]{\ensuremath{\kw{inj}_{#1} {#2}}}
\newcommand{\vW}[2]{\kw{w}({#1},{#2})}
\newcommand{\vSTOP}{\kw{stop}}
\newcommand{\vTHUNK}[2]{\kw{box}({#1})\{{#2}\}}
\newcommand{\vTRUNC}[2]{\ensuremath{\lfloor\,{#2}\,\rfloor_{#1}}}
\newcommand{\jugTR}[3]{\ensuremath{\vTRUNC{#2}{#1} \Downarrow {#3}}}
\newcommand{\jugEV}[4]{\ensuremath{{#1};{#2} \Downarrow_{#3} {#4}}}
\newcommand{\jugEVC}[4]{\ensuremath{{#1} [{#2}] \Downarrow_{#3} {#4}}}
\newcommand{\jugITER}[7]
{\ensuremath{{#1};{#2};{#3};{#4} \Uparrow_{#5}^{#6} {#7}}}
\newcommand{\purge}[1]{\mathrm{purge}(#1)}
\newcommand{\reaV}[2]{\ensuremath{\mathcal{V}({#1})_{#2}}}
\newcommand{\reaC}[2]{\ensuremath{\mathcal{C}({#1})_{#2}}}
\newcommand{\reaE}[2]{\ensuremath{\mathcal{E}({#1};{#2})}}
\newcommand{\IMPL}[1]{\ensuremath{\raisebox{.2pt}{$\triangleright_{\hspace{0pt}{#1}}$}}}
\newcommand{\IMPLM}[2]{\IMPL{\,{#1};{#2}}}
\newcommand{\EQIMPL}[4]{\ensuremath{{#1} \vdash {#2} \bowtie {#3} : {#4}}}
\newcommand{\CTXEQ}[4]{\ensuremath{{#1} \vdash {#2} \cong_{\mi{ctx}} {#3} : {#4}}}
\newcommand{\sem}[1]{\ensuremath{\llbracket {#1} \rrbracket}}
\newcommand{\semCURRY}[1]{\ensuremath{\lambda\Big[{#1}\Big]}}
\newcommand{\semFIX}[1]{\mi{fix}_{#1}}
\newcommand{\semEQ}{\ensuremath{\mathrel{\cong}}}
\newcommand{\jugE}[4]{\ensuremath{{#1} \vdash {#2} \semEQ {#3} : {#4}}}
\newcommand{\jugEC}[4]{\ensuremath{{#1} \semEQ {#2} : {#3} \subty {#4}}}
\newcommand{\opsem}[1]{\ensuremath{\llparenthesis {#1} \rrparenthesis}}
\newcommand{\rn}[1]{\textsc{#1}}
\newcommand{\subty}{\mathrel{{<\hspace{-.7ex}\raisebox{.1ex}{\normalfont{:}}}}}
\newcommand{\jugT}[3]{{#1} \vdash {#2} : {#3}}
\newcommand{\jugC}[3]{\ensuremath{{#1} : {#2} \subty {#3}}}
\newcommand{\jugCE}[4]{\ensuremath{({#1},{#2}) : {#3} \equiv {#4}}}
\newcommand{\jugV}[3]{\ensuremath{{#1} : {#2} \mathrel{@} {#3}}}
\newcommand{\wD}{\mathop{\backslash}}
\newcommand{\wDn}{\mathop{\backslash_n}}
\newcommand{\NORM}[1]{\Lbag {#1} \Rbag}
\newcommand{\NORMin}[1]{\Lbag {#1} \Rbag_{\mt{in}}}
\newcommand{\NORMout}[1]{\Lbag {#1} \Rbag_{\mt{out}}}
\newcommand{\Prec}[2]{\ensuremath{\mi{Prec}({#1};{#2})}}
\newcommand{\SUP}[2]{{#1} \sqcup {#2}}
\newcommand{\SUPn}[2]{{#1} \sqcup_n {#2}}
\newcommand{\INF}[2]{{#1} \sqcap {#2}}
\newcommand{\INFn}[2]{{#1} \sqcap_n {#2}}
\newcommand{\GLC}{{$\mathbf{g}\lambda$-calculus}}
\newcommand{\algE}[3]{\ensuremath{{#1} \vdash {#2} \gg {#3}}}
\newcommand{\Restr}[2]{\ensuremath{{#1}|_{#2}}}
\newcommand{\minusW}[2]{\ensuremath{{#1} \ominus {#2}}}
\newcommand{\Elab}[2]{\ensuremath{\mi{Elab}({#1};{#2})}}
\newcommand{\Coe}[2]{\ensuremath{\mi{Coe}({#1};{#2})}}
\title{A Generalized Modality for Recursion}
\author{Adrien Guatto}
\affiliation{University of Bamberg}
\email{adrien@guatto.org}
\keywords{
  Guarded Recursion;
  Functional Programming;
  Streams;
  Type Systems;
  Category Theory;
  Synchronous Programming.}
\begin{document}
\sloppy

\begin{abstract}
  Nakano's~\emph{later} modality allows types to express that the output of a
function does not immediately depend on its input, and thus that computing its
fixpoint is safe.
This idea, guarded recursion, has proved useful in various contexts, from
functional programming with infinite data structures to formulations of
step-indexing internal to type theory.
Categorical models have revealed that the later modality corresponds in essence
to a simple reindexing of the discrete time scale.

Unfortunately, existing guarded type theories suffer from significant
limitations for programming purposes.
These limitations stem from the fact that the later modality is not expressive
enough to capture precise input-output dependencies of functions.
As a consequence, guarded type theories reject many productive definitions.

Combining insights from guarded type theories and synchronous programming
languages, we propose a new modality for guarded recursion.
This modality can apply any well-behaved reindexing of the time scale to a type.
We call such reindexings~\emph{time warps}.
Several modalities from the literature, including later, correspond to fixed
time warps, and thus arise as special cases of ours.

\end{abstract}

\maketitle

\section{Introduction}
\label{sec:introduction}

Consider the following piece of pseudocode.
\[
  \idt{nat} =
  \kw{fix}~\idt{natrec}
  ~\kw{where}~
  \idt{natrec}~\idt{xs} = 0 :: (\idt{map}~(\lambda \idt{x}.\idt{x}+1)~\idt{xs})
\]
This defines~\idt{nat}, the stream of natural numbers, as the fixpoint of a
function~\idt{natrec}.
How does one make sure that this definition is~\emph{productive}, in the sense
that the next element of~\idt{nat} can always be computed in finite time?

Guarded recursion, due to~\citet{Nakano-2000}, provides a type-based answer to
this question.
In type systems such as Nakano's, types capture precedence relationships between
pieces of data, expressed with respect to an implicit discrete time scale.
For example,~\idt{natrec} would receive the type
$
  \idt{natrec} : \tyARR{\tyLATER{\tyS{\tyI}}}{\tyS{\tyI}}.
$
The type~$\tyS{\tyI}$ describes streams which unfold in time at the rate of one
new element per step.
The~\emph{later}~($\symLATER$) modality shifts the type it is applied to one
step into the future; thus,~$\tyLATER{\tyS{\tyI}}$ also unfolds at the rate of
one element per step, but only starts unfolding after the first step.
Hence, the type of~\idt{natrec} expresses that the~\(n\)th element of its output
stream, which is produced at the~\(n\)th step, does not depend on the~\(n\)th
element of its input stream, since the latter arrives at the~\((n+1)\)th step.
This absence of~\emph{instantaneous} input-output dependence guarantees the
productivity of~$\kw{fix}~\idt{natrec}$.
Guarded recursion enforces this uniformly:~fixpoints are restricted to functions
with a type of the form~$\tyARR{\tyLATER{\tau}}{\tau}$.

Nakano's original insight has led to a flurry of
proposals~\cite{KrishnaswamiBenton-2011,
  BirkedalMogelbergSchwinghammerStovring-2012,
  AtkeyMcBride-2013,
  Krishnaswami-2013,
  Mogelberg-2014,
  BizjakBuggeCloustonMogelbergBirkedal-2016,
  CloustonBizjakBuggeBirkedal-2016,
  BirkedalBizjakCloustonBuggeGrathwohlSpittersVezzosi-2016,
  BahrBuggeGrathwohlMogelberg-2017,
  Severi-2017}.
Recent developments have integrated several advances---such as~\emph{clock
  variables}~\cite{AtkeyMcBride-2013} or the~\emph{constant}~($\symCONST$)
modality~\cite{CloustonBizjakBuggeBirkedal-2016}---into expressive languages
capturing many recursive definitions out of reach of more syntactic productivity
checks.
The~\emph{topos of trees}~\cite{BirkedalMogelbergSchwinghammerStovring-2012}
provides an elegant categorical setting where such languages find their natural
home.

Unfortunately, guarded recursion is currently limited by the inability of
existing languages to capture fine-grained dependencies.
Consider the following function, which returns a pair of streams.
\[
  \idt{natposrec} =
  \kw{fun}
  ~(\idt{xs},\idt{ys}).
  (0 :: \idt{ys},
  \idt{map}~(\lambda \idt{x}.\idt{x}+1)~\idt{xs})
\]
Its fixpoint is productive.
This can be seen in the table below, which gives the first iterations
of~$(\idt{nat},\idt{pos}) = \idt{natposrec}(\idt{nat},\idt{pos})$.
\[
  \begin{array}{l|cccccccccccc}
    \idt{nat}
    &
    \bot & 0::\bot & 0::\bot & 0::1::\bot & 0::1::\bot
    & \dots
    \\
    \idt{pos}
    &
    \bot & \bot & 1 :: \bot & 1::\bot & 1::2::\bot
    & \dots
  \end{array}
\]
Each stream grows infinitely often but only by one element every two steps.
The later modality, by itself, cannot capture this growth pattern, and thus this
program cannot be expressed as is in the guarded languages we know of.
Since~\idt{natposrec} is simply~\idt{natrec} modified to expose the result of a
subterm, this shows that existing systems can be overly rigid.
\citet[p.~12]{CloustonBizjakBuggeBirkedal-2016} give other examples hampered by
similar problems.

In our view, the example above does not indicate a problem with guarded
recursion~\emph{per se}, but rather illustrates the need for other temporal
modalities beyond later~(and constant).
Like later and constant, these new modalities would apply temporal
transformations onto types,~\emph{reindexing} them to change how much data is
available at each step.
In the example above, one would use a modality expressing growth at even time
steps, and another for growth at odd time steps.
Moreover, these new modalities should be interrelated, generalizing the known
interactions between later and constant.

\paragraph{Contribution}

In this paper, we propose a theory of temporal modalities subsuming later and
constant.
Rather than studying a fixed number of modalities, we merge all of them into a
single modality~$\symW$ parameterized by well-behaved reindexings of the
discrete time scale.
We call such reindexings~\emph{time warps} and speak of the~\emph{warping
  modality}.
The later and constant modalities correspond to specific time warps, and thus
arise as special cases of~$\symW$.

We build a simply-typed~$\lambda$-calculus,~\Core{}, around the warping
modality.
\Core{} integrates a notion of subtyping which internalizes the mathematical
structure of time warps.
We describe its operational semantics, as well as a denotational semantics in
the topos of trees.
We show that the type-checking problem for~\Core{} terms, while delicate because
of subtyping, is actually decidable.

\section{The Calculus}
\label{sec:calculus}



\subsection{Time Warps}

Let~$\NP$ denote the first infinite ordinal and~$\NI$ denote its successor,
which extends~$\NP$ with a maximal element~$\omega$.
It is technically convenient to see~$0$ as a special vacuous time step, and thus
we assume that~$\NP$ begins at~$1$.
(\citet{CloustonBizjakBuggeBirkedal-2016} follow the same convention.)
However,~$\NI$ still begins with~$0$.
If~$P$ is a preorder, we denote by~$\Psh{P}$ the set of its downward-closed
subsets ordered by inclusion, and write~$\Yo : P \to \Psh{P}$ for the order
embedding sending~$x$ to~$\{ x' \mid x' \le x \}$.
Observe that~$\Psh{\NP}$ is isomorphic to~$\NI$, with the isomorphism sending
the empty subset to~$0$ and the maximal subset~$\omega$ to itself in~$\NI$.
Thus, abusing notation, we write~$\Yo : \NP \to \NI$ for the map
sending positive natural numbers to their image in~$\NI$.

\begin{defn}[Time Warps]
  \label{defn:time-warps}
  A~\emph{time warp} is a cocontinuous~(sup-preserving) function from~$\NI$ to
  itself.
  Equivalently, it is a monotonic function~$p : \NI \to \NI$ such
  that~$p(0) = 0$ and~$p(\omega) = \bigsqcup_{n < \omega} p(n)$.
\end{defn}

We write~$p \le q$ when~$p$ is pointwise smaller than~$q$, that is,
when~$p(n) \le q(n)$ holds for all~$n$.
Given time warps~$p$ and~$q$, we write~$p \opON q$ for~$q \mathop{\circ} p$,
which is cocontinuous.
So is the identity function.
Moreover, function composition is left- and right-monotonic for the pointwise
order.
As a consequence,
\begin{prop}
  Time warps, ordered pointwise and equipped with composition, form a
  partially-ordered monoid, denoted~$\WARP$.
\end{prop}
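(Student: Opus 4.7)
The plan is to verify the three ingredients of a partially-ordered monoid in turn: (i) that time warps form a monoid under composition, (ii) that pointwise comparison gives a partial order on time warps, and (iii) that composition is monotonic in both arguments with respect to that order. Most of the substantive content has already been flagged in the preceding paragraph, so the proof amounts to assembling these observations and verifying closure.

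First, I would check that $\WARP$ is closed under the operations. For the identity: $\id_{\NI}$ is trivially monotonic, sends $0$ to $0$, and commutes with suprema, hence is a time warp. For composition: given time warps $p, q$, I would verify that $p \opON q = q \circ p$ is monotonic (composition of monotonic maps), sends $0$ to $q(p(0)) = q(0) = 0$, and is cocontinuous using $q\bigl(p(\bigsqcup_{n<\omega} n)\bigr) = q\bigl(\bigsqcup_{n<\omega} p(n)\bigr) = \bigsqcup_{n<\omega} q(p(n))$, where the first equality uses cocontinuity of $p$ and the second cocontinuity of $q$ (applied to the directed family $\{p(n)\}_{n<\omega}$ in $\NI$). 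Associativity and the unit laws are immediate from those of function composition.

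Second, pointwise ordering inherits reflexivity, antisymmetry, and transitivity directly from the order on $\NI$, so $(\WARP, \le)$ is a poset. Third, for compatibility of the monoid operation with the order, I would show that if $p \le p'$ and $q \le q'$ then $p \opON q \le p' \opON q'$: pointwise, $(q \circ p)(n) = q(p(n)) \le q'(p(n)) \le q'(p'(n))$, where the first inequality uses $q \le q'$ and the second uses monotonicity of $q'$ together with $p(n) \le p'(n)$.

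No step is particularly hard; the only point requiring mild care is cocontinuity of composition, which relies on the fact that a monotonic map automatically preserves the supremum of a monotonically indexed family, so that $q$ can be pushed through $\bigsqcup_{n<\omega} p(n)$ even though this is in general a supremum over a nontrivial subset of $\NI$.
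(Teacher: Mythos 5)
Your proof is correct and follows the same route as the paper, which simply asserts closure of cocontinuity under composition, cocontinuity of the identity, and left- and right-monotonicity of composition for the pointwise order; you supply the routine verifications of these facts together with the poset and monoid axioms. One caveat: your closing remark that ``a monotonic map automatically preserves the supremum of a monotonically indexed family'' is false --- the monotonic map sending every finite $n$ to $0$ and $\omega$ to $\omega$ does not preserve $\bigsqcup_{n<\omega} n$ --- and indeed this preservation property is precisely the content of cocontinuity beyond monotonicity. The step is nevertheless sound because, as you correctly state in the body of the argument, it is the cocontinuity of $q$ (not mere monotonicity) that lets you push $q$ through $\bigsqcup_{n<\omega} p(n)$ in the case where this supremum equals $\omega$ without being attained; you should simply strike the misleading gloss at the end.
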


The following time warps play a special role in our development.
\begin{mathpar}
  \wID(n) = n
  \and
  \wZERO(n) = 0
  \and
  \wPRED(n) = n-1
  \and
  \wOM(n) = \omega
\end{mathpar}
The definitions above are given for~$0 < n < \omega$ since the values at~$0$
and~$\omega$ follow from cocontinuity.
The time warps~$\wZERO$ and~$\wOM$ are respectively the least and greatest
elements of~$\WARP$.

\subsection{Syntax and Declarative Type System}

\Core{} is a two-level calculus distinguishing between~\emph{implicit terms}
and~\emph{explicit terms}.
Implicit terms correspond to source-level programs.
Explicit terms decorate implicit terms with type coercions.
Coercions act as proof terms for the subtyping
judgment~\citep{CurienGhelli-1990,BreazuTannenCoquandGunterScedrov-1991}.
They offer a convenient alternative to the manipulation of typing derivations in
non-syntax-directed type systems such as ours.

\paragraph{Ground types and scalars}

We assume given a finite set of ground types~$\GROUND$ and a family of pairwise
disjoint sets~$(S_\nu)_{\nu \in \GROUND}$.
The elements of~$S_\nu$ are the scalars~(ground values) of
type~$\nu \in \GROUND$.
We denote by~$s$ the elements
of~$\SCAL \defeq \bigcup_{\nu \in \GROUND} \SCAL_\nu$.

\paragraph{Types}

The types of~\Core{} are those of simply-typed~$\lambda$-calculus, including
products and sums, together with ground types, streams, and our warping
modality~$\W{p}$:
\[
  \tau
  \Coloneqq
  \nu
  \mid
  \tyS{\tau}
  \mid
  \tyARR{\tau}{\tau}
  \mid
  \tyPROD{\tau}{\tau}
  \mid
  \tySUM{\tau}{\tau}
  \mid
  \tyW{p}{\tau}.
\]
Informally,~$\tyW{p}{\tau}$ should be seen as a~``$p$-times'' faster version
of~$\tau$, in the sense of providing~$p$-times more data than~$\tau$ per step,
with the caveat that if~$p$ is less than~$\wID$,~$\tyW{p}{\tau}$ is
actually~``slower'' than~$\tau$.

\paragraph{Typing Contexts}

Typing contexts are lists of bindings~$x_i : \tau_i$ with the~$x_i$ pairwise
distinct.
We use~``$\ctxempty$'' for the empty context and~$\dom{\Gamma}$ for the finite
set of variables present in~$\Gamma$.
We write~$\Gamma(x)$ for the unique~$\tau$ such that~$(x : \tau)$ occurs
in~$\Gamma$, if it exists.

\paragraph{Explicit Terms}

\begin{figure}
  \begin{mathpar}
    \fbox{$\jugT{\Gamma}{e}{\tau}$}

    \IRULE[Var]{
      ~
    }{
      \jugT
      {\Gamma, x : \tau}
      {x}
      {\tau}
    }

    \IRULE[Fun]{
      \jugT{\Gamma, x : \tau_1}{e}{\tau_2}
    }{
      \jugT
      {\Gamma}
      {\synFUN{x}{\tau_1}{e}}
      {\tyARR{\tau_1}{\tau_2}}
    }

    \IRULE[App]{
      \jugT{\Gamma}{e_1}{\tyARR{\tau_1}{\tau_2}}
      \\
      \jugT{\Gamma}{e_2}{\tau_1}
    }{
      \jugT{\Gamma}{e_1~e_2}{\tau_2}
    }

    \IRULE[Pair]{
      \jugT{\Gamma}{e_1}{\tau_1}
      \\
      \jugT{\Gamma}{e_2}{\tau_2}
    }{
      \jugT{\Gamma}{(e_1, e_2)}{\tyPROD{\tau_1}{\tau_2}}
    }

    \IRULE[Proj$_{i \in \{ 1, 2 \}}$]{
      \jugT{\Gamma}{e}{\tyPROD{\tau_1}{\tau_2}}
    }{
      \jugT{\Gamma}{\synPROJ{i}{e}}{\tau_i}
    }

    \IRULE[Inj$_{i\in\{0, 1\}}$]{
      \jugT{\Gamma}{e}{\tau_{1+i}}
    }
    {
      \jugT
      {\Gamma}
      {\synINJ{1+i}{\tau_{2-i}}{e}}
      {\tySUM{\tau_1}{\tau_2}}
    }

    \IRULE[Case]{
      \jugT{\Gamma}{e}{\tySUM{\tau_1}{\tau_2}}
      \\
      \jugT{\Gamma, x_i : \tau_i}{e_i}{\tau} \mbox{ for } i \in \{ 1, 2 \}
    }{
      \jugT{\Gamma}
      {\synCASE{e}{x_1}{e_1}{x_2}{e_2}}
      {\tau}
    }

    \IRULE[Head]
    {
      \jugT
      {\Gamma}
      {e}
      {\tyS{\tau}}
    }
    {
      \jugT
      {\Gamma}
      {\synHEAD{e}}
      {\tau}
    }

    \IRULE[Tail]
    {
      \jugT
      {\Gamma}
      {e}
      {\tyS{\tau}}
    }
    {
      \jugT
      {\Gamma}
      {\synTAIL{e}}
      {\tyW{\wPRED}{\tyS{\tau}}}
    }

    \IRULE[Cons]
    {
      \jugT
      {\Gamma}
      {e_1}
      {\tau}
      \\
      \jugT
      {\Gamma}
      {e_2}
      {\tyW{\wPRED}{\tyS{\tau}}}
    }
    {
      \jugT
      {\Gamma}
      {\synCONS{e_1}{e_2}}
      {\tyS{\tau}}
    }

    \IRULE[Const]
    {
      s \in \SCAL_\nu
    }
    {
      \jugT
      {\Gamma}
      {s}
      {\nu}
    }

    \IRULE[Rec]{
      \jugT{\Gamma, x : \tyW{\wPRED}{\tau}}{e}{\tau}
    }{
      \jugT{\Gamma}{\synREC{x}{\tau}{e}}{\tau}
    }

    \IRULE[Warp]{
      \jugT{\Gamma}{e}{\tau}
    }{
      \jugT
      {\tyW{p}{\Gamma}}
      {\synBY{e}{p}}
      {\tyW{p}{\tau}}
    }

    \IRULE[SubR]
    {
      \jugT{\Gamma}{e}{\tau}
      \\
      \jugC{\alpha}{\tau}{\tau'}
    }
    {
      \jugT{\Gamma}{\synCOER{e}{\alpha}}{\tau'}
    }

    \IRULE[SubL]
    {
      \jugC{\beta}{\Gamma}{\Gamma'}
      \\
      \jugT{\Gamma'}{e}{\tau}
    }
    {
      \jugT{\Gamma}{\synCOEL{\beta}{e}}{\tau}
    }

    \IRULE[Struct]
    {
      \jugT{\Gamma}{e}{\tau}
      \\
      \sigma \in \structMap{\Gamma}{\Gamma'}
    }
    {
      \jugT{\Gamma'}{\sigma[e]}{\tau}
    }
  \end{mathpar}
  \caption{Typing Judgment}
  \label{fig:typing}
\end{figure}

The typing judgment for explicit terms~$e$ of~\Core{} is given
in~\Reffig{typing}.
Every typing rule from~\rn{Var} to~\rn{Case} is a standard one from
simply-typed~$\lambda$-calculus with products and sums.
We describe every other rule in turn, introducing the corresponding term formers
as we go.

The typing rules for stream destructors~(\synHEAD{e},~\synTAIL{e}) and the
stream constructor~(\synCONS{e_1}{e_2}) capture the fact that streams unfold at
the rate of one element per step.
As a consequence, the tail of a stream exists not now but~\emph{later}.
Since the later modality corresponds in our setting to the time warp~$\wPRED$,
the result of~\kw{tail} and the second argument of~$\kw{(::)}$ must be of
type~$\tyW{\wPRED}{\tyS{\tau}}$.

\Core{} terms include scalars from~$\SCAL$.
A scalar~$s$ is assigned the unique ground type~$\nu$ such
that~$s \in \SCAL_\nu$, as specified in rule~\rn{Const}.

Recursive definitions~$\synREC{x}{\tau}{e}$ follow the insight of Nakano: the
self-reference to~$x$ is only available later in the body~$e$, and thus here
receives type~$\tyW{\wPRED}{\tau}$~in rule~\rn{Rec}.

The term~$\synBY{e}{p}$ marks an introduction point for the warping modality.
Intuitively, it runs~$e$ in a local time scale whose relationship to the
surrounding time scale is goverened by the time warp~$p$:~the~$n$th tick of the
external time scale corresponds to the~$p(n)$th tick of the internal one.
Thus, assuming~$e$ has type~$\tau$,~$\synBY{e}{p}$ has type~$\tyW{p}{\tau}$.
This change in the amount of data produced comes at the price of a change in the
amount of data consumed:~the free variables of~$e$ should themselves be under
the~$\W{p}$ modality.
The context~$\tyW{p}{\Gamma}$ denotes~$\Gamma$ with~$\W{p}$ applied to each of
its types.

Explicit terms may include type coercions, applied either covariantly or
contravariantly.
Covariant coercion application~$\synCOER{e}{\alpha}$ applies the type
coercion~$\alpha$ to the result of~$e$.
Contravariant coercion application~$\synCOEL{\beta}{e}$ coerces the free
variables of~$e$ using the context coercion~$\beta$.
We will describe both kinds of coercions in a few paragraphs.


\paragraph{Structure maps}

Rule~\rn{Struct} is the only non-syntax-directed rule in our system.
It performs weakening, contraction, and exchange in a single step, depending on
the chosen~\emph{structure map} between
contexts~\cite{Atkey-2006,CurienFioreMunchMaccagnoni-2016}.
Structure maps~$\sigma \in \structMap{\Gamma}{\Gamma'}$ are functions
from~$\dom{\Gamma}$ to~$\dom{\Gamma'}$ such that~$\Gamma'(\sigma(x)) =
\Gamma(x)$ for all~$x \in \Gamma$.
The application of a structure map~$\sigma$, seen as a variable substitution, to
an explicit term~$e$ is written~$\sigma[e]$.

\paragraph{Type annotations}

Both~$\lambda$-abstractions and injections must contain type annotations.
This technical choice ensures that explicit terms are in Church style, and makes
their typing judgment essentially syntax directed~(up to rule~\rn{Struct}).

\paragraph{Type Coercions}

\begin{figure}
  \begin{mathpar}
    \fbox{$\jugC{\alpha}{\tau}{\tau'}$}

    \IRULE
    {
      ~
    }
    {
      \jugC{\coeID}{\tau}{\tau}
    }

    \IRULE
    {
      \jugC{\alpha_1}{\tau_1}{\tau_2}
      \\
      \jugC{\alpha_2}{\tau_2}{\tau_3}
    }
    {
      \jugC{\alpha_1;\alpha_2}{\tau_1}{\tau_3}
    }

    \IRULE
    {
      \jugC{\alpha}{\tau}{\tau'}
    }
    {
      \jugC
      {\coeS{\alpha}}
      {\tyS{\tau}}
      {\tyS{\tau'}}
    }

    \IRULE
    {
      \jugC{\alpha_1}{\tau_1'}{\tau_1}
      \\
      \jugC{\alpha_2}{\tau_2}{\tau_2'}
    }
    {
      \jugC
      {\coeARR{\alpha_1}{\alpha_2}}
      {\tyARR{\tau_1}{\tau_2}}
      {\tyARR{\tau_1'}{\tau_2'}}
    }

    \IRULE
    {
      \jugC{\alpha_1}{\tau_1}{\tau_1'}
      \\
      \jugC{\alpha_2}{\tau_2}{\tau_2'}
    }
    {
      \jugC
      {\coePROD{\alpha_1}{\alpha_2}}
      {\tyPROD{\tau_1}{\tau_2}}
      {\tyPROD{\tau_1'}{\tau_2'}}
    }

    \IRULE
    {
      \jugC{\alpha_1}{\tau_1}{\tau_1'}
      \\
      \jugC{\alpha_2}{\tau_2}{\tau_2'}
    }
    {
      \jugC
      {\coeSUM{\alpha_1}{\alpha_2}}
      {\tySUM{\tau_1}{\tau_2}}
      {\tySUM{\tau_1'}{\tau_2'}}
    }

    \IRULE
    {
      \jugC{\alpha}{\tau}{\tau'}
    }
    {
      \jugC
      {\coeW{p}{\alpha}}
      {\tyW{p}{\tau}}
      {\tyW{p}{\tau'}}
    }

    \IRULE
    {
      ~
    }
    {
      \jugCE
      {\coeWRAP}
      {\coeUNWRAP}
      {\tau}
      {\tyW{\wID}{\tau}}
    }

    \IRULE
    {
      ~
    }
    {
      \jugCE
      {\coeCONCAT{p}{q}}
      {\coeDECAT{p}{q}}
      {\tyW{p}{\tyW{q}{\tau}}}
      {\tyW{p \opON q}{\tau}}
    }

    \IRULE
    {
      ~
    }
    {
      \jugC
      {\coeINFL}
      {\nu}
      {\tyW{\wOM}{\nu}}
    }

    \IRULE
    {
      ~
    }
    {
      \jugCE
      {\coeDISTP}
      {\coeFACTP}
      {\tyW{p}{(\tyPROD{\tau_1}{\tau_2})}}
      {\tyPROD{\tyW{p}{\tau_1}}{\tyW{p}{\tau_2}}}
    }

    \IRULE
    {
      q \le p
    }
    {
      \jugC
      {\coeDEL{p}{q}}
      {\tyW{p}{\tau}}
      {\tyW{q}{\tau}}
    }
  \end{mathpar}
  \caption{Subtyping Judgment}
  \label{fig:subtyping}
\end{figure}

A coercion~$\jugC{\alpha}{\tau}{\tau'}$ performs a type conversion, transforming
input values of type~$\tau$ into output values of type~$\tau'$.
The rules for this syntax-directed subtyping judgment are given
in~\Reffig{subtyping}, where~$\jugCE{\alpha}{\alpha'}{\tau}{\tau'}$ is a
shorthand for~$\jugC{\alpha}{\tau}{\tau'}$ and~$\jugC{\alpha'}{\tau'}{\tau}$.
They fit into three groups.

The first group contains the identity coercion and sequential coercion
composition.
The identity coercion~$\coeID$ does nothing.
Two coercions~$\alpha_1$ and~$\alpha_2$ can be composed to
obtain~$\alpha_1;\alpha_2$, assuming the output type of~$\alpha_1$ matches the
input type of~$\alpha_2$.

The second group contains one coercion former for each type former.
Such coercions allow us to coerce values in depth.
Their typing rules express that subtyping is a congruence for all type formers
in the language.

The third group is where the interest of our subtyping relationship lies.
It contains coercions reflecting the mathematical structure of the warping
modality as subtyping axioms, including its interaction with other type formers.
This group can be divided again, now between several invertible coercions and a
non-invertible one.
\begin{itemize}
\item
  Coercions~$\coeWRAP$,~$\coeUNWRAP$,~$\coeCONCAT{p}{q}$, and~$\coeDECAT{p}{q}$
  reflect the monoidal structure of time warps at the type level.
  The coercions~$\coeDISTP$ and~$\coeFACTP$ ensure that the warping modality
  commutes with products.
  The coercion~$\coeINFL$ expresses that ground types stay constant through
  time, i.e.,~$\nu \subty \tyW{\wOM}{\nu}$.

\item
  The remaining coercion,~$\jugC{\coeDEL{p}{q}}{\tyW{p}{\tau}}{\tyW{q}{\tau}}$,
  reflects the ordering of time warps.
  Intuitively, it pushes data further into the future, and must thus ensure
  that~$p(n) \ge q(n)$ at any step~$n$.
  Its action cannot be undone when~$p \ne q$; for example, the
  coercion~$\jugC{\kw{next} \defeq
    \coeWRAP;\coeDEL{\wID}{\wPRED}}{\tau}{\tyW{\wPRED}{\tau}}$ has no inverse.
  (This coercion appears an an operator in some guarded type
  theories~\cite[\dots]{BirkedalMogelbergSchwinghammerStovring-2012,CloustonBizjakBuggeBirkedal-2016}.)
\end{itemize}
Note that we did not need to introduce an explicit inverse for~$\coeINFL$ since
one is already derivable
as~$\jugC{\coeDEL{\wOM}{\wID};\coeUNWRAP}{\tyW{\wOM}{\nu}}{\nu}$.

\paragraph{Context Coercions}

A context coercion~$\beta$ is a finite map from variables to coercions.
We have~$\jugC{\beta}{\Gamma}{\Gamma'}$
iff~$\dom{\Gamma} = \dom{\Gamma'} \subseteq \dom{\beta}$, and for every
variable~$x \in \dom{\Gamma}$,~$\beta(x)$ coerces~$\Gamma(x)$ into~$\Gamma'(x)$.
Context subtyping preserves the order of bindings.
This definition implies that rule~\rn{SubL} in~\Reffig{typing} can only be
applied when~$\beta(x)$ is defined for every free variable~$x$ of~$e$.

\paragraph{Implicit Terms and Erasure}

We define implicit terms, denoted~$t$, as explicit terms that do not contain any
coercions.
Each explicit term~$e$ thus corresponds to a unique implicit term obtained by
removing every coercion present in~$e$.
We adopt the notations of~\citet{MelliesZeilberger-2015}, and write~$\Er{e}$ for
this implicit term.
We also write that~$e$~\emph{refines}~$t$, noted~$\Refines{e}{t}$,
when~$\Er{e} = t$.

An implicit term is well-typed simply if it has a well-typed refiner, in the
sense expressed by the definition below.
\begin{equation}
  \mathrm{\fbox{$\jugT{\Gamma}{t}{\tau}$}}
  ~\Leftrightarrow~
  \exists e \sqsubset t, \jugT{\Gamma}{e}{\tau}
\end{equation}

\subsection{Type-Checking Explicit Terms}

The language of coercions and explicit terms enjoys uniqueness of typing.
The following result reflects this fact for coercions.

\begin{prop}[Uniqueness of Types for Coercions]
  \label{prop:uniqueness-of-types-coercions}
  For any coercion~$\alpha$, for any type~$\tau$~(resp.~$\tau'$) there is at
  most one type~$\tau'$~(resp.~$\tau$) such that~$\jugC{\alpha}{\tau}{\tau'}$
  holds.
\end{prop}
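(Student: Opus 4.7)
The plan is to proceed by structural induction on the coercion~$\alpha$, proving both directions of the statement simultaneously. That is, for each $\alpha$, we show both that (i)~the input type~$\tau$ uniquely determines the output type~$\tau'$ such that $\jugC{\alpha}{\tau}{\tau'}$, and (ii)~symmetrically, that $\tau'$ uniquely determines $\tau$. Packaging the two directions together is needed because contravariant positions in the congruence cases swap the two inductive obligations.

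For each axiom coercion (identity, $\coeWRAP$, $\coeUNWRAP$, $\coeCONCAT{p}{q}$, $\coeDECAT{p}{q}$, $\coeINFL$, $\coeDISTP$, $\coeFACTP$, $\coeDEL{p}{q}$), both the input and the output type have a rigid syntactic shape built from the type parameters of the coercion former. Thus, given one side, one either fails to match the required shape (in which case no output exists and uniqueness is vacuous) or one extracts uniquely the missing syntactic components by pattern matching on the type. Concretely, in $\coeCONCAT{p}{q}$, the parameters $p$ and $q$ are part of the coercion itself, so from input $\tyW{p}{\tyW{q}{\tau}}$ one reads off $\tau$, and conversely from output $\tyW{p \opON q}{\tau}$ one extracts $\tau$ by stripping the outer warp. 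For $\coeDISTP$ and $\coeFACTP$, the warp parameter $p$ is likewise fixed by the coercion, so matching the product/warp layers unambiguously recovers the remaining components.

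For the congruence coercions, say $\coeARR{\alpha_1}{\alpha_2}$, the input type must be an arrow $\tyARR{\tau_1}{\tau_2}$; the components $\tau_1, \tau_2$ are uniquely determined, and then the induction hypothesis applied to $\alpha_1$ in the \emph{output-to-input} direction yields a unique $\tau_1'$ from $\tau_1$, while the induction hypothesis applied to $\alpha_2$ in the \emph{input-to-output} direction yields a unique $\tau_2'$ from $\tau_2$. The converse direction is entirely symmetric. The other congruence formers ($\coeS{-}$, $\coePROD{-}{-}$, $\coeSUM{-}{-}$, $\coeW{p}{-}$) are strictly simpler since all positions are covariant. For composition $\alpha_1;\alpha_2$ we chain two applications of the induction hypothesis: given input $\tau$, applying the hypothesis to $\alpha_1$ produces a unique intermediate type $\tau_2$, and a second application to $\alpha_2$ then yields a unique output $\tau_3$; the reverse direction is analogous.

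No step poses a real obstacle: the proof is mechanical once the induction is set up with the two directions combined. The only mildly delicate point is to verify, in each axiom case, that the parameters of the coercion former (the warps $p$, $q$, the ground type $\nu$, etc.) are enough to reconstruct the missing type components from either side, and that the contravariant position in $\coeARR{\alpha_1}{\alpha_2}$ is properly accounted for by invoking the right direction of the induction hypothesis.
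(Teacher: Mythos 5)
Your proof is correct, and since the paper states this property without giving any proof, your simultaneous bidirectional structural induction on~$\alpha$ (needed precisely because the contravariant position of~$\coeARR{\alpha_1}{\alpha_2}$ swaps the two obligations) is exactly the routine argument the paper leaves implicit. One minor factual slip: the warp~$p$ in~$\coeDISTP$ and~$\coeFACTP$ is \emph{not} a parameter of the coercion syntax (only the product symbol is), but this is harmless for your argument, since~$p$ is still recovered uniquely by matching the outermost warp constructors of the type on whichever side is given.
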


The analogous result for explicit terms is more delicate since rule~\rn{Struct}
is not syntax-directed.
Furthermore, the rule is not admissible: its use is sometimes required in order
to be able to use another rule.
One can prove that there are only three cases where~\Refrule{Struct} is
needed, establishing the following result.

\begin{prop}
  \label{prop:canonical-derivations}
  Any well-typed explicit term~$\jugT{\Gamma}{e}{\tau}$ has a canonical
  derivation where~\Refrule{Struct} is only used exactly once before every
  instance of rules~\rn{Var},~\rn{Warp}, and~\rn{SubL}.
\end{prop}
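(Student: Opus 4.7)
The plan is to rewrite an arbitrary derivation into canonical form via two local operations: (i) collapsing two adjacent uses of \rn{Struct} by composing their structure maps, which is sound since $\sigma' \circ \sigma \in \structMap{\Gamma}{\Gamma''}$ whenever $\sigma \in \structMap{\Gamma}{\Gamma'}$ and $\sigma' \in \structMap{\Gamma'}{\Gamma''}$; and (ii) permuting a use of \rn{Struct} upward past every typing rule other than \rn{Var}, \rn{Warp}, and \rn{SubL}. After saturating these rewrites, we insert an identity \rn{Struct} (with $\sigma = \id$) immediately before each \rn{Var}, \rn{Warp}, and \rn{SubL} that does not already carry one, yielding the required shape. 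Termination of the rewriting is ensured by a well-founded measure that strictly decreases when a \rn{Struct} is pushed upward, for instance the sum over all \rn{Struct} nodes of the depth of their subderivation.

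The heart of the argument is the permutation lemma. For every rule $R$ other than the three blocking ones, I would show that a derivation whose last two steps are $R$ followed by \rn{Struct} with $\sigma \in \structMap{\Gamma}{\Gamma'}$ can be rewritten into one that first applies \rn{Struct} to each premise of $R$ and then applies $R$. This works uniformly because in all non-blocking rules the premise contexts are either identical to the conclusion context or obtained by extending it with a freshly bound variable (as in \rn{Fun}, \rn{Case}, and \rn{Rec}). In the latter case $\sigma$ extends canonically to a structure map on the extended contexts by acting as the identity on the new binding. Each typing rule still requires a small case check, but no genuine difficulty arises.

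The three blocking rules are genuine obstacles. \rn{Var} is a leaf rule, so there is no premise into which \rn{Struct} could be pushed. For \rn{Warp} the conclusion context has the rigid form $\tyW{p}{\Gamma}$, while $\sigma$'s codomain need not---it may, for example, witness a weakening introducing bindings outside the scope of $\tyW{p}{(-)}$, which cannot be reflected as a structure map on the $\Gamma$ in the premise. For \rn{SubL} the premise context is transformed by a context coercion $\beta$, not a structure map, and these two mechanisms do not commute syntactically.

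The main delicate point to verify during the induction is the canonical extension of $\sigma$ in the variable-binding rules: one must check that freshness of the bound variable can be arranged (renaming it if necessary), and that the extended structure map genuinely maps the premise context of the original derivation to the premise context of the rewritten one, preserving all type labels. Once this bookkeeping is handled, the iterated application of the two rewrites, followed by the identity insertions, produces a derivation in exactly the prescribed canonical form.
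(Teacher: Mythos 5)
Your route is genuinely different from the paper's. The paper does not permute \rn{Struct} through an existing derivation: it introduces a syntax-directed reformulation of the typing judgment (written~$\algE{\Gamma}{e}{\tau}$ in the appendix) in which the context manipulation is baked into the three problematic rules---\rn{Var} just looks up~$\Gamma(x)$, \rn{Warp} recurses on the largest subcontext~$\minusW{\Gamma}{p}$ of the form~$\tyW{p}{(\cdot)}$, and \rn{SubL} restricts~$\Gamma$ to~$\dom{\beta}$---and then proves soundness (each algorithmic derivation translates back into a declarative one that uses \rn{Struct} exactly once, with an inclusion map, below each of these three rules) together with completeness (every declarative derivation is matched by an algorithmic one, by induction on the derivation). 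This buys two things your rewriting does not: termination is free, since the induction is on the given derivation rather than on a rewrite sequence; and the canonical derivation is pinned down uniquely (specific minimal contexts, specific inclusion maps), which the paper later exploits when proving coherence of the denotational interpretation. Your procedure only produces \emph{some} derivation of the prescribed shape, with whatever structure maps happen to accumulate; that suffices for the literal statement but not for the use the paper makes of it.

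Within your approach, the permutation lemma and the analysis of why \rn{Var}, \rn{Warp}, and \rn{SubL} block are correct (including the need to rename bound variables so that the extension of~$\sigma$ by~$x \mapsto x$ lands in a well-formed context, and the fact that~$\sigma[\cdot]$ commutes with the term formers). The one step that fails as written is the termination measure. Take a \rn{Struct} node whose subderivation ends in a rule~$R$ with~$k \ge 2$ premises of depths~$h_1,\dotsc,h_k$: its contribution to your sum is~$1 + \max_i h_i$, and after the push it is replaced by~$k$ nodes contributing~$\sum_i h_i$, which is strictly larger already for~$k = 2$ and~$h_1 = h_2 = 5$. The sum of subderivation sizes has the dual problem for \rn{Struct} nodes sitting \emph{below} the rewrite, since the tree grows by~$k-1$ nodes. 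The standard fix is to compare the multiset of heights (or node counts) above each \rn{Struct} occurrence under the multiset extension of~$<$: one element is replaced by~$k$ strictly smaller ones, so the rewriting terminates. Alternatively, abandon the rewriting altogether and prove the statement directly by induction on the given derivation, carrying a pending structure map downward---which is, in effect, what the paper's completeness argument does.
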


This characterization provides almost immediately an abstract deterministic
algorithm to type-check explicit terms; see~\Refapp{proofs} for details.
Its correctness implies the expected result.
\begin{prop}[Uniqueness of Types for Explicit Terms]
  \label{prop:uniqueness-of-types-explicit-terms}
  For any fixed~$\Gamma$ and~$e$, there is at most one type~$\tau$ such
  that~$\jugT{\Gamma}{e}{\tau}$ holds.
\end{prop}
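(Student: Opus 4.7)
The plan is to derive Property~3 by structural induction on the explicit term~$e$, using Property~2 to reduce the analysis to canonical derivations and Property~1 to control coercions. Property~2 effectively makes the typing rules syntax-directed: the only freedom comes from the mandatory single \rn{Struct} application preceding each \rn{Var}, \rn{Warp}, or \rn{SubL} step, so for every other term former the root rule is forced by the head constructor of~$e$.

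For term formers with a syntax-directed rule (\rn{Fun}, \rn{App}, \rn{Pair}, \rn{Proj}, \rn{Inj}, \rn{Case}, \rn{Head}, \rn{Tail}, \rn{Cons}, \rn{Const}, \rn{Rec}, \rn{SubR}), the output type is a computable function of the types of the immediate subterms and of the annotations carried by~$e$; uniqueness follows immediately from the IH. The \rn{SubR} case additionally invokes Property~1: given the unique input type~$\tau$ of~$e$, the coercion~$\alpha$ pins down a unique output type~$\tau'$. For \rn{Const}, the type is fixed by the partition of $\SCAL$ into the $\SCAL_\nu$. Note that we do not need to check that derivations actually exist for all combinations of subtypes (e.g.\ that in \rn{App} the codomain of~$e_1$'s type matches the type of~$e_2$); we only use that, \emph{if} a type exists, it is forced.

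The three non-syntax-directed cases are handled using the canonical form. For \rn{Var} on $x$, the canonical derivation ends with \rn{Struct} from a base judgment $\jugT{\Gamma_0, y : \tau}{y}{\tau}$ via some $\sigma$ with $\sigma(y) = x$; preservation of types under \rn{Struct} then forces $\tau = \Gamma(x)$. For \rn{Warp} on $\synBY{e'}{p}$, the canonical derivation factors through an inner judgment $\jugT{\tyW{p}{\Gamma_1}}{\synBY{e''}{p}}{\tyW{p}{\tau_1}}$ with $\sigma[e''] = e'$; the plan is to argue that although $\sigma$, $e''$ and $\Gamma_1$ are not individually determined, the types $\Gamma_1$ assigns to the free variables of $e''$ are fixed (modulo~$\sigma$) by $\Gamma$ and~$p$, so that applying the IH to $e''$ forces $\tau_1$, and hence $\tyW{p}{\tau_1}$, to be unique. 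The \rn{SubL} case is analogous, combining this bookkeeping with Property~1 applied to the context coercion.

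The main obstacle is precisely this bookkeeping for the \rn{Struct}-wrapped \rn{Warp} and \rn{SubL} cases: one must check that the existential choice of structure map and intermediate context adds no real ambiguity to the final type. The cleanest way to discharge it, and the one foreshadowed by the paper's pointer to \Refapp{proofs}, is to read off from Property~2 a deterministic bidirectional type-checking algorithm that computes at most one type per input $(\Gamma, e)$; soundness of this algorithm with respect to the typing judgment then yields Property~3 directly.
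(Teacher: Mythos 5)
Your final plan is essentially the paper's proof. The appendix defines a syntax-directed algorithmic judgment $\algE{\Gamma}{e}{\tau}$ whose only departures from the declarative rules are exactly the canonical context choices you anticipate: $\Gamma(x)$ for variables, the largest sub-context $\minusW{\Gamma}{p}$ of the form $\tyW{p}{(-)}$ for \rn{Warp}, and $\Restr{\Gamma}{\dom{\beta}}$ for \rn{SubL}. Uniqueness is then obtained as \Refprop{determinism-of-explicit-type-checking} (immediate, since the judgment is syntax-directed) combined with \Refprop{completeness-of-explicit-type-checking}.

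The one genuine slip is in your closing sentence: \emph{soundness} of the deterministic algorithm (every type it computes is a valid declarative type) does not yield the result, because it leaves open the possibility of declarative typings that the algorithm never produces---which is precisely the worry created by the free choice of structure map and intermediate context in \rn{Struct}, \rn{Warp}, and \rn{SubL}. What is needed is the converse direction, completeness: every derivation of $\jugT{\Gamma}{e}{\tau}$ gives $\algE{\Gamma}{e}{\tau}$, so that two declarative types for the same $(\Gamma,e)$ collapse by determinism. This completeness direction is also where the ``bookkeeping'' you defer actually lives: one shows by induction on the declarative derivation that any context admissible in \rn{Warp} or \rn{SubL} is subsumed by the canonical one, and that algorithmic typing commutes with structure maps (so the variable renaming performed by \rn{Struct} is harmless); in the paper this is carried out inside the proof of \Refprop{explicit-type-checking-coherence}. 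Your first, direct induction over canonical derivations could be made to work, but it needs the same renaming-stability lemma, since the inner term and context in your \rn{Warp} case are only determined up to a possibly non-injective structure map.
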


Type-checking an~\emph{implicit} term~$t$ is a much harder problem, since it
involves finding a well-typed refiner~$\Refines{e}{t}$.
Moreover, this refiner should be canonical in a certain sense.
We study it in~\Refsec{algorithmic-type-checking}.

\subsection{Examples}
\label{sec:examples}


We finish this section with a few examples illustrating the type system, given
mostly as implicit terms.
We also assume that ground values and types include the integers.

\begin{example}[Constant Stream]
  \label{ex:constant-stream}
  This prototypical example defines a constant stream of zeroes.
  \begin{mathpar}
    \synREC
    {\idt{zeroes}}
    {\tyS{\tyI}}
    {(\synCONS{0}{\idt{zeroes}})}
    :
    \tyS{\tyI}
  \end{mathpar}
  This works as in other guarded-recursive languages:~the stream
  constructor~$(\symCONS)$ expects its second argument to have a type of the
  form~$\tyW{\wPRED}{\tyS{\tau}}$~($\tyLATER{\tyS{\tau}}$ in other guarded type
  theories), which is exactly the one provided by guarded recursion.
\end{example}

\begin{example}[Non-productive Stream]
  \label{ex:nonproductive-stream}
  The non-productive definition below does not define a stream, which by
  definition should hold an infinite number of values.
  \begin{mathpar}
    \synREC
    {\idt{nothing}}
    {\tyS{\tyI}}
    {\idt{nothing}}
    {\color{red}\mbox{ -- ill-typed!}}
  \end{mathpar}
  This definition is ill-typed in~\Core{} since, in the absence of a
  coercion~$\tyW{\wPRED}{\tyS{\tyI}} \subty \tyS{\tyI}$, we cannot
  apply~\Refrule{Rec}.
\end{example}

\begin{example}[Silent Stream]
  \label{ex:silent-stream}
  While~\Refex{nonproductive-stream} does not define a real stream holding an
  infinite number of values, it could be said to define a~``silent'' stream
  holding no value at all.
  Such streams are captured in~\Core{} as inhabitants
  of~$\tyW{\wZERO}{\tyS{\tau}}$, e.g.,
  \begin{mathpar}
    \synREC
    {\idt{nothing}}
    {\tyW{\wZERO}{\tyS{\tyI}}}
    {\idt{nothing}}
    :
    \tyW{\wZERO}{\tyS{\tyI}}.
  \end{mathpar}
  This definition is well-typed since the explicit
  term~$\synREC{\idt{nothing}}{\tyW{\wZERO}{\tyS{\tyI}}}{(\idt{nothing};\coeCONCAT{\wPRED}{\wZERO})}$
  refines it and has the expected type.
  Here,~$\coeCONCAT{\wPRED}{\wZERO}$ coerces values of
  type~$\tyW{\wPRED}{\tyW{\wZERO}{\tyS{\tyI}}}$ to values of
  type~$\tyW{\wPRED \opON \wZERO}{\tyS{\tyI}} =
  \tyW{\wZERO}{\tyS{\tyI}}$.
\end{example}

\Refex{silent-stream} illustrates how~\Core{} shifts the focus away from
productivity, seen as a yes-or-no question, to a more quantitative aspect of
program execution:~the amount of data produced.
Other warps make it possible to capture other forms of partial definitions,
beyond completely silent streams.
For example, writing~$\wFIVE$ for the warp sending any finite~$n$ to~$5$, the
type~$\tyW{\wFIVE}{\tyS{\tyI}}$ describes streams containing only~$5$ elements,
all of them available starting at the first time step.
The type system of~\Core{} ensures that the non-existent elements in such
partial streams can never be accessed; in particular, in a well-typed program
deconstructing a silent stream~\idt{xs}~(via~\kw{head} or~\kw{tail}) can only
happen under a context of the form~$C_1[\synBY{C_2[-]}{\wZERO}]$.
We will see in~\Refsec{operational-semantics} that the expression~$e$
in~$\synBY{e}{\wZERO}$ is never actually executed.

\begin{example}
  \label{ex:map}
  The example below implements the classic higher-order function~\idt{map} on
  streams, specialized for streams of integers since~\Core{} is monomorphic.
  As usual,~$\synLET{x}{\tau}{t_1}{t_2}$ is shorthand
  for~$(\synFUN{x}{\tau}{t_2})~t_1$.
  We assume that function application and~\kw{by} bind tighter than stream
  construction~($\symCONS$).
  \[
    \begin{array}{@{}l@{}}
      \kw{rec}~(\idt{map} :
      (\tyARR{\tyI}{\tyI}) \symARR \tyARR{\tyS{\tyI}}{\tyS{\tyI}}).
      \\
      \kw{fun}~(\idt{f} : \tyARR{\tyI}{\tyI})~(\idt{xs} : \tyS{\tyI}).
      \\
      \synLET
      {\idt{ys}}
      {\tyW{\wPRED}{\tyS{\tyI}}}
      {\synTAIL{\idt{xs}}}
      {
        \synCONS
        {
          \idt{f}\,(\synHEAD{\idt{xs}})
        }
        {
          \synBY
          {
            (\idt{map}\,\idt{f}\,\idt{ys})
          }
          {\wPRED}
        }
      }
    \end{array}
  \]
  Here, using~\kw{by} allows us to temporarily remove the~$\W{\wPRED}$ modality
  from the type of~\idt{map} in order to perform the recursive call.
  Rule~\rn{Warp} requires~\idt{map},~\idt{ys}, and~\idt{f} to have types of the
  form~$\tyW{\wPRED}{\tau}$.
  This is already the case for~\idt{map} and~\idt{ys}, and can be achieved
  for~\idt{f} using rule~\rn{SubL} with a context coercion sending~\idt{f} to
  $\jugC{\kw{next}}{\tyARR{\tyI}{\tyI}}{\tyW{\wPRED}{(\tyARR{\tyI}{\tyI})}}$
  and the other variables to~$\coeID$.
\end{example}

\begin{example}
  \label{ex:constant}
  The definition given in~\Refex{map}, since it is closed, can be put inside a
  local time scale driven by~$\wOM$.
  It thus receives the
  type~$\tyW{\wOM}{((\tyARR{\tyI}{\tyI}) \symARR
    \tyARR{\tyS{\tyI}}{\tyS{\tyI}})}$.
  Such a type is in effect not subject to the context restriction
  in~\Refrule{Warp}, since for any~$p$ we
  have~$\tyW{\wOM}{\tau} \subty \tyW{p}{\tyW{\wOM}{\tau}}$.
  Thus,~$\W{\wOM}$ corresponds to the~\emph{constant}~($\symCONST$) modality
  used in some guarded type theories~\cite{CloustonBizjakBuggeBirkedal-2016}.
\end{example}

In the remaining examples, we represent certain time warps as running sums of
ultimately periodic sequences of numbers, following ideas from
n-synchrony~\cite{CohenDurantonEisenbeisPagettiPlateau-POPL-2006,Plateau-2010}.
For example, the sequence~$\pw{1\,0}$ represents the time warp sending~$2n$
to~$n$ and~$2n+1$ to~$n+1$ for any finite positive~$n$, while the
sequence~$\pw{0\,1}$ represents the time warp sending both~$2n$ and~$2n+1$
to~$n$.
All the time warps we have used up to now can be represented in this
way:~$\wID$,~$\wZERO$,~$\wPRED$, and~$\wOM$ are represented
by~$\pw{1}$,~$\pw{0}$,~$\upw{0}{1}$, and~$\upw{\omega}{0}$ respectively.

\begin{example}[Mutual Recursion]
  As announced in~\Refsec{introduction}, the streams of natural and positive
  numbers can be defined in a guarded yet mutually-recursive way in~\Core{}.
  This is achieved by reflecting the rate at which each stream grows during a
  fixpoint computation within its type.
  (In the definition below, we represent the time warp~$\wPRED$ by the
  sequence~$\upw{0}{1}$ for consistency; in particular, the types
  of~$(\symCONS)$
  becomes~$\tau \symARR \tyW{\upw{0}{1}}{\tyS{\tau}} \symARR \tyS{\tau}$.)
  \[
  \begin{array}{@{}l}
    \kw{rec}~
    \idt{natpos} :
    \tyPROD
    {\tyW{\pw{1\,0}}{\tyS{\tyI}}}
    {\tyW{\pw{0\,1}}{\tyS{\tyI}}}.
    \\
    \synLLET
    {\idt{nat}}
    {\tyW{\upw{0}{1}}{\tyW{\pw{1\,0}}{\tyS{\tyI}}}}
    {\synPROJ{1}{\idt{natpos}}}
    \\
    \synLLET
    {\idt{pos}}
    {\tyW{\upw{0}{1}}{\tyW{\pw{0\,1}}{\tyS{\tyI}}}}
    {\synPROJ{2}{\idt{natpos}}}
    \\
    (\synBY
    {
      (
      \synCONS
      {0}
      {\idt{pos}}
      )
    }{\pw{1\,0}},
    \synBY
    {
      (
      \idt{map}~
      (\synFUN{\idt{x}}{\tyI}{\idt{x} + 1})~
      \idt{nat}
      )
    }{\pw{0\,1}}
  \end{array}
  \]
  The uses of projections are well-typed since the warping modality distributes
  over products via the~$\coeDISTP$ coercion.
  We assume that~\idt{map} has received the type given in~\Refex{constant}, and
  thus its use below~$\kw{by}~\pw{0\,1}$ is well-typed.
  Since~$\upw{0}{1} \opON~\pw{1\,0} = \pw{0\,1}$, coercing~\idt{nat}
  by~$\coeCONCAT{\upw{0}{1}}{\pw{1\,0}}$ gives the
  type~$\tyW{\pw{0\,1}}{\tyS{\tyI}}$, which lets us use~\idt{nat} with
  type~$\tyS{\tyI}$ under~$\kw{by}~\pw{0\,1}$.
  For~\idt{pos},
  since~$\upw{0}{1} \opON~\pw{0\,1} = \upw{0}{0\,1} = \pw{1\,0}
  \opON~\upw{0}{1}$, applying the
  coercion~$\coeCONCAT{\upw{0}{1}}{\pw{0\,1}};\coeDECAT{\pw{1\,0}}{\upw{0}{1}}$
  lets us use~\idt{pos} with type~$\tyW{\upw{0}{1}}{\tyS{\tyI}}$
  below~$\kw{by}~\pw{1\,0}$.
\end{example}

\begin{example}
  \label{ex:thue-morse-abridged}
  \citet[Example~1.10]{CloustonBizjakBuggeBirkedal-2016} present
  the~\emph{Thue-Morse sequence} as a recursive stream definition which is
  difficult to capture in guarded calculi.
  The productivity of this definition follows from the fact that a certain
  auxiliary stream function~\idt{h} produces two new elements of its output
  stream for each new element of its input stream.
  In~\Core{}, \idt{h} can be given
  type~$\tyARR{\tyS{\tyB}}{\tyW{\pw{2}}{\tyS{\tyB}}}$, allowing us to implement
  the Thue-Morse sequence with guarded recursion.
  (See~\Refapp{supplementary-material}.)
\end{example}

\section{Operational Semantics}
\label{sec:operational-semantics}

In this section, we present an operational semantics for explicit terms in
the form of a big-step, call-by-value evaluation judgment.
Intuitively, the evaluation judgment~$\jugEV{e}{\gamma}{n}{v}$ expresses that
the value~$v$ is a finite prefix of length~$n$ of the possibly infinite result
computed by~$e$ in the environment~$\gamma$.
We will say that the evaluation of~$e$ occurred~``at step~$n$'', or
simply~``at~$n$'' following the intuition that~$n$ is a Kripke world.
Another intuition is that this judgment describes an interpreter receiving a
certain amount~$n$ of~``fuel'' which controls how many times recursive
definitions have to be unrolled~\citep{AminRompf-2017}.

In most fuel-based definitional interpreters, the fuel parameter only decreases
along evaluation, typically by one unit at each recursive unfolding.
In our case, its evolution is much less constrained:~the amount of fuel may
actually increase or decrease by an arbitrary amount many times during the
execution of a single term.
This behavior follows from the presence of time warps:~to
evaluate~$\synBY{e}{p}$ at~$n$, one evaluates~$e$ at~$p(n)$.
Nevertheless, we show that the evaluation of a well-typed term always
terminates regardless of the quantity of fuel initially provided.

Since the evaluation of a term at~$n$ might involve the evaluation of one of its
subterms at~$p(n)$ with~$p$ an arbitrary warp,~we may need to evaluate a term
at~$0$ or~$\omega$.
The former case is dealt with using a dummy value~$\vSTOP$ which inhabits all
types at~$0$.
The latter case might seem problematic, as evaluating a term at~$\omega$ should
intuitively result in an infinite object rather than a finite one.
We represent such results by suspended computations~(\emph{thunks}) to be forced
only when used at a finite~$n$.
This is a standard operational interpretation of the constant
modality~\cite{BiermanDePaiva-2000,CloustonBizjakBuggeBirkedal-2016}.

\subsection{Values and Environments}

\begin{figure}
  \begin{mathpar}
    \fbox{\jugV{v}{\tau}{n}}

    \IRULE[VScal]
    {
      s \in \SCAL_\nu
    }
    {
      \jugV{s}{\nu}{n+1}
    }

    \IRULE[VCons]
    {
      \jugV{v_1}{\tau}{n+1}
      \\
      \jugV{v_2}{\tyS{\tau}}{n}
    }
    {
      \jugV
      {\vC{v_1}{v_2}}
      {\tyS{\tau}}
      {n+1}
    }

    \IRULE[VClosure]
    {
      \jugT{\Gamma, x : \tau_1}{e}{\tau_2}
      \\
      \jugV{\gamma}{\Gamma}{n}
    }
    {
      \jugV
      {\vCLO{x}{e}{\gamma}}
      {\tyARR{\tau_1}{\tau_2}}
      {n}
    }

    \IRULE[VPair]
    {
      \jugV{v_1}{\tau_1}{n}
      \\
      \jugV{v_2}{\tau_2}{n}
    }
    {
      \jugV
      {(v_1, v_2)}
      {\tyPROD{\tau_1}{\tau_2}}
      {n}
    }

    \IRULE[VInj$_{i \in \{ 1, 2 \}}$]
    {
      \jugV{v}{\tau_i}{n}
    }
    {
      \jugV
      {\vINJ{i}{v}}
      {\tySUM{\tau_1}{\tau_2}}
      {n}
    }

    \IRULE[VStop]
    {
      ~
    }
    {
      \jugV
      {\vSTOP}
      {\tau}
      {0}
    }

    \IRULE[VThunk]
    {
      \jugT{\Gamma}{e}{\tau}
      \\
      \jugV{\gamma}{\Gamma}{\omega}
    }
    {
      \jugV
      {\vTHUNK{e}{\gamma}}
      {\tau}
      {\omega}
    }

    \IRULE[VWarp]
    {
      \jugV
      {v}
      {\tau}
      {p(n)}
    }
    {
      \jugV
      {\vW{p}{v}}
      {\tyW{p}{\tau}}
      {n}
    }

    \fbox{\jugV{\gamma}{\Gamma}{n}}

    \IRULE
    {
      \dom{\gamma} = \dom{\Gamma}
      \\\\
      \forall x \in \dom{\gamma},
      \jugV{\gamma(x)}{\Gamma(x)}{n}
    }
    {
      \jugV{\gamma}{\Gamma}{n}
    }
  \end{mathpar}
  \caption{Typing Judgment for Values and Environments}
  \label{fig:values}
\end{figure}

The judgment~$\jugV{v}{\tau}{n}$ expresses that a value~$v$ is a prefix of some
infinite object of type~$\tau$ at~$n \in \NI$.
Its rules are given in~\Reffig{values}.
For instance,~if~$\tau$ is of the form~$\tyS{\tau'}$,~the number of elements of
type~$\tau'$ contained in~$v$ is exactly~$n$.

Closures, pairs, and injections are unremarkable.
Stream prefixes~$\vC{v_1}{v_2}$ can only be well-typed at some~$n > 0$, in which
case~$v_2$ is well-typed at~$n-1$.
The dummy value~$\vSTOP$ inhabits all types but only at~$0$.
Thunks~$\vTHUNK{e}{\gamma}$ inhabit types only at~$\omega$.
Finally, warped values~$\vW{p}{v}$ inhabit the warping modality, marking
that~$v$ has been computed at~$p(n)$.

An environment~$\gamma$ has type~$\Gamma$ at~$n$ if all its constituent values
have types at~$n$ matching those prescribed by~$\Gamma$.
%

\subsection{Evaluation Judgment}

The evaluation relation depends on several auxiliary judgments, which depend on
evaluation themselves.
They are all parameterized by a step~$n \in \NI$.
Several of these judgments have to be extended from values to environments
pointwise; since this extension is always completely unremarkable, we omit the
corresponding rules.

\paragraph{Truncation}

\begin{figure}
  \begin{mathpar}
    \fbox{\jugTR{v}{n}{v'}}

    \IRULE
    {
      ~
    }
    {
      \jugTR{s}{n+1}{s}
    }

    \IRULE
    {
      \jugTR{v_1}{n+1}{v_1'}
      \\
      \jugTR{v_2}{n}{v_2'}
    }
    {
      \jugTR
      {\vC{v_1}{v_2}}
      {n+1}
      {\vC{v_1'}{v_2'}}
    }

    \IRULE
    {
      \jugTR{\gamma}{n+1}{\gamma'}
    }
    {
      \jugTR
      {\vCLO{x}{e}{\gamma}}
      {n+1}
      {\vCLO{x}{e}{\gamma'}}
    }

    \IRULE
    {
      \jugTR{v_1}{n+1}{v_1'}
      \\
      \jugTR{v_2}{n+1}{v_2'}
    }
    {
      \jugTR
      {(v_1,v_2)}
      {n+1}
      {(v_1',v_2')}
    }

    \IRULE
    {
      \jugTR{v}{n+1}{v'}
    }
    {
      \jugTR
      {\vINJ{i}{v}}
      {n+1}
      {\vINJ{i}{v'}}
    }

    \IRULE
    {
      \jugTR{\gamma}{n+1}{\gamma'}
      \\
      \jugEV{e}{\gamma'}{n+1}{v}
    }
    {
      \jugTR{\vTHUNK{e}{\gamma}}{n+1}{v}
    }

    \IRULE
    {
      \jugTR{v}{p(n+1)}{v'}
    }
    {
      \jugTR{\vW{p}{v}}{n+1}{\vW{p}{v'}}
    }

    \IRULE
    {
      ~
    }
    {
      \jugTR{v}{0}{\vSTOP}
    }

    \IRULE
    {
      ~
    }
    {
      \jugTR{v}{\omega}{v}
    }
  \end{mathpar}
  \caption{Truncation of Values}
  \label{fig:truncation}
\end{figure}

The value typing judgment is not monotonic, in the sense
that~$\jugV{v}{\tau}{n+1}$ does not entail~$\jugV{v}{\tau}{n}$ in general.
This choice makes value typing more precise, making sure that the result of a
program of type~$\tyS{\tyI}$ at~$n$ is exactly a list containing~$n$ elements.
However, evaluation sometimes needs to turn a value at~$m$ into a value
at~$n < m$ in order to mediate between different steps.
Thus, we introduce a~\emph{truncation} judgment~$\jugTR{v}{n}{v'}$ expressing
that removing all information pertaining to steps greater than~$n$ from the
value~$v$ gives a value~$v'$.
Its rules are given in~\Reffig{truncation}.

Most rules apply when~$v$ is to be truncated to a step of the form~$n+1$.
Scalars contain the same amount of information at all finite steps, and thus
remain themselves.
The tail~$v_2$ of a stream constructor~$\vC{v_1}{v_2}$ is truncated to~$n$,
ensuring that the final stream contains~$n+1$ elements.
Closures, pairs, and injections are truncated structurally; for closures, we
truncate the environment.
To truncate a thunk to a positive finite step is to evaluate it, obtaining a
finite result; this is why truncation depends on evaluation, defined below.
To truncate a value warped by~$p$ at~$n$, truncate it at~$p(n)$.

Finally, truncation to~$0$ and truncation to~$\omega$ are symmetric.
Truncation to~$0$ erases the value completely, leaving only~$\vSTOP$.
Truncation to~$\omega$ keeps the value completely intact.

\paragraph{Coercion Application}

\begin{figure*}
  \begin{mathpar}
    \fbox{\jugEVC{\alpha}{v}{n}{v'}}

    \IRULE[ECId]
    {
      ~
    }
    {
      \jugEVC
      {\coeID}
      {v}
      {n+1}
      {v}
    }

    \IRULE[ECSeq]
    {
      \jugEVC{\alpha_1}{v_1}{n+1}{v_2}
      \\
      \jugEVC{\alpha_2}{v_2}{n+1}{v_3}
    }
    {
      \jugEVC
      {(\alpha_1;\alpha_2)}
      {v_1}
      {n+1}
      {v_3}
    }

    \IRULE[ECStream]
    {
      \jugEVC{\alpha}{v_1}{n+1}{v_1'}
      \\
      \jugEVC{\coeS{\alpha}}{v_2}{n}{v_2'}
    }
    {
      \jugEVC
      {(\coeS{\alpha})}
      {(\vC{v_1}{v_2})}
      {n+1}
      {(\vC{v_1'}{v_2'})}
    }

    \IRULE[ECFun]
    {
      ~
    }
    {
      \jugEVC
      {(\coeARR{\alpha_1}{\alpha_2})}
      {\vCLO{x}{e}{\gamma}}
      {n+1}
      {\vCLO{x}{\synCOEL{(e~(\synCOER{x}{\alpha_1}))}{\alpha_2}}{\gamma}}
    }

    \IRULE[ECProd]
    {
      \jugEVC{\alpha_1}{v_1}{n+1}{v_1'}
      \\
      \jugEVC{\alpha_2}{v_2}{n+1}{v_2'}
    }
    {
      \jugEVC
      {(\coePROD{\alpha_1}{\alpha_2})}
      {(v_1, v_2)}
      {n+1}
      {(v_1', v_2')}
    }

    \IRULE[ECSum]
    {
      \jugEVC{\alpha_i}{v}{n+1}{v'}
    }
    {
      \jugEVC
      {(\coeSUM{\alpha_1}{\alpha_2})}
      {\vINJ{i}{v}}
      {n+1}
      {\vINJ{i}{v'}}
    }

    \IRULE[ECWarp]
    {
      \jugEVC{\alpha}{v}{p(n+1)}{v'}
    }
    {
      \jugEVC
      {(\coeW{p}{\alpha})}
      {\vW{p}{v}}
      {n+1}
      {\vW{p}{v'}}
    }

    \IRULE[ECWrap]
    {
      ~
    }
    {
      \jugEVC
      {\coeWRAP}
      {v}
      {n+1}
      {\vW{\wID}{v}}
    }

    \IRULE[ECUnwrap]
    {
      ~
    }
    {
      \jugEVC
      {\coeUNWRAP}
      {\vW{\wID}{v}}
      {n+1}
      {v}
    }

    \IRULE[ECConcat]
    {
      ~
    }
    {
      \jugEVC
      {\coeCONCAT{p}{q}}
      {\vW{p}{\vW{q}{v}}}
      {n+1}
      {\vW{p \opON q}{v}}
    }

    \IRULE[ECConcatStop]
    {
      ~
    }
    {
      \jugEVC
      {\coeCONCAT{p}{q}}
      {\vW{p}{\vSTOP}}
      {n+1}
      {\vW{p \opON q}{\vSTOP}}
    }

    \IRULE[ECConcatBox]
    {
      ~
    }
    {
      \jugEVC
      {\coeCONCAT{p}{q}}
      {\vW{p}{\vTHUNK{e}{\gamma}}}
      {n+1}
      {\vW{p \opON q}{\vTHUNK{e}{\gamma}}}
    }

    \IRULE[ECDecat]
    {
      ~
    }
    {
      \jugEVC
      {\coeDECAT{p}{q}}
      {\vW{p \opON q}{v}}
      {n+1}
      {\vW{p}{\vW{q}{v}}}
    }

    \IRULE[ECDist]
    {
      ~
    }
    {
      \jugEVC
      {\coeDISTP}
      {\vW{p}{(v_1,v_2)}}
      {n+1}
      {(\vW{p}{v_1},\vW{p}{v_2})}
    }

    \IRULE[ECDistStop]
    {
      ~
    }
    {
      \jugEVC
      {\coeDISTP}
      {\vW{p}{\vSTOP}}
      {n+1}
      {(\vW{p}{\vSTOP},\vW{p}{\vSTOP})}
    }

    \IRULE[ECDistBox]
    {
      ~
    }
    {
      \jugEVC
      {\coeDISTP}
      {\vW{p}{\vTHUNK{e}{\gamma}}}
      {n+1}
      {(\vW{p}{\vTHUNK{e}{\gamma}},\vW{p}{\vTHUNK{e}{\gamma}})}
    }

    \IRULE[ECFact]
    {
      ~
    }
    {
      \jugEVC
      {\coeFACTP}
      {(\vW{p}{v_1},\vW{p}{v_2})}
      {n+1}
      {\vW{p}{(v_1,v_2)}}
    }

    \IRULE[ECInfl]
    {
      ~
    }
    {
      \jugEVC
      {\coeINFL}
      {c}
      {n+1}
      {\vW{\wOM}{\vTHUNK{c}{\emptyset}}}
    }

    \IRULE[ECDelay]
    {
      \jugTR
      {v}
      {q(n+1)}
      {v'}
    }
    {
      \jugEVC
      {\coeDEL{p}{q}}
      {v}
      {n+1}
      {v'}
    }

    \IRULE[ECZero]
    {
      ~
    }
    {
      \jugEVC{\alpha}{v}{0}{\vSTOP}
    }

    \IRULE[ECOmega]
    {
      ~
    }
    {
      \jugEVC
      {\alpha}
      {\vTHUNK{e}{\gamma}}
      {\omega}
      {\vTHUNK{\synCOER{e}{\alpha}}{\gamma}}
    }
  \end{mathpar}
  \caption{Coercion Application Judgment}
  \label{fig:coercion-application}
\end{figure*}



The judgment~$\jugEVC{\alpha}{v}{n}{v'}$ expresses that~$v'$ is the result of
coercing~$v$ by~$\alpha$.
Its rules are given in~\Reffig{coercion-application}.

As for truncation, most rules here deal with finite positive~$n$.
The identity coercion does nothing,~$\alpha_1;\alpha_2$ first applies~$\alpha_1$
then~$\alpha_2$.
The remaining composite coercions apply coercions in depth, as expected; note
that~$\coeW{p}{\alpha}$ applies~$\alpha$ at~$p(n+1)$.
The wrapping~(resp.~unwrapping) coercion adds~(resp.~removes) a
constructor~$\vW{\wID}{-}$.
The coercions~$\coeCONCAT{p}{q}$ and~$\coeDISTP$ implement the transformations
and commutations corresponding to their types, but have to deal with the cases
where~$p(n+1) = 0$ or~$p(n+1) = \omega$ explicitly.
The coercions~$\coeDECAT{p}{q}$ and~$\coeFACTP$ are similar but simpler.
Inflation creates a dummy thunk around a scalar; this is type safe since scalars
are well-typed at any finite~$n$.
A delay coercion~$\coeDEL{p}{q}$ receives an input at~$p(n+1)$ and truncates it
to~$q(n+1)$, which is smaller or equal to~$p(n+1)$ if~$\coeDEL{p}{q}$ is
well-typed.

Evaluating a coercion at~$0$ immediately returns~$\vSTOP$, as for
truncation.
On the other hand, a coercion applied at~$\omega$ is necessarily applied to a
thunk, and must be delayed itself.
We accomplish this by pushing the coercion inside the thunk.

We have elided the context-coercion application judgment, which simply lifts
coercion application componentwise to environments.
%

\paragraph{Evaluation}

\begin{figure}
  \begin{mathpar}
    \fbox{\jugEV{e}{\gamma}{n}{v}}

    \IRULE[EVar]
    {
      ~
    }
    {
      \jugEV
      {x}
      {\gamma}
      {n+1}
      {\gamma(x)}
    }

    \IRULE[EFun]
    {
      ~
    }
    {
      \jugEV
      {\synFUN{x}{\tau}{e}}
      {\gamma}
      {n+1}
      {\vCLO{x}{e}{\gamma}}
    }

    \IRULE[EApp]
    {
      \jugEV
      {e_1}
      {\gamma}
      {n+1}
      {\vCLO{x}{e}{\gamma'}}
      \\
      \jugEV
      {e_2}
      {\gamma}
      {n+1}
      {v}
      \\
      \jugEV
      {e}
      {\gamma'[x \mapsto v]}
      {n+1}
      {v'}
    }
    {
      \jugEV
      {e_1~e_2}
      {\gamma}
      {n+1}
      {v'}
    }

    \IRULE[EPair]
    {
      \jugEV
      {e_1}
      {\gamma}
      {n+1}
      {v_1}
      \\
      \jugEV
      {e_2}
      {\gamma}
      {n+1}
      {v_2}
    }
    {
      \jugEV
      {(e_1, e_2)}
      {\gamma}
      {n+1}
      {(v_1, v_2)}
    }

    \IRULE[EProj$_{i \in \{1, 2\}}$]
    {
      \jugEV
      {e}
      {\gamma}
      {n+1}
      {(v_1, v_2)}
    }
    {
      \jugEV
      {\synPROJ{i}{e}}
      {\gamma}
      {n+1}
      {v_i}
    }

    \IRULE[EInj$_{i \in \{1, 2\}}$]
    {
      \jugEV
      {e}
      {\gamma}
      {n+1}
      {v}
    }
    {
      \jugEV
      {\synINJ{i}{\tau}{e}}
      {\gamma}
      {n+1}
      {\vINJ{i}{v}}
    }

    \IRULE[ECase$_{i \in \{1, 2\}}$]
    {
      \jugEV
      {e}
      {\gamma}
      {n+1}
      {\vINJ{i}{v}}
      \\
      \jugEV
      {e_i}
      {\gamma[x_i \mapsto v]}
      {n+1}
      {v'}
    }
    {
      \jugEV
      {\synCASE{e}{x_1}{e_1}{x_2}{e_2}}
      {\gamma}
      {n+1}
      {v'}
    }

    \IRULE[EConst]
    {
      ~
    }
    {
      \jugEV
      {c}
      {\gamma}
      {n+1}
      {c}
    }

    \IRULE[ERec]
    {
      \jugITER
      {x}
      {e}
      {\gamma}
      {\vSTOP}
      {0}
      {n+1}
      {v}
    }
    {
      \jugEV
      {\synREC{x}{\tau}{e}}
      {\gamma}
      {n+1}
      {v}
    }

    \IRULE[EBy]
    {
      \jugEV
      {e}
      {\purge{\gamma}}
      {p(n+1)}
      {v}
    }
    {
      \jugEV
      {\synBY{e}{p}}
      {\gamma}
      {n+1}
      {\vW{p}{v}}
    }

    \IRULE[EHead]
    {
      \jugEV
      {e}
      {\gamma}
      {n+1}
      {\vC{v_1}{v_2}}
    }
    {
      \jugEV
      {\synHEAD{e}}
      {\gamma}
      {n+1}
      {v_1}
    }

    \IRULE[ETail]
    {
      \jugEV
      {e}
      {\gamma}
      {n+1}
      {\vC{v_1}{v_2}}
    }
    {
      \jugEV
      {\synTAIL{e}}
      {\gamma}
      {n+1}
      {\vW{\wPRED}{v_2}}
    }

    \IRULE[ECons]
    {
      \jugEV
      {e_1}
      {\gamma}
      {n+1}
      {v_1}
      \\
      \jugEV
      {e_2}
      {\gamma}
      {n+1}
      {\vW{\wPRED}{v_2}}
    }
    {
      \jugEV
      {\synCONS{e_1}{e_2}}
      {\gamma}
      {n+1}
      {\vC{v_1}{v_2}}
    }

    \IRULE[ECoeR]
    {
      \jugEV
      {e}
      {\gamma}
      {n+1}
      {v}
      \\
      \jugEVC
      {\alpha}
      {v}
      {n+1}
      {v'}
    }
    {
      \jugEV
      {\synCOER{e}{\alpha}}
      {\gamma}
      {n+1}
      {v'}
    }

    \IRULE[ECoeL]
    {
      \jugEVC
      {\beta}
      {\gamma}
      {n+1}
      {\gamma'}
      \\
      \jugEV
      {e}
      {\gamma'}
      {n+1}
      {v}
    }
    {
      \jugEV
      {(\synCOEL{\beta}{e})}
      {\gamma}
      {n+1}
      {v}
    }

    \IRULE[EZero]
    {
      ~
    }
    {
      \jugEV
      {e}
      {\gamma}
      {0}
      {\vSTOP}
    }

    \IRULE[EOmega]
    {
      ~
    }
    {
      \jugEV
      {e}
      {\gamma}
      {\omega}
      {\vTHUNK{e}{\gamma}}
    }
  \end{mathpar}
  \caption{Evaluation Judgment}
  \label{fig:evaluation}
\end{figure}

\begin{figure}
  \begin{mathpar}
    \fbox{\scriptsize \jugITER{x}{e}{\gamma}{v}{m}{n}{v'}}

    \IRULE[IFinish]
    {
      ~
    }
    {
      \jugITER
      {x}
      {e}
      {\gamma}
      {v}
      {n}
      {n}
      {v}
    }

    \IRULE[IStep]
    {
      m < n
      \\
      \jugTR
      {\gamma}
      {m+1}
      {\gamma'}
      \\\\
      \jugEV
      {e}
      {\gamma'[x \mapsto \vW{\wPRED}{v}]}
      {m+1}
      {v'}
      \\\\
      \jugITER
      {x}
      {e}
      {\gamma}
      {v'}
      {m+1}
      {n}
      {v''}
    }
    {
      \jugITER
      {x}
      {e}
      {\gamma}
      {v}
      {m}
      {n}
      {v''}
    }
  \end{mathpar}
  \caption{Iteration Judgment}
  \label{fig:iteration}
\end{figure}

The evaluation judgment is given in~\Reffig{evaluation}.

Again, most of the work is done at~$0 < n < \omega$, so we begin by decribing
the corresponding rules.
The rules for variables, functions, application, pairs, projections, injections,
pattern-matching, and scalars are the standard ones of
call-by-value~$\lambda$-calculus.
We will explain recursion shortly.
To evaluate~$\synBY{e}{p}$ at~$n+1$, evaluate~$e$ at~$p(n+1)$.
Its result should be wrapped in~$\vW{p}{-}$ to mark its provenance, and
symmetrically the environment~$\gamma$ should be purged of a layer
of~$\vW{p}{-}$ value formers.
The latter operation is denoted by~$\purge{\gamma}$; it also removes
bindings~$(x, v)$ where~$v$ is not of the form~$\vW{p}{v}$ from~$\gamma$.
Coercions rely on the coercion application judgment and its lifting to context
coercions.

All terms evaluate to~$\vSTOP$ at~$0$.
The evaluation of a term~$e$ at~$\omega$ suspends its execution, building a
thunk~$\vTHUNK{e}{\gamma}$ pairing it with the current environment~$\gamma$.

\paragraph{Recursion and Iteration}

Rule~\rn{ERec} depends on the~\emph{iteration}
judgment~$\jugITER{x}{e}{\gamma}{v}{m}{n}{v'}$.
To explain this judgment informally, let us write~$f$ for~$\synFUN{x}{\_}{e}$
and assume that~$m \le n$.
Then, this judgment computes~$v' = f^{n-m}(v)$.
Its use in~\Refrule{ERec} with~$m = 0$ and~$v = \vSTOP$ ensures
that~$v = f^n(\vSTOP)$.
Thus iteration can be viewed as an operational approximation of Kleene's
fixpoint theorem if one identifies~$\vSTOP$ with~$\bot$ from domain theory.

Rule~\rn{IFinish} terminates the iteration sequence when~$m = n$.
Rule~\rn{IStep} computes~$f(f^m(\vSTOP)) = f^{m+1}(\vSTOP)$ if~$m < n$.
The environment~$\gamma$ is well-typed at~$n$ and so must be truncated to~$m+1$.

\subsection{Metatheoretical Results}

Our evaluation judgments represent runtime errors by the absence of a result, as
is common in big-step semantics.
Thus, our judgments define partial functions:~there is at most one value~$v$
such that~$\jugEV{e}{\gamma}{n}{v}$, and similarly for all the other judgments.


\paragraph{Type Safety}

The following basic type safety result ensures that if a closed program of
type~$\tau$ evaluates to a value~$v$ at~$n$, then~$v$ is of type~$\tau$ at~$n$.
Given the typing rules for values, this ensure in particular that streams have
the length described by their types.

\begin{theorem}[Type Safety]
  If~$\jugT{\Gamma}{e}{\tau}$,~$\jugV{\gamma}{\Gamma}{n}$,
  and~$\jugEV{e}{\gamma}{n}{v}$, then~$\jugV{v}{\tau}{n}$.
\end{theorem}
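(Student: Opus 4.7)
The proof proceeds by simultaneous induction on derivation size over the four mutually-defined judgments: evaluation, coercion application, truncation, and iteration. Since evaluating $\synBY{e}{p}$ at $n+1$ recursively evaluates $e$ at $p(n+1)$, which may be much larger or smaller than $n+1$, induction on the step index alone is hopeless; derivation size, in contrast, strictly decreases in every premise of every rule.

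I would strengthen the statement to a bundle of four preservation claims, one per judgment. For evaluation, the claim is the theorem itself. For coercion application, if $\jugV{v}{\tau}{n}$, $\jugC{\alpha}{\tau}{\tau'}$, and $\jugEVC{\alpha}{v}{n}{v'}$, then $\jugV{v'}{\tau'}{n}$. For truncation, if $\jugV{v}{\tau}{m}$ and $\jugTR{v}{n}{v'}$ with $n \le m$, then $\jugV{v'}{\tau}{n}$. For iteration, under the typing assumptions of \rn{ERec} and a well-typed seed, iteration produces a value of type $\tau$ at the target step. Two auxiliary facts feed into the induction: first, a purge lemma stating that $\jugV{\gamma}{\tyW{p}{\Gamma}}{n+1}$ implies $\jugV{\purge{\gamma}}{\Gamma}{p(n+1)}$, which matches the step change in \rn{EBy}; second, \Refprop{uniqueness-of-types-coercions}, which lets us read off the input and output types of each coercion from $\alpha$ alone.

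The case analysis then splits along the last rule. The functional fragment (\rn{EVar} through \rn{ECase}) reduces to standard call-by-value type safety. The distinctive cases are \rn{EBy}, which appeals to the purge lemma and the evaluation induction hypothesis at $p(n+1)$ before packaging the result with \rn{VWarp}; \rn{ERec}, which invokes the iteration claim with seed $\vSTOP$, typed at $0$ by \rn{VStop}; and \rn{ECoeR}/\rn{ECoeL}, which use the coercion claim and its componentwise lifting to environment coercions. Coercion rules are discharged structurally, with \rn{ECDelay} falling back on the truncation claim since $q(n+1) \le p(n+1)$ by the side condition of the subtyping rule for $\coeDEL{p}{q}$. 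The boundary cases $n = 0$ and $n = \omega$ are uniform, reducing to typing $\vSTOP$ by \rn{VStop} and thunks by \rn{VThunk} using the environment hypothesis at $\omega$; in the latter case, \rn{ECOmega} produces a thunk whose body $\synCOER{e}{\alpha}$ is well-typed by \rn{SubR}.

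The main obstacle is the cross-judgment cycles: truncating a thunk at finite $n+1$ triggers evaluation at the same $n+1$, and dually \rn{ECOmega} pushes a coercion inside a thunk for later forcing. These cycles preclude any numerical well-founded order on $n$, so the induction must genuinely be on derivation size over the disjoint union of derivation trees for the four judgments. Once this measure is pinned down, every inductive appeal is to a strictly smaller derivation and no circular reasoning creeps in; what remains is a long but routine case analysis.
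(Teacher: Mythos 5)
Your proposal matches the paper's proof strategy: a mutual induction over the evaluation, truncation, coercion-application, and iteration judgments, with exactly the three auxiliary type-safety lemmas the paper states (truncation, coercion application, iteration) feeding into the main claim. The extra detail you supply---taking derivation size as the well-founded measure, the purge lemma for \rn{EBy}, and the appeal to \Refprop{uniqueness-of-types-coercions}---is consistent with and fills in what the paper leaves implicit.
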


Since the evaluation judgment depends on the truncation, coercion application,
and iteration judgments and vice-versa, the proof must proceed by mutual
induction, using the relevant type safety lemmas for auxiliary judgments.

\begin{lemma}[Type Safety, Truncation]
  \label{lemma:type-safety-truncation}
  If~$\jugV{v}{\tau}{m}$ and~$\jugTR{v}{n}{v'}$ with~$n \le m$,
  then~$\jugV{v'}{\tau}{n}$.
\end{lemma}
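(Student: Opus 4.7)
The plan is to proceed by induction on the derivation of $\jugTR{v}{n}{v'}$ with case analysis on the final rule. This lemma forms part of the mutual induction announced in the excerpt: because the truncation rule for a thunk invokes the evaluation judgment, the induction hypotheses must include the type-safety property of evaluation applied to strictly smaller sub-derivations. In parallel I rely on the environment-lifted version of the statement, whose proof is pointwise and unremarkable.

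The two extreme cases are immediate. Truncation to $0$ always produces $\vSTOP$, which inhabits every type at $0$ by \rn{VStop}. Truncation to $\omega$ forces $n = \omega$, so from $n \le m$ we get $m = \omega$ and $v' = v$, and the conclusion coincides with the hypothesis.

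For truncation to a positive finite step $n+1$ we have $n+1 \le m$, and the proof proceeds by inverting $\jugV{v}{\tau}{m}$ to match the shape of $v$ against the applicable truncation rule. The scalar, pair, injection, and closure cases all reassemble immediately from the corresponding value-typing rule after invoking the inductive hypothesis on subcomponents (lifted pointwise to environments for closures). For a cons $\vC{v_1}{v_2}$ of type $\tyS{\tau'}$, inversion of \rn{VCons} forces $m = k+1$ with $n \le k$, and the inductive hypothesis supplies $\jugV{v_1'}{\tau'}{n+1}$ and $\jugV{v_2'}{\tyS{\tau'}}{n}$, so \rn{VCons} reassembles. For a warped value $\vW{p}{v_0}$ of type $\tyW{p}{\tau_0}$, inversion of \rn{VWarp} yields $\jugV{v_0}{\tau_0}{p(m)}$ while the truncation rule recurses at $p(n+1)$; monotonicity of $p$ (a consequence of cocontinuity in \Refdef{time-warps}) turns $n+1 \le m$ into $p(n+1) \le p(m)$, so the inductive hypothesis applies and \rn{VWarp} reassembles.

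The only substantive case is the thunk. Here $v = \vTHUNK{e}{\gamma}$ forces $m = \omega$, with underlying derivations $\jugT{\Gamma}{e}{\tau}$ and $\jugV{\gamma}{\Gamma}{\omega}$, while the truncation premise consists of $\jugTR{\gamma}{n+1}{\gamma'}$ together with $\jugEV{e}{\gamma'}{n+1}{v'}$. The inner inductive hypothesis applied to the environment truncation yields $\jugV{\gamma'}{\Gamma}{n+1}$; then the mutually inductive type-safety property for evaluation, invoked on the second premise, yields the desired $\jugV{v'}{\tau}{n+1}$. This dependency is the real obstacle: it forces truncation, coercion application, evaluation, and iteration to be treated as a single mutual induction on derivation height rather than proved in isolation.
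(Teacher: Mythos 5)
Your proof is correct and takes essentially the same route the paper intends: a case analysis on the truncation derivation, folded into the mutual induction with the type-safety statements for evaluation, coercion application, and iteration, which is exactly what is needed because the thunk-truncation rule has an evaluation premise. The individual cases (the vacuous or immediate $0$ and $\omega$ cases, monotonicity of $p$ for the \rn{VWarp}/warped-truncation case, and the appeal to evaluation type safety for thunks) all check out against the rules in the paper.
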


\begin{lemma}[Type Safety, Coercion Application]
  If~$\jugC{\alpha}{\tau}{\tau'}$,~$\jugV{v}{\tau}{n}$,
  and~$\jugEVC{\alpha}{v}{n}{v'}$, then~$\jugV{v'}{\tau'}{n}$.
\end{lemma}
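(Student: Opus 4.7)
The plan is to proceed by induction on the derivation of $\jugC{\alpha}{\tau}{\tau'}$, with an inner case analysis on the rule deriving $\jugEVC{\alpha}{v}{n}{v'}$. Three cases are uniform in $\alpha$ and can be dispatched first. For $n = 0$ only rule \rn{ECZero} applies, and $\jugV{\vSTOP}{\tau'}{0}$ holds by \rn{VStop}. For $n = \omega$ only rule \rn{ECOmega} applies: inversion of \rn{VThunk} on $\jugV{v}{\tau}{\omega}$ supplies a context $\Gamma$ with $\jugT{\Gamma}{e}{\tau}$ and $\jugV{\gamma}{\Gamma}{\omega}$, rule \rn{SubR} then gives $\jugT{\Gamma}{\synCOER{e}{\alpha}}{\tau'}$, and \rn{VThunk} rewraps the resulting thunk at $\tau'$. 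Sequential composition is handled by two uses of the induction hypothesis.

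For $0 < n < \omega$, I would walk through each coercion former. The identity is immediate. The congruence rules \rn{ECStream}, \rn{ECFun}, \rn{ECProd}, \rn{ECSum}, and \rn{ECWarp} all proceed by inversion of the value typing hypothesis followed by the induction hypothesis on the subcoercions; one must take care that \rn{ECWarp} invokes the inner coercion at step $p(n+1)$, matching \rn{VWarp}, and that \rn{ECStream} coerces the tail at step $n$. The function case additionally needs to rebuild a well-typed closure body by assembling \rn{SubR}, \rn{SubL}, \rn{App}, and \rn{Var} in the appropriate order. The axiomatic rules \rn{ECWrap}, \rn{ECUnwrap}, \rn{ECConcat}, \rn{ECDecat}, \rn{ECDist}, \rn{ECFact}, and \rn{ECInfl} each realize precisely the rearrangement of value formers licensed by the corresponding subtyping axiom, so the conclusion follows by direct reapplication of the value typing rules. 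The variants \rn{ECConcatStop}, \rn{ECConcatBox}, \rn{ECDistStop}, and \rn{ECDistBox} are dispatched similarly once one observes that $\vSTOP$ and thunks remain well-typed inside any $\vW{p}{-}$ wrapper or pair, relying on the cases $p(n+1) = 0$ and $p(n+1) = \omega$.

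The main obstacle is \rn{ECDelay}. Given $\jugC{\coeDEL{p}{q}}{\tyW{p}{\tau}}{\tyW{q}{\tau}}$ with $q \le p$, inversion of \rn{VWarp} yields an inner value typable at step $p(n+1)$, which \rn{ECDelay} truncates to step $q(n+1) \le p(n+1)$; Lemma~\ref{lemma:type-safety-truncation} then provides a value of type $\tau$ at $q(n+1)$ that \rn{VWarp} rewraps at $n+1$. Since truncating a thunk (see \Reffig{truncation}) triggers evaluation, and evaluating a coerced term triggers coercion application, this lemma must in fact be proved simultaneously with the type-safety statements for evaluation, truncation, and iteration, as part of a single mutual induction. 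The induction measure has to accommodate the fact that \rn{ECWarp} and \rn{EBy} can jump from step $n+1$ to the potentially larger step $p(n+1)$; I would use a lexicographic order on the pair (syntactic size of the coercion or term, step index) while treating the degenerate steps $0$ and $\omega$ as base cases sitting outside the structural recursion, since at those steps all relevant judgments collapse immediately into $\vSTOP$ production or thunk formation.
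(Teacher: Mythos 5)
Your overall strategy coincides with the paper's: the paper only records that the proof ``must proceed by mutual induction'' across the type-safety statements for evaluation, truncation, coercion application, and iteration, and your case analysis (including the uniform treatment of the steps $0$ and $\omega$, the reconstruction of a well-typed closure body in the arrow case, and the reduction of the delay case to the truncation lemma) fills in exactly the cases that mutual induction requires. The one point that does not hold up is your proposed well-founded measure. A lexicographic order on (syntactic size of the coercion or term, step index) does not decrease along the very dependency chain you single out: truncating a thunk at a finite positive step forces the evaluation of the stored term $e$, whose size is completely unrelated to the size of the coercion $\coeDEL{p}{q}$ that triggered the truncation, and whose step $q(n+1)$ is likewise unrelated to $n+1$. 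Fortunately, no such measure is needed here. Unlike the totality theorem (\Refthm{totality}), where one must \emph{produce} the operational derivations and the paper therefore resorts to a step-indexed realizability predicate, type safety takes the derivations of $\jugEVC{\alpha}{v}{n}{v'}$, $\jugTR{v}{n}{v'}$, etc.\ as hypotheses; the mutual induction can simply be structural on those derivations, since every premise of every rule in Figures~\ref{fig:truncation}--\ref{fig:iteration} is a strict subderivation. With that substitution for your measure, the argument goes through as you describe.
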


\begin{lemma}[Type Safety, Iteration]
  If~$\jugT{\Gamma, x :
    \tyW{\wPRED}{\tau}}{e}{\tau}$,~$\jugV{\gamma}{\Gamma}{n}$,~$\jugV{v}{\tau}{m}$
  and~$\jugITER{x}{e}{\gamma}{v}{m}{n}{v'}$, then~$\jugV{v}{\tau}{n}$.
\end{lemma}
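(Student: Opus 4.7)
The plan is to proceed by induction on the iteration derivation $\jugITER{x}{e}{\gamma}{v}{m}{n}{v'}$, reading the intended conclusion as $\jugV{v'}{\tau}{n}$ (the $v$ in the printed statement is almost surely a typo for $v'$). The base case \Refruleshort{IFinish} is immediate: we then have $m = n$ and $v' = v$, so the hypothesis $\jugV{v}{\tau}{m}$ already is the conclusion.

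For the inductive case \Refruleshort{IStep}, the premises supply $m < n$, $\jugTR{\gamma}{m+1}{\gamma'}$, $\jugEV{e}{\gamma'[x \mapsto v]}{m}{v_0}$ for some intermediate $v_0$, and a smaller iteration $\jugITER{x}{e}{\gamma}{v_0}{m+1}{n}{v'}$. I would first invoke \Reflemm{type-safety-truncation} on $\jugV{\gamma}{\Gamma}{n}$, using $m + 1 \le n$, to obtain $\jugV{\gamma'}{\Gamma}{m+1}$. Next, combining $\jugV{v}{\tau}{m}$ with rule~\rn{VWarp}, I would view $v$ as giving rise to an inhabitant of $\tyW{\wPRED}{\tau}$ at step $m + 1$, yielding a well-typed extended environment of shape $\Gamma, x : \tyW{\wPRED}{\tau}$. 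Applying the mutual-induction hypothesis for type safety of the evaluation judgment to the premise $\jugEV{e}{\gamma'[x \mapsto v]}{m}{v_0}$ then provides a typing for $v_0$ at the appropriate step, after which the induction hypothesis of the present lemma applied to the residual iteration delivers $\jugV{v'}{\tau}{n}$.

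The main obstacle is bookkeeping the step indices that appear in the premises of \Refruleshort{IStep}: the truncation, the evaluation of $e$, and the recursive iteration each operate at slightly shifted levels, and the binding $x \mapsto v$ must be mediated through the $\wPRED$ warp in order to reconcile the typing of $v$ at step $m$ with that of $\gamma'$ at step $m+1$. The edge cases $m = 0$ (where $v$ should be $\vSTOP$, a dummy inhabitant of every type by~\rn{VStop}) and $n = \omega$ (where induction along the first argument still terminates, since the \Refruleshort{IStep} premise strictly increments $m$) should cause no further trouble but deserve explicit verification as part of the mutual-induction package tying truncation, coercion application, evaluation, and iteration together.
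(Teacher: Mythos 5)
Your proposal is correct and follows essentially the route the paper intends: the paper only records that type safety is established by mutual induction over the evaluation, truncation, coercion-application, and iteration judgments, and your induction on the iteration derivation---using the truncation type-safety lemma to retype $\gamma'$ at $m+1$, rule \rn{VWarp} to place $v$ under $\W{\wPRED}$ at $m+1$, the evaluation case of the mutual induction for the body, and the inner induction hypothesis for the residual iteration---is precisely that argument instantiated for \rn{IStep} and \rn{IFinish}. You are also right that the stated conclusion $\jugV{v}{\tau}{n}$ should read $\jugV{v'}{\tau}{n}$.
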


\paragraph{Totality}

In addition to the usual type errors, in our setting partiality may also arise
from time-related operations.
For instance, a term might try to truncate a value at~$n$ to~$m > n$, or to
evaluate~$\synBY{e}{p}$ in an environment which contains values that are not of
the form~$\vW{p}{-}$.
The following theorem asserts that this cannot occur with well-typed terms.

\begin{theorem}[Totality]
  \label{thm:totality}
  If~$\jugT{\Gamma}{e}{\tau}$ and~$\jugV{\gamma}{\Gamma}{n}$, then there
  exists~$v$ such that~$\jugEV{e}{\gamma}{n}{v}$.
\end{theorem}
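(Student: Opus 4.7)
The plan is to prove totality via a Tait-style realizability argument, because straightforward induction on the typing derivation breaks in two places: rule~\rn{EApp} recurses on the body of a closure, which is not a subterm of the original application, and rule~\rn{EBy} evaluates a subterm at step~$p(n+1)$ that may bear no simple relation to~$n$ (for instance, $\wOM(n+1) = \omega$). A predicate defined from types, and stable under these transformations, will pay for exactly these two cases.

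The first step is to define semantic predicates~$\reaV{\tau}{n}$ on closed values and~$\reaE{\Gamma}{n}$ on environments. Roughly: at a finite positive step~$n+1$, $\reaV{\nu}{n+1}$ is the set of well-typed scalars; $\reaV{\tyARR{\tau_1}{\tau_2}}{n+1}$ collects closures~$\vCLO{x}{e}{\gamma}$ whose body, when fed any~$v \in \reaV{\tau_1}{n+1}$, evaluates under~$\gamma[x \mapsto v]$ at~$n+1$ to some~$v' \in \reaV{\tau_2}{n+1}$; $\reaV{\tyS{\tau}}{n+1}$ consists of well-typed prefixes whose components lie in the corresponding~$\reaV{\tau}{-}$; $\reaV{\tyW{p}{\tau}}{n+1}$ contains values~$\vW{p}{v}$ with~$v \in \reaV{\tau}{p(n+1)}$; and products, sums are routine. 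The boundary cases are~$\reaV{\tau}{0} = \{\vSTOP\}$ and~$\reaV{\tau}{\omega}$ consisting of thunks~$\vTHUNK{e}{\gamma}$ whose forcing at every finite step produces a value in the corresponding~$\reaV{\tau}{-}$. Extend pointwise: $\gamma \in \reaE{\Gamma}{n}$ iff~$\gamma(x) \in \reaV{\Gamma(x)}{n}$ for each~$x \in \dom{\Gamma}$.

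The second step is the fundamental lemma: if~$\jugT{\Gamma}{e}{\tau}$ and~$\gamma \in \reaE{\Gamma}{n}$, then there exists~$v \in \reaV{\tau}{n}$ with~$\jugEV{e}{\gamma}{n}{v}$. Theorem~\ref{thm:totality} follows by taking~$\Gamma = \ctxempty$ and forgetting membership. The cases~$n = 0$ and~$n = \omega$ are discharged directly by rules~\rn{EZero} and~\rn{EOmega}. For finite positive~$n+1$, proceed by induction on the typing derivation; the interesting cases invoke three auxiliary totality lemmas, proved simultaneously: (i) semantic truncation, $v \in \reaV{\tau}{m}$ and~$k \le m$ imply~$\jugTR{v}{k}{v'}$ for some~$v' \in \reaV{\tau}{k}$; (ii) semantic coercion application, $\jugC{\alpha}{\tau}{\tau'}$ and~$v \in \reaV{\tau}{n+1}$ yield~$v' \in \reaV{\tau'}{n+1}$ with~$\jugEVC{\alpha}{v}{n+1}{v'}$; and (iii) totality of iteration, obtained by a side induction on~$n - m$ that uses~(i) to re-truncate the environment and then invokes the main IH on the (structurally smaller) body. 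Rule~\rn{EBy} works because the warping clause of~$\reaV{\tyW{p}{\tau}}{n+1}$ was stated precisely so that the IH at~$p(n+1)$ on the smaller term~$e$ gives what is needed, uniformly over the three ranges of values~$p(n+1)$ can take. Rule~\rn{EApp} works because semantic function closures come with the required evaluation guarantee built into their definition.

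The main obstacle is well-foundedness of the definition of~$\reaV{\tau}{n}$: the warping clause can send~$n$ to an arbitrary element of~$\NI$, so neither~$n$ nor~$\tau$ alone orders the recursive calls. I would justify it by well-founded recursion on the lexicographic pair~$(|\tau|, n)$, where~$|\tau|$ is the size of the type skeleton (ignoring the warp parameters~$p$) and~$n$ ranges over~$\NI$ with~$\omega$ as top: the function-, product-, sum-, and stream-type clauses keep~$n$ fixed and drop to strictly smaller types; the warping clause drops to a strictly smaller type regardless of what~$p(n+1)$ is; and the~$\omega$ clause is defined by quantifying only over strictly smaller (finite) steps. This stratification mirrors the reindexing structure of the topos-of-trees semantics that the paper introduces, which makes the construction go through cleanly.
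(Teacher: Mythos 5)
Your overall strategy coincides with the paper's: a step-indexed realizability predicate $\reaV{\tau}{n}$ defined by well-founded recursion on a lexicographic (type, step) order, a fundamental lemma proved by induction on typing derivations, and mutually proved auxiliary realizability lemmas for truncation, coercion application, and iteration, with the theorem falling out as a corollary. The well-foundedness analysis is also essentially the paper's, modulo one slip: the tail clause of $\reaV{\tyS{\tau}}{n+1}$ recurses at the \emph{same} type $\tyS{\tau}$ and the smaller step $n$, not at a smaller type, so it is the second component of your order that pays for it --- harmless, since your order covers this case anyway.

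There is, however, a genuine gap in your clause for arrow types. You admit $\vCLO{x}{e}{\gamma}$ into $\reaV{\tyARR{\tau_1}{\tau_2}}{n+1}$ as soon as its body behaves well \emph{at step $n+1$ itself}. The paper's clause instead quantifies over all $m \le n+1$: for every such $m$ and every $v_1 \in \reaV{\tau_1}{m}$, the captured environment must truncate to some $\gamma'$ at $m$, and the body must evaluate at $m$ under $\gamma'[x \mapsto v_1]$ to an element of $\reaV{\tau_2}{m}$. This Kripke-style closure under truncation is not optional: your auxiliary lemma (i) claims that $v \in \reaV{\tau}{m}$ and $k \le m$ yield $\jugTR{v}{k}{v'}$ with $v' \in \reaV{\tau}{k}$, but for $\tau = \tyARR{\tau_1}{\tau_2}$ truncation replaces $\gamma$ by its truncation at $k$ and the resulting closure must then run its body \emph{at $k$} --- about which your predicate at $n+1$ says nothing, so (i) is unprovable (indeed false) for your definition. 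Since environments contain closures, and your own treatment of iteration (rule \rn{IStep} re-truncates the environment before each unfolding) invokes (i) at all types, the argument as written does not go through; the repair is exactly the paper's quantified clause, after which (i) for closures follows from \Refprop{functoriality-of-truncation}. A second, more cosmetic divergence: you set $\reaV{\tau}{0} = \{\vSTOP\}$ where the paper takes all of $\mi{Val}$; either can be made to work, but your choice obliges you to check that every value required to land at step $0$ (e.g.\ the tail of a one-element stream, or the payload of a warped value when $p(n+1) = 0$) really is $\vSTOP$.
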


The proof uses a realizability predicate, as explained in~\Refapp{proofs}.
It also requires the following result, also used
in~\Refsec{denotational-semantics}.

\begin{prop}[Functoriality of Truncation]
  \label{prop:functoriality-of-truncation}
  If~$\jugTR{v}{n}{v'}$ and~$\jugTR{v}{m}{v''}$ with~$m \le n$,
  then~$\jugTR{v'}{m}{v''}$.
\end{prop}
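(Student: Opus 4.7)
The plan is to proceed by induction on the derivation of the first hypothesis, $\jugTR{v}{n}{v'}$. The boundary cases are immediate. If $n = \omega$, the only applicable rule gives $v' = v$, so the goal $\jugTR{v'}{m}{v''}$ reduces to the second hypothesis. If $n = 0$, then necessarily $m = 0$, both $v'$ and $v''$ are $\vSTOP$, and $\vSTOP$ truncates to $\vSTOP$ at step $0$. When $n = k+1$ is positive finite and $m = 0$, the result $v''$ is $\vSTOP$ and $v'$ trivially truncates to $\vSTOP$ at step $0$.

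The substantive regime is $n = k+1$ and $m = l+1$ with $l+1 \le k+1$. Here I would case split on the shape of $v$. For the structural values---scalars, stream cons, closures, pairs, and injections---both truncation derivations decompose on matching sub-components, and applying the inductive hypothesis componentwise at steps related by $\le$ yields the required $\jugTR{v'}{m}{v''}$ by re-applying the appropriate truncation rule. For a warped value $\vW{p}{w}$, the two derivations produce $\vW{p}{w'}$ and $\vW{p}{w''}$ with $\jugTR{w}{p(k+1)}{w'}$ and $\jugTR{w}{p(l+1)}{w''}$; since time warps are monotonic, $p(l+1) \le p(k+1)$, and the inductive hypothesis on $w$ yields $\jugTR{w'}{p(l+1)}{w''}$, from which $\jugTR{\vW{p}{w'}}{l+1}{\vW{p}{w''}}$ follows by the truncation rule for warped values.

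The main obstacle is the thunk case $v = \vTHUNK{e}{\gamma}$, since $v'$ and $v''$ are no longer subterms of $v$ but results of evaluating $e$ in two different truncations of $\gamma$ and at two different steps. The inductive hypothesis applied to $\gamma$ shows that the environment truncated at $k+1$ and then at $l+1$ agrees with the environment truncated directly at $l+1$, but closing the case requires a companion functoriality property for evaluation: whenever $\jugEV{e}{\gamma_1}{n_1}{v_1}$ and $\jugEV{e}{\gamma_2}{n_2}{v_2}$ with $n_2 \le n_1$ and $\jugTR{\gamma_1}{n_2}{\gamma_2}$, one has $\jugTR{v_1}{n_2}{v_2}$.

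To discharge this, I would strengthen the statement into a bundle of functoriality lemmas for the truncation, evaluation, coercion application, and iteration judgments, and prove them simultaneously. A suitable induction measure pairs the step $n$ lexicographically with the size of the judgment derivation: structural recursive calls strictly shrink the derivation at a fixed step, while rule \rn{EBy} drops to substep $p(n+1)$ which is bounded by $n$ for finite steps, and appeals to evaluation from inside a thunk truncation occur at the strictly smaller step $n+1 \le \omega$. With this bundle in place, the thunk case of the present proposition closes by invoking the evaluation functoriality lemma, once the environment-level inductive hypothesis has supplied the needed truncation relation between $\gamma_1$ and $\gamma_2$.
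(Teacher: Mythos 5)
Your overall decomposition is the right one. The paper does not actually spell out a proof of this proposition, but its stated methodology for the operational metatheory is mutual induction over the truncation, evaluation, coercion-application, and iteration judgments, and the ``companion functoriality property for evaluation'' you isolate in the thunk case is precisely \Refprop{monotonicity}; bundling the two statements (together with analogues for coercion application and iteration) and proving them simultaneously is exactly what is needed. Your treatment of the boundary cases, the structural cases, and the warped-value case via monotonicity of~$p$ is correct.

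The one genuine flaw is your induction measure. You claim that \rn{EBy} ``drops to substep $p(n+1)$ which is bounded by $n$ for finite steps'': this is false. Time warps may increase the step arbitrarily --- $\synBY{e}{\wOM}$ evaluated at $n+1$ evaluates $e$ at $\omega$, and $\wSUCC(n+1) = n+2$ --- a point the paper itself stresses when it says the amount of fuel ``may actually increase or decrease by an arbitrary amount''. The same problem already arises inside truncation itself: truncating $\vW{p}{w}$ at $n+1$ recurses on $w$ at $p(n+1)$, which may exceed $n+1$. So a lexicographic measure with the step as primary component is not well-founded for this recursion pattern. The repair is straightforward: since you are relating two \emph{given} derivations rather than establishing existence, every recursive appeal --- through \rn{EBy}, through the warped-value rule, and through the evaluation hidden inside a thunk truncation --- is to strict subderivations of derivations already in hand, so induction on the (sum of the) sizes of the two given derivations suffices, and the step index plays no role in termination. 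The step index is only needed for the totality theorem, which the paper handles separately by realizability.
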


\paragraph{Monotonicity}

Finally, we prove that evaluation indeed computes longer and longer prefixes of
the same object.

\begin{prop}[Monotonicity]
  \label{prop:monotonicity}
  If~$\jugEV{e}{\gamma}{n}{v}$ and~$\jugEV{e}{\gamma'}{m}{v'}$ with~$m \le n$
  and~$\jugTR{\gamma}{m}{\gamma'}$, then~$\jugTR{v}{m}{v'}$.
\end{prop}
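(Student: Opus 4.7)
The plan is a simultaneous induction on the derivations of $\jugEV{e}{\gamma}{n}{v}$, $\jugEVC{\alpha}{v}{n}{v'}$, and $\jugITER{x}{e}{\gamma}{w}{k}{n}{v}$, establishing the corresponding monotonicity statement for each judgment. Monotonicity of truncation is already \Refprop{functoriality-of-truncation}. A small auxiliary lemma, read off the truncation rule for $\vW{p}{-}$ values, states that $\jugTR{\gamma}{m}{\gamma'}$ entails $\jugTR{\purge{\gamma}}{p(m)}{\purge{\gamma'}}$ for any time warp $p$.

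The boundary cases for evaluation are immediate. If $m = 0$, then \Refrule{EZero} forces $v' = \vSTOP$ and the truncation-to-$0$ rule gives $\jugTR{v}{0}{\vSTOP}$. If $n = m = \omega$, then the truncation-to-$\omega$ rule forces $\gamma = \gamma'$, rule \Refrule{EOmega} yields $v = v' = \vTHUNK{e}{\gamma}$, and the result holds trivially. If $n = \omega$ and $0 < m < \omega$, then $v = \vTHUNK{e}{\gamma}$ by \Refrule{EOmega}, and the thunk truncation rule for $\jugTR{\vTHUNK{e}{\gamma}}{m}{v'}$ has premises that coincide precisely with our hypotheses on $\gamma'$ and $v'$.

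For $0 < m \le n < \omega$, the evaluation rule is syntax-directed by the head constructor of $e$, so both derivations apply the same rule. Cases \rn{EVar}, \rn{EFun}, \rn{EApp}, \rn{EPair}, \rn{EProj}, \rn{EInj}, \rn{ECase}, \rn{EConst}, \rn{EHead}, \rn{ETail}, and \rn{ECons} follow by direct appeal to the induction hypothesis on subterms together with the matching truncation rules. Case \rn{EBy} uses the purge lemma together with the IH at step $p(m) \le p(n)$ on the body. Cases \rn{ECoeR} and \rn{ECoeL} invoke the simultaneously-proved monotonicity for coercion application, whose proof follows the same pattern: \rn{ECDelay} reduces to \Refprop{functoriality-of-truncation}, and the ``wide'' coercions $\coeCONCAT{p}{q}$ and $\coeDISTP$ have dedicated subrules for $\vSTOP$ and thunk arguments that exactly replicate the boundary-case analysis above.

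The main obstacle is \Refrule{ERec}, which bottoms out in a monotonicity statement for iteration. I would prove, by induction on the iteration length $n - k$, that if $\jugITER{x}{e}{\gamma}{w}{k}{n}{v}$ and $\jugITER{x}{e}{\gamma'}{w'}{k}{m}{v'}$ with $k \le m \le n$, $\jugTR{\gamma}{m+1}{\gamma'}$, and $\jugTR{w}{k}{w'}$, then $\jugTR{v}{m}{v'}$. The two iterations proceed in lockstep for indices in $[k, m)$: each unrolling of \Refrule{IStep} uses \Refprop{functoriality-of-truncation} to realign the intermediate environment truncations, then invokes the evaluation IH on the body to carry the truncation relation through one successor value. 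When the small iteration closes via \Refrule{IFinish} at index $m$ while the large one must continue to $n$, the final truncation at $m$ is recovered by a secondary induction using the evaluation IH to show that each additional step on the large side preserves the truncation to $m$ of the running value. The delicate bookkeeping lies in aligning the off-by-one shift of \Refrule{IStep}, where the environment is truncated to $k+1$ while the body is evaluated at $k$, with the outer truncation level $m+1$ so that all the truncation chains compose cleanly via \Refprop{functoriality-of-truncation}.
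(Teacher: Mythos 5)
The paper itself does not include a proof of \Refprop{monotonicity}: it is stated without argument in the main text, and the appendix's selected proofs for the operational semantics only cover totality (via realizability) and use monotonicity as an ingredient for naturality of~$\opsem{e}$. So there is no official proof to compare against; judged on its own terms, your strategy---a simultaneous induction on the evaluation, coercion-application, and iteration derivations, with \Refprop{functoriality-of-truncation} doing the compositional work, a purge lemma for \Refrule{EBy}, and an explicit split on the $0$/$\omega$ boundary steps---is the natural one and is consistent with how the paper handles its other operational metatheory (type safety is likewise proved by mutual induction with one lemma per auxiliary judgment). Your treatment of the boundary cases is right, and in particular the observation that the thunk-truncation rule's premises coincide with the monotonicity hypotheses in the $n = \omega$, $0 < m < \omega$ case is exactly the point that makes the statement hold at~$\omega$.

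Two things deserve attention. First, your auxiliary iteration lemma has an off-by-one that makes it uninstantiable from \Refrule{ERec}: you assume $\jugTR{\gamma}{m+1}{\gamma'}$, but the outer hypothesis only supplies $\jugTR{\gamma}{m}{\gamma'}$, and that is all you need---the small iteration's last \rn{IStep} occurs at index $m-1$ and truncates its environment to $m$, never to $m+1$, so truncation to $m$ plus \Refprop{functoriality-of-truncation} already realigns every environment truncation the two runs perform on their common range. Second, the step you compress into a ``secondary induction''---showing that once the small iteration has closed at index $m$, every further \rn{IStep} on the large side leaves the truncation to $m$ of the running accumulator fixed---is the actual mathematical content of the proposition (it is what makes iteration compute a chain of compatible prefixes), and it needs the evaluation induction hypothesis applied to the body at steps $j$ versus $m-1$ with accumulators related by $\jugTR{w_j}{\wPRED(m)}{w'_{m-1}}$ through the $\tyW{\wPRED}{\tau}$ binding of the recursion variable. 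Stating that invariant precisely is where the real work lies; as written, your sketch gestures at it but does not pin down the truncation level at which the two accumulators must be compared.
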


\paragraph{Coherence}

We have defined evaluation only on explicit terms, and indeed coercions play a
crucial role in determining the result of a computation.
Thus the question of~\emph{coherence} arises:~do all refiners of the same
implicit term having the same type compute the same results?
We give a positive answer to this question in~\Refsec{algorithmic-type-checking}
using the denotational semantics developed in the next section.

\section{Denotational Semantics}
\label{sec:denotational-semantics}

\subsection{Preliminaries}

Let~$\Bool$ denote the category with two objects~$\bot,\top$ and a single
non-identity morphism~$t : \bot \to \top$, equipped with the strict monoidal
structure given by conjunction.
Then, preorders are~$\Bool$-enriched categories, and~$\Psh{P}$ is
isomorphic to~$\catop{P} \to \Bool$.
Let~$\catpo$ denote the strict monoidal functor~$\catpo : \Bool \to \Set$
sending~$\bot$ to~$\emptyset$ and~$\top$ to~$\{ * \}$.
We write~$\catpo[P]$ for the degenerate category associated with a preorder~$P$;
its hom-sets contain at most one morphism.

Given a category~$\catC$, we denote by~$\Psh{\catC} = \catop{\catC} \to \Set$
the category of contravariant presheaves over~$\catC$.
Note that~$\Psh{P}$ differs from~$\Psh{\catpo[P]}$, hence our unusal choice of
to formally distinguish preorders from ordinary categories.

\subsection{The Topos of Trees}

In this section we sketch a model of~\Core{} in the topos of trees.
\citet{BirkedalMogelbergSchwinghammerStovring-2012} show that this category is a
convenient setting for modeling guarded recursion and synthetic step-indexing.
We follow their terminology and notations.

\begin{defn}
  The~\emph{topos of trees}, denoted~$\TT$, is~$\Psh{\catpo[\NP]}$.
\end{defn}

Briefly, an object~$X$ in the topos of trees can be described as a family of
sets~$(X(n))_{n \in \NP}$, together with a family of~\emph{restriction}
functions~$(r^X_n : X(n+1) \to X(n))_{n \in \NP}$.
The set~$X(n)$ describes what can be observed of~$X$ at step~$n$, and the
restriction functions define how future observations extend current ones.
Morphisms~$f : X \to Y$ are collections of
functions~$(f_n : X(n) \to Y(n))_{n \in \NP}$ commuting with restriction
functions.

As a topos, this category naturally has all the structure required for
interpreting simply-typed~$\lambda$-calculus with products and sums.
\begin{mathpar}
  \_ \times \_ : \TT \times \TT \to \TT
  \and
  \_ + \_ : \TT \times \TT \to \TT
  \and
  (\_)^{(\_)} : \catop{\TT} \times \TT \to \TT
\end{mathpar}
This structure follows from general constructions in presheaf categories.
Products and sums, as limits and colimits, are given pointwise.
Exponentiation can be deduced from the Yoneda lemma.

The later modality is interpreted in~$\TT$ by the functor~$\symLATER$ such
that~$(\tyLATER{X})(0) = \{ * \}$ and~$(\tyLATER{X})(n+1) = X(n)$.
A certain family of morphisms~$\semFIX{X} : X^{\tyLATER{X}} \to X$ of~$\TT$
provide fixpoint combinators, and are used to interpret guarded recursion.
We refer to~\citet{BirkedalMogelbergSchwinghammerStovring-2012} for additional
information.

\subsection{Interpreting the Warping Modality}

In order to interpret the warping modality, we need to equip the topos of trees
with a functor~$\W{p} : \TT \to \TT$ for every time warp~$p$.
Intuitively,~$(\tyW{p}{X})(n)$ should contain~``$p$-times'' more information
than~$X(n)$.
Moreover, the family of functors~$\W{(-)}$ should come equipped with enough
structure to interpret atomic coercions.

\paragraph{Pulling Presheaves along Functions}

To understand what this operation should look like, let us first consider a
restricted class of time warps.
By definition, time warps~$p$ such that~$0 < p(n) < \omega$ for
all~$0 < n < \omega$ are in a one-to-one correspondence with monotonic
functions~$f : \NP \to \NP$.
In this case, one can simply define~$(\tyW{p}{X})(n) = X(f(n))$.
Thus, if~$p$ happens to be equivalent to a function~$\NP \to \NP$,~the
functor~$\W{p} : \TT \to \TT$ is simply given by precomposition with~$p$.
From a categorical logic perspective, computing~$\tyW{p}{X}$ corresponds
to~\emph{pulling~$X$ along~$p$}.

This special case already captures some examples from the literature.
For instance,~\citet{BirkedalMogelbergSchwinghammerStovring-2012} study the left
adjoint~$\symSOONER$ of~$\symLATER$ given by~$(\tySOONER{X})(n) \defeq X(n+1)$,
which would thus correspond to~$\W{n \mapsto n+1}$.
However, most interesting time warps are not~$\NP$-valued, including those
corresponding to the later and constant modalities, and thus cannot be naively
precomposed with presheaves from~$\TT$.

\paragraph{Pulling Presheaves along Distributors}

A solution to the above problem is provided by the theory
of~\emph{distributors}, which are to functors what relations are to
functions.
A distributor~$P : \catC \catrel \catD$ from a category~$\catC$ to a
category~$\catD$ is a functor~$P : \catop{\catD} \times \catC \to \Set$.
Distributors form a~(bi)category, and enjoy properties that plain functors lack.
We refer to~\citet{Benabou-2000} for an introduction.

Any presheaf~$X : \catop{\catC} \to \Set$ is by definition equivalent to a
distributor~$\One \catrel \catC$, with~$\One$ the category with a single object
and morphism.
Postcomposing a distributor~$M : \catC \catrel \catD$ with~$X$ gives a
presheaf~$MX : \One \catrel \catD$ which, intuitively, is~\emph{$X$ pushed
  along~$M$}.
It is a crucial characteristic of distributors that post-composition with~$M$
always has a right adjoint, which we will write~$(-)/M$.
This right adjoint can be described by the end formula
\begin{equation}
  \label{eq:ran-dist}
  Y/M \defeq \int_{d \in \catD} Y(d)^{M(d,-)}.
\end{equation}
The presheaf~$Y/M$ is the result of~\emph{pulling~$Y$ along~$M$}, as recently
expounded by~\citet{MelliesZeilberger-2016}.

\paragraph{Pulling Presheaves along Time Warps}

We can extend the construction given above to time warps by realizing that the
latter are miniature distributors.

It is a consequence of the Yoneda lemma that every
distributor~$\catC \catrel \catD$ can be seen as a cocontinuous
functor~$\Psh{\catC} \to \Psh{\catD}$ and vice-versa.
A similar result holds for preorders:~every cocontinuous
function~$\Psh{P} \to \Psh{Q}$ corresponds to a monotonic
function~$\catop{Q} \times P \to \Bool$, and vice-versa.
We call such functions~\emph{linear systems}, adopting the terminology
attributed to Winskel by~\citet[\S4.1]{Hyland-2010}.
Given a time warp~$p : \Psh{\NP} \to \Psh{\NP}$, we refer to the corresponding
linear system as~$\lin{p} : \NP \catrel \NP$.
We have~$(m,n) \in \lin{p}$ if and only if~$\Yo(m) \le p(\Yo(n))$.

Pulling a presheaf along a time warp is now possible since linear systems are
nothing but~$\Bool$-enriched distributors.
Precomposing the linear system~$\lin{p}$ with~$\catpo : \Bool \to \Set$, we
obtain a standard distributor~$\catpo \lin{p} : \omega \catrel \omega$, which we
then combine with~\Refeq{ran-dist}.
\begin{defn}[Warping Functor]
  \label{defn:warping-functor}
  Given a time warp~$p$, we define the warping functor~$\W{p} : \TT \to \TT$ as
  \begin{equation}
    \W{p}
    \defeq
    (-) / (\catpo \lin{p}).
  \end{equation}
\end{defn}
Unfolding and simplifying the above definition, we obtain an explicit formula
for the observations of~$\tyW{p}{X}$ at~$n$.
\begin{equation}
  \label{eq:concrete-warping}
  (\tyW{p}{X})(n)
  =
  \left\{
    (x_m) \in
    \prod_{m = 1}^{p(n)}
    X(m)
    \;\middle|\;
    x_m = r^X_m(x_{m+1})
  \right\}
\end{equation}
The reader may check using the above formula that~$(\tyW{\wPRED}{X})(n)$
coincides with~$\tyLATER{X}(n)$.
The same is true for~$\W{\wOM}$ and~$\symCONST$.

Once the relatively intuitive~\Refeq{concrete-warping} has been found, it might
seem that the abstract~\Refdef{warping-functor} becomes unnecessary.
However, the abstract approach gives insight into the structure of~$\W{(-)}$.
In particular, routine categorical considerations imply the following.
\begin{prop}
  \label{prop:pseudo-monoidal-functor}
  Warping defines a strong monoidal functor
  \begin{equation*}
    \W{(-)}
    :
    \WARP^{\mi{op}(0,1)}
    \to
    \mi{End}(\TT).
  \end{equation*}
\end{prop}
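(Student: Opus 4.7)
My plan is to derive the proposition from the general behaviour of the pull-back construction $(-)/M$ on distributors, specialised to the linear systems $\lin{p}$.

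First, I would recall two standard facts about pulling presheaves along distributors. For any category $\catC$, pulling along the identity distributor on $\catC$ is naturally isomorphic to the identity functor on $\Psh{\catC}$. For composable distributors $M : \catC \catrel \catD$ and $N : \catD \catrel \catE$, pulling along the composite satisfies $(-)/(N \cdot M) \cong ((-)/N)/M$, because the right adjoint of a composite of left adjoints is the composite of the individual right adjoints in reverse order. Together these say that $M \mapsto ((-)/M)$ is a pseudofunctor from the bicategory of distributors into $\CAT$ that reverses the order of composition. The same pseudofunctoriality holds for $\Bool$-enriched distributors, and is preserved by base change along $\catpo : \Bool \to \Set$.

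Next, I would transfer these facts to time warps via the identification of $\lin{(-)}$ with the passage from cocontinuous functions on $\Psh{\NP}$ to $\Bool$-enriched distributors on $\NP$. Two short pointwise calculations are needed: the identity distributor on $\NP$ equals $\lin{\wID}$, since $\Yo(m) \le \Yo(n)$ iff $m \le n$; and composition of linear systems satisfies $\lin{q} \cdot \lin{p} = \lin{p \opON q}$, because composition of $\Bool$-enriched distributors corresponds to composition of cocontinuous functions between presheaf categories, and $p \opON q$ is by definition $q \circ p$. These give the strong monoidal structure isomorphisms $\W{\wID} \cong \fId{\TT}$ and $\W{p \opON q} \cong \W{p} \circ \W{q}$ directly from the pseudofunctor of the first step.

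What remains is verifying the monoidal coherence conditions, which are inherited from pseudofunctoriality: the pentagon for the composition isomorphism and the triangles for the unit follow because the isomorphisms in question are canonical, arising from uniqueness of right adjoints. The main obstacle, to the extent there is one, is bookkeeping: one must carefully line up the various opposites so that the source matches the stated $\pmop{\WARP}$. The composition reversal from right-adjoint duality is one source of $\mathrm{op}$, and the pointwise order on warps acting contravariantly on endofunctors (via the mate correspondence applied to inclusions $\lin{q} \subseteq \lin{p}$) is another; both are absorbed by the $\mathrm{op}(0,1)$ convention. No genuinely new calculation is required beyond the two identifications above.
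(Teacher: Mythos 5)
Your proposal is correct and follows exactly the route the paper intends: the paper offers no detailed proof, only the remark that the result follows by ``routine categorical considerations'' from the definition $\W{p} = (-)/(\catpo\lin{p})$, and your argument---pseudofunctoriality of pulling along distributors, the identifications of $\lin{\wID}$ with the identity distributor and of $\lin{q}\cdot\lin{p}$ with $\lin{p \opON q}$, mates for the order, and coherence from uniqueness of adjoints---is precisely that routine argument spelled out. The bookkeeping of the two opposites you describe is likewise the content of the paper's $\mathrm{op}(0,1)$ annotation, so nothing is missing.
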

Here~$\WARP$ and~$\mi{End}(\TT)$ are considered as monoidal categories whose
objects are respectively time warps and endofunctors of~$\TT$, and where the
monoidal structure is given by composition in both cases.
The category~$\WARP$ is preordered.
\Refprop{pseudo-monoidal-functor} entails the existence of the
following structure.
\begin{mathpar}
  \W{p \ge q} : \W{p} \to \W{q}
  \and
  \epsilon : \W{\wONE} \iso \mi{Id}
  \and
  \mu^{p,q} : \W{p} \W{q} \iso \W{p \ast q}
\end{mathpar}
Moreover, every functor~$\W{p}$ is a right adjoint, hence limit-preserving.

\subsection{The Interpretation}

Ground types are interpreted using the functor~$\Delta : \Set \to \TT$ mapping
every set to a constant presheaf.
The interpretation of~$\tyS{\tau}$ is characterized by~$
\sem{\tyS{\tau}}(n) = \prod_{m = 1}^{n} \sem{\tau}(m)
$.
We have already given the interpretation of all other types.
Typing contexts are interpreted as cartesian products.

Coercions~$\jugC{\alpha}{\tau}{\tau'}$ give rise to
morphisms~$\sem{\alpha} : \sem{\tau} \to \sem{\tau'}$.
Composite coercions are interpreted by the functorial actions of type
constructors, plus plain composition.
Atomic coercions take advantage of the structure arising
from~\Refprop{pseudo-monoidal-functor}.
For example,
$\sem{
  \jugC
  {\coeCONCAT{p}{q}}
  {\tyW{p}{\tyW{q}{\tau}}}
  {\tyW{p \opON q}{\tau}}
}
=
\mu^{p,q}_{\sem{\tau}}
$
and
$
\llbracket
  \coeDEL{p}{q} :
  \tyW{p}{\tau} \subty
  \tyW{q}{\tau}
\rrbracket
=
(\W{p \ge q})_{\sem{\tau}}
$.
The~$\coeINFL$ coercion is interpreted by the general isomorphism
between~$\Delta(S)$ and~$\tyW{\wOM}{\Delta(S)}$.
The~$\coeDISTP$ and~$\coeFACTP$ coercions are interpreted by the natural
isomorphisms arising from the limit-preservation property of~$\W{p}$.

Since the type system of~\Reffig{typing} is not exactly syntax-directed, we will
interpret typing derivations rather than terms.
Guarded recursion is interpreted using the~$\semFIX{\sem{\tau}}$ morphisms.
We interpret structure maps~$\sigma \in \structMap{\Gamma}{\Gamma'}$ as
morphisms~$\sem{\sigma} : \sem{\Gamma'} \to \sem{\Gamma}$ and~\Refrule{Struct}
by precomposition.
Other cases are standard~\cite{LambekScott-1986}.

\begin{prop}[Coherence for Explicit Terms]
  \label{prop:explicit-coherence}
  Any two derivations of~$\jugT{\Gamma}{e}{\tau}$ are interpreted by the same
  morphism in~$\TT$.
\end{prop}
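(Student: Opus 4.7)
The plan is to reduce coherence to the uniqueness of a canonical derivation, leveraging \Refprop{canonical-derivations}, which pins applications of \rn{Struct} to exactly three syntactic positions. I would proceed in two stages. First, I would show that every derivation $\mathcal{D}$ of $\jugT{\Gamma}{e}{\tau}$ has the same interpretation as a canonical derivation $\mathcal{D}^{\mathrm{can}}$ of the same judgment. Second, I would show that all canonical derivations of a given judgment necessarily share the same interpretation.

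For the first stage, I would proceed by induction on $\mathcal{D}$, establishing a general rule-permutation lemma: for each non-structural rule $R$, moving an application of \rn{Struct} between a premise of $R$ and its conclusion preserves the interpretation. This reduces to a family of naturality and functoriality squares, since the interpretation of \rn{Struct} with structure map $\sigma$ is precomposition by the morphism $\sem{\sigma} : \sem{\Gamma'} \to \sem{\Gamma}$, while the interpretations of the other rules are built from functorial constructions in $\TT$: products for pairs, exponentials for abstraction and application, coproducts for injections and case, the stream object for \rn{Cons}, \rn{Head}, \rn{Tail}, and the fixpoint combinator for \rn{Rec}. Coercion rules \rn{SubR} and \rn{SubL} are handled using the fact that the interpretations of coercions are built from the same kinds of natural transformations. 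Iterating these permutations normalizes any derivation into canonical form without altering its interpretation.

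For the second stage, I would combine \Refprop{uniqueness-of-types-explicit-terms} with the structural constraints of the canonical form: at every internal node, the type is forced, and the shape of the canonical derivation must mirror the term's constructor structure. The only residual choice concerns the particular structure maps appearing before \rn{Var}, \rn{Warp}, and \rn{SubL}, but these are determined by the identifiers free in the relevant subterm together with the requirement of matching types pointwise. Hence the canonical derivation is essentially unique, and the interpretation is therefore well-defined.

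The main technical obstacle will be the permutation lemma at the \rn{Warp} rule, which transforms the context from $\Gamma$ into $\tyW{p}{\Gamma}$. Commuting a \rn{Struct} with outer structure map $\sigma \in \structMap{\tyW{p}{\Gamma_0}}{\tyW{p}{\Gamma}}$ past the \rn{Warp} requires showing that $\sigma$ arises uniquely from an inner $\sigma_0 \in \structMap{\Gamma_0}{\Gamma}$, and that the two resulting interpretations agree. This rests on the natural isomorphism $\sem{\tyW{p}{\Gamma}} \cong \W{p}(\sem{\Gamma})$, which follows from the limit-preservation of $\W{p}$ noted after \Refprop{pseudo-monoidal-functor}, together with the functoriality of $\W{p}$ applied to $\sem{\sigma_0}$. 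All remaining rules involve only standard categorical combinators, and their permutation cases reduce to routine diagram chases.
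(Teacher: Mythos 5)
Your proposal is correct and follows essentially the same route as the paper: both reduce coherence to the existence and uniqueness of the canonical derivation of \Refprop{canonical-derivations}, with the key semantic ingredient being that \Refruleshort{Struct} is interpreted as precomposition with~$\sem{\sigma}$ and commutes with the interpretation of every other rule. The only presentational difference is that the paper packages your stepwise rule-permutation argument as a single induction showing that every declarative derivation has the same interpretation as the (unique, syntax-directed) algorithmic derivation~$\algE{\Gamma}{e}{\tau}$, with your permutation lemma appearing there as the statement that algorithmic type-checking commutes semantically with structure maps.
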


The proof shows that the interpretation of any derivation of~$e$ is equal to the
interpretation of the canonical derivation for~$e$ built
in~\Refprop{canonical-derivations}.
Since this canonical derivation is unique, this entails the coherence of the
interpretation for explicit terms.

\subsection{Adequacy}

The interpretation reflects operational equivalence, which in~\Core{} consists
in observing scalars at the first step.
\begin{theorem}
  If~$\sem{\jugT{\Gamma}{e}{\tau}} = \sem{\jugT{\Gamma}{e'}{\tau}}$
  then~$\CTXEQ{\Gamma}{e}{e'}{\tau}$.
\end{theorem}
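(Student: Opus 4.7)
The plan is to reduce the adequacy claim to an observation at ground type via compositionality, and then bridge the denotational interpretation and the operational semantics by a step-indexed logical relation.

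First, I would unfold the definition of $\CTXEQ{\Gamma}{e}{e'}{\tau}$: one needs to show that for any program context $C[-]$ closing the terms into a ground-typed program, the evaluations of $C[e]$ and $C[e']$ at step~$1$ produce the same scalar. Since the denotational interpretation is defined on typing derivations and is clearly compositional with respect to structure maps, typing rules, and substitution of subterms, I would first prove a compositionality lemma stating that $\sem{C[e]} = \sem{C[-]} \circ \sem{e}$ as morphisms in~$\TT$ (using \Refprop{explicit-coherence} to make sense of this independently of chosen derivations). Combined with the hypothesis $\sem{e} = \sem{e'}$, this reduces the theorem to the closed ground-type case: if $\jugT{\ctxempty}{e_0}{\nu}$ and $\jugT{\ctxempty}{e_0'}{\nu}$ satisfy $\sem{e_0} = \sem{e_0'}$, then $\jugEV{e_0}{\emptyset}{1}{s}$ and $\jugEV{e_0'}{\emptyset}{1}{s}$ for the same scalar~$s$.

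The core of the proof is then an adequacy lemma linking the denotation of a closed term of type $\tau$ at step~$n$ to the value produced by its operational evaluation at~$n$. I would define, by induction on the type structure, a step-indexed family of relations $R^\tau_n \subseteq \sem{\tau}(n) \times \{\, v \mid \jugV{v}{\tau}{n}\,\}$ in the obvious compositional way: at a ground type $\nu$, $R^\nu_n$ relates a scalar in $\sem{\nu}(n) = \SCAL_\nu$ to itself (and everything to $\vSTOP$ when $n=0$); at products, sums, and streams one uses the pointwise relations threaded through the restriction maps; at functions, one uses the standard Kripke extension over all $m \le n$ and truncated arguments, invoking \Reflemm{type-safety-truncation} and \Refprop{functoriality-of-truncation} for well-definedness; and at $\tyW{p}{\tau}$, an element of $(\tyW{p}{\tau})(n)$, which by \Refeq{concrete-warping} is a coherent family $(x_m)_{m \le p(n)}$, is related to $\vW{p}{v}$ precisely when $x_{p(n)}$ is related to $v$ at step~$p(n)$. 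The two degenerate points require care: at $n=0$ every denotation is related to $\vSTOP$, and at $n=\omega$ an element must be related to a thunk whose forced value at each finite $m$ is related to the $m$-restriction of the element.

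Next, I would extend the relation pointwise to environments and coerce it through the coercion application judgment, and prove the fundamental lemma: for any derivation of $\jugT{\Gamma}{e}{\tau}$ and any environment $\gamma$ with $(d,\gamma) \in R^\Gamma_n$, there is $v$ with $\jugEV{e}{\gamma}{n}{v}$ and $(\sem{e}(n)(d), v) \in R^\tau_n$. The proof goes by induction on the canonical derivation supplied by \Refprop{canonical-derivations}, using \Refthm{totality} to get existence of $v$, type safety to get well-typedness at each step, and the functoriality of truncation for soundness of \rn{IStep}. At ground type and $n=1$, $R^\nu_1$ forces equality of scalars, giving the reduction above.

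The main obstacle will be the clause for the warping modality and its interaction with the three special coercions whose operational rules split according to whether the warped argument is a genuine warped value, a $\vSTOP$, or a thunk (as in rules~\rn{ECConcat(Stop/Box)} and \rn{ECDist(Stop/Box)}). The logical relation must accommodate all three shapes uniformly, which forces the definition at $\tyW{p}{\tau}$ to be parameterized not only by $p(n)$ but also by the corner cases $p(n) = 0$ and $p(n) = \omega$, mirroring the operational semantics. Verifying that $\sem{\coeCONCAT{p}{q}}$, $\sem{\coeDECAT{p}{q}}$, $\sem{\coeDISTP}$, $\sem{\coeINFL}$, and $\sem{\coeDEL{p}{q}}$ preserve the relation is the technical heart of the lemma, and in each case one must match the natural transformation coming from \Refprop{pseudo-monoidal-functor} against the concrete operational rewriting on values. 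Once this is pushed through, adequacy at ground type and hence the theorem follows.
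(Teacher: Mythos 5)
Your proposal is correct and follows essentially the same route as the paper: a (step-indexed) logical relation between denotational elements and operational values, its fundamental property proved by induction on (canonical) typing derivations, and compositionality of the interpretation to reduce contextual equivalence to the observation of scalars at ground type and step~$1$. The only difference is presentational: the paper first packages values and truncation into presheaves~$\opsem{\tau}$ of~$\TT$ (using determinism, type safety, totality, functoriality of truncation, and monotonicity) so that the relation~$\IMPL{\tau}$ lives between points of~$\opsem{\tau}$ and~$\sem{\tau}$ and the reduction to ground type is phrased as a congruence lemma, whereas you define the relation directly on~$\sem{\tau}(n) \times \{v \mid \jugV{v}{\tau}{n}\}$ --- the content is the same.
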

To prove the result, we remark that the values described
in~\Refsec{operational-semantics} can be organized as an object of~$\TT$, using
results such as~\Refprop{monotonicity}.
The details can be found in~\Refapp{proofs}.

\section{Algorithmic Type Checking}
\label{sec:algorithmic-type-checking}

The abstract type-checking algorithm we present in this section builds an
explicit term from an implicit one in a canonical way.
This involves two main challenges:~deciding the subtyping judgment, and dealing
with the context restriction arising in~\Refrule{Warp}.

\subsection{Deciding Subtyping}

\begin{figure*}
  \small
  \[
    \begin{array}{@{}lr@{}}
      \begin{array}[b]{@{}c@{}}
        \begin{array}{@{}r@{}l@{}}
          \begin{array}{r@{\;}l}
            \NORM{\nu}
            & = \tyW{\wOM}{\nu}
            \\
            \NORM{\tyS{\tau}}
            & =
            \tyW{\wID}{\tyS{\NORM{\tau}}}
            \\
            \NORM{\tyARR{\tau_1}{\tau_2}}
            & =
            \tyW{\wID}{(\tyARR{\NORM{\tau_1}}{\NORM{\tau_2}})}
            \\
            \NORM{\tyPROD{\tau_1}{\tau_2}}
            & =
            \tyPROD{\NORM{\tau_1}}{\NORM{\tau_2}}
          \end{array}
          &
          \begin{array}{r@{\;}l}
            \NORM{\tySUM{\tau_1}{\tau_2}}
            & =
            \tyW{\wID}{(\tySUM{\NORM{\tau_1}}{\NORM{\tau_2}})}
            \\
            \NORM{\tyW{p}{(\tyPROD{\tau_1}{\tau_2})}}
            & =
            \tyPROD{\NORM{\tyW{p}{\tau_1}}}{\NORM{\tyW{p}{\tau_2}}}
            \\
            \NORM{\tyW{p}{\tau}}
            & =
            \tyW{p \opON q}{\tau'}
            \begin{array}[t]{@{~}l}
              \mbox{where }
              \tau \ne (\tyPROD{\_}{\_})
              \\
              \mbox{and }
              \tyW{q}{\tau'} = \NORM{\tau}
            \end{array}
          \end{array}
        \end{array}
        \\
        \\
        \mbox{\normalsize (A) Type Normalization}
      \end{array}
      &
      \begin{array}[b]{@{}c@{}}
        \begin{array}{@{}r@{\;}c@{\;}l@{}}
          \Prec{\nu}{\nu}
          & = &
          \coeID
          \\
          \Prec
          {\tyS{\tau_1}}
          {\tyS{\tau_2}}
          & = &
          \coeS{\Prec{\tau_1}{\tau_2}}
          \\
          \Prec
          {\tyARR{\tau_1'}{\tau_1''}}
          {\tyARR{\tau_2'}{\tau_2''}}
          & = &
          \coeARR
          {\Prec{\tau_2'}{\tau_1'}}
          {\Prec{\tau_1''}{\tau_2''}}
          \\
          \Prec
          {\tyPROD{\tau_1'}{\tau_1''}}
          {\tyPROD{\tau_2'}{\tau_2''}}
          & = &
          \coePROD
          {\Prec{\tau_1'}{\tau_2'}}
          {\Prec{\tau_1''}{\tau_2''}}
          \\
          \Prec
          {\tySUM{\tau_1'}{\tau_1''}}
          {\tySUM{\tau_2'}{\tau_2''}}
          & = &
          \coeSUM
          {\Prec{\tau_1'}{\tau_2'}}
          {\Prec{\tau_1''}{\tau_2''}}
          \\
          \Prec
          {\tyW{p}{\tau_1}}
          {\tyW{q}{\tau_2}}
          & = &
          \coeDEL{p}{q}
          ;
          \coeW{q}{\Prec{\tau_1}{\tau_2}}
          \mbox{ if } p \ge q
        \end{array}
        \\
        \\
        \mbox{\normalsize (B) Type Precedence}
      \end{array}
    \end{array}
  \]
  \caption{Type Normalization and Precedence}
  \label{fig:alg-types}
\end{figure*}

To decide subtyping, we start with the observation that most atomic
coercions~$\jugC{\alpha}{\tau}{\tau'}$ from~\Reffig{subtyping} come in pairs, in
the sense that there exists~$\alpha^{-1}$ such
that~$\jugC{\alpha^{-1}}{\tau'}{\tau}$.
This is even true for~$\coeINFL$, since we can take~$\coeINFL^{-1}$ to
be~$\coeDEL{\wOM}{\wID};\coeUNWRAP$.
The only atomic coercion for which this is not the case is~$\coeDEL{p}{q}$
when~$q < p$.
This suggests dealing with delays separately.

\paragraph{Normalizing Types}

To deal with invertible coercions, we define a function~$\tau$ mapping each type
to an equivalent but simpler form.
Such~\emph{normal} types~$\tau^n$ obey the following grammar.
{
  \small
  \[
    \begin{array}{@{}l@{\hspace{-.1pt}}r@{}}
      \begin{array}{l@{~}c@{~}l}
        \tau^n
        & \Coloneqq &
        \tyW{p}{\tau^r} \mid \tyPROD{\tau^n}{\tau^n}
      \end{array}
      &
      \begin{array}{l@{~}c@{~}l}
        \tau^r
        & \Coloneqq &
        \nu
        \mid \tyS{\tau^n}
        \mid \tyARR{\tau^n}{\tau^n}
        \mid \tySUM{\tau^n}{\tau^n}
      \end{array}
    \end{array}
  \]
}
In other words, normal types feature exactly one warping modality immediately
above every non-product type former.

The total function~$\NORM{\tau}$ returns the normalized form of~$\tau$.
It is defined by recursion on~$\tau$ in~\Reffig{alg-types}-A.
For every~$\tau$, there are
coercions~$\jugCE{\NORMin{\tau}}{\NORMout{\tau}}{\tau}{\NORM{\tau}}$, defined
in~\Refapp{supplementary-material}.

\paragraph{Deciding Precedence}

We now decide subtyping in the special case where the only atomic coercions
allowed are delays, a case we call~\emph{precedence}.
The corresponding partial computable function~$\Prec{-}{-}$, when defined,
builds a coercion~$\jugC{\Prec{\tau}{\tau'}}{\tau}{\tau'}$.
It is given in~\Reffig{alg-types}-B.
In the absence of~$\coeCONCAT{p}{q}$ and~$\coeDECAT{p}{q}$ coercions, it is
enough to traverse~$\tau$ and~$\tau'$ in lockstep, checking whether~$p \ge q$
holds when comparing~$\tyW{p}{\tau}$ and~$\tyW{q}{\tau'}$.

\paragraph{Putting it all together}

We decide subtyping in the general case by combining precedence with
normalization:
\[
  \Coe{\tau}{\tau'}
  \defeq
  \NORMin{\tau}
  ;
  \Prec{\NORM{\tau}}{\NORM{\tau'}}
  ;
  \NORMout{\tau'}.
\]
We write~$\Coe{\Gamma}{\Gamma'}$ for the pointwise extension to contexts.

\subsection{Adjoint Typing Contexts}

Consider the type-checking problem for~$\synBY{t}{p}$ in a given
context~$\Gamma$.
If every type~$\tau = \Gamma(x)$, with~$x$ a free variable of~$t$, is of the
form~$\tyW{p}{\tau'}$, we may apply~\Refrule{Warp}.
Otherwise, we have to find a type~$\tau'$ such
that~$\tau \subty \tyW{p}{\tau'}$.
There are several choices for~$\tau'$, and they are far from equivalent.
For instance, taking~$\tau' \defeq \tyW{\wZERO}{\tau}$ would work since
$
  \tau
  \subty \tyW{\wZERO}{\tau}
  \subty \tyW{p}{\tyW{\wZERO}{\tau}}
$ always holds,
but will in general impose artificial constraints on the type
of~$t$.
For~$\tau'$ to be a canonical choice,
\begin{equation}
  \label{eq:type-adjoint1}
  \tau
  \subty
  \tyW{p}{\tau''}
  \Leftrightarrow
  \tau'
  \subty
  \tau''
\end{equation}
needs to hold for any type~$\tau''$.
Now, assume that~$\tau$ and~$\tau''$ are normalized types which are not
products, and thus necessarily start with a warping modality.
Equivalence~\Refeqshort{type-adjoint1} becomes
\begin{equation}
  \label{eq:type-adjoint2}
  \tyW{q}{\tau}
  \subty
  \tyW{p}{\tyW{r}{\tau''}}
  \Leftrightarrow
  \tau'
  \subty
  \tyW{r}{\tau''}.
\end{equation}
Then, a solution satisfying~\Refeqshort{type-adjoint1} is given
by~$\tau' = \tyW{q \wD p}{\tau}$, with~$q \wD p$ a hypothetical time warp such
that
\begin{equation}
  \label{eq:warp-adjoint}
  r \circ p \le q
  \Leftrightarrow
  r \le q \wD p.
\end{equation}
We are thus looking for an operation~$(-) \wD p$ right adjoint to
precomposition~$(-) \circ p$.
Right adjoints to precomposition~(and postcomposition,
cf.~\Refsec{denotational-semantics}) always exist for
distributors~\citep[\S4]{Benabou-2000}, and thus linear systems.
The general formula, specialized to linear systems and time warps, gives
\begin{equation}
  (q \wD p)(n)
  =
  p(
  \min
  \{
  m \in \NI
  \mid
  n \le q(m)
  \}).
\end{equation}
Thus, we define normal-type division as
\[
  (\tyPROD{\tau_1}{\tau_2}) \wDn p = \tyPROD{(\tau_1 \wDn p)}{(\tau_2 \wDn p)}
  \mbox{ and }
  (\tyW{q}{\tau}) \wDn p = \tyW{q \wD p}{\tau}
\]
and general type division as~$\tau \wD p \defeq \NORM{\tau} \wDn p$.
%

\subsection{The Algorithm}

\begin{figure*}
  \centering
  \small
  $\begin{array}{@{}r@{\;}c@{\;}l@{}}
    \Elab
    {\Gamma}
    {x}
    & = &
    (
    \Gamma(x)
    ,
    x
    )
    \\
    \Elab
    {\Gamma}
    {\synFUN{x}{\tau}{t}}
    & = &
    (
    \tyARR{\tau}{\tau'}
    ,
    \synFUN{x}{\tau}{e}
    )
    \mbox{ where }
    (\tau', e) = \Elab{\Gamma, x : \tau}{t}
    \\
    \Elab
    {\Gamma}
    {t_1~t_2}
    & = &
    (
    \tau_1''
    ,
    (e_1; \Coe{\tau_1}{\tyARR{\tau_1'}{\tau_1''}})
    ~
    (e_2; \Coe{\tau_2}{\tau_1'})
    )
    \\ & &
    \mbox{where }
    (\tau_i, e_i) = \Elab{\Gamma}{t_i}
    \mbox{ and }
    \tyW{-}{(\tyARR{\tau_1'}{\tau_1''})} = \NORM{\tau_1}
    \\
    \Elab
    {\Gamma}
    {(t_1, t_2)}
    & = &
    (
    \tyPROD{\tau_1}{\tau_2}
    ,
    (e_1, e_2)
    )
    \mbox{ where }
    (\tau_i, e_i) = \Elab{\Gamma}{t_i}
    \\
    \Elab
    {\Gamma}
    {\synPROJ{i}{t}}
    & = &
    (
    \tau_i
    ,
    \synPROJ{i}{(e;\Coe{\tau}{\tyPROD{\tau_1}{\tau_2}})}
    )
    \mbox{ where }
    (\tau, e) = \Elab{\Gamma}{t}
    \mbox{ and }
    \tyPROD{\tau_1}{\tau_2} = \NORM{\tau}
    \\
    \Elab
    {\Gamma}
    {\synINJ{1+i}{\tau_{2-i}}{t}}
    & = &
    (
    \tySUM{\tau_1}{\tau_2}
    ,
    \synINJ{1+i}{\tau_{2-i}}{e}
    )
    \mbox{ where }
    (\tau_{1+i}, e) = \Elab{\Gamma}{t}
    \\
    \Elab
    {\Gamma}
    {\synCASE{t}{x_1}{t_1}{x_2}{t_2}}
    & = &
    (
    \SUP{\tau_1'}{\tau_2'}
    ,
    \synCASE
    {e;\Coe{\tau}{\tySUM{\tau_1}{\tau_2}}}
    {x_1}{e_1;\Coe{\tau_1'}{\SUP{\tau_1'}{\tau_2'}}}
    {x_2}{e_2;\Coe{\tau_2'}{\SUP{\tau_1'}{\tau_2'}}}
    )
    \\ & &
    \mbox{ where }
    (\tau, e) = \Elab{\Gamma}{t}
    \mbox{ and }
    \tyW{-}{(\tySUM{\tau_1}{\tau_2})} = \NORM{\tau}
    \mbox{ and }
    (\tau_i', e_i) = \Elab{\Gamma, x : \tau_i}{t_i}
    \\
    \Elab
    {\Gamma}
    {s}
    & = &
    (\nu, s)
    \mbox{ where } s \in S_\nu
    \\
    \Elab
    {\Gamma}
    {\synREC{x}{\tau}{t}}
    & = &
    (
    \tau
    ,
    \synREC{x}{\tau}{(e;\Coe{\tau'}{\tau})}
    )
    \mbox{ where }
    (\tau', e) = \Elab{\Gamma, x : \tyW{\wPRED}{\tau}}{t}
    \\
    \Elab
    {\Gamma}
    {\synBY{t}{p}}
    & = &
    (
    \tyW{p}{\tau}
    ,
    \Coe{\Gamma}{\tyW{p}{(\Gamma \wD p)}};
    \synBY{e}{p}
    )
    \mbox{ where }
    (\tau, e) = \Elab{\Gamma \wD p}{t}
    \\
    \Elab
    {\Gamma}
    {\synHEAD{t}}
    & = &
    (\tau', \synHEAD{(e;\Coe{\tau}{\tyS{\tau'}})})
    \mbox{ where }
    (\tau, e) = \Elab{\Gamma}{t}
    \mbox{ and }
    \tyW{-}{\tyS{\tau'}} = \NORM{\tau}
    \\
    \Elab
    {\Gamma}
    {\synTAIL{t}}
    & = &
    (\tyW{\wPRED}{\tyS{\tau'}}, \synTAIL{(e;\Coe{\tau}{\tyS{\tau'}})})
    \mbox{ where }
    (\tau, e) = \Elab{\Gamma}{t}
    \mbox{ and }
    \tyW{-}{\tyS{\tau'}} = \NORM{\tau}
    \\
    \Elab
    {\Gamma}
    {\synCONS{t_1}{t_2}}
    & = &
    (
    \tyS{(\SUP{\tau_1}{\tau_2'})}
    ,
    \synCONS
    {
      (e_1;\Coe{\tau_1}{\SUP{\tau_1}{\tau_2'}})
    }
    {
      (e_2;\Coe{\tau_2}{\tyW{\wPRED}{\tyS{(\SUP{\tau_1}{\tau_2'})}}})
    }
    )
    \\ & &
    \mbox{where }
    (\tau_i, e_i) = \Elab{\Gamma}{t_i}
    \mbox{ and }
    \tyW{-}{\tyS{\tau_2'}} = \NORM{\tau_2}
  \end{array}$
  \caption{Elaboration}
  \label{fig:term-elaboration}
\end{figure*}

The partial computable function~$\Elab{\Gamma}{t}$ returns a pair~$(\tau,e)$
with~$\Refines{e}{t}$ such that~$\jugT{\Gamma}{e}{\tau}$ holds.
Its definition is given in~\Reffig{term-elaboration}.
It uses the algorithmic subtyping judgment when type-checking destructors, and
the context division judgment when applying~\Refrule{Warp}.
The case of pattern-matching relies on the existence of type suprema, which are
easy to compute structurally for normal types;
see~\Refapp{supplementary-material}.


\subsection{Metatheoretical Results}

\begin{lemma}
  \label{lemm:coercion-coherence}
  If~$\jugC{\alpha}{\tau}{\tau'}$ then~$\Coe{\tau}{\tau'}$ is defined and
  \[
    \sem{\jugC{\alpha}{\tau}{\tau'}}
    =
    \sem{\jugC{\Coe{\tau}{\tau'}}{\tau}{\tau'}}.
  \]
\end{lemma}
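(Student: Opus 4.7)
The plan is to prove both assertions simultaneously, by induction on the derivation of~$\jugC{\alpha}{\tau}{\tau'}$.

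Definedness of~$\Coe{\tau}{\tau'}$ reduces to definedness of~$\Prec{\NORM{\tau}}{\NORM{\tau'}}$, since the total function~$\NORM{-}$ is given by structural recursion. I would first show that whenever~$\tau \subty \tau'$ is derivable,~$\NORM{\tau}$ and~$\NORM{\tau'}$ share a common skeleton modulo outer warps---that is, reading the non-warp spine of their grammar yields the same tree. This is an induction on the subtyping derivation: identity, composition and the congruence rules preserve skeletons; the invertible atomic axioms~$\coeWRAP$,~$\coeCONCAT{p}{q}$,~$\coeDISTP$,~$\coeINFL$ and their inverses correspond exactly to the nontrivial clauses of~$\NORM{-}$ in Figure~\ref{fig:alg-types}-A; and~$\coeDEL{p}{q}$ changes only the outer warp. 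Once skeletons agree, the lockstep recursion in Figure~\ref{fig:alg-types}-B succeeds provided the pointwise warp ordering holds at each matching position, which is guaranteed by the same case analysis (the invertible axioms leave warps intact up to composition via~$p \opON q$, and~$\coeDEL{p}{q}$ is constrained by~$q \le p$).

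For the semantic equality I would first establish the auxiliary lemma that~$\sem{\NORMin{\tau}}$ and~$\sem{\NORMout{\tau}}$ are mutually inverse isomorphisms in~$\TT$ for every~$\tau$. This follows because these coercions are built from the invertible atomic axioms and congruence rules, and Property~\ref{prop:pseudo-monoidal-functor} already supplies the required isomorphisms~$\epsilon$ and~$\mu^{p,q}$, together with the product distributivity isomorphisms coming from the limit preservation of~$\W{p}$. With this lemma in hand, the goal~$\sem{\alpha} = \sem{\Coe{\tau}{\tau'}}$ reduces to showing~$\sem{\NORMout{\tau}} \circ \sem{\alpha} \circ \sem{\NORMin{\tau'}} = \sem{\Prec{\NORM{\tau}}{\NORM{\tau'}}}$, a claim purely about coercions between normal types. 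This I would prove by induction on~$\alpha$: composition and congruence follow by functoriality; the~$\coeDEL{p}{q}$ case reproduces the sole non-invertible clause of precedence; and each invertible atomic case is discharged by a commutative triangle of isomorphisms arising from the monoidal coherence of~$\W{(-)}$.

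The main obstacle lies in the invertible atomic cases, most acutely~$\coeDISTP$ and~$\coeFACTP$, because normalizing~$\tyW{p}{(\tyPROD{\tau_1}{\tau_2})}$ itself distributes a warp over a product, so the coercions~$\NORMin{-}$ and~$\NORMout{-}$ already contain instances of these very axioms. Untangling this self-reference calls for a secondary induction on type size together with a small diagram chase. The~$\coeCONCAT{p}{q}$/$\coeDECAT{p}{q}$ pair is analogous but should follow directly from the strong monoidal structure given by~$\mu^{p,q}$, and the~$\coeINFL$ case reduces to the isomorphism~$\Delta(S) \cong \tyW{\wOM}{\Delta(S)}$ used in the interpretation.
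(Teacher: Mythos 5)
Your proposal is correct and follows essentially the same route as the paper: an induction on the subtyping derivation, factoring every coercion through the normalization isomorphisms $\NORMin{\tau}$ and $\NORMout{\tau}$, showing that the invertible atomic axioms are absorbed by normalization (so that precedence degenerates to reflexivity there) while $\coeDEL{p}{q}$ is matched by the sole non-invertible clause of $\Prec{-}{-}$, with composition and the congruence rules handled by transitivity and congruence properties of the algorithm. The subtlety you flag about $\coeDISTP$/$\coeFACTP$ re-entering the definition of $\NORMin{-}$ is real and is resolved in the paper exactly as you suggest, by an auxiliary induction on type size.
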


\begin{theorem}[Completeness of Algorithmic Typing]
  \label{thm:completeness}
  If~$\jugT{\Gamma}{e}{\tau}$, there is~$e^m,\tau^m,\alpha^m$
  with~$(\tau^m,e^m) = \Elab{\Gamma}{t}$,~$\jugC{\alpha^m}{\tau^m}{\tau}$,
  and
  \[
    \sem{\jugT{\Gamma}{e}{\tau}}
    =
    \sem{\jugT{\Gamma}{e}{\tau^m}}
    ;
    \sem{\jugC{\alpha}{\tau^m}{\tau}}
    .
  \]
\end{theorem}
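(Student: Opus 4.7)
The plan is to proceed by induction on a canonical typing derivation of $\jugT{\Gamma}{e}{\tau}$, as provided by \Refprop{canonical-derivations}, writing $t = \Er{e}$. At each case we must exhibit an $\alpha^m$ witnessing $\jugC{\alpha^m}{\tau^m}{\tau}$ with $(\tau^m, e^m) = \Elab{\Gamma}{t}$ and check that the denotation of the original derivation equals the composite of the denotation of $(\tau^m, e^m)$'s canonical derivation with $\sem{\alpha^m}$. A useful reformulation of the result to maintain inductively is that $\tau^m$ is a \emph{least} type for $t$ in context $\Gamma$, in the sense that every typing of a refiner of $t$ factors through $(\tau^m, e^m)$ up to a coercion.

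First I would dispatch the two coercion rules, \rn{SubR} and \rn{SubL}, since they are exactly where the hypothesis $\alpha^m$ gets modified. For \rn{SubR}, the induction hypothesis produces some $(\tau^m, e^m)$ with $\jugC{\alpha^m_0}{\tau^m}{\tau}$, and we take $\alpha^m \defeq \alpha^m_0; \alpha$, using \Reflemm{coercion-coherence} to identify $\sem{\alpha^m_0;\alpha}$ with $\sem{\Coe{\tau^m}{\tau'}}$. The case of \rn{SubL} is analogous, but appeals to the pointwise extension of $\Coe{-}{-}$ to contexts together with functoriality of $\sem{-}$ in the context argument. For the purely structural rules (\rn{Var}, \rn{Fun}, \rn{Pair}, \rn{Proj}, \rn{Inj}, \rn{Case}, \rn{Const}, \rn{Head}, \rn{Tail}, \rn{Cons}, \rn{App}, \rn{Rec}), the elaboration clauses in \Reffig{term-elaboration} directly mirror the typing rules up to the insertion of $\Coe{\tau^m_i}{\tau_i}$ on subterms; the inductive hypothesis produces precisely the ingredients of these coercions, and \Reflemm{coercion-coherence} guarantees the semantic equality. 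For \rn{Struct}, since the canonical derivation only uses \Refrule{Struct} immediately before \rn{Var}, \rn{Warp}, or \rn{SubL}, it suffices to absorb the structure map into the chosen subderivation; the semantics of a structure map is precomposition by a projection in $\TT$, which commutes with the coercions produced by the algorithm.

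The main obstacle is \rn{Warp}. Here the canonical derivation types $\synBY{t_0}{p}$ from a premise $\jugT{\tyW{p}{\Gamma}}{e_0}{\tau_0}$ together with a preceding context coercion $\jugC{\beta}{\Gamma'}{\tyW{p}{\Gamma}}$ (from the \rn{SubL} forced by \Refprop{canonical-derivations}), so we must show that the algorithmic choice $\Gamma \wD p$ is correct \emph{and} minimal. This is exactly what the adjunction \Refeqshort{warp-adjoint} delivers: $\Gamma' \subty \tyW{p}{(\Gamma \wD p)}$ holds because $\tyW{p}{\Gamma} = \tyW{p}{\Gamma}$ and we may apply the adjunction componentwise, obtaining a canonical $\Coe{\Gamma'}{\tyW{p}{(\Gamma' \wD p)}}$, while the universal property ensures any alternative choice factors through it. The semantic clause then reduces to checking that $\W{p}$ preserves the relevant triangle, which follows from \Refprop{pseudo-monoidal-functor} (strong monoidality makes $\W{p}$ into a functor, in particular preserving composition of the coercions coming from the inductive hypothesis).

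Finally I would combine these cases with \Reflemm{coercion-coherence} applied globally: wherever the algorithm inserts a $\Coe{\tau^m_i}{\tau_i}$ and the original derivation used an \emph{ad hoc} coercion (possibly reconstructed from a combination of congruence rules and subtyping axioms), coherence allows us to replace one by the other at the semantic level. Assembling the pieces yields both the existence of $(e^m, \tau^m, \alpha^m)$ and the equality of denotations. The trickiest bookkeeping will be maintaining the minimality invariant across the \rn{Warp} and \rn{Case} cases simultaneously, since the latter introduces a supremum $\SUP{\tau_1'}{\tau_2'}$ whose universal property must interact cleanly with the adjoint $\wD p$ when \rn{Case} is nested under \rn{Warp}; I expect this to require separate structural lemmas about $\NORM{-}$, $\Prec{-}{-}$, and $\wDn$ stating that they commute up to canonical isomorphism, which should all hold routinely given the adjoint definitions.
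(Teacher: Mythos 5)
Your overall strategy---induction on the typing derivation, coherence for coercions to reconcile the algorithm's $\Coe{-}{-}$ with the \emph{ad hoc} coercions of the original derivation, the adjunction $(-) \wD p$ for \rn{Warp}, and suprema for \rn{Case}---is the same as the paper's. But there is one concrete gap, in your treatment of \rn{SubL} (and, by extension, of the context coercion that \Refprop{canonical-derivations} forces before every \rn{Warp}).

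In that case the derivation concludes $\jugT{\Gamma}{\synCOEL{\beta}{e_0}}{\tau}$ from $\jugC{\beta}{\Gamma}{\Gamma'}$ and $\jugT{\Gamma'}{e_0}{\tau}$, and since $\Er{\synCOEL{\beta}{e_0}} = \Er{e_0} = t$, your induction hypothesis gives you information about $\Elab{\Gamma'}{t}$ while the conclusion you must establish concerns $\Elab{\Gamma}{t}$. These are elaborations of the \emph{same} implicit term in two \emph{different} contexts, and nothing in ``the pointwise extension of $\Coe{-}{-}$ to contexts together with functoriality'' relates them: the elaborated type and term genuinely change when the context does, because the variables' types propagate through every clause of \Reffig{term-elaboration}. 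What is needed is a separate lemma, proved by its own induction on the implicit term $t$, stating that if $\jugC{\beta}{\Gamma}{\Gamma'}$ and $\Elab{\Gamma'}{t} = (\tau,e)$, then $\Elab{\Gamma}{t}$ is defined, yields a type coercible into $\tau$, and satisfies the expected semantic equation. This is exactly the paper's \Reflemm{elimination-of-context-subtyping}, whose own crucial case is \rn{Warp}: there the context coercion must be pushed through the division using \Reflemm{completeness-of-context-division} before the inner induction hypothesis (at $\Gamma \wD p$ versus $\Gamma' \wD p$) can be invoked. A similar, easier commutation lemma is needed for structure maps (\Reflemm{algorithmic-type-checking-commutes-with-structure-maps}); your plan to ``absorb the structure map into the chosen subderivation'' presupposes it. So you correctly identify the adjunction as the engine of the \rn{Warp} case, but placing it inside the derivation induction, rather than inside a term-indexed context-monotonicity lemma, leaves the \rn{SubL} step unprovable as stated.
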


The fact that algorithmic subtyping is deterministic together
with~\Reflemm{coercion-coherence} and~\Refthm{completeness} immediately entails
coherence.

\begin{coro}[Denotational Coherence]
  For any~$\Refines{e_1,e_2}{t}$ such that~$\jugT{\Gamma}{e_1}{\tau}$
  and~$\jugT{\Gamma}{e_2}{\tau}$, we
  have~$\sem{\jugT{\Gamma}{e_1}{\tau}} = \sem{\jugT{\Gamma}{e_2}{\tau}}$.
\end{coro}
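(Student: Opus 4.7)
The plan is to chain the three ingredients highlighted in the sentence preceding the corollary: the completeness theorem (\Refthm{completeness}), the coercion coherence lemma (\Reflemm{coercion-coherence}), and the observation that the elaboration function $\Elab{-}{-}$ is deterministic and depends only on the underlying implicit term $t$ and the ambient context $\Gamma$.

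First I would apply \Refthm{completeness} to each of the two derivations $\jugT{\Gamma}{e_1}{\tau}$ and $\jugT{\Gamma}{e_2}{\tau}$. Since $\Er{e_1} = t = \Er{e_2}$, the call $\Elab{\Gamma}{t}$ produces a single canonical pair $(\tau^m, e^m)$ which works for both derivations. The theorem then supplies, for each $i \in \{1,2\}$, a coercion $\alpha_i$ with $\jugC{\alpha_i}{\tau^m}{\tau}$ such that
\[
  \sem{\jugT{\Gamma}{e_i}{\tau}}
  =
  \sem{\jugT{\Gamma}{e^m}{\tau^m}}
  ;
  \sem{\jugC{\alpha_i}{\tau^m}{\tau}}.
\]
Because $e^m$ and $\tau^m$ are the same in both equations, proving the corollary reduces to showing that $\sem{\jugC{\alpha_1}{\tau^m}{\tau}} = \sem{\jugC{\alpha_2}{\tau^m}{\tau}}$.

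Next I would invoke \Reflemm{coercion-coherence} on each $\alpha_i$. Both coercions witness the same subtyping problem, namely $\tau^m \subty \tau$, and the algorithmic coercion $\Coe{\tau^m}{\tau}$ is uniquely determined by its source and target since $\NORM{-}$, $\Prec{-}{-}$, $\NORMin{-}$, and $\NORMout{-}$ are all deterministic. The lemma therefore gives
\[
  \sem{\jugC{\alpha_i}{\tau^m}{\tau}}
  =
  \sem{\jugC{\Coe{\tau^m}{\tau}}{\tau^m}{\tau}},
\]
for both $i = 1,2$, and these two right-hand sides are literally the same morphism in $\TT$. Substituting this identification into the completeness equations yields $\sem{\jugT{\Gamma}{e_1}{\tau}} = \sem{\jugT{\Gamma}{e_2}{\tau}}$, as required.

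The only point that requires a moment's thought is the assertion that $\Elab{\Gamma}{t}$ truly depends on $t$ and not on the refiner, but this is immediate from the definition in \Reffig{term-elaboration}: the recursion scheme never inspects coercions — it is driven entirely by the syntactic shape of the implicit term — so $\Er{e_1} = \Er{e_2}$ suffices to guarantee a common $(\tau^m, e^m)$. Consequently no genuine obstacle remains; the corollary is, as advertised, an immediate consequence of the preceding lemma and theorem combined with the determinism of algorithmic subtyping.
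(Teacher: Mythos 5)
Your proof is correct and follows essentially the same route as the paper: apply the completeness theorem to both derivations, use the fact that $\Elab{\Gamma}{t}$ is a (partial) function of $\Gamma$ and $t$ alone to obtain a common $(\tau^m, e^m)$, and then identify the two residual coercions' interpretations via \Reflemm{coercion-coherence} (the paper packages this last step as a corollary stating that any two coercions between the same pair of types have equal interpretations, but the argument is identical).
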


\begin{coro}[Operational Coherence]
  For any~$\Refines{e_1,e_2}{t}$ such that~$\jugT{\Gamma}{e_1}{\tau}$
  and~$\jugT{\Gamma}{e_2}{\tau}$, we have~$\CTXEQ{\Gamma}{e_1}{e_2}{\tau}$.
\end{coro}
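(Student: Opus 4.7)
The statement is essentially a syllogism chaining two results already established in the paper, so my plan is quite short. The key observation is that Operational Coherence differs from the immediately preceding Denotational Coherence only in the conclusion: one asserts equality of denotations, the other contextual equivalence. The Adequacy theorem of \Refsec{denotational-semantics} is exactly the bridge between the two.

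Concretely, I would begin by invoking the Denotational Coherence Corollary on the hypotheses $\Ref{e_1}{t}$, $\Ref{e_2}{t}$, $\jugT{\Gamma}{e_1}{\tau}$ and $\jugT{\Gamma}{e_2}{\tau}$, obtaining the equality
\[
  \sem{\jugT{\Gamma}{e_1}{\tau}} = \sem{\jugT{\Gamma}{e_2}{\tau}}
\]
of morphisms in the topos of trees. I would then feed this equality directly into the Adequacy theorem, which is stated in precisely this form (hypothesis: equal denotations at a common type in a common context; conclusion: contextual equivalence $\CTXEQ{\Gamma}{e_1}{e_2}{\tau}$). This yields the desired conclusion immediately.

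Since the proof is a one-step composition of previously established facts, there is no genuine obstacle to overcome here: the hard work was done in \Refprop{explicit-coherence} (to handle non-syntax-directed derivations), in \Refthm{completeness} (to collapse all typable refiners of $t$ at type $\tau$ to the same canonical elaboration up to semantics), and in Adequacy (whose proof relies on organizing values into an object of $\TT$ using \Refprop{monotonicity} and \Refprop{functoriality-of-truncation}). If anything warrants a comment, it is simply the observation that the chain of implications is type-preserving: both refiners must share the type $\tau$ in order for Denotational Coherence to produce morphisms with the same domain and codomain, which in turn is required to match the hypothesis of Adequacy. Once this is noted, the proof can be stated in a single line.
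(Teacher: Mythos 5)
Your proposal is correct and matches the paper's intended argument exactly: Operational Coherence is obtained by composing the Denotational Coherence corollary with the Adequacy theorem, which is why the paper states it immediately after both without further comment. Nothing is missing.
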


\section{Discussion and Related Work}
\label{sec:conclusion}

\subsection{Guarded Type Theories}

\paragraph{Expressiveness}

On the one hand,~\Core{} captures finer-grained temporal information than
existing guarded type theories, and also recasts their modalities in a uniform
setting.
We illustrate this point by comparing~\Core{} to
the~\GLC{}~\cite{CloustonBizjakBuggeBirkedal-2016}, since they are relatively
close.
The later and constant modality correspond respectively to~$\W{\wPRED}$
and~$\W{\wOM}$.
The~\GLC{} operations~$\kw{next} : \tyARR{\tau}{\tyLATER{\tau}}$
and~$\kw{unbox} : \tyARR{\tyCONST{\tau}}{\tau}$ correspond to the
coercions~$\coeWRAP;\coeDEL{\wID}{\wPRED}$ and~$\coeDEL{\wOM}{\wID};\coeUNWRAP$.
Erasing later modalities in the~\GLC{} happens via the term former~\kw{prev},
which restricts the context to be constant~(essentially, under~$\symCONST$);
in~\Core{}, this would arise from the implicit type
equivalence~$\tyW{\wOM}{\tyW{\wPRED}{\tau}} \equiv \tyW{\wOM \opON \wPRED}{\tau}
= \tyW{\wOM}{\tau}$.
Additionally, the introduction rule for~$\symCONST$ in the~\GLC{} is more
restrictive than~\Refrule{Warp} for~$\synBY{t}{\wOM}$, since the latter allows
the free variables of~$t$ to have types~$\tyW{p}{\tau}$ where~$p$ is
constant but not necessarily~$\wOM$.
The~\GLC{} makes~$\symLATER$ into an~``applicative
functor''~\cite{McBridePaterson-2008}, implementing only the left-to-right
direction of the type
isomorphism~$\tyW{\wPRED} (\tyARR{\tau_1}{\tau_2}) \iso
\tyARR{\tyW{\wPRED}{\tau_1}}{\tyW{\wPRED}{\tau_2}}$.
In~\Core{}, both directions are definable, the right-to-left one as
\[
  \begin{array}{@{}l}
    \synFUN
    {\mt{f}}
    {
      \tyARR
      {\tyW{\wPRED}{\tau_1}}
      {\tyW{\wPRED}{\tau_2}}
    }
    {
      (
      \synBY
      {
        (
        \synFUN
        {x}
        {\tau_1}
        {
          \synBY
          {
            (\mt{f}~\mt{x})
          }
          {
            \wSUCC
          }
        }
        )
      }
      {\wPRED}
      )
    }
  \end{array}
\]
where~$\wSUCC$ is is the time warp which is left adjoint
to~$\wPRED$~($\symSOONER$
in~\cite{BirkedalMogelbergSchwinghammerStovring-2012}).

On the other hand~\Core{} lacks many features present in other guarded type
theories~(including the~\GLC{}).
It would be useful, for instance, to replace the fixed stream type
with general guarded recursive
types~\cite{BirkedalMogelbergSchwinghammerStovring-2012,CloustonBizjakBuggeBirkedal-2016};
this requires designing a guardedness criterion in the presence of the warping
modality.
Clock variables~\cite{AtkeyMcBride-2013} would allow types to express that
unrelated program pieces may operate within disjoint time scales.
\Core{} enjoys decidable type-checking, but not type inference; in contrast,
type inference for the later modality has been studied by~\citet{Severi-2017}.
Finally,~\Core{} might be difficult to extend to dependent types, since it is
inherently call-by-value, whereas several dependent type theories with later
have been
proposed~\cite{BizjakBuggeCloustonMogelbergBirkedal-2016,BirkedalBizjakCloustonBuggeGrathwohlSpittersVezzosi-2016}.

\paragraph{Metatheory}

\Core{} also stands out among guarded type theories by the design of its
metatheory.
First, as mentioned above, its semantics fixes a call-by-value evaluation
strategy, in contrast with actual calculi enjoying
unrestricted~$\beta$-reduction.
We believe that this is natural since~$\synBY{t}{p}$ is in essence an effectful
term which modifies the current time step.

Second, the context restriction in~\Refrule{Warp} is perhaps controversial from
a technical perspective.
This kind of rule, acting on the left of the turnstile, is normally avoided in
natural-deduction presentations as it is known to
cause~``anomalies''~\cite{PfenningDavies-2001}, e.g., breaking substitution
lemmas.
Since~\Core{} is call-by-value, we do not need subtitution to hold for arbitrary
terms.
We do not expect difficulties in proving a substitution lemma for values in a
variant of~\Core{} where they have been made a subclass of expressions,
defining~$(\synBY{t}{p})[x \backslash v]$ to
be~$\synBY{t[x \backslash \purge{v}]}{p}$, with~$\purge{v}$ defined as
in~\Refsec{operational-semantics}.

Third,~\Core{} uses subtyping, which has been eschewed by guarded type theories
after Nakano's original proposal.
Yet, the context restriction of~\Refrule{Warp} makes subtyping extremely useful
in practice.
In its absence, terms would have to massage the typing context before
introducing the warping modality.
Guarded recursion would also be more difficult to use without the ability to
reason up to time warp composition.

\subsection{Synchronous Programming Languages}

\Core{} is a relative of synchronous programming languages in the vein of
Lustre~\cite[\dots]{CaspiPilaudHalbwacksPlaice-POPL-1987,CaspiPouzet-1996,CohenDurantonEisenbeisPagettiPlateau-POPL-2006,ForgetBoniolLesensPagetti-2008,Guatto-2016}.
Such languages use~``clocks''~(not to be confused with clock variables) to
describe stream growth; such a clock is a time warp whose image forms a
downward-closed subset of~$\omega$~(except in~\cite{Guatto-2016}).
Synchronous languages are generally first-order~(with
exceptions~\cite{Pouzet-2006,Guatto-2016}) and separate clock analysis from
productivity checking.
As a result,~\Core{} is both more flexible and simpler from a metatheoretical
standpoint.
However, it does not enforce bounds on space usage, in contrast with synchronous
languages or the work of
Krishnaswami~\cite{KrishnaswamiBenton-2011,Krishnaswami-2013}.

\paragraph{Acknowledgements}

This work has benefited from conversations with many researchers, including
Albert Cohen,
Louis Mandel,
Paul-André Melliès,
and Marc Pouzet.
%
It owes an especially great deal to Paul-André Melliès, who introduced the
author to the topos of trees and distributors.
This work was partially supported by the~\grantsponsor{GS501100001659}{German
  Research Council~(DFG)}{http://dx.doi.org/10.13039/501100001659} under Grant
No.~\grantnum{GS501100001659}{ME14271/6-2}.

\bibliographystyle{ACM-Reference-Format}
\bibliography{generalized-modality}

\iftoggle{fullversion}{
  \appendix
  \section{Supplementary Material}
\label{app:supplementary-material}

\subsection{Coercions to and from Normal Types}

\begin{figure}
  \begin{align*}
    \NORMin{\nu}
    & =
    \coeINFL
    \\
    \NORMin{\tyS{\tau}}
    & =
    \coeS{\NORMin{\tau}};\coeWRAP
    \\
    \NORMin{\tyARR{\tau_1}{\tau_2}}
    & =
    (\coeARR{\NORMout{\tau_1}}{\NORMin{\tau_1}});\coeWRAP
    \\
    \NORMin{\tyPROD{\tau_1}{\tau_2}}
    & =
    \coePROD{\NORMin{\tau_1}}{\NORMin{\tau_2}}
    \\
    \NORMin{\tySUM{\tau_1}{\tau_2}}
    & =
    (\coeSUM{\NORMin{\tau_1}}{\NORMin{\tau_2}});\coeWRAP
    \\
    \NORMin{\tyW{p}{(\tyPROD{\tau_1}{\tau_2})}}
    & =
    \coeDISTP;
    (
    \coePROD
    {\NORMin{\tyW{p}{\tau_1}}}
    {\NORMin{\tyW{p}{\tau_2}}}
    )
    \\
    \NORMin{\tyW{p}{\tau}}
    & =
    \coeW{p}{\NORMin{\tau}};\coeCONCAT{p}{q}
    \begin{array}[t]{@{}l}
      \mbox{ where }
      \tau \ne (\tyPROD{\_}{\_})
      \\
      \mbox{ and }
      \tyW{q}{\tau'} = \NORM{\tau}
    \end{array}
    \\
    \NORMout{\nu}
    & =
    \coeDEL{\wOM}{\wID};\coeUNWRAP
    \\
    \NORMout{\tyS{\tau}}
    & =
    \coeUNWRAP;\coeS{\NORMout{\tau}}
    \\
    \NORMout{\tyARR{\tau_1}{\tau_2}}
    & =
    \coeUNWRAP;(\coeARR{\NORMin{\tau_1}}{\NORMout{\tau_1}})
    \\
    \NORMout{\tyPROD{\tau_1}{\tau_2}}
    & =
    \coePROD{\NORMout{\tau_1}}{\NORMout{\tau_2}}
    \\
    \NORMout{\tySUM{\tau_1}{\tau_2}}
    & =
    \coeWRAP;(\coeSUM{\NORMout{\tau_1}}{\NORMout{\tau_2}})
    \\
    \NORMout{\tyW{p}{(\tyPROD{\tau_1}{\tau_2})}}
    & =
    (
    \coePROD
    {\NORMout{\tyW{p}{\tau_1}}}
    {\NORMout{\tyW{p}{\tau_2}}}
    );
    \coeFACTP
    \\
    \NORMout{\tyW{p}{\tau}}
    & =
    \coeDECAT{p}{q};
    \coeW{p}{\NORMout{\tau}}
    \begin{array}[t]{@{}l}
      \mbox{ where }
      \tau \ne (\tyPROD{\_}{\_})
      \\
      \mbox{ and }
      \tyW{q}{\tau'} = \NORM{\tau}
    \end{array}
  \end{align*}
  \caption{Coercions to and from Normal Types}
  \label{fig:alg-normalize}
\end{figure}

The coercions~$\jugCE{\NORMin{\tau}}{\NORMout{\tau}}{\tau}{\NORM{\tau}}$ are
defined in~\Reffig{alg-normalize}.
They are defined in a symmetric way, except a slightt discrepancy in the case of
ground types:~one must take~$\coeDEL{\wOM}{\wID};\coeUNWRAP$ as an inverse
to~$\coeINFL$, as mentioned in~\Refsec{calculus}.

\subsection{Type Bounds}

\begin{figure}
  \[
    \begin{array}{@{}r@{\;}c@{\;}l@{}}
      \SUPn
      {\nu}
      {\nu}
      & = &
      \nu
      \\
      \SUPn
      {(\tyS{\tau})}
      {(\tyS{\tau'})}
      & = &
      \tyS{(\SUPn{\tau}{\tau'})}
      \\
      \SUPn
      {(\tyARR{\tau_1}{\tau_2})}
      {(\tyARR{\tau_1'}{\tau_2'})}
      & = &
      \tyARR
      {(\INFn{\tau_1}{\tau_1'})}
      {(\SUPn{\tau_2}{\tau_2'})}
      \\
      \SUPn
      {(\tyPROD{\tau_1}{\tau_2})}
      {(\tyPROD{\tau_1'}{\tau_2'})}
      & = &
      \tyPROD
      {(\SUPn{\tau_1}{\tau_1'})}
      {(\SUPn{\tau_2}{\tau_2'})}
      \\
      \SUPn
      {(\tySUM{\tau_1}{\tau_2})}
      {(\tySUM{\tau_1'}{\tau_2'})}
      & = &
      \tySUM
      {(\SUPn{\tau_1}{\tau_1'})}
      {(\SUPn{\tau_2}{\tau_2'})}
      \\
      \SUPn
      {(\tyW{p}{\tau_1})}
      {(\tyW{q}{\tau_2})}
      & = &
      \tyW
      {\INF{p}{q}}
      {(\SUPn{\tau_1}{\tau_2})}
      \\
      \\
      \INFn
      {\nu}
      {\nu}
      & = &
      \nu
      \\
      \INFn
      {(\tyS{\tau})}
      {(\tyS{\tau'})}
      & = &
      \tyS{(\INFn{\tau}{\tau'})}
      \\
      \INFn
      {(\tyARR{\tau_1}{\tau_2})}
      {(\tyARR{\tau_1'}{\tau_2'})}
      & = &
      \tyARR
      {(\SUPn{\tau_1}{\tau_1'})}
      {(\INFn{\tau_2}{\tau_2'})}
      \\
      \INFn
      {(\tyPROD{\tau_1}{\tau_2})}
      {(\tyPROD{\tau_1'}{\tau_2'})}
      & = &
      \tyPROD
      {(\INFn{\tau_1}{\tau_1'})}
      {(\INFn{\tau_2}{\tau_2'})}
      \\
      \INFn
      {(\tySUM{\tau_1}{\tau_2})}
      {(\tySUM{\tau_1'}{\tau_2'})}
      & = &
      \tySUM
      {(\INFn{\tau_1}{\tau_1'})}
      {(\INFn{\tau_2}{\tau_2'})}
      \\
      \INFn
      {(\tyW{p}{\tau_1})}
      {(\tyW{q}{\tau_2})}
      & = &
      \tyW
      {\SUP{p}{q}}
      {(\INFn{\tau_1}{\tau_2})}
    \end{array}
  \]
  \caption{Type Suprema and Infima for Normal Types}
  \label{fig:alg-bounds}
\end{figure}

The type-checking and elaboration algorithm presented
in~\Reffig{term-elaboration} relies on the computation of type
suprema.
\Reffig{alg-bounds} defines such suprema and infima for normal types.
The general case is obtained simply by defining~$\SUP{\tau_1}{\tau_2}$
as~$\SUPn{\NORM{\tau_1}}{\NORM{\tau_2}}$.

\subsection{Additional Examples}

\begin{example}[Natural Numbers]
  The stream of natural numbers described in~\Refsec{introduction} can be
  implemented as follows.
  \[
    \begin{array}{@{}l@{}}
      \synREC
      {\idt{nat}}
      {\tyS{\tyI}}
      {
        \synCONS
        {0}
        {
          \synBY
          {
            (
            \idt{map}\;
            (\synFUN{\idt{x}}{\tyI}{\idt{x} + 1})\;
            \idt{nat}
            )
          }
          {\wPRED}
        }
      }
    \end{array}
  \]
  We assume again that~\idt{map} has the type obtained in~\Refex{constant}.
\end{example}

We now give the complete definition of the Thue-Morse sequence discussed
in~\Refex{thue-morse-abridged}.
Following~\citet{CloustonBizjakBuggeBirkedal-2016}, our definition proceeds in
two steps:~first the function~\idt{h}, then the sequence~\idt{tm} itself.
We assume that the language has been extended with booleans and a~\idt{not}
function.

\begin{example}
  Informally, the function~\idt{h} takes a boolean stream and intersperses it
  with its pointwise negation.
  \[
    \begin{array}{@{}l@{}}
      \kw{rec}\;(\idt{h} : \tyARR{\tyS{\tyB}}{\tyW{\pw{2}}{\tyS{\tyB}}}).
      \\
      \kw{fun}\;(\idt{xs} : \tyS{\tyB}).
      \\
      \synLLET
      {\idt{x}}
      {\tyB}
      {\synHEAD{\idt{xs}}}
      \\
      \kw{let}\;
      \idt{x} : \tyB = \synHEAD{\idt{xs}}
      \;\kw{and}\;
      \idt{ys} : \tyW{\upw{0}{1}}{\tyS{\tyB}} = \synTAIL{\idt{ys}}
      \;\kw{in}
      \\
      \synLLET
      {\idt{zs}}
      {\tyW{\upw{0}{1}}{\tyW{\pw{2}}{\tyS{\tyB}}}}
      {
        \synBY
        {(\idt{h}\;\idt{ys})}
        {\wPRED}
      }
      \\
      \synBY
      {
        (
        \idt{x}
        \symCONS
        \synBY
        {
          (
          \synCONS
          {(\idt{not}\;\idt{x})}
          {\idt{zs}}
          )
        }
        {\upw{0}{1}}
        )
      }
      {\pw{2}}
    \end{array}
  \]
  As in previous examples, the recursive call happens
  under~$\kw{by}\,\upw{0}{1}$, ensuring it does not happen at the first time
  step.
  Since~$\idt{x}$ is of a scalar type, it is in effect not subject to the
  context restriction in rule~\rn{Warp}; for example, we have
  \[
    \jugC
    {\coeINFL;\coeDEL{\upw{\omega}{0}}{\pw{2}}}
    {\tyB}
    {\tyW{\pw{2}}{\tyB}}
    .
  \]
  To see why the use of~\idt{zs} at type~$\tyW{\upw{0}{1}}{\tyS{\tyB}}$ is
  well-typed, observe that~$\upw{0}{2} \wD~\pw{2} = \upw{0\,0}{2\,0}$ and
  \[
    \begin{array}{@{}r@{\;}c@{\;}l@{}}
      \tyW{\upw{0}{1}}{\tyW{\pw{2}}{\tyS{\tyB}}}
      & \equiv &
      \tyW{\upw{0}{2}}{\tyS{\tyB}}
      \\
      & \equiv &
      \tyW{\pw{2}}{\tyW{\upw{0\,0}{2\,0}}{\tyS{\tyB}}}
      \\
      & \equiv &
      \tyW{\pw{2}}{\tyW{\upw{0}{1}}{\tyW{\upw{0}{2\,0}}{\tyS{\tyB}}}}
      \\
      & \subty &
      \tyW{\pw{2}}{\tyW{\upw{0}{1}}{\tyW{\upw{0}{1}}{\tyS{\tyB}}}}.
    \end{array}
  \]
\end{example}

In the next example, we assume that the definition of~\idt{h} given above has
been warped by~$\upw{\omega}{0}$~(i.e.,~$\wOM$), as in~\Refex{constant}.

\begin{example}
  The Thue-Morse sequence~\idt{tm} is defined below.
  \[
    \begin{array}{@{}l@{}}
      \kw{rec}\;(\idt{tm} : \tyS{\tyB}).
      \\
      \synLLET
      {\idt{tm'}}
      {\tyW{\upw{0}{1}}{\tyW{\pw{2}}{\tyS{\tyB}}}}
      {\synBY{(\idt{h}~\idt{tm})}{\upw{0}{1}}}
      \\
      \synCONS
      {\kw{false}}
      {
        \synBY{(\synTAIL{\idt{tm'}})}{\upw{0\,2}{1}}
      }
    \end{array}
  \]
  The key subterm is~$\synBY{(\synTAIL{\idt{tm'}})}{\upw{0\,2}{1}}$.
  Informally, this allows us to run~\kw{tail} twice at the second time step,
  obtaining one element out of~\idt{tm'}.
  Technically speaking, the use of~$\idt{tm'}$ with type~$\tyS{\tyB}$ is
  justified by~$\upw{0}{2} \wD~ \upw{0\,2}{1} = \upw{2\,0}{2}$ and
  \[
    \begin{array}{@{}r@{\;}c@{\;}l@{}}
      \tyW{\upw{0}{1}}{\tyW{\pw{2}}{\tyS{\tyB}}}
      & \equiv &
      \tyW{\upw{0}{2}}{\tyS{\tyB}}.
      \\
      & \equiv &
      \tyW{\upw{0\,2}{1}}{\tyW{\upw{2\,0}{2}}{\tyS{\tyB}}}
      \\
      & \subty &
      \tyW{\upw{0\,2}{1}}{\tyS{\tyB}}
    \end{array}
  \]
  where the last step is performed
  by~$\coeDEL{\upw{2\,0}{2}}{\pw{1}};\coeUNWRAP$.
  This last step shows that the productivity of this definition only depends on
  the fact that~\idt{h} can produce its first two output elements from its first
  input element.
  Indeed, this definition still type-checks when~\idt{h} is given the strictly
  weaker type
  \[
    \idt{h} :
    \tyARR
    {
      \tyS{\tyB}
    }
    {
      \tyW{\upw{2}{1}}{\tyS{\tyB}},
    }
  \]
  assuming one changes the type annotation for~\idt{tm'}.
\end{example}

\subsection{Effective Time Warps}

The main body of this paper manipulates time warps as abstract mathematical
objects.
In an implementation, one may choose a subset of time warps enjoying finite
representations and equipped with computable operations.
We will say that such a subset is~\emph{effective}.

\begin{defn}[Effectivity]
  A set~$\EFF$ of time warps is~\emph{effective} when it
  contains~$\wID$,~$\wZERO$,~$\wPRED$, and~$\wOM$, is closed under composition,
  division, binary suprema, and binary infima, and is equipped with effective
  procedures for
  \begin{itemize}
  \item
    the aforementioned operations;
  \item
    computing~$p(n)$ for~$p \in \EFF, n \in \NI$;
  \item
    deciding the pointwise order between its elements.
  \end{itemize}
\end{defn}
Any effective set of time warps~$\EFF$ determines a submonoid of~$\WARP$.
Furthermore, the decidability of~$\le$ entails the decidability of equality
between the elements of~$\EFF$ by antisymmetry.
As a consequence, the big-step evaluation judgment
from~\Refsec{operational-semantics} and abstract type-checking algorithm
from~\Refsec{algorithmic-type-checking}, restricted to an effective set of time
warps, become implementable.

\subsection{Ultimately Periodic Sequences}

In~\Refsec{examples}, we have represented certain time warps as running sums of
ultimately periodic sequences.
We now study this classic
idea~\cite{CohenDurantonEisenbeisPagettiPlateau-POPL-2006,Plateau-2010} more
formally, showing that the set~$\PER$ of time warps representable as such
sequences is effective.

Given two finite lists~$u,v$ of elements of~$\NI$, with~$v$ non-empty, we denote
by~$\upw{u}{v}$ the ultimately periodic sequence starting with~$u$ and
continuing with~$v$ repeated~\emph{ad infinitum}.
We say that~$u$ and~$v$ are the~\emph{prefix} and~\emph{periodic pattern}
of~$\upw{u}{v}$, respectively.
Let~$\upw{u}{v}[n]$ denote the~$n$-th element of the sequence, starting at~$0$.
\begin{defn}
  Every ultimately periodic sequence~$\upw{u}{v}$ gives rise to a time warp
  characterized in a unique way by
  \begin{equation}
    \upw{u}{v}(1+n)
    =
    \upw{u}{v}[n] + \upw{u}{v}(n).
  \end{equation}
\end{defn}

Distinct prefix/periodic pattern pairs can represent the same ultimately
periodic sequence; for example~$\pw{1\,0}$,~$\pw{1\,0\,1\,0}$,
and~$\upw{1}{0\,1}$ all represent the same sequence.
Furthermore, distinct ultimately periodic sequence can represent the same time
warp in the presence of~$\omega$; for
example,~$\upw{\omega}{0}$,~$\upw{\omega}{1}$, and~$\pw{\omega}$ all
represent~$\wOM$.
This gives rise to an equivalence relation between prefix/periodic pattern
pairs; we never distinguish between equivalent pairs.

\begin{prop}
  The set of time warps~$\PER$ is effective.
\end{prop}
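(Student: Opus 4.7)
The plan is to verify each clause of the definition of effectivity in turn, beginning with the elementary requirements and working toward the closure properties. The four distinguished warps $\wID$, $\wZERO$, $\wPRED$, and $\wOM$ all lie in $\PER$, represented by $\pw{1}$, $\pw{0}$, $\upw{0}{1}$, and $\upw{\omega}{0}$ respectively. Given a representation $p = \upw{u}{v}$, one evaluates $p(n)$ for finite $n$ by summing the first $n$ entries of the underlying sequence, while $p(\omega)$ equals $\omega$ whenever some entry of $u$ or $v$ equals $\omega$ or $v$ contains a positive entry, and otherwise equals the finite sum of $u$'s entries; each of these is plainly computable from the representation.

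For pointwise order, I would exploit the fact that after the prefixes have been traversed, two ultimately periodic sequences evolve with a common period $L = \mathrm{lcm}(|v_1|, |v_2|)$. Then $p \le q$ can be decided by checking a bounded initial segment---of length at most $\max(|u_1|, |u_2|) + L$---together with the sign of the total increment accumulated over one full period, which controls asymptotic behaviour. Binary suprema and infima are handled by taking the pointwise maximum and minimum of the underlying increment sequences after aligning them on a common period $L$; the resulting increment sequence is ultimately periodic with bounded prefix and period, giving a member of $\PER$.

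For composition $p \opON q = q \circ p$, the key observation is that the increment $(q \circ p)(n{+}1) - (q \circ p)(n)$ is the sum of $\upw{u_1}{v_1}[n]$ consecutive entries of $q$'s underlying sequence, starting at position $p(n)$. Once $n$ is past the prefix of $p$ and $p(n)$ is past the prefix of $q$, both the window size and the starting offset modulo $|v_2|$ evolve periodically in $n$, so the resulting increment sequence is ultimately periodic with a period bounded by a computable function of $|v_1|$, $|v_2|$, and the per-period sum of $v_1$. For division $q \wD p$, I would use the explicit formula from the previous subsection: the auxiliary function $\phi(n) = \min\{m \in \NI \mid n \le q(m)\}$ is itself representable as an ultimately periodic sequence of positions derived from inverting $q$'s period, and $q \wD p = p \circ \phi$ is then obtained by the composition argument just given.

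The main obstacle is establishing these periodicity invariants for composition and division, and in particular pinning down effective period bounds for the output. Division is the delicate case, because $\phi$ has to correctly handle plateaus of $q$ (where many inputs share an output, producing a single value of $\phi$) and jumps of $q$ (where $\phi$ skips positions); careful case analysis is also required wherever $\omega$ appears as an entry, since it acts as an absorbing element in sums and must short-circuit the algorithms for evaluation, composition, and division alike. Once these invariants are established, antisymmetry of the order gives decidable equality, and all the effectivity requirements follow.
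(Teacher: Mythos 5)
Your overall strategy coincides with the paper's (which itself offers only ``intuitions and examples'' rather than a full proof): the same representations of the distinguished warps, evaluation by partial sums, composition by summing windows of the second sequence whose sizes are dictated by the first and whose ultimate periodicity follows from realigning prefixes and periodic patterns, and division via the adjoint formula (your detour through the generalized inverse~$\phi$ followed by composition is a mild repackaging of the paper's accumulator-based traversal). Your treatment of the ordering is in fact more careful than the paper's: the bounded check up to~$\max(|u_1|,|u_2|)+\mathrm{lcm}(|v_1|,|v_2|)$ alone is not sufficient (e.g.\ $\upw{0}{1}(n)\le\upw{5}{0}(n)$ holds for all~$n\le 2$ but fails at~$n=7$), and the per-period drift comparison you add is exactly the missing ingredient.

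However, your construction of binary suprema and infima is wrong. The supremum of~$p$ and~$q$ in the pointwise order is the pointwise maximum of the \emph{functions}, $n\mapsto\max(p(n),q(n))$, not the running sum of the pointwise maximum of the increment sequences. The latter yields an upper bound but in general not the least one: for $p=\pw{1\,0}$ and $q=\pw{0\,1}$ we have $q\le p$, so $\SUP{p}{q}=p$, whereas the incrementwise maximum is $\pw{1}=\wID$, and $\wID(2)=2>1=p(2)$. Dually, the incrementwise minimum of this pair is~$\wZERO$, while $\INF{p}{q}=q$. Since the type-bound computations~$\SUPn{-}{-}$ and~$\INFn{-}{-}$ and their universal properties require genuine least upper bounds and greatest lower bounds, this clause must be repaired: take the pointwise max/min of the warps themselves and show the result lies in~$\PER$ by analysing the eventually periodic difference~$q-p$ --- if its per-period drift is nonzero, one warp eventually dominates and the result agrees with one of the inputs beyond a computable point; if the drift is zero, the difference, and hence the crossover pattern, is eventually periodic. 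All of this is computable from the representations, and is presumably what the paper intends by computing bounds ``using the above characterization of the ordering.''
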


Rather than give a proof of this statement, we will provide intuitions and
examples.

\paragraph{Common Representations}

The time warps~$\wID$,~$\wZERO$,~$\wPRED$, and~$\wOM$ are respectively
represented by~$\pw{1}$,~$\pw{0}$,~$\upw{0}{1}$, and~$\pw{\omega}$.
The time warp~$\wSUCC$, which corresponds to~$n \mapsto n+1$~($\symSOONER$
in~\cite{BirkedalMogelbergSchwinghammerStovring-2012}), is represented
by~$\upw{2}{1}$.
Any constant time warp~$n \mapsto c$ for~$0 < n < \omega$ is represented
by~$\upw{c}{0}$.

\paragraph{Composition}

To show that~$\PER$ is closed under composition, we build an infinite
sequence~$s$ representing~$\upw{u_1}{v_1} \opON \upw{u_2}{v_2}$ by
traversing~$\upw{u_1}{v_1}$ and~$\upw{u_2}{v_2}$.
There are two cases, depending on the next element of~$\upw{u_1}{v_1}$, which
we call~$n$.
\begin{itemize}
\item
  If~$n < \omega$, the next element of~$s$ is~$\sum_{i = 1}^n m_i$
  with~$m_1,\dotsc,m_n$ the next~$n$ elements of~$\upw{u_2}{v_2}$.
  We then continue building the rest of~$s$ recursively,
  dropping~$n,m_1,\dotsc,m_n$.

\item
  If~$n = \omega$, the next element of~$s$ is the sum of all the remaining
  elements of~$\upw{u_2}{v_2}$.
  The rest of~$s$ is~$\pw{0}$.
\end{itemize}
Why is~$s$ ultimately periodic?
Let us write~$\lLENGTH{l}$ for the length of a list of numbers~$l$
and~$\lWEIGHT{l}$ for the sum of its elements.
Clearly, if there is at least one ocurrence of~$\omega$ in~$\upw{u_1}{v_1}$,
then~$s$ is ultimately periodic.
Otherwise, one can always find~$(u_3,v_3)$ and~$(u_4,v_4)$ such
that~$\upw{u_1}{v_1} =
\upw{u_3}{v_3}$,~$\upw{u_2}{v_2} =
\upw{u_4}{v_4}$,~$\lWEIGHT{u_3} = \lLENGTH{u_4}$
and~$\lWEIGHT{v_3} = \lLENGTH{v_4}$ by unfolding the prefixes and repeating
the periodic patterns of~$\upw{u_1}{v_1}$ and~$\upw{u_2}{v_2}$ as much as
required.
The new words represent the same sequences and thus give rise to the same~$s$,
and it follows from their definition that~$s$ is ultimately periodic with a
prefix of length~$\lLENGTH{u_3}$ and a periodic pattern of
length~$\lLENGTH{v_3}$.

The following examples illustrate composition in~$\PER$.
\begin{mathpar}
  \pw{3} \opON\, \pw{2} = \pw{2} \opON\, \pw{3} = \pw{6}
  \and
  \pw{1\,0} \opON\, \pw{0\,1} = \pw{0\,0\,1\,0}
  \and
  \pw{0\,1} \opON\, \pw{1\,0} = \pw{0\,1\,0\,0}
  \and
  \pw{2} \opON\, \pw{1\,0} = \pw{2} \opON\, \pw{0\,1} = \pw{1}
  \and
  \upw{0}{2} \opON\, \pw{3\,0\,1} = \upw{0}{3\,4\,1}
  \and
  \upw{2}{1} \opON\, \upw{0}{1} = \pw{1}
  \and
  \pw{2\,0} \opON\, \pw{2\,0} = \pw{2\,0}
  \and
  \pw{\omega} \opON\, \pw{1\,0} = \upw{\omega}{0} = \pw{\omega}
  \and
  \pw{0\,1} \opON\, \pw{\omega}  = \upw{0\,\omega}{0} = \upw{0}{\omega}
  \and
  \pw{\omega} \opON\, \pw{0} = \pw{0}
\end{mathpar}

\paragraph{Division}

The result of a time warp division~$\upw{u_1}{v_1} \wD\, \upw{u_2}{v_2}$ is more
complex to build than a composition.
Intuitively, one produces new elements in the resulting sequence according to
the next element~$n$ of~$\upw{u_2}{v_2}$, accumulating the numbers present
in~$\upw{u_1}{v_1}$.
If~$n > 0$, then one outputs the current value of the accumulator,
followed by~$n-1$ zeroes; if~$n = \omega$, the process stops.
If~$n = 0$, then one adds the current element in~$\upw{u_1}{v_1}$ to the
accumulator.
Division by zero is a special case.

The following examples illustrate division in~$\PER$.
\begin{mathpar}
  \pw{1} \wD\, \pw{1} = \pw{1}
  \and
  \pw{2} \wD\, \pw{2} = \pw{2\,0}
  \and
  \pw{1\,0} \wD\, \pw{1\,0} = \pw{1}
  \and
  \pw{1} \wD\, \pw{0\,3} = \pw{2\,0\,0}
  \and
  \pw{4\,0} \wD\, \pw{1\,3} = \pw{4\,0\,0\,0}
  \and
  \pw{0} \wD\, \pw{0} = \pw{\omega}
  \and
  \pw{3} \wD\, \pw{\omega} = \upw{3}{0}
\end{mathpar}

\paragraph{Evaluation}

A naive way to evaluate~$(\upw{u}{v})(n)$ is to compute the sum of the first~$n$
elements of~$\upw{u}{v}$.
%

\paragraph{Ordering}

We have~$\upw{u_1}{v_1} \le \upw{u_2}{v_2}$ if and only
if~$\upw{u_1}{v_1}(n) \le \upw{u_2}{v_2}(n)$ for
all~$1 \le n \le max(|u_1|, |u_2|) + lcm(|v_1|, |v_2|)$.

\paragraph{Infima and Suprema}

Binary infima and suprema can be computed using the above characterization of
the ordering between elements of~$\PER$.

\subsection{Implementation}

We have implemented the elaboration algorithm described in this paper,
restricted to~$\PER$.
Our prototype is available at the address
\begin{center}
  \small
  \url{https://github.com/adrieng/pulsar}.
\end{center}
Example programs, including the ones discussed in~\Refsec{introduction}
and~\Refsec{examples}, and~\Refsec{conclusion}, can be found in the file
\begin{center}
  \small
  \url{https://github.com/adrieng/pulsar/blob/master/examples/streams.pul}.
\end{center}

  \section{Selected Proofs}
\label{app:proofs}

\subsection{The Calculus}

\paragraph{Notations}

We say that~$\Gamma'$ is a~\emph{subcontext} of~$\Gamma$ if~$\Gamma'$
is~$\Gamma$ with zero or more bindings removed~(but not permuted).
Given a context~$\Gamma$ and a finite set of variables~$X$, we
write~$\Restr{\Gamma}{X}$ for the largest subcontext of~$\Gamma$ such
that~$\dom{\Restr{\Gamma}{X}} \subseteq X$.
Given a context~$\Gamma$ and a time warp~$p$, we write~$\minusW{\Gamma}{p}$ for
the largest context~(for the subcontext ordering) such
that~$\tyW{p}{(\minusW{\Gamma}{p})}$ is a subcontext of~$\Gamma$.

We write~$d ::: \jugT{\Gamma}{e}{\tau}$ when~$d$
is a derivation of~$\jugT{\Gamma}{e}{\tau}$.

\paragraph{Type-Checking Explicit Terms}

\begin{figure}
  \begin{mathpar}
    \fbox{$\algE{\Gamma}{e}{\tau}$}

    \IRULE[EAlgVar]
    {
      \Gamma(x) = \tau
    }
    {
      \algE
      {\Gamma}
      {x}
      {\tau}
    }

    \IRULE[EAlgFun]
    {
      \algE
      {\Gamma, x : \tau_1}
      {e}
      {\tau_2}
    }
    {
      \algE
      {\Gamma}
      {\synFUN{x}{\tau_1}{e}}
      {\tyARR{\tau_1}{\tau_2}}
    }

    \IRULE[EAlgApp]
    {
      \algE
      {\Gamma}
      {e_1}
      {\tyARR{\tau_1}{\tau_2}}
      \\
      \algE
      {\Gamma}
      {e_2}
      {\tau_1}
    }
    {
      \algE
      {\Gamma}
      {e_1~e_2}
      {\tau_2}
    }

    \IRULE[EAlgPair]
    {
      (
      \algE
      {\Gamma}
      {e_i}
      {\tau_i}
      )_{i \in \{1,2\}}
    }
    {
      \algE
      {\Gamma}
      {(e_1, e_2)}
      {\tyPROD{\tau_1}{\tau_2}}
    }

    \IRULE[EAlgProj$_{i \in \{1, 2\}}$]
    {
      \algE
      {\Gamma}
      {e}
      {\tyPROD{\tau_1}{\tau_2}}
    }
    {
      \algE
      {\Gamma}
      {\synPROJ{i}{e}}
      {\tau_i}
    }

    \IRULE[EAlgInj$_1$]
    {
      \algE
      {\Gamma}
      {e}
      {\tau_1}
    }
    {
      \algE
      {\Gamma}
      {\synINJ{1}{\tau_2}{e}}
      {\tySUM{\tau_1}{\tau_2}}
    }

    \IRULE[EAlg$_2$]
    {
      \algE
      {\Gamma}
      {e}
      {\tau_2}
    }
    {
      \algE
      {\Gamma}
      {\synINJ{2}{\tau_1}{e}}
      {\tySUM{\tau_1}{\tau_2}}
    }

    \IRULE[EAlgCase]
    {
      \algE
      {\Gamma}
      {e}
      {\tySUM{\tau_1}{\tau_2}}
      \\
      (
      \algE
      {\Gamma, x_i : \tau_i}
      {e_i}
      {\tau}
      )_{i \in \{ 1, 2 \}}
    }
    {
      \algE
      {\Gamma}
      {\synCASE{e}{x_1}{e_1}{x_2}{e_2}}
      {\tau}
    }

    \IRULE[EAlgHead]{
      \algE
      {\Gamma}
      {e}
      {\tyS{\tau}}
    }{
      \algE
      {\Gamma}
      {\synHEAD{e}}
      {\tau}
    }

    \IRULE[EAlgTail]{
      \algE
      {\Gamma}
      {e}
      {\tyS{\tau}}
    }{
      \algE
      {\Gamma}
      {\synTAIL{e}}
      {\tyW{\wPRED}{\tyS{\tau}}}
    }

    \IRULE[EAlgCons]{
      \algE
      {\Gamma}
      {e_1}
      {\tau}
      \\
      \algE
      {\Gamma}
      {e_2}
      {\tyW{\wPRED}{\tyS{\tau}}}
    }{
      \algE
      {\Gamma}
      {\synCONS{e_1}{e_2}}
      {\tyS{\tau}}
    }

    \IRULE[EAlgRec]
    {
      \algE
      {\Gamma, x : \tyW{\wPRED}{\tau}}
      {e}
      {\tau}
    }
    {
      \algE
      {\Gamma}
      {\synREC{x}{\tau}{e}}
      {\tau}
    }

    \IRULE[EAlgWarp]
    {
      \algE
      {\minusW{\Gamma}{p}}
      {e}
      {\tau}
    }
    {
      \algE
      {\Gamma}
      {\synBY{p}{e}}
      {\tyW{p}{\tau}}
    }

    \IRULE[EAlgConst]
    {
      s \in \SCAL_\nu
    }
    {
      \algE
      {\Gamma}
      {s}
      {\nu}
    }
    \\

    \IRULE[EAlgSubR]
    {
      \algE
      {\Gamma}
      {e}
      {\tau}
      \\
      \jugC
      {\alpha}
      {\tau}
      {\tau'}
    }
    {
      \algE
      {\Gamma}
      {\synCOER{e}{\alpha}}
      {\tau'}
    }

    \IRULE[EAlgSubL]
    {
      \jugC
      {\beta}
      {\Restr{\Gamma}{\dom{\beta}}}
      {\Gamma'}
      \\
      \algE
      {\Gamma'}
      {e}
      {\tau}
    }
    {
      \algE
      {\Gamma}
      {\synCOEL{\beta}{e}}
      {\tau}
    }
  \end{mathpar}
  \caption{Type-Checking Explicit Terms}
  \label{fig:type-checking-explicit-terms}
\end{figure}

The algorithmic type-checking judgment~$\algE{\Gamma}{e}{\tau}$ takes a
context~$\Gamma$ and an explicit term~$e$ and returns a type~$\tau$.
Its rules are given in~\Reffig{type-checking-explicit-terms}.

\paragraph{Metatheoretical Results}

\begin{prop}[Determinism of Explicit Type-Checking]
  \label{prop:determinism-of-explicit-type-checking}
  If~$\algE{\Gamma}{e}{\tau}$ and~$\algE{\Gamma}{e}{\tau'}$ then~$\tau = \tau'$.
\end{prop}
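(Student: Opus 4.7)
The plan is to proceed by straightforward structural induction on the explicit term~$e$, exploiting the fact that the algorithmic judgment~$\algE{\Gamma}{e}{\tau}$ defined in~\Reffig{type-checking-explicit-terms} is syntax-directed. For each term former, inspection of the rules shows that exactly one rule applies, so any two derivations of~$\algE{\Gamma}{e}{\tau}$ and~$\algE{\Gamma}{e}{\tau'}$ must be concluded by the same rule. It then suffices to argue that the premises of that rule determine the resulting type uniquely.

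The cases split naturally into three groups. First, the cases for~\rn{EAlgVar} and~\rn{EAlgConst} are immediate:~the returned type is read off directly from~$\Gamma(x)$ or from the fact that scalars inhabit disjoint sorts~$\SCAL_\nu$. Second, the congruence cases~(\rn{EAlgFun},~\rn{EAlgApp},~\rn{EAlgPair},~\rn{EAlgProj}$_i$,~\rn{EAlgInj}$_i$,~\rn{EAlgCase},~\rn{EAlgHead},~\rn{EAlgTail},~\rn{EAlgCons},~\rn{EAlgRec}) follow by applying the induction hypothesis to the immediate subterms; the shape of the resulting type is then forced by the rule. For~\rn{EAlgApp},~\rn{EAlgHead}, and~\rn{EAlgTail}, one additionally relies on the uniqueness of the type obtained for the principal subterm, which must have the shape~$\tyARR{\_}{\_}$ or~$\tyS{\_}$ respectively; since types are generated inductively, this shape and its components are unique. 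For~\rn{EAlgWarp}, observe that~$\minusW{\Gamma}{p}$ is uniquely determined from~$\Gamma$ and~$p$, so the induction hypothesis on the premise yields a unique~$\tau$ and hence a unique output type~$\tyW{p}{\tau}$.

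The third group consists of the coercion rules~\rn{EAlgSubR} and~\rn{EAlgSubL}. For~\rn{EAlgSubR}, the induction hypothesis gives a unique~$\tau$ for the subterm~$e$, and~\Refprop{uniqueness-of-types-coercions} then ensures that the output type~$\tau'$ with~$\jugC{\alpha}{\tau}{\tau'}$ is unique. For~\rn{EAlgSubL}, the restricted context~$\Restr{\Gamma}{\dom{\beta}}$ is determined by~$\Gamma$ and~$\beta$, and~\Refprop{uniqueness-of-types-coercions} again determines~$\Gamma'$ uniquely from~$\beta$ and this restriction; the induction hypothesis on the premise with context~$\Gamma'$ then yields a unique~$\tau$.

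The only part requiring any care is making sure that, in the cases where a rule inspects the shape of a type coming from a premise~(such as demanding~$\tyARR{\tau_1}{\tau_2}$ in~\rn{EAlgApp}), both derivations agree not merely on the existence of such a shape but on its components. This follows because the induction hypothesis already identifies those types on the nose, and type constructors are injective on their arguments. Consequently no case splits occur, and the induction goes through uniformly.
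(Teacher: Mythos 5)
Your proof is correct and takes essentially the same route as the paper, which simply observes that the judgment is syntax-directed and leaves the routine induction implicit. Your more detailed case analysis, including the appeal to uniqueness of types for coercions in the \rn{EAlgSubR} and \rn{EAlgSubL} cases, just makes explicit what the paper's one-line argument relies on.
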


\begin{proof}
  Immediate, the judgment is syntax-directed.
\end{proof}

\begin{prop}[Soundness of Explicit Type-Checking]
  \label{prop:soundness-of-explicit-type-checking}
  If~$\algE{\Gamma}{e}{\tau}$ then~$\jugT{\Gamma}{e}{\tau}$.
\end{prop}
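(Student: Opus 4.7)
The plan is to proceed by induction on the derivation $\algE{\Gamma}{e}{\tau}$. Since the algorithmic judgment is syntax-directed (\Refprop{determinism-of-explicit-type-checking}), each case is determined by the outermost term former of $e$. For most cases, the algorithmic rule mirrors a declarative one almost verbatim, so I would simply apply the corresponding declarative rule to the inductive hypotheses. The rules \rn{EAlgFun}, \rn{EAlgApp}, \rn{EAlgPair}, \rn{EAlgProj$_i$}, \rn{EAlgInj$_i$}, \rn{EAlgCase}, \rn{EAlgHead}, \rn{EAlgTail}, \rn{EAlgCons}, \rn{EAlgRec}, \rn{EAlgConst}, and \rn{EAlgSubR} fall in this category and require no extra work.

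The three nontrivial cases all concern the mismatch between the liberal context handling of the algorithmic judgment and the rigid context shape imposed by the corresponding declarative rules. Each is bridged by an application of \rn{Struct} together with an appropriate structure map. For \rn{EAlgVar}, a premise $\Gamma(x) = \tau$ does not immediately produce a derivation of $\jugT{\Gamma}{x}{\tau}$, since the declarative \rn{Var} only gives $\jugT{x : \tau}{x}{\tau}$; one then applies \rn{Struct} with the singleton structure map $x \mapsto x \in \structMap{(x:\tau)}{\Gamma}$, which is well-formed precisely because $\Gamma(x) = \tau$.

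For \rn{EAlgWarp}, the inductive hypothesis yields $\jugT{\minusW{\Gamma}{p}}{e}{\tau}$. Applying the declarative \rn{Warp} rule produces $\jugT{\tyW{p}{\minusW{\Gamma}{p}}}{\synBY{e}{p}}{\tyW{p}{\tau}}$. By the very definition of $\minusW{\Gamma}{p}$, the context $\tyW{p}{\minusW{\Gamma}{p}}$ is a subcontext of $\Gamma$, hence the identity-on-variables inclusion is a valid structure map, and \rn{Struct} gives the desired judgment. The case \rn{EAlgSubL} is analogous: after invoking the hypothesis on $\Gamma'$ and applying the declarative \rn{SubL} with the given $\beta$, we obtain $\jugT{\Restr{\Gamma}{\dom{\beta}}}{\synCOEL{\beta}{e}}{\tau}$, from which \rn{Struct} along the inclusion $\Restr{\Gamma}{\dom{\beta}} \hookrightarrow \Gamma$ produces the required conclusion.

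The main obstacle, such as it is, lies in verifying that all the structure maps used above are genuinely well-defined, i.e., satisfy the type-preservation condition $\Gamma'(\sigma(x)) = \Gamma(x)$. This reduces to two small bookkeeping lemmas: that $\tyW{p}{\minusW{\Gamma}{p}}$ and $\Restr{\Gamma}{\dom{\beta}}$ are indeed subcontexts of $\Gamma$ (with agreeing type assignments on their domains), both of which follow directly from the definitions introduced earlier in the appendix. No deeper argument is required, since soundness does not depend on the uniqueness or canonicity of the derivation produced.
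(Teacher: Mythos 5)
Your proof is correct and follows essentially the same route as the paper's: induction on the algorithmic derivation, with the three context-sensitive cases (\rn{EAlgVar}, \rn{EAlgWarp}, \rn{EAlgSubL}) repaired by a single application of \rn{Struct} along an inclusion-style structure map, exactly as in the appendix. The only (immaterial) difference is in the \rn{EAlgVar} case, where you route through the singleton context $(x:\tau)$ while the paper uses the prefix $\Gamma', x:\tau$ of $\Gamma$; note also that the declarative \rn{Var} actually concludes $\jugT{\Gamma, x:\tau}{x}{\tau}$ for an arbitrary prefix, so your singleton is just the instance with an empty prefix.
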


\begin{proof}
  By induction on the derivation.
  \begin{itemize}
  \item
    Case~\rn{EAlgVar}:
    since~$x \in \dom{\Gamma}$,~$\Gamma$ must be of the
    form~$\Gamma', x : \tau, x_1 : \tau_1, \dotsc, x_n : \tau_n$ for a
    certain~$\Gamma'$.
    Let us denote~$\sigma_w \in \structMap{\Gamma', x : \tau}{\Gamma}$ the
    inclusion map of~$\dom{\Gamma',x : \tau}$ into~$\dom{\Gamma}$.
    We conclude by deriving~$\jugT{\Gamma', x : \tau}{x}{\tau}$ by~\Refrule{Var}
    and then applying~\Refrule{Struct} with~$\sigma_w$.

  \item
    Cases~\rn{EAlgFun} to~\rn{EAlgRec},~\rn{EAlgConst},~\rn{EAlgSubR}:~immediate
    application of induction hypotheses.

  \item
    Case~\rn{EAlgWarp}:
    by definition~$\tyW{p}{(\minusW{\Gamma}{p})}$ is a subcontext of~$\Gamma$.
    Let us
    denote~$\sigma_w \in \structMap{\tyW{p}{(\minusW{\Gamma}{p})}}{\Gamma}$ the
    corresponding inclusion map.
    By the induction hypothesis and~\Refrule{Warp}, we
    have~$\jugT{\tyW{p}{\minusW{\Gamma}{p}}}{\synBY{e}{p}}{\tyW{p}{\tau}}$.
    We conclude by applying~\Refrule{Struct} with~$\sigma_w$.

  \item
    Case~\rn{EAlgSubL}:
    again,~$\Restr{\Gamma}{\dom{\beta}}$ is a subcontext of~$\Gamma$ and we
    denote by~$\sigma_w \in \structMap{\Restr{\Gamma}{\dom{\beta}}}{\Gamma}$ the
    corresponding inclusion map.
    By the induction hypothesis, we have~$\jugT{\Gamma'}{e}{\tau}$.
    Thus, we
    derive~$\jugT{\Restr{\Gamma}{\dom{\beta}}}{\synCOEL{\beta}{e}}{\tau}$
    by~\Refrule{SubL}.
    We conclude by applying~\Refrule{Struct} with~$\sigma_w$.
  \end{itemize}
\end{proof}

Abusing notation, we write~$d ::: \algE{\Gamma}{e}{\tau}$ for the derivation
built in the above proof.
It is exactly the canonical derivation described
in~\Refprop{canonical-derivations}.
Such a derivation can always be built.

\begin{prop}[Completeness of Explicit Type-Checking]
  \label{prop:completeness-of-explicit-type-checking}
  If~$\jugT{\Gamma}{e}{\tau}$ then~$\algE{\Gamma}{e}{\tau}$.
\end{prop}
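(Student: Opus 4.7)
The plan is to proceed by induction on the declarative derivation $d ::: \jugT{\Gamma}{e}{\tau}$. For each rule from \rn{Fun} through \rn{Cons}, together with \rn{Const}, \rn{Rec}, and \rn{SubR}, the corresponding algorithmic rule has the same shape, so these cases are dispatched immediately by applying the induction hypotheses to the premises.

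Three of the remaining declarative rules admit a one-line observation that matches the algorithm's built-in weakening. For \rn{Var}, the context has the form $\Gamma', x : \tau$, so $(\Gamma', x : \tau)(x) = \tau$ and \rn{EAlgVar} applies. For \rn{Warp}, the declarative context is literally $\tyW{p}{\Gamma'}$, hence $\minusW{\tyW{p}{\Gamma'}}{p} = \Gamma'$, and \rn{EAlgWarp} applies after using the IH. For \rn{SubL}, a declarative context coercion $\jugC{\beta}{\Gamma}{\Gamma'}$ forces $\dom{\Gamma} \subseteq \dom{\beta}$, so $\Restr{\Gamma}{\dom{\beta}} = \Gamma$ and \rn{EAlgSubL} applies.

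The only genuinely nontrivial case is \rn{Struct}. From the IH we obtain $\algE{\Gamma}{e}{\tau}$, and we must derive $\algE{\Gamma'}{\sigma[e]}{\tau}$ for some $\sigma \in \structMap{\Gamma}{\Gamma'}$. For this I would first establish a stability lemma: algorithmic typing is preserved under structure maps. The lemma is proved by induction on the algorithmic derivation; most cases simply propagate $\sigma$ to subterms, and the interesting cases are the three where the algorithmic rule already performs implicit weakening. For \rn{EAlgVar}, the defining equation $\Gamma'(\sigma(x)) = \Gamma(x)$ makes \rn{EAlgVar} directly applicable to $\sigma[x] = \sigma(x)$. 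For \rn{EAlgWarp}, $\sigma$ restricts to a structure map $\minusW{\Gamma}{p} \to \minusW{\Gamma'}{p}$, since $\Gamma(x) = \tyW{p}{\tau_0}$ forces $\Gamma'(\sigma(x)) = \tyW{p}{\tau_0}$. For \rn{EAlgSubL}, a similar restriction argument produces a structure map $\Restr{\Gamma}{\dom{\beta}} \to \Restr{\Gamma'}{\sigma(\dom{\beta})}$ along which $\beta$ is transported.

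I expect the main obstacle to lie in the \rn{EAlgSubL} case of the stability lemma: one must check that contraction in $\sigma$ remains compatible with $\beta$. This works because structure maps preserve types, so any pair of variables identified by $\sigma$ has a common source type in $\Gamma$ and $\beta$ assigns them coercions of the same domain, allowing the transported $\beta'$ to be defined coherently. Once this bookkeeping is settled, the stability lemma closes and the main induction immediately yields the completeness theorem.
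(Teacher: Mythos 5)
Your proof takes essentially the same route as the paper: completeness is obtained by induction on the declarative derivation, with \rn{Var}, \rn{Warp}, and \rn{SubL} discharged by the identities $(\Gamma, x:\tau)(x) = \tau$, $\minusW{(\tyW{p}{\Gamma})}{p} = \Gamma$, and $\Restr{\Gamma}{\dom{\beta}} = \Gamma$, and \rn{Struct} handled by a stability lemma for structure maps, which is exactly the paper's Proposition~\ref{prop:explicit-type-checking-struct} (stated there without proof). The one detail to tighten is your \rn{EAlgSubL} case: variables identified by $\sigma$ share a source type, so $\beta$ assigns them coercions with the same \emph{domain}, but nothing forces those coercions (or their codomains) to coincide, so the transported $\beta'$ is not automatically well defined under contraction and you need to spell out how $\sigma[\synCOEL{\beta}{e}]$ is defined in that situation.
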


\begin{prop}[Uniqueness of Types for Explicit Terms]
  For any fixed~$\Gamma$ and~$e$, there is at most one type~$\tau$ such
  that~$\jugT{\Gamma}{e}{\tau}$ holds.
\end{prop}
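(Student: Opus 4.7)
The plan is to derive this proposition as an immediate corollary of the two preceding propositions about the algorithmic judgment $\algE{\Gamma}{e}{\tau}$ of~\Reffig{type-checking-explicit-terms}. Specifically, given two derivations $d_1 ::: \jugT{\Gamma}{e}{\tau_1}$ and $d_2 ::: \jugT{\Gamma}{e}{\tau_2}$, I would first apply~\Refpropshort{completeness-of-explicit-type-checking} to obtain $\algE{\Gamma}{e}{\tau_1}$ and $\algE{\Gamma}{e}{\tau_2}$, and then conclude $\tau_1 = \tau_2$ by~\Refpropshort{determinism-of-explicit-type-checking}. The proof itself is thus a one-liner; the real content lies in establishing completeness, since determinism is immediate from the syntax-directedness of $\algE{-}{-}{-}$.

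For completeness, I would proceed by induction on the derivation of $\jugT{\Gamma}{e}{\tau}$, considering each typing rule in~\Reffig{typing}. All rules except~\rn{Struct},~\rn{Var},~\rn{Warp}, and~\rn{SubL} translate directly into their algorithmic counterparts by applying the induction hypothesis to the premises. For~\rn{Var}, the algorithmic rule~\rn{EAlgVar} does not require $x$ to be the last binding, so it suffices to check that $\Gamma(x) = \tau$. For~\rn{Warp} and~\rn{SubL}, the algorithmic rules~\rn{EAlgWarp} and~\rn{EAlgSubL} use $\minusW{\Gamma}{p}$ and $\Restr{\Gamma}{\dom{\beta}}$ respectively, which bake in an amount of weakening/exchange sufficient to subsume any preceding~\rn{Struct} application.

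The main obstacle is the~\rn{Struct} case itself, which is the only non-syntax-directed rule and thus has no direct algorithmic counterpart. The key observation, already foreshadowed by~\Refprop{canonical-derivations}, is that any use of~\rn{Struct} can be absorbed into one of~\rn{EAlgVar},~\rn{EAlgWarp}, or~\rn{EAlgSubL}. Concretely, given a derivation ending in~\rn{Struct} with $\sigma \in \structMap{\Gamma}{\Gamma'}$ and $d ::: \jugT{\Gamma}{e}{\tau}$, I would apply the induction hypothesis to $d$ to get $\algE{\Gamma}{e}{\tau}$, and then show that $\algE{\Gamma'}{\sigma[e]}{\tau}$ holds, by a secondary induction on the derivation of $\algE{\Gamma}{e}{\tau}$. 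This secondary induction proceeds without issue for all rules except when the structure map has to interact with~\rn{EAlgWarp} or~\rn{EAlgSubL}, where one needs to check that the restricted/extracted subcontexts behave coherently with respect to $\sigma$; this boils down to showing that $\minusW{(\sigma^*\Gamma')}{p} = \sigma^*(\minusW{\Gamma'}{p})$ and $\Restr{\sigma^*\Gamma'}{\dom{\beta}} = \sigma^*\Restr{\Gamma'}{\sigma^{-1}\dom{\beta}}$ for suitable restrictions of~$\sigma$, which are direct consequences of the fact that $\sigma$ is a typed renaming.

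Once completeness is in hand, uniqueness follows immediately, and as a side benefit one has established~\Refprop{canonical-derivations}:~the canonical derivation associated to $e$ is precisely the one extracted from the algorithmic derivation via~\Refprop{soundness-of-explicit-type-checking}.
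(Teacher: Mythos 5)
Your proposal matches the paper's proof exactly: the paper derives uniqueness as an immediate consequence of completeness of the algorithmic judgment $\algE{\Gamma}{e}{\tau}$ together with its determinism, and establishes completeness by induction on the declarative derivation, absorbing \rn{Struct} via a lemma that algorithmic type-checking commutes with structure maps (the paper folds this into the semantic coherence proposition, but the inductive skeleton and case analysis are the same as yours). No gaps; your treatment of the \rn{Var}, \rn{Warp}, and \rn{SubL} cases via $\Gamma(x)$, $\minusW{\Gamma}{p}$, and $\Restr{\Gamma}{\dom{\beta}}$ is precisely how the paper handles them.
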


\begin{proof}
  Immediate consequence of~\Refprop{completeness-of-explicit-type-checking}
  and~\Refprop{determinism-of-explicit-type-checking}.
\end{proof}

\subsection{Operational Semantics}

\begin{figure}
  \footnotesize
  \begin{mathpar}
    \fbox{$\reaV{\tau}{n} \subseteq \mi{Val}$}
    \and
    \fbox{$\reaC{\Gamma}{n} \subseteq \mi{Env}$}
    \and
    \fbox{$\reaE{\Gamma}{\tau} \subseteq \mi{ETerm}$}
  \end{mathpar}
  \begin{align*}
    \reaV{\tau}{0} & =
    \mi{Val}
    \\
    \reaV{\nu}{n+1}
    & =
    \SCAL_\nu
    \\
    \reaV{\tyS{\tau}}{n+1}
    & =
    \{ \vC{v_1}{v_2} \mid
    v_1 \in \reaV{\tau}{n+1} \wedge v_2 \in \reaV{\tyS{\tau}}{n}
    \}
    \\
    \reaV{\tyPROD{\tau_1}{\tau_2}}{n+1}
    & =
    \{ (v_1, v_2) \mid
    v_1 \in \reaV{\tau_1}{n+1} \wedge v_2 \in \reaV{\tyS{\tau_2}}{n+1}
    \}
    \\
    \reaV{\tySUM{\tau_1}{\tau_2}}{n+1}
    & =
    \{ \vINJ{i}{v} \mid
    v \in \reaV{\tau_i}{n+1}
    \}
    \\
    \reaV{\tyARR{\tau_1}{\tau_2}}{n+1}
    & =
    \{ \vCLO{x}{e}{\gamma} \mid
    \begin{array}[t]{@{}l}
      \forall m \le n+1, \forall v_1 \in \reaV{\tau_1}{m},
      \exists \gamma',
      \jugTR{\gamma}{m}{\gamma'}
      \\
      \wedge \exists v_2,
      \jugEV{e}{\gamma'[x \mapsto v_1]}{m}{v_2}
      \wedge\, v_2 \in \reaV{\tau_2}{m}
      \}
    \end{array}
    \\
    \reaV{\tyW{p}{\tau}}{n+1}
    & =
    \{ \vW{p}{v} \mid
    v \in \reaV{\tau}{p(n+1)}
    \}
    \\
    \reaV{\tau}{\omega}
    & =
    \{ \vTHUNK{e}{\gamma} \mid
    \begin{array}[t]{@{}l}
      \forall m < \omega,
      \exists v,
      \\
      \jugTR{\vTHUNK{e}{\gamma}}{m}{v}
      \wedge v \in \reaV{\tau}{m}
      \}
    \end{array}
    \\
    \reaC{\Gamma}{n} & =
    \{ \gamma \mid
    \dom{\gamma} = \dom{\Gamma}
    \wedge
    \forall x \in \dom{\gamma},
    \gamma(x) \in \reaV{\Gamma(x)}{n}
    \}
    \\
    \reaE{\Gamma}{\tau} & =
    \{ e \mid
    \forall n \in \NI,
    \forall \gamma \in \reaC{\Gamma}{n},
    \exists v,
    \jugEV{e}{\gamma}{n}{v}
    \wedge
    v \in \reaV{\tau}{n}
    \}
  \end{align*}
  \caption{Realizability Predicates for Totality}
  \label{fig:operational-realizability-predicates}
\end{figure}

The proof of~\Refthm{totality} relies on three realizability predicates defined
in~\Reffig{operational-realizability-predicates}.
They follow the usual structure of step-indexed logical relation,
defining~$\reaV{\tau}{n}$ by well-founded induction over the lexicographic
order~$\mi{Lex}(<_t,<_n) \subseteq (\mi{Types} \times (\NI))^2$, where~$<_t$ is
the proper subterm ordering and~$<_n$ the canonical ordering on~$\NI$.
Most clauses are standard.
We prove the fundamental property of logical relations,
obtaining~\Refthm{totality} as an immediate corollary.

\begin{lemma}[Fundamental Property]
  If~$\jugT{\Gamma}{e}{\tau}$ then~$e \in \reaE{\Gamma}{\tau}$.
\end{lemma}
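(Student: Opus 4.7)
The plan is to prove the fundamental property by induction on the typing derivation, preferably using the canonical form of \Refprop{canonical-derivations} so that the non-syntax-directed rule~\rn{Struct} does not require separate treatment beyond one clean bookkeeping case. For every derivation we must exhibit, for an arbitrary $n \in \NI$ and an arbitrary $\gamma \in \reaC{\Gamma}{n}$, a value $v$ with $\jugEV{e}{\gamma}{n}{v}$ and $v \in \reaV{\tau}{n}$.

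The proof naturally splits on the shape of $n$. The case $n = 0$ is immediate for every term: rule~\rn{EZero} yields $\vSTOP$, and $\reaV{\tau}{0} = \mi{Val}$ by definition. The case $n = \omega$ is handled uniformly via rule~\rn{EOmega}, which returns $\vTHUNK{e}{\gamma}$; to show this thunk lies in $\reaV{\tau}{\omega}$ one unfolds the definition and, for each $m < \omega$, must produce a truncation to $m$ that is realizable at $m$. Using the rule for truncating a thunk at a positive $m$ reduces this exactly to evaluating $e$ at $m$ in a suitable truncation of $\gamma$, which is the finite-step case. Hence the bulk of the work lies in the case $0 < n < \omega$, which is handled by a standard induction on the typing derivation, using~\Reflemm{type-safety-truncation} and~\Refprop{functoriality-of-truncation} wherever environments must be brought down to smaller steps (e.g. before entering a closure body or performing an iteration step).

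The cases for \rn{Var}, \rn{Fun}, \rn{App}, \rn{Pair}, \rn{Proj}, \rn{Inj}, \rn{Case}, \rn{Const}, \rn{Head}, \rn{Tail}, \rn{Cons} follow the familiar pattern of step-indexed logical relations. For \rn{SubR} and \rn{SubL} I will rely on a coercion-application counterpart: if $\jugC{\alpha}{\tau}{\tau'}$ and $v \in \reaV{\tau}{n}$, then the coercion application judgment produces some $v' \in \reaV{\tau'}{n}$, and similarly pointwise for context coercions. This is proved by a separate induction on $\alpha$, handling the $n=0,n=\omega$ cases via~\rn{ECZero},~\rn{ECOmega} and the delay case via~\Reflemm{type-safety-truncation}. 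For \rn{Warp}, applied to $\synBY{e}{p}$ at $n+1$, the context $\tyW{p}{\Gamma}$ guarantees every value in $\gamma$ has the form $\vW{p}{v_x}$, so $\purge{\gamma} \in \reaC{\Gamma}{p(n+1)}$; the induction hypothesis instantiated at $p(n+1)$ then gives a realizable result, which rule~\rn{EBy} wraps back into $\vW{p}{-}$.

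The main obstacle is the recursion rule \rn{Rec}. To handle $\synREC{x}{\tau}{e}$ at step $n+1$ one must prove a strengthened statement about the iteration judgment: for every $m \le n+1$ and every $v \in \reaV{\tau}{m}$, the iteration $\jugITER{x}{e}{\gamma}{v}{m}{n+1}{v''}$ terminates with $v'' \in \reaV{\tau}{n+1}$. This is proved by a secondary induction on $n+1 - m$, where the step case \rn{IStep} uses the induction hypothesis on $e$ instantiated at $m$ (note that $\gamma$ must first be truncated to $m+1$, whence the need for~\Reflemm{type-safety-truncation} on environments, and for the fact that the extended environment $\gamma'[x \mapsto v]$ lies in $\reaC{\Gamma, x : \tyW{\wPRED}{\tau}}{m}$---which reduces to checking that $v \in \reaV{\tau}{m-1} = \reaV{\tyW{\wPRED}{\tau}}{m}$ modulo unwrapping, the subtle index shift being the one delicate point). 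The base call at $m = 0$ uses $\vSTOP$, which is trivially in $\reaV{\tau}{0}$. Putting everything together yields the fundamental property, and specializing to an arbitrary $n$ and $\gamma$ gives~\Refthm{totality}.
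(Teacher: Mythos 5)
Your proposal follows essentially the same route the paper takes: an induction on (canonical) typing derivations over the step-indexed realizability predicates, with the cases $n=0$ and $n=\omega$ discharged uniformly and with auxiliary realizability lemmas for coercion application and iteration mirroring the type-safety lemmas. The only small adjustment is that where you invoke Lemma~\ref{lemma:type-safety-truncation} to bring environments down to smaller steps, you in fact need its realizability analogue (truncation is total on realizable values and preserves membership in $\reaV{\tau}{-}$), which is exactly what the paper means when it says the auxiliary lemmas are obtained by ``replacing syntactic typing with realizability.''
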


As before, this proof relies on additional lemmas for each auxiliary judgment.
Their statements are similar to the ones used for type safety, replacing
syntactic typing with realizability.

\subsection{Denotational Semantics}

\paragraph{Interpretation of Structure Maps}

Consider an arbitrary structure map~$\sigma \in \structMap{\Gamma}{\Gamma'}$.
Let us write~$\Gamma$ as~$x_1 : \tau_1, \dotsc, x_n : \tau_n$ and~$\Gamma'$
as~$x_1' : \tau_1', \dotsc, x_m' : \tau_m'$.
Since contexts are interpreted as tuples, we
define~$\sem{\sigma} : \sem{\Gamma'} \to \sem{\Gamma}$
as~$\langle f_1, \dotsc, f_n \rangle$
with~$f_i : \sem{\Gamma'} \to \sem{\tau_i}$ the~$j$th projection out
of~$\sem{\Gamma}$ for~$j$ such that~$\sigma(x_i) = x_j'$.

\paragraph{Coherence for Explicit Terms}

\begin{prop}
  \label{prop:explicit-type-checking-struct}
  If~$\algE{\Gamma}{e}{\tau}$ and~$\sigma \in \structMap{\Gamma}{\Gamma'}$
  then~$\algE{\Gamma'}{\sigma[e]}{\tau}$ with
  \[
    \sem{d_1 ::: \algE{\Gamma'}{\sigma[e]}{\tau}}
    =
    \sem{d_2 ::: \algE{\Gamma}{e}{\tau}}
    \circ
    \sem{\sigma}.
  \]
\end{prop}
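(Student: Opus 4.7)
The plan is to prove the statement by structural induction on the derivation $d_2 ::: \algE{\Gamma}{e}{\tau}$. Since algorithmic type-checking is syntax-directed (as invoked in the proof of Proposition~\ref{prop:determinism-of-explicit-type-checking}), both $d_1$ and $d_2$ are uniquely determined by their inputs, so the equation to verify is unambiguous. In each case, I would first construct a derivation of $\algE{\Gamma'}{\sigma[e]}{\tau}$ from the induction hypothesis, then check that the two interpretations in $\TT$ coincide using the elementary properties of pairings, projections, exponentials, and precomposition.

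For the leaf case \textsc{EAlgVar}, if $\Gamma(x) = \tau$, then by definition of a structure map $\Gamma'(\sigma(x)) = \tau$, so \textsc{EAlgVar} rederives $\algE{\Gamma'}{\sigma(x)}{\tau}$. The interpretation of both derivations is a projection out of a product, and the required equality is exactly the defining property of $\sem{\sigma}$ as $\langle \pi_{\sigma(x_1)}, \dotsc, \pi_{\sigma(x_n)}\rangle$. For the structural cases \textsc{EAlgFun}, \textsc{EAlgCase}, and \textsc{EAlgRec}, I extend $\sigma$ to a structure map $\sigma' \in \structMap{\Gamma, x : \tau_0}{\Gamma', x : \tau_0}$ that fixes the fresh binder, apply the induction hypothesis, and conclude using the naturality of currying (for \textsc{EAlgFun}) and of coproduct pairing (for \textsc{EAlgCase}). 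Cases like \textsc{EAlgApp}, \textsc{EAlgPair}, \textsc{EAlgProj}, \textsc{EAlgInj}, \textsc{EAlgHead}, \textsc{EAlgTail}, \textsc{EAlgCons}, \textsc{EAlgConst} and \textsc{EAlgSubR} follow directly by applying the induction hypothesis to immediate subderivations and then composing semantically.

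The two cases that need real care are \textsc{EAlgWarp} and \textsc{EAlgSubL}. For \textsc{EAlgWarp}, observe that $\sigma$ restricts to a structure map $\sigma|_p \in \structMap{\minusW{\Gamma}{p}}{\minusW{\Gamma'}{p}}$: if $x : \tyW{p}{\tau'}$ occurs in $\Gamma$ then $\sigma(x)$ has the same type in $\Gamma'$ and thus survives the $\ominus p$ operation. Applying the induction hypothesis gives the derivation of $\algE{\minusW{\Gamma'}{p}}{\sigma[e]}{\tau}$, which \textsc{EAlgWarp} promotes to $\algE{\Gamma'}{\synBY{(\sigma[e])}{p}}{\tyW{p}{\tau}}$. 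Since $\sigma[\synBY{e}{p}] = \synBY{(\sigma[e])}{p}$, this yields the required derivation $d_1$. The semantic side reduces to checking that $\sem{\sigma|_p} = \W{p}^{-1}(-) \circ \sem{\sigma} \circ \iota$ for a suitable inclusion $\iota$, which is a routine diagram chase using the limit-preservation of $\W{p}$ (Proposition~\ref{prop:pseudo-monoidal-functor}). For \textsc{EAlgSubL}, the term $\synCOEL{\beta}{e}$ has free variables $\dom{\beta} \subseteq \dom{\Gamma}$, so $\sigma[\synCOEL{\beta}{e}] = \synCOEL{\beta'}{e}$ where $\beta'$ is $\beta$ reindexed along $\sigma$; the side condition $\jugC{\beta'}{\Restr{\Gamma'}{\sigma(\dom{\beta})}}{\Gamma''}$ holds because $\sigma$ preserves types pointwise. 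Induction and the functoriality of coercion interpretation close the case.

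I expect the main obstacle to be the \textsc{EAlgWarp} case, because the semantic equation there involves both the monoidal structure on $\W{(-)}$ and the fact that the algorithmic rule replaces $\Gamma$ by the non-functorial operation $\minusW{\Gamma}{p}$. Establishing commutativity between $\sem{\sigma}$ and the implicit inclusion $\sem{\tyW{p}{\minusW{\Gamma}{p}}} \hookrightarrow \sem{\Gamma}$ is the one spot where a small auxiliary lemma (essentially: precomposition by a structure map commutes with the inclusion of the $\W{p}$-sliceable portion of the context) is the cleanest way forward, and I would factor it out before running the induction.
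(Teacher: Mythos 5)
The paper states \Refprop{explicit-type-checking-struct} without proof, so there is no official argument to compare against; your induction on the unique, syntax-directed derivation of $\algE{\Gamma}{e}{\tau}$, with the variable case discharged by the definition of $\sem{\sigma}$ as a tuple of projections, is clearly the intended one, and you correctly isolate \rn{EAlgVar}, \rn{EAlgWarp}, and \rn{EAlgSubL} as the only cases with real content. In the \rn{EAlgWarp} case, your observation that $\sigma$ restricts to $\sigma|_p \in \structMap{\minusW{\Gamma}{p}}{\minusW{\Gamma'}{p}}$ is correct (and, since $\algE{\minusW{\Gamma}{p}}{e}{\tau}$ forces the free variables of $e$ into $\dom{\minusW{\Gamma}{p}}$, the restricted map acts on $e$ as $\sigma$ does); the semantic side is indeed the commutation of $\W{p}(\sem{\sigma|_p})$ with the projections $\sem{\Gamma} \to \sem{\tyW{p}{(\minusW{\Gamma}{p})}}$ and $\sem{\Gamma'} \to \sem{\tyW{p}{(\minusW{\Gamma'}{p})}}$, which is the auxiliary lemma you propose to factor out — though writing it via $\W{p}^{-1}(-)$ is an abuse, since $\W{p}$ has no inverse; what you are really using is that $\W{p}$ preserves the product structure of interpreted contexts.

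The one place that needs more care is \rn{EAlgSubL}. First, a small slip: $\sigma[\synCOEL{\beta}{e}]$ should be $\synCOEL{\beta'}{\sigma[e]}$ rather than $\synCOEL{\beta'}{e}$, since the inner term is typed in a context with the same variable names and the renaming must propagate into it. Second, and more substantively, ``$\beta$ reindexed along $\sigma$'' is only well defined when $\sigma$ does not identify two variables $x,y \in \dom{\beta}$ with $\beta(x) \ne \beta(y)$: a contracting structure map only guarantees $\Gamma(x) = \Gamma(y)$, not that $\beta$ coerces them to the same target type, so the reindexed context coercion — and hence the typability of $\sigma[\synCOEL{\beta}{e}]$ — can fail in that case. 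This is arguably an underspecification in the paper's informal definition of $\sigma[-]$ on coercion-bearing terms rather than a defect of your strategy, but since the proposition is invoked for arbitrary $\sigma$ in the \rn{Struct} case of \Refprop{explicit-type-checking-coherence}, a complete proof must either fix the action of $\sigma[-]$ on $\synCOEL{\beta}{e}$ so that this conflict cannot arise, or handle it explicitly.
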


\begin{prop}
  \label{prop:explicit-type-checking-coherence}
  For any~$d_1 ::: \jugT{\Gamma}{e}{\tau}$ we
  have~$d_2 ::: \algE{\Gamma}{e}{\tau}$ with
  \[
    \sem{d_1 ::: \jugT{\Gamma}{e}{\tau}}
    =
    \sem{d_2 ::: \algE{\Gamma}{e}{\tau}}.
  \]
\end{prop}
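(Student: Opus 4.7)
The plan is to proceed by structural induction on the derivation $d_1 ::: \jugT{\Gamma}{e}{\tau}$ and, for each declarative rule, exhibit a corresponding algorithmic derivation $d_2 ::: \algE{\Gamma}{e}{\tau}$ together with a semantic equality. The interpretation of each declarative rule in \Reffig{typing} is built from a few categorical pieces (a pairing for \rn{Pair}, a composition with $\sem{\alpha}$ for \rn{SubR}, the functorial action of $\symARR, \symPROD, \symSUM, \symS, \W{p}$ for the corresponding formers, the fixpoint morphism $\semFIX{\sem{\tau}}$ for \rn{Rec}, etc.), and the algorithmic rules of \Reffig{type-checking-explicit-terms} are in one-to-one correspondence with them outside of the three problematic rules \rn{Var}, \rn{Warp}, and \rn{SubL} (which the canonical derivation of \Refprop{canonical-derivations} always guards by a \rn{Struct}) and of course \rn{Struct} itself.

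For the syntax-directed declarative rules from \rn{Fun} to \rn{Const} and for \rn{Rec}, \rn{SubR}, the construction is entirely mechanical: the induction hypothesis gives algorithmic subderivations with the right semantics, and the algorithmic rule applies the same categorical combinator as its declarative sibling, so the desired equality follows by unfolding. For \rn{Var}, \rn{Warp}, and \rn{SubL}, the declarative rules correspond to the algorithmic ones in the restricted situation where the context is already in the shape demanded by the algorithm (a singleton context $x : \tau$ for \rn{EAlgVar}, a context exactly of the form $\tyW{p}{\Gamma}$ for \rn{EAlgWarp} so that $\minusW{\tyW{p}{\Gamma}}{p} = \Gamma$, and $\Restr{\Gamma}{\dom{\beta}} = \Gamma$ for \rn{EAlgSubL} since the declarative rule forces $\dom{\beta} = \dom{\Gamma}$). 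In these restricted situations, induction plus the definitions gives the result.

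The main obstacle is the \rn{Struct} rule, which is not syntax-directed and whose uses in $d_1$ can interleave with every other rule, shifting and duplicating variables so that when we arrive at a \rn{Var}, \rn{Warp}, or \rn{SubL} leaf the current context is in general a proper supercontext of what the algorithmic rule expects. This is precisely the situation handled by \Refprop{explicit-type-checking-struct}: given $d_1 ::: \jugT{\Gamma'}{\sigma[e]}{\tau}$ obtained by \rn{Struct} from $d_1' ::: \jugT{\Gamma}{e}{\tau}$ with $\sigma \in \structMap{\Gamma}{\Gamma'}$, the induction hypothesis supplies $d_2' ::: \algE{\Gamma}{e}{\tau}$ with $\sem{d_2'} = \sem{d_1'}$, and \Refprop{explicit-type-checking-struct} promotes this to $d_2 ::: \algE{\Gamma'}{\sigma[e]}{\tau}$ with $\sem{d_2} = \sem{d_2'} \circ \sem{\sigma} = \sem{d_1'} \circ \sem{\sigma} = \sem{d_1}$. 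Thus the whole argument hinges on \Refprop{explicit-type-checking-struct}; once that lemma is in place, the coherence proof is a routine inductive bookkeeping, and \Refprop{determinism-of-explicit-type-checking} guarantees that the $d_2$ produced is the unique algorithmic derivation.
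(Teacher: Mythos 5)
Your proposal matches the paper's proof: both proceed by induction on $d_1$, dispatch the syntax-directed rules mechanically, observe that the declarative \rn{Var}, \rn{Warp}, and \rn{SubL} instances are exactly the algorithmic ones with an identity structure map (for \rn{Var} the relevant restriction is that the binding $x:\tau$ is \emph{last} in the context, not that the context is a singleton, but the conclusion is the same), and handle \rn{Struct} via \Refprop{explicit-type-checking-struct} and the fact that it is interpreted by precomposition with $\sem{\sigma}$. This is essentially the same argument as in the paper.
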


\begin{proof}
  By induction on~$d_1$.
  \begin{itemize}
  \item
    Case~\rn{Var}:
    since~$(\Gamma, x : \tau) = \tau$,~$\algE{\Gamma, x : \tau}{x}{\tau}$ holds
    by definition.
    The corresponding derivation~$d_2$ is formed of~\Refrule{Var}
    and~\Refrule{Struct} with~$\sigma_w = \id$.
    Hence~$\sem{d_2} = \sem{d_1} \circ \id = \sem{d_1}$.

  \item
    Cases~\rn{Fun} to~\rn{Rec},~\rn{Const},~\rn{SubR}:~immediate application of
    induction hypotheses.

  \item
    Case~\rn{Warp}:~we have~$d_1' ::: \jugT{\Gamma}{e}{\tau}$.
    The induction hypothesis gives~$d_2' ::: \algE{\Gamma}{e}{\tau}$
    with~$\sem{d_1'} = \sem{d_2'}$.
    Since~$\minusW{(\tyW{p}{\Gamma})}{p} = \Gamma$, we
    have~$\algE{\tyW{p}{\Gamma}}{\synBY{e}{p}}{\tau}$ with a canonical
    derivation~$d_2$ ending with an instance of~\Refrule{Struct}
    with~$\sigma_w = \id$.
    We conclude as in the~\rn{Var} case.

  \item
    Case~\rn{SubL}:
    we have~$d_1' = \jugT{\Gamma'}{e}{\tau}$
    and~$\jugC{\beta}{\Gamma}{\Gamma'}$.
    The induction hypothesis gives~$d_2' ::: \algE{\Gamma'}{e}{\tau}$.
    Since~$\jugC{\beta}{\Gamma}{\Gamma'}$, we
    have~$\dom{\Gamma} = \dom{\Gamma'} \subseteq \dom{\beta}$ by definition.
    Hence~$\Restr{\Gamma}{\dom{\beta}} = \Gamma$ and we
    have~$\algE{\Gamma}{\synCOEL{\beta}{e}}{\tau}$ with a canonical
    derivation~$d_2$ using~$d_2'$ and ending with an instance
    of~\Refrule{Struct} with~$\sigma_w = \id$.
    We conclude as in the~\rn{Var} case.

  \item
    Case~\rn{Struct}:~we have~$d_1' ::: \jugT{\Gamma}{e}{\tau}$.
    The induction hypothesis gives~$d_2' ::: \algE{\Gamma}{e}{\tau}$
    with~$\sem{d_1'} = \sem{d_2'}$.
    Applying~\Refprop{explicit-type-checking-struct}, we
    obtain~$d_2 = \algE{\Gamma'}{\sigma[e]}{\tau}$
    with~$\sem{d_2'} = \sem{d_2} \circ \sem{\sigma}$.
    We conclude by transitivity.

  \end{itemize}
\end{proof}

\Refprop{explicit-type-checking-coherence} directly
implies~\Refprop{completeness-of-explicit-type-checking}.

\paragraph{Contexts}

We define contexts~$C$ in the usual way, as explicit terms with a single hole,
with a plugging operation~$C[-]$.
We write
\[
  C : (\Gamma \vdash \tau) \to (\Gamma' \vdash \tau')
\]
when the context~$C$ maps terms~$\jugT{\Gamma}{e}{\tau}$ to a
terms~$\jugT{\Gamma'}{C[e]}{\tau'}$.
Strictly speaking, we should introduce a typing judgment for contexts, but its
definition is straightforward and we thus elide it.

Every context~$C : (\Gamma \vdash \tau) \to (\Gamma' \vdash \tau')$ defines a
morphism~$\sem{C} : \sem{\tau}^{\sem{\Gamma}} \to \sem{\tau'}^{\sem{\Gamma'}}$
of~\TT{}.
The denotational semantics is compositional as expected.
\begin{prop}
  \label{prop:compositionality}
  If~$C : (\Gamma \vdash \tau) \to (\Gamma' \vdash \tau')$
  and~$\jugT{\Gamma}{e}{\tau}$, then
  \[
    \sem{C} \circ \semCURRY{\sem{\jugT{\Gamma}{e}{\tau}}}
    =
    \sem{\jugT{\Gamma'}{C[e]}{\tau'}}.
  \]
\end{prop}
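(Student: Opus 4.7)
The natural approach is structural induction on the context $C$. The base case $C = [-]$ is immediate: the context is trivial, $C[e] = e$, $\sem{C}$ is the identity on $\sem{\tau}^{\sem{\Gamma}}$, and the equation reduces to a tautology.

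For the inductive step, each syntactic shape of $C$ corresponds to a typing rule from \Reffig{typing}, and the outer frame of $\sem{C}$ is built from exactly the same categorical gadget used to interpret the corresponding term former: a pairing for products, an exponential transpose for application or abstraction, the warping functor $\W{p}$ for $\synBY{-}{p}$, the fixpoint morphism $\semFIX{\sem{\tau}}$ for $\synREC{x}{\tau}{-}$, precomposition by $\sem{\sigma}$ for a \rn{Struct} frame, postcomposition by $\sem{\alpha}$ for a coercion frame, and so on. In each case, I unfold $\sem{C}$ at its outermost constructor, apply the induction hypothesis to the unique sub-context containing the hole, and close the diagram using naturality or functoriality of the corresponding construction — precisely the identities that witness compositionality for the categorical semantics of simply-typed $\lambda$-calculus with products and sums.

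Two families of cases merit extra attention. When $C = \synBY{C'}{p}$, the outer context is of the form $\tyW{p}{\Gamma_0}$, and the outer frame of $\sem{C}$ is assembled from the action of $\W{p}$ on morphisms together with the transpose isomorphism; the needed equation follows from \Refprop{pseudo-monoidal-functor} and the fact that each $\W{p}$ is a right adjoint, hence preserves the products used to interpret typing contexts. When $C = \synREC{x}{\tau}{C'}$, the equation relies on the defining property of $\semFIX{\sem{\tau}}$ together with the induction hypothesis applied to the body. Coercion and structure-map frames are easy, reducing to associativity of composition.

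The main obstacle I anticipate is that $\sem{\jugT{\Gamma'}{C[e]}{\tau'}}$ depends \emph{a priori} on a choice of typing derivation for $C[e]$, since \rn{Struct} is not syntax-directed; plugging a derivation of $e$ into a derivation of $C$ need not give the canonical derivation of $C[e]$. I sidestep this by appealing to \Refprop{explicit-coherence}: any two derivations of the same typing judgment yield the same morphism in $\TT$, so it suffices to exhibit \emph{some} derivation of $C[e]$ whose interpretation is $\sem{C} \circ \semCURRY{\sem{e}}$, namely the one obtained by grafting. With coherence in hand, the induction goes through uniformly across all frames.
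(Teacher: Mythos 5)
Your proof is correct, and there is nothing in the paper to diverge from: the paper states this property without proof, merely remarking that the semantics is ``compositional as expected.'' Your structural induction on $C$, discharging each frame by the functoriality or naturality of the corresponding categorical construction (pairing, transpose, $\W{p}$ with its product-preservation, $\semFIX{\sem{\tau}}$, pre-/post-composition for \rn{Struct} and coercion frames), is exactly the argument the paper leaves implicit; your further observation that one must invoke \Refprop{explicit-coherence} so that $\sem{\jugT{\Gamma'}{C[e]}{\tau'}}$ is independent of the choice of derivation --- and so that the derivation obtained by grafting may be identified with the canonical one --- addresses a genuine subtlety that the statement itself presupposes.
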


\paragraph{Operational Equivalence}

Programs can be discriminated by observing ground values at the first step,
assuming the set of ground values contains at least booleans.
Given two explicit terms~$\jugT{\Gamma}{e_1}{\tau}$
and~$\jugT{\Gamma}{e_2}{\tau}$, we define operational equivalence as follows.
\begin{align*}
  &
  \CTXEQ{\Gamma}{e_1}{e_2}{\tau}
  \\
  \defeq &
  \forall C : (\Gamma \vdash \tau) \to (\ctxempty \vdash \nu),
  \jugEV{C[e_1]}{\emptyset}{1}{s}
  \Leftrightarrow
  \jugEV{C[e_2]}{\emptyset}{1}{s}
\end{align*}




\paragraph{Adequacy}

In order to relate our operational and denotational semantics, we embed the
operational semantics inside~$\TT$.
For every type~$\tau$, we define a presheaf~$\opsem{\tau}$ by
\begin{align*}
  \opsem{\tau}(n) & \defeq \{ v \in \mi{Val} \mid \jugV{v}{\tau}{n} \}
  \\
  \opsem{\tau}(n \le m) & \defeq \vTRUNC{m}{-}
\end{align*}
where~$\vTRUNC{m}{v} \defeq \{ v' \mid \jugTR{v}{m}{v'} \}$.
The results from~\Refsec{operational-semantics} imply that~$\opsem{\tau}$ is
well-defined:~determinism, type safety, and totality imply that~$\vTRUNC{m}{-}$
is a function between the proper sets, and~\Refprop{functoriality-of-truncation}
indeed corresponds to functoriality.
We apply the same idea to typing contexts, building a presheaf~$\opsem{\Gamma}$
of environments.
For every well-typed explicit term~$\jugT{\Gamma}{e}{\tau}$, we build a natural
transformation
\[
\begin{array}{l@{~}c@{~}l}
  \opsem{e} & : & \opsem{\Gamma} \to \opsem{\tau}
  \\
  \opsem{e}_n & \defeq & \gamma \mapsto \{ v \mid \jugEV{e}{\gamma}{n}{v} \}.
\end{array}
\]
Again, the results from~\Refsec{operational-semantics} imply that this is
well-defined.
In particular,~\Refprop{monotonicity} corresponds to naturality.

We then introduce a logical relation~$\IMPL{\tau}$ which, intuitively, defines
what it means for a point of~$\opsem{\tau}$ to implement a point
of~$\sem{\tau}$.
Its definition is straightforward, given the close correspondance between values
and denotational elements, and thus we elide it.
We lift it to contexts/value pairs, obtaining a relation~$\IMPLM{\Gamma}{\tau}$
between~$\opsem{\Gamma} \to \opsem{\tau}$ and~$\sem{\Gamma} \to \sem{\tau}$, and
prove the fundamental property.
\begin{lemma}
  \label{lemm:fundamental-lemma-2}
  If~$\jugT{\Gamma}{e}{\tau}$ then~$\opsem{e} \IMPLM{\Gamma}{\tau} \sem{e}$.
\end{lemma}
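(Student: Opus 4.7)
The plan is to prove Lemma \ref{lemm:fundamental-lemma-2} by induction on the canonical typing derivation of $\jugT{\Gamma}{e}{\tau}$ guaranteed by \Refprop{canonical-derivations}. Working with the canonical derivation removes the need to consider \rn{Struct} independently from \rn{Var}, \rn{Warp} and \rn{SubL}, and lets me read the shape of the derivation directly off the term. Because the relation $\IMPL{\tau}$ is defined at each step $n \in \NI$ and must mediate between a presheaf of values and a presheaf in $\TT$, the induction naturally refines into a step-indexed statement: for every $n$ and every pair $(\gamma,\rho) \in \IMPLM{\Gamma}{}$ at step $n$, the $n$-indexed components of $\opsem{e}$ and $\sem{e}$ produce related outputs in $\IMPL{\tau}$ at step $n$.

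Several companion lemmas, also proved by induction on the relevant judgment, must be established first, each in the style of a fundamental property. The key auxiliary statements are: (i) truncation is compatible with $\IMPL{\tau}$, namely if $v \IMPL{\tau} x$ at $n$ and $\jugTR{v}{m}{v'}$ with $m \le n$, then $v' \IMPL{\tau} x|_m$ at $m$; this leans on \Refprop{functoriality-of-truncation} together with determinism and \Reflemm{type-safety-truncation}; (ii) for any $\jugC{\alpha}{\tau}{\tau'}$, if $v \IMPL{\tau} x$ at $n$ and $\jugEVC{\alpha}{v}{n}{v'}$ then $v' \IMPL{\tau'} \sem{\alpha}_n(x)$; this is checked coercion by coercion, using the concrete structure of $\W{(-)}$ exposed by \Refprop{pseudo-monoidal-functor} and the explicit formula \Refeqshort{concrete-warping}; (iii) an iteration lemma stating that if the body $e$ is in the logical relation and we iterate it using \rn{IStep}/\rn{IFinish} from $\vSTOP$ up to $n$, the resulting value is related to the corresponding finite approximation of $\semFIX{\sem{\tau}}$ at $n$. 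All three pointwise extend to contexts and environments.

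With these in hand, the inductive cases split into three families. The standard $\lambda$-calculus cases (\rn{Var}, \rn{Fun}, \rn{App}, \rn{Pair}, \rn{Proj}, \rn{Inj}, \rn{Case}, \rn{Const}) go through directly by unfolding the definition of $\IMPL{\tau}$ and invoking the induction hypothesis; the stream constructors/destructors are equally routine once the $\tyS{\tau}$ clause of $\IMPL{}$ is paired with the $\tyW{\wPRED}{-}$ clause. Coercion forms \rn{SubR} and \rn{SubL} reduce to lemma (ii). The edge cases $n = 0$ and $n = \omega$, handled by \rn{EZero} and \rn{EOmega} in the operational semantics, must be dispatched by appealing to the clauses $\reaV{\tau}{0}$ and $\reaV{\tau}{\omega}$ of the logical relation, using that $\sem{\tau}$ has a trivial zeroth component and the limit structure at $\omega$.

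The two genuinely delicate cases are \rn{Rec} and \rn{Warp}. For \rn{Rec}, I would invoke the iteration lemma (iii) above: unfolding \rn{ERec} shows that $\opsem{\synREC{x}{\tau}{e}}_n(\gamma)$ is the $n$th iterate of the operational body starting from $\vSTOP$, which on the denotational side matches the way $\semFIX{\sem{\tau}}$ is computed in $\TT$ as a colimit of approximants indexed by the ordinal levels; the hard work is showing the two approximants agree at every $n$, using that truncating the environment to $m+1$ in \rn{IStep} mirrors restriction in $\TT$. For \rn{Warp}, evaluating $\synBY{e}{p}$ at $n+1$ is evaluation of $e$ at $p(n+1)$, which must correspond to the warping functor $\W{p}$ applied to $\sem{e}$; the proof obligation reduces to showing that $\opsem{\tyW{p}{\tau}}(n+1)$ embeds into $(\tyW{p}{\sem{\tau}})(n+1)$ compatibly, which is where \Refeqshort{concrete-warping} and the purge operation on environments do all the work. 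I expect this last case, together with the way it interacts with the \rn{EOmega}/thunk mechanism when $p(n+1) = \omega$, to be the real obstacle, since one must simultaneously reconcile the step-indexing, the thunking of infinite computations, and the categorical definition of $\W{p}$ via pulling along distributors.
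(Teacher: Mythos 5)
The paper states this lemma without proof---it even elides the definition of the logical relation $\IMPL{\tau}$ itself---so there is no detailed argument to compare against; your sketch (induction on the canonical typing derivation, with auxiliary fundamental-property lemmas for truncation, coercion application, and iteration, lifted pointwise to environments) is exactly the elaboration the paper signals, mirroring the structure it describes for the type-safety and totality proofs. Your identification of \rn{Rec} and \rn{Warp} as the delicate cases, and of the interaction between purging, thunking, and the concrete formula for $\W{p}$ as the crux, is consistent with everything the paper says, so I would count this as the same approach, competently reconstructed.
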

Now, given two well-typed terms~$\jugT{\Gamma}{e_1,e_2}{\tau}$, let us
write~$\EQIMPL{\Gamma}{e_1}{e_2}{\tau}$ when~$e_1$ and~$e_2$ implement the same
denotation, that is,~if there exists~$f : \sem{\Gamma} \to \sem{\tau}$ such that
both~$\opsem{e_1} \IMPLM{\Gamma}{\tau} f$
and~$\opsem{e_2} \IMPLM{\Gamma}{\tau} f$.
Such terms are indistinguishable in~\Core{}.
This relation is a congruence.
\begin{lemma}
  \label{lemm:congruence}
  If~$\EQIMPL{\Gamma}{e}{e'}{\tau}$
  and~$C : (\Gamma \vdash \tau) \to (\Gamma' \vdash \tau')$,
  then
  \[
    \EQIMPL{\Gamma'}{C[e]}{C[e']}{\tau'}
    .
  \]
\end{lemma}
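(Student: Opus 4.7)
The plan is to reduce the congruence of $\EQIMPL$ to a compositional property of the logical relation $\IMPLM{\Gamma}{\tau}$. By unfolding the definition, the hypothesis $\EQIMPL{\Gamma}{e}{e'}{\tau}$ provides a morphism $f : \sem{\Gamma} \to \sem{\tau}$ with both $\opsem{e} \IMPLM{\Gamma}{\tau} f$ and $\opsem{e'} \IMPLM{\Gamma}{\tau} f$. The natural candidate witness for the conclusion $\EQIMPL{\Gamma'}{C[e]}{C[e']}{\tau'}$ is $g = \sem{C} \circ \semCURRY{f}$, obtained from $f$ by transporting it along the denotational interpretation of $C$ provided by \Refprop{compositionality}.

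The key intermediate step is a \emph{parametric fundamental lemma}: for any context $C : (\Gamma \vdash \tau) \to (\Gamma' \vdash \tau')$, any explicit term $\jugT{\Gamma}{e_0}{\tau}$, and any $f : \sem{\Gamma} \to \sem{\tau}$, if $\opsem{e_0} \IMPLM{\Gamma}{\tau} f$ then $\opsem{C[e_0]} \IMPLM{\Gamma'}{\tau'} \sem{C} \circ \semCURRY{f}$. This generalizes \Reflemm{fundamental-lemma-2}, which is the special case $C = [-]$ and $f = \sem{e_0}$. Granting it, one applies it once to $e$ and once to $e'$ with the shared $f$, obtaining $\opsem{C[e]} \IMPLM{\Gamma'}{\tau'} g$ and $\opsem{C[e']} \IMPLM{\Gamma'}{\tau'} g$, which directly witness the conclusion.

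The parametric lemma is proved by structural induction on $C$. The hole case uses the hypothesis on $e_0$ verbatim. Each non-hole constructor of $C$ (abstraction, application, pairing, projection, injection, case analysis, \synHEAD{-}, \synTAIL{-}, $\synCONS{-}{-}$, recursion, $\synBY{-}{p}$, covariant and contravariant coercion application) reuses the combinator-by-combinator reasoning in the proof of \Reflemm{fundamental-lemma-2}: the induction hypothesis supplies the logical-relation fact about the subterm containing the hole, the fundamental lemma supplies it for every hole-free subterm, and compositionality of $\sem{-}$ ensures that the denotational witness assembled from these pieces matches $\sem{C} \circ \semCURRY{f}$ on the nose.

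The main obstacle will be the binder and step-shifting cases, in particular $\synFUN{x}{\tau}{-}$, $\synBY{-}{p}$, and the recursion-and-iteration fragment, since these shift either the ambient typing context or the time index at which the logical relation is tested. One must verify that the parametric witness $f$, together with its image under $\sem{C}$, interacts correctly with the truncation clauses defining $\IMPL{\tau}$ at each step, and—for the $\synBY{-}{p}$ case specifically—with the time-warp action governing \Refrule{Warp} and the presheaf $\opsem{\tau}$. Once this book-keeping is set up so that it mirrors the corresponding step in \Reflemm{fundamental-lemma-2}, the remaining cases should reduce to straightforward applications of the induction hypothesis and \Refprop{compositionality}.
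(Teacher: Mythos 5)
The paper states \Reflemm{congruence} without proof, so there is no official argument to compare against; your proposal supplies the standard one and it is sound. Your reduction is exactly right: unfold $\EQIMPL{\Gamma}{e}{e'}{\tau}$ to get a shared witness $f$, take $g = \sem{C} \circ \semCURRY{f}$ as the witness for the conclusion, and establish a parametric strengthening of \Reflemm{fundamental-lemma-2} by induction on $C$ whose compatibility cases are the same as those of the fundamental lemma, with the hole's witness kept abstract. This is the usual way one shows that a logical relation is closed under contexts, and the points you flag as delicate (binders, $\synBY{-}{p}$ and its time-warp reindexing, the recursion/iteration fragment) are indeed the only places where the book-keeping goes beyond a mechanical replay of the fundamental lemma; since the paper elides both the definition of $\IMPL{\tau}$ and the proof of \Reflemm{fundamental-lemma-2}, those details cannot be checked against the source, but nothing in your outline would fail.
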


We obtain adequacy as a corollary of~\Reflemm{fundamental-lemma-2}
and~\Reflemm{congruence}.
If~$\sem{\jugC{\Gamma}{e}{\tau}} = \sem{\jugC{\Gamma}{e}{\tau}}$,
then~$\EQIMPL{\Gamma}{e_1}{e_2}{\tau}$ by~\Reflemm{fundamental-lemma-2}.
We immediately conclude~$\CTXEQ{\Gamma}{e_1}{e_2}{\tau}$ by~\Reflemm{congruence}
and the definition of~$\IMPL{\nu}$.

\subsection{Algorithmic Type Checking}

\subsubsection{Notations}

As in the main body of the paper, we write~$\tau_1 \subty \tau_2$ to indicate
the mere existence of some coercion~$\jugC{\alpha}{\tau_1}{\tau_2}$, and
similarly for~$\tau_1 \equiv \tau_2$.
We also write~$\jugEC{\alpha_1}{\alpha_2}{\tau}{\tau'}$ for
\[
  \sem{\jugC{\alpha_1}{\tau}{\tau'}} = \sem{\jugC{\alpha_2}{\tau}{\tau'}}
\]
and similarly~$\jugE{\Gamma}{e_1}{e_2}{\tau}$ for
\[
  \sem{\jugT{\Gamma}{e_1}{\tau}}
  =
  \sem{\jugT{\Gamma}{e_2}{\tau}}
  .
\]

\subsubsection{Properties of Type Normalization}

\begin{prop}[Idempotence of Normalization]
  We have
  \begin{align}
    \NORM{\NORM{\tau}} & = \NORM{\tau}.
  \end{align}
\end{prop}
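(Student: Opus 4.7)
The proof proceeds by strong induction on the size of $\tau$ (number of constructors), with case analysis on the outermost type constructor, following the clauses of $\NORM{-}$ in \Reffig{alg-types}-A. The base case $\tau = \nu$ is immediate: $\NORM{\nu} = \tyW{\wOM}{\nu}$ and then $\NORM{\tyW{\wOM}{\nu}} = \tyW{\wOM \opON \wOM}{\nu} = \tyW{\wOM}{\nu}$, since $\wOM \circ \wOM = \wOM$. The cases for streams, arrows, sums, and products follow by applying the inductive hypothesis to subterms, together with the trivial identity $\wID \opON \wID = \wID$.

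For $\tyW{p}{\tau_0}$ with $\tau_0$ a product, the distribution clause reduces normalization to strictly smaller subterms, to which the inductive hypothesis applies. For $\tyW{p}{\tau_0}$ with $\tau_0$ not a product, the clause gives $\NORM{\tyW{p}{\tau_0}} = \tyW{p \opON q}{\tau'}$ where $\tyW{q}{\tau'} = \NORM{\tau_0}$; a second normalization yields $\tyW{(p \opON q) \opON q'}{\tau''}$ where $\tyW{q'}{\tau''} = \NORM{\tau'}$. The goal reduces to showing $\tau'' = \tau'$ and $q' \circ q = q$, so that the composed warps collapse and idempotence follows.

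To achieve this I would prove, simultaneously with idempotence, the following shape lemma: for every $\tau$ such that $\NORM{\tau}$ is not a product, $\NORM{\tau}$ has the form $\tyW{q}{\tau'}$ with $\tau'$ rigid (a ground type, stream, arrow, or sum); moreover, if $\tau'$ is a ground type, then $q$ factors as $\wOM \circ r$ for some warp $r$. This shape, combined with the identities $\wOM \circ \wOM = \wOM$ and $\wID$ being a unit for composition, guarantees that when $\NORM{\tau'}$ has outer warp $q'$ (equal to $\wOM$ when $\tau'$ is ground and $\wID$ otherwise), the equation $q' \circ q = q$ always holds, while $\tau''$ equals $\tau'$ by the inductive hypothesis applied to the strictly smaller $\tau'$.

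The main obstacle is the subtle interplay between the clause for $\tyW{p}{(\tyPROD{\_}{\_})}$ and the clause for $\tyW{p}{\tau_0}$ with $\tau_0$ not a product, together with ensuring that the pattern match $\tyW{q}{\tau'} = \NORM{\tau_0}$ in the latter always succeeds. In particular, if $\tau_0$ itself has the form $\tyW{p'}{(\tyPROD{\_}{\_})}$ then it is not literally a product yet normalizes to one, which on the face of it makes the pattern match fail; handling this rigorously requires the shape lemma to route such $\tau_0$ through the product-distribution clause after inner normalization, and then checking that the induced definition of $\NORM{-}$ remains total and deterministic on every subterm encountered during the nested induction.
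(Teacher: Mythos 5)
The paper's own proof is the single line ``By induction on the size of~$\tau$,'' so your proposal is an elaboration of the same strategy. The case analysis and the algebraic identities you invoke ($\wOM \circ \wOM = \wOM$, $\wID$ a unit for composition) are correct, and your observation that the pattern match $\tyW{q}{\tau'} = \NORM{\tau_0}$ fails when $\tau_0 = \tyW{p'}{(\tyPROD{\_}{\_})}$ is a genuine defect of the definition as literally written in \Reffig{alg-types}-A, which the paper glosses over; the intended reading must route such bodies through the product-distribution clause after inner normalization, exactly as you say.

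There is, however, a real gap in your induction: you apply the inductive hypothesis to ``the strictly smaller $\tau'$,'' where $\tyW{q}{\tau'} = \NORM{\tau_0}$ and $\tau = \tyW{p}{\tau_0}$, but $\tau'$ need not be smaller than $\tau$. Normalization can increase size: $\NORM{\tyS{\tyS{\nu}}} = \tyW{\wID}{\tyS{\tyW{\wID}{\tyS{\tyW{\wOM}{\nu}}}}}$, so for $\tau = \tyW{p}{\tyS{\tyS{\nu}}}$ the extracted body $\tau' = \tyS{\tyW{\wID}{\tyS{\tyW{\wOM}{\nu}}}}$ is strictly \emph{larger} than $\tau$, and the measure does not decrease on this sub-call. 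The repair is straightforward and also makes your shape lemma unnecessary: apply the inductive hypothesis to the immediate subterm $\tau_0$ instead. Since $\tau'$ is rigid, computing $\NORM{\NORM{\tau_0}} = \NORM{\tyW{q}{\tau'}} = \tyW{q \opON q'}{\tau''}$ with $\tyW{q'}{\tau''} = \NORM{\tau'}$ and equating it with $\NORM{\tau_0} = \tyW{q}{\tau'}$ forces $\tau'' = \tau'$ and $q \opON q' = q$; substituting into $\NORM{\NORM{\tau}} = \tyW{(p \opON q) \opON q'}{\tau''}$ and using associativity of composition gives $\tyW{p \opON q}{\tau'} = \NORM{\tau}$ directly, with no appeal to the $\wOM \circ r$ factorization. (One still needs the auxiliary fact that bodies under warps in the image of $\NORM{-}$ are rigid, i.e.\ that $\NORM{\tau}$ conforms to the normal-type grammar, to justify using the last clause on $\tyW{q}{\tau'}$.)
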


\begin{proof}
  By induction on the size of~$\tau$.
\end{proof}

\begin{prop}[Confluence of Normalization]
  \label{prop:confluence}
  We have
  \begin{align}
    \NORM{\tyARR{\tau_1}{\tau_2}}
    & =
    \NORM{\tyARR{\NORM{\tau_1}}{\tau_2}}
    =
    \NORM{\tyARR{\tau_1}{\NORM{\tau_2}}},
    \\
    \NORM{\tyPROD{\tau_1}{\tau_2}}
    & =
    \NORM{\tyPROD{\NORM{\tau_1}}{\tau_2}}
    =
    \NORM{\tyPROD{\tau_1}{\NORM{\tau_2}}},
    \\
    \NORM{\tySUM{\tau_1}{\tau_2}}
    & =
    \NORM{\tySUM{\NORM{\tau_1}}{\tau_2}}
    =
    \NORM{\tySUM{\tau_1}{\NORM{\tau_2}}},
    \\
    \NORM{\tyW{p}{\tau}}
    & =
    \NORM{\tyW{p}{\NORM{\tau}}}.
  \end{align}
\end{prop}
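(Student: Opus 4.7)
The plan is to prove all four equations by structural induction on the types involved, leveraging the idempotence of normalization (the immediately preceding proposition) as the main lemma. The three equations for binary type formers are essentially corollaries of idempotence, while the last equation requires a more careful case analysis.

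For the equations concerning $\tyARR$, $\tyPROD$, and $\tySUM$, the proof is entirely routine: unfolding the definition of $\NORM{\cdot}$ on both sides of, say, $\NORM{\tyARR{\tau_1}{\tau_2}} = \NORM{\tyARR{\NORM{\tau_1}}{\tau_2}}$ yields $\tyW{\wID}{(\tyARR{\NORM{\tau_1}}{\NORM{\tau_2}})}$ on the left and $\tyW{\wID}{(\tyARR{\NORM{\NORM{\tau_1}}}{\NORM{\tau_2}})}$ on the right, and idempotence closes the gap. The $\tyPROD$ and $\tySUM$ cases are entirely analogous, and symmetry deals with the second substitution position. No induction is required here.

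The interesting equation is $\NORM{\tyW{p}{\tau}} = \NORM{\tyW{p}{\NORM{\tau}}}$, which I would prove by strong induction on the size of $\tau$, splitting cases on the shape of $\tau$. If $\tau = \tyPROD{\tau_1}{\tau_2}$, both sides unfold through the product clause of the definition, giving respectively $\tyPROD{\NORM{\tyW{p}{\tau_1}}}{\NORM{\tyW{p}{\tau_2}}}$ and $\tyPROD{\NORM{\tyW{p}{\NORM{\tau_1}}}}{\NORM{\tyW{p}{\NORM{\tau_2}}}}$, and these coincide componentwise by the induction hypothesis applied to the strictly smaller $\tau_i$. Otherwise, $\tau$ is not syntactically a product and the last defining clause applies on the left, giving $\tyW{p \opON q}{\tau'}$ where $\tyW{q}{\tau'} = \NORM{\tau}$; on the right we use idempotence to rewrite $\NORM{\tau} = \tyW{q}{\tau'}$ and then apply the same clause (provided $\tyW{q}{\tau'}$ is not a product), obtaining $\tyW{p \opON q'}{\tau''}$ where $\tyW{q'}{\tau''} = \NORM{\tyW{q}{\tau'}} = \NORM{\NORM{\tau}} = \NORM{\tau} = \tyW{q}{\tau'}$, so $q' = q$ and $\tau'' = \tau'$.

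The main obstacle is the subtle edge case $\tau = \tyW{p'}{\tyPROD{\tau_1}{\tau_2}}$: here $\tau$ is not syntactically a product, so the last clause of $\NORM{\cdot}$ applies to $\tyW{p}{\tau}$, but $\NORM{\tau} = \tyPROD{\NORM{\tyW{p'}{\tau_1}}}{\NORM{\tyW{p'}{\tau_2}}}$ \emph{is} a product, so the ostensible side condition $\tyW{q}{\tau'} = \NORM{\tau}$ in that clause cannot literally be satisfied. To close this gap cleanly, I would first establish a shape lemma describing the possible outputs of $\NORM{\cdot}$ with respect to the grammar of normal types $(\tau^n, \tau^r)$ given in Section 4, and correspondingly read the last clause as applying only when $\NORM{\tau}$ is a $\tau^r$-like normal type. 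Under this reading, the case $\tau = \tyW{p'}{(\tyPROD{\_}{\_})}$ must actually reuse the product clause (after one step of distribution), and the induction closes once the definition is understood as traversing syntactic products before falling through to the non-product clause. This shape invariant is the one genuinely non-trivial ingredient.
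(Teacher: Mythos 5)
The paper states this proposition without proof (the adjacent idempotence result is proved ``by induction on the size of~$\tau$''), and your argument --- unfolding the defining clauses and closing each gap with idempotence, plus a size induction for the $\tyW{p}{\tau}$ equation --- is exactly the intended one and is correct wherever the definition of $\NORM{-}$ is unambiguous. What you add is the diagnosis of the edge case $\tau = \tyW{p'}{(\tyPROD{\tau_1}{\tau_2})}$ (and deeper nestings), where $\tau$ is not syntactically a product but $\NORM{\tau}$ is, so the binding $\tyW{q}{\tau'} = \NORM{\tau}$ in the last clause of \Reffig{alg-types}-A cannot be satisfied as literally written; the same issue affects $\NORMin{\tyW{p}{\tau}}$, which needs that binding to type $\coeCONCAT{p}{q}$. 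This is a genuine imprecision in the definition rather than a defect of your proof, and your repair --- key the product-distribution clause on the shape of $\NORM{\tau}$ rather than on the syntactic shape of $\tau$, so that warps are pushed through normalized products before the concatenation clause fires --- is the reading under which $\NORM{-}$ is total, its image lies in the normal-type grammar, and your induction (including the product subcase, where the inductive hypothesis is correctly invoked on the strictly smaller $\tau_i$ rather than on $\NORM{\tau_i}$) goes through.
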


\begin{prop}
  \label{prop:norm-id}
  For any~$\tau$, we
  have~$\jugEC{\NORMin{\tau};\NORMout{\tau}}{\coeID}{\tau}{\tau}$
  and~$\jugEC{\NORMout{\tau};\NORMin{\tau}}{\coeID}{\NORM{\tau}}{\NORM{\tau}}$.
\end{prop}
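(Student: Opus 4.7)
The plan is to prove both equations simultaneously by structural induction on $\tau$, exploiting the mirror-image construction of $\NORMin{\tau}$ and $\NORMout{\tau}$ in \Reffig{alg-normalize}. For each shape of $\tau$, unfolding both definitions makes $\NORMin{\tau};\NORMout{\tau}$ a sequence of atomic coercions bracketing recursive calls, arranged so that every atomic coercion is adjacent to its semantic inverse. The three ingredients needed to collapse the composition to $\coeID$ are (i) the invertibility of paired atomic coercions, which is exactly what the $\jugCE{-}{-}{-}{-}$ shorthand in \Reffig{subtyping} records, (ii) the functoriality of each congruence coercion former with respect to sequential composition, inherited from the interpretation of \Refsec{denotational-semantics}, and (iii) the induction hypothesis applied to proper subterms.

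As a representative case, take $\tau = \tyW{p}{\tau'}$ with $\tau' \ne (\tyPROD{\_}{\_})$ and $\NORM{\tau'} = \tyW{q}{\tau''}$. Then
$\NORMin{\tau};\NORMout{\tau} = \coeW{p}{\NORMin{\tau'}};\coeCONCAT{p}{q};\coeDECAT{p}{q};\coeW{p}{\NORMout{\tau'}}$.
The $\coeCONCAT{p}{q};\coeDECAT{p}{q}$ pair collapses to $\coeID$ by invertibility; functoriality of $\coeW{p}{-}$ then fuses the outer calls into $\coeW{p}{(\NORMin{\tau'};\NORMout{\tau'})}$, and the induction hypothesis rewrites this to $\coeW{p}{\coeID} \equiv \coeID$. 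The stream, sum, and distributive product cases are analogous, using respectively $\coeWRAP/\coeUNWRAP$, the stream and sum congruence formers, and $\coeDISTP/\coeFACTP$. The arrow case is slightly subtler because of variance: $\NORMin{\tyARR{\tau_1}{\tau_2}}$ uses $\NORMout{\tau_1}$ in contravariant position, so we must invoke the induction hypothesis for $\tau_1$ in the opposite direction ($\NORMout{\tau_1};\NORMin{\tau_1} \equiv \coeID$ at $\NORM{\tau_1}$); carrying both equations in parallel throughout the induction accommodates this automatically.

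The main obstacle is the ground-type case, where $\NORMin{\nu};\NORMout{\nu} = \coeINFL;\coeDEL{\wOM}{\wID};\coeUNWRAP$. Unlike the other atomic coercions, $\coeINFL$ has no syntactic inverse listed in \Reffig{subtyping}, so its invertibility must be established semantically. Its interpretation in the topos of trees is the canonical isomorphism between $\Delta(\SCAL_\nu)$ and $\tyW{\wOM}{\Delta(\SCAL_\nu)}$ arising from the constancy of $\Delta(\SCAL_\nu)$, as noted in \Refsec{denotational-semantics}; a direct computation of $\sem{\coeDEL{\wOM}{\wID};\coeUNWRAP}$ shows that it realizes the inverse of this isomorphism, matching the informal remark made after \Reffig{subtyping} about $\coeINFL^{-1}$. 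The symmetric equation $\NORMout{\tau};\NORMin{\tau} \equiv \coeID$ at $\NORM{\tau}$ follows from the same induction using the reverse compositions $\coeUNWRAP;\coeWRAP$, $\coeDECAT{p}{q};\coeCONCAT{p}{q}$, $\coeFACTP;\coeDISTP$, and $\coeDEL{\wOM}{\wID};\coeUNWRAP;\coeINFL$, each of which reduces to $\coeID$ by the same invertibility reasoning.
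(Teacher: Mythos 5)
Your proof is correct and follows essentially the same route as the paper's, whose entire argument is ``by induction on the size of~$\tau$'': you correctly identify the need to carry both equations simultaneously for the contravariant arrow case, and the need to establish the invertibility of~$\coeINFL$ semantically via its derived inverse~$\coeDEL{\wOM}{\wID};\coeUNWRAP$. One small adjustment: the induction should be on the \emph{size} of~$\tau$ rather than purely structural as you state, because in the case~$\tau = \tyW{p}{(\tyPROD{\tau_1}{\tau_2})}$ the recursive calls of~$\NORMin{-}$ and~$\NORMout{-}$ are on~$\tyW{p}{\tau_1}$ and~$\tyW{p}{\tau_2}$, which are strictly smaller but are not proper subterms of~$\tau$.
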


\begin{proof}
  By induction on the size of~$\tau$.
\end{proof}

\begin{lemma}
  \label{lemm:norm}
  For any~$\jugC{\alpha_1}{\tau_1}{\tau_2}$, if there
exists~$\jugC{\alpha_2}{\tau_2}{\tau_1}$, then~$\NORM{\tau_1} = \NORM{\tau_2}$.
Moreover~$\jugEC{\alpha_1}{\NORMin{\tau_1};\NORMout{\tau_2}}{\tau_1}{\tau_2}$
and~$\jugEC{\alpha_2}{\NORMin{\tau_2};\NORMout{\tau_1}}{\tau_2}{\tau_1}$.
\end{lemma}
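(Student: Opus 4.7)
The plan is to prove both conclusions simultaneously by induction on the derivation of~$\alpha_1$, tracking the semantic equation alongside the syntactic normal-form equality. The key ingredient carried throughout is~\Refprop{norm-id}, which furnishes the mutual semantic inverses~$\NORMin{\tau}$ and~$\NORMout{\tau}$ on both sides of normalization.

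The routine cases dispatch cleanly. The identity case is immediate. For composition~$\alpha_1 = \alpha';\alpha''$ with intermediate type~$\tau''$, the coercions~$\alpha'';\alpha_2$ and~$\alpha_2;\alpha'$ serve as reverses for the two subcoercions, so the induction hypothesis yields~$\NORM{\tau_1} = \NORM{\tau''} = \NORM{\tau_2}$ and chains the two semantic factorizations through the identity~$\jugEC{\NORMout{\tau''};\NORMin{\tau''}}{\coeID}{\NORM{\tau''}}{\NORM{\tau''}}$ from~\Refprop{norm-id}. For each invertible atomic coercion---$\coeWRAP$,~$\coeUNWRAP$,~$\coeCONCAT{p}{q}$,~$\coeDECAT{p}{q}$,~$\coeDISTP$,~$\coeFACTP$, and~$\coeINFL$---the equality of normal forms follows by direct unfolding of the defining equations of~$\NORM{-}$ in~\Reffig{alg-types}-A, and the semantic equation reduces to~\Refprop{norm-id} combined with the monoidal coherence of~$\W{(-)}$ recorded in~\Refprop{pseudo-monoidal-functor}.

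The delicate cases are the congruence rules~($\coeS$,~$\coeARR$,~$\coePROD$,~$\coeSUM$, and~$\coeW{p}$) and the delay~$\coeDEL{p}{q}$, where one must extract a structured reverse coercion on inner components from the otherwise arbitrary~$\alpha_2$. My plan is to first establish an auxiliary canonical-form lemma: every coercion~$\jugC{\alpha}{\tau}{\tau'}$ is semantically equal to~$\NORMin{\tau};\beta;\NORMout{\tau'}$ for some coercion~$\jugC{\beta}{\NORM{\tau}}{\NORM{\tau'}}$ built only from congruence rules and~$\coeDEL{p}{q}$, in the spirit of the~$\Prec{-}{-}$ function of~\Reffig{alg-types}-B. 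With this in hand, the hypothesis of mutual coercibility yields mutual ``precedence'' coercions between~$\NORM{\tau_1}$ and~$\NORM{\tau_2}$; their only non-invertible ingredient, the delay, then forces the outer warps of~$\NORM{\tau_1}$ and~$\NORM{\tau_2}$ to satisfy both~$p \ge q$ and~$q \ge p$, hence~$p = q$ by antisymmetry of the time-warp order, and a structural induction on the normal-type grammar propagates the equality inward to every subterm. This canonical-form lemma---essentially a syntactic coherence result for the subtyping judgment---is the principal obstacle; once it is proved, the semantic equations in the conclusion follow by routine bookkeeping of coercion compositions together with one last application of~\Refprop{norm-id}.
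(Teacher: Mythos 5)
Your plan is sound and, importantly, it puts its finger on the real difficulty that the paper's one-line proof (``by induction on~$\alpha_1$'') glosses over: in the congruence cases such as~$\alpha_1 = \coeS{\alpha}$, the hypothesis only supplies a reverse coercion~$\alpha_2$ at the top level, and since~$\alpha_2$ may be an arbitrary composition through non-normal intermediate types, no reverse for the subcomponent~$\alpha$ is available by simple inversion. Your fix---first proving that every~$\jugC{\alpha}{\tau}{\tau'}$ is semantically~$\NORMin{\tau};\beta;\NORMout{\tau'}$ with~$\beta$ a precedence-style coercion between normal forms, then using antisymmetry of the pointwise order on time warps to force~$\NORM{\tau_1} = \NORM{\tau_2}$ and to collapse every delay in~$\beta$ to a degenerate~$\coeDEL{p}{p}$---works, and the semantic conclusion then falls out since~$\W{p \ge p}$ is an identity. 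The main thing to be explicit about is that your auxiliary canonical-form lemma is essentially \Refthm{completeness-of-algorithmic-subtyping}, which the paper proves \emph{after} and \emph{from} \Reflemm{norm} (via \Reflemm{completeness-of-algorithmic-subtyping-for-invertible-coercions}); you are reversing that dependency, which is legitimate only because the canonical-form lemma admits a self-contained induction on coercion derivations that never appeals to \Reflemm{norm}---state and check this non-circularity explicitly. Two smaller points to nail down: the ``lockstep'' extraction of warp inequalities from~$\beta_1,\beta_2$ must track contravariant positions (where the inequality flips), which is harmless here because the mutual-coercibility hypothesis is symmetric; and the monoidal-coherence step for~$\coeCONCAT{p}{q}$/$\coeDECAT{p}{q}$ silently uses associativity~$p \opON (q \opON r) = (p \opON q) \opON r$ and the naturality of~$\mu^{p,q}$ from \Refprop{pseudo-monoidal-functor}, which deserves a sentence. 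With those caveats addressed, your argument is a correct and more informative elaboration of the paper's proof.
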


\begin{proof}
  By induction on~$\alpha_1$.
\end{proof}

\begin{coro}
  \label{coro:norm}
  For any~$\jugCE{\alpha_1}{\alpha_2}{\tau_1}{\tau_2}$, we
  have~$\jugEC{\alpha_1;\alpha_2}{\coeID}{\tau_1}{\tau_1}$
  and~$\jugEC{\alpha_2;\alpha_1}{\coeID}{\tau_2}{\tau_2}$.
\end{coro}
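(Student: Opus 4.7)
The plan is to chain together \Reflemm{norm} and \Refprop{norm-id}: the first gives a canonical factorization of any invertible pair through the normal form, and the second says that the normalization and denormalization coercions compose to the identity denotationally.

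First I would invoke \Reflemm{norm} on the hypothesis $\jugCE{\alpha_1}{\alpha_2}{\tau_1}{\tau_2}$. This gives three pieces of information: (i) $\NORM{\tau_1} = \NORM{\tau_2}$, (ii) $\jugEC{\alpha_1}{\NORMin{\tau_1};\NORMout{\tau_2}}{\tau_1}{\tau_2}$, and (iii) $\jugEC{\alpha_2}{\NORMin{\tau_2};\NORMout{\tau_1}}{\tau_2}{\tau_1}$. Denotationally, composition of coercions corresponds to composition of morphisms in $\TT$, so it suffices to reason with the right-hand sides.

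The key manipulation is then to rewrite $\sem{\jugC{\alpha_1;\alpha_2}{\tau_1}{\tau_1}}$ as the denotation of $(\NORMin{\tau_1};\NORMout{\tau_2});(\NORMin{\tau_2};\NORMout{\tau_1})$. Using (i), the middle factor $\NORMout{\tau_2};\NORMin{\tau_2}$ is a well-typed coercion from $\NORM{\tau_2}$ to itself, and \Refprop{norm-id} gives $\jugEC{\NORMout{\tau_2};\NORMin{\tau_2}}{\coeID}{\NORM{\tau_2}}{\NORM{\tau_2}}$. Collapsing this middle factor reduces the composite to $\NORMin{\tau_1};\NORMout{\tau_1}$, which is again $\coeID$ by \Refprop{norm-id}. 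The second conclusion $\jugEC{\alpha_2;\alpha_1}{\coeID}{\tau_2}{\tau_2}$ follows by the exact same argument, with the roles of $\tau_1$ and $\tau_2$ swapped.

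No step looks like a genuine obstacle; the whole argument is essentially a diagram chase through the factorization supplied by \Reflemm{norm}. The only mild subtlety is that denotational equality of coercions is preserved by sequential composition on both sides, which is immediate from compositionality of $\sem{-}$ on the coercion language (the semantics sends $;$ to categorical composition). Once that is acknowledged, the two desired equalities reduce to two applications of \Refprop{norm-id} glued by the type equality $\NORM{\tau_1} = \NORM{\tau_2}$.
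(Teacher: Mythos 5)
Your argument is correct and is exactly the paper's proof, which states the corollary follows immediately from \Reflemm{norm} and \Refprop{norm-id}; you have simply spelled out the factorization through $\NORM{\tau_1} = \NORM{\tau_2}$ and the two cancellations explicitly. No discrepancies.
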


\begin{proof}
  Immediate consequence of~\Refprop{norm-id} and~\Reflemm{norm}.
\end{proof}

\begin{theorem}
  \label{thm:norm-sound-complete}
  Type normalization is sound and complete for
  equivalence:~$\tau_1 \equiv \tau_2$ iff~$\NORM{\tau_1} = \NORM{\tau_2}$.
\end{theorem}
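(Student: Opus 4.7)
The theorem follows quite directly from results already established, so my plan is essentially to package them together. The proof naturally splits into the two implications.

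For the completeness direction ($\tau_1 \equiv \tau_2 \Rightarrow \NORM{\tau_1} = \NORM{\tau_2}$), I would unfold the definition of $\equiv$: this gives us coercions $\jugC{\alpha_1}{\tau_1}{\tau_2}$ and $\jugC{\alpha_2}{\tau_2}{\tau_1}$. These are precisely the hypotheses of Lemma~\ref{lemm:norm}, whose first conclusion states $\NORM{\tau_1} = \NORM{\tau_2}$. So this direction is immediate.

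For the soundness direction ($\NORM{\tau_1} = \NORM{\tau_2} \Rightarrow \tau_1 \equiv \tau_2$), I would use the normalization coercions. We have $\jugC{\NORMin{\tau_1}}{\tau_1}{\NORM{\tau_1}}$ and, using the hypothesis $\NORM{\tau_1} = \NORM{\tau_2}$, also $\jugC{\NORMout{\tau_2}}{\NORM{\tau_2}}{\tau_2} = \jugC{\NORMout{\tau_2}}{\NORM{\tau_1}}{\tau_2}$. Composing them sequentially gives a coercion $\jugC{\NORMin{\tau_1};\NORMout{\tau_2}}{\tau_1}{\tau_2}$. By symmetry, $\jugC{\NORMin{\tau_2};\NORMout{\tau_1}}{\tau_2}{\tau_1}$ witnesses the other direction, yielding $\tau_1 \equiv \tau_2$.

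There is no real obstacle at the level of this theorem itself: the hard work has already been done in Lemma~\ref{lemm:norm} (whose proof by induction on $\alpha_1$ is where one must verify that every atomic coercion, and in particular the invertible axioms like $\coeWRAP/\coeUNWRAP$, $\coeCONCAT{p}{q}/\coeDECAT{p}{q}$, $\coeINFL$, and $\coeDISTP/\coeFACTP$, preserves the normal form) and in the symmetric definition of $\NORMin{-}$ and $\NORMout{-}$ in Figure~\ref{fig:alg-normalize}. The role of this theorem is simply to expose the decidability of type equivalence, since $\NORM{\tau_1} = \NORM{\tau_2}$ is a syntactic check between types whose only non-trivial components are the pointwise order/equality on time warps, which is assumed decidable for effective warp sets.
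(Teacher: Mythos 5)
Your proof is correct and matches the paper's own argument exactly: the left-to-right direction is an immediate application of Lemma~\ref{lemm:norm}, and the right-to-left direction composes $\NORMin{\tau_1}$ with $\NORMout{\tau_2}$ (and symmetrically), using the hypothesis $\NORM{\tau_1} = \NORM{\tau_2}$ to make the composition well-typed. Nothing to add.
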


\begin{proof}
  The right-to-left direction follows
  from~$\jugC{\NORMin{\tau_1};\NORMout{\tau_2}}{\tau_1}{\tau_2}$
  and~$\jugC{\NORMin{\tau_2};\NORMout{\tau_1}}{\tau_2}{\tau_1}$.
  The left-to-right direction follows immediately from~\Reflemm{norm}.
\end{proof}

\subsubsection{Properties of Type Precedence}

\begin{lemma}[Precedence is Reflexive]
  \label{lemm:reflexivity-of-precedence}
  If~$\tau$ is normal,~$\Prec{\tau}{\tau}$ is defined and
  moreover
  \[
    \jugEC{\Prec{\tau}{\tau}}{\coeID}{\tau}{\tau}.
  \]
\end{lemma}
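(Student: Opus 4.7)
The plan is to proceed by structural induction on the normal type $\tau$, using the mutual grammar of $\tau^n$ and $\tau^r$ organized by total size. There are essentially two things to show at each step: first, that $\Prec{\tau}{\tau}$ is well-defined (i.e., that some clause of the defining equations in \Reffig{alg-types}-B applies); second, that the resulting coercion denotes the identity.

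Concretely, I would handle the cases as follows. For $\tau = \tyPROD{\tau_1}{\tau_2}$, the definition of $\Prec{-}{-}$ reduces to $\coePROD{\Prec{\tau_1}{\tau_1}}{\Prec{\tau_2}{\tau_2}}$; both inner calls are defined by the induction hypothesis and are semantically identities, so $\coePROD{\coeID}{\coeID} \semEQ \coeID$ by functoriality of the product. For $\tau = \tyW{p}{\tau^r}$, the only applicable clause produces $\coeDEL{p}{p};\coeW{p}{\Prec{\tau^r}{\tau^r}}$; this is well-defined because $p \ge p$ holds trivially in $\WARP$, and the induction hypothesis on $\tau^r$ further unfolds into the five sub-cases $\nu$, $\tyS{\sigma}$, $\tyARR{\sigma_1}{\sigma_2}$, $\tySUM{\sigma_1}{\sigma_2}$ (and, implicitly, recursion into strictly smaller normal types $\sigma$, $\sigma_i$). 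For each of these, the corresponding clause of $\Prec{-}{-}$ composes congruence coercions applied to $\Prec{\sigma}{\sigma}$; by induction these denote identities, and functoriality of each type former (stream, arrow, sum) gives identities again. Note that the arrow case remains fine despite its contravariant argument position, since we compare $\tyARR{\sigma_1}{\sigma_2}$ with itself and only need $\Prec{\sigma_1}{\sigma_1}$.

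The remaining semantic ingredient is that $\coeDEL{p}{p}$ denotes the identity, which follows from \Refprop{pseudo-monoidal-functor}: the functor $\W{(-)}$ sends the preorder $\WARP$ to natural transformations between endofunctors, and reflexivity $p \ge p$ is identity in $\WARP$, hence $\W{p \ge p}$ is the identity natural transformation. Composing this with $\coeW{p}{\coeID}$, which is itself the identity by functoriality of $\W{p}$, yields identity. Thus the body $\coeDEL{p}{p};\coeW{p}{\Prec{\tau^r}{\tau^r}}$ denotes the identity on $\sem{\tyW{p}{\tau^r}}$.

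The proof is essentially mechanical once organized this way; the only subtlety worth highlighting is that it relies on two separate ``reflexivity'' facts lining up nicely, namely $p \ge p$ in $\WARP$ (giving $\coeDEL{p}{p} \semEQ \coeID$) and the inductive reflexivity of $\Prec{-}{-}$ on strict subterms. Neither of these is genuinely hard, so I do not anticipate a significant obstacle; if anything, the most error-prone aspect will be carefully distinguishing $\tau^n$ from $\tau^r$ in the grammar so the induction is correctly structured, but this is dispatched cleanly by inducting on total type size.
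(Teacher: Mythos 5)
Your proof is correct and is exactly the routine structural induction the paper intends (the lemma is stated without an explicit proof in the appendix, in the style of the neighbouring "by induction on the size of~$\tau$" lemmas): the mutual induction on the~$\tau^n$/$\tau^r$ grammar, the observation that $p \ge p$ makes the~$\tyW{p}{\tau^r}$ clause applicable with~$\coeDEL{p}{p}$ denoting the identity via functoriality of~$\W{(-)}$ on the preorder~$\WARP$, and functoriality of the remaining type formers are precisely the ingredients needed. The only blemish is cosmetic: you announce "five sub-cases" for~$\tau^r$ but the grammar has four ($\nu$, $\tyS{\tau^n}$, $\tyARR{\tau^n}{\tau^n}$, $\tySUM{\tau^n}{\tau^n}$).
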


\begin{lemma}[Precedence is Transitive]
  \label{lemm:transitivity-of-precedence}
  If both~$\Prec{\tau_1}{\tau_2}$ and $\Prec{\tau_2}{\tau_3}$ are defined, so
  is~$\Prec{\tau_1}{\tau_3}$, and
  moreover
  \[
    \jugEC
    {\Prec{\tau_1}{\tau_2};\Prec{\tau_2}{\tau_3}}
    {\Prec{\tau_1}{\tau_3}}
    {\tau_1}
    {\tau_3}.
  \]
\end{lemma}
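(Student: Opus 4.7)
The plan is to proceed by induction on $\tau_2$, dispatching on its outermost type constructor. Since $\Prec{\tau_1}{\tau_2}$ and $\Prec{\tau_2}{\tau_3}$ are both defined, the clauses of Figure (B) force the outermost head of $\tau_1$, $\tau_2$, and $\tau_3$ to agree in each case. In the base case (ground types) both precedence calls return $\coeID$, so definedness of $\Prec{\tau_1}{\tau_3}$ is immediate and the equation is a trivial instance of $\coeID;\coeID \equiv \coeID$. For each of the congruence cases ($\tyS{-}$, $\tyARR{-}{-}$, $\tyPROD{-}{-}$, $\tySUM{-}{-}$), the argument is essentially the same: strip off the head constructor, apply the inductive hypothesis componentwise to get definedness and the denotational equation for each subterm, then reassemble using functoriality of the corresponding coercion formers (e.g.\ $\coeS{\alpha};\coeS{\alpha'} \equiv \coeS{\alpha;\alpha'}$, and similarly for $\coeARR{-}{-}$, which simply swaps the role of the two inductive hypotheses on the contravariant side).

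The interesting case is warps: $\tau_1 = \tyW{p}{\tau_1'}$, $\tau_2 = \tyW{q}{\tau_2'}$, $\tau_3 = \tyW{r}{\tau_3'}$ with $p \ge q$ (from definedness of $\Prec{\tau_1}{\tau_2}$) and $q \ge r$ (from definedness of $\Prec{\tau_2}{\tau_3}$). By transitivity of the pointwise order on $\WARP$ we get $p \ge r$, so $\Prec{\tyW{p}{\tau_1'}}{\tyW{r}{\tau_3'}} = \coeDEL{p}{r};\coeW{r}{\Prec{\tau_1'}{\tau_3'}}$ is defined. To prove the equation, I would unfold both sides and first use the inductive hypothesis on $\tau_2'$ to rewrite $\Prec{\tau_1'}{\tau_2'};\Prec{\tau_2'}{\tau_3'}$ as $\Prec{\tau_1'}{\tau_3'}$. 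The remaining work is then the denotational identity
\begin{equation*}
\coeDEL{p}{q};\coeW{q}{\Prec{\tau_1'}{\tau_2'}};\coeDEL{q}{r};\coeW{r}{\Prec{\tau_2'}{\tau_3'}}
\;\equiv\;
\coeDEL{p}{r};\coeW{r}{\Prec{\tau_1'}{\tau_3'}},
\end{equation*}
which follows from three facts about $\W{(-)}$ recorded in Property~\ref{prop:pseudo-monoidal-functor}: first, naturality of the delay coercion $\coeDEL{q}{r}$ (a component of the natural transformation $\W{q \ge r}$) lets us slide $\coeW{q}{\Prec{\tau_1'}{\tau_2'}}$ past $\coeDEL{q}{r}$; second, functoriality of $\W{r}$ collapses the composite $\coeW{r}{\Prec{\tau_1'}{\tau_2'}};\coeW{r}{\Prec{\tau_2'}{\tau_3'}}$ to $\coeW{r}{\Prec{\tau_1'}{\tau_3'}}$; and third, functoriality of $\W{(-)}$ on the morphism $p \ge q \ge r$ of $\WARP$ gives $\coeDEL{p}{q};\coeDEL{q}{r} \equiv \coeDEL{p}{r}$.

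The main obstacle is the warp case above, and specifically appealing to naturality of the delay in a way that is justified by the denotational semantics (this is why the statement is phrased up to $\cong$ rather than as a syntactic equality on coercions). All other cases are routine bookkeeping using the inductive hypothesis together with congruence rules for $\coeS{-}$, $\coeARR{-}{-}$, $\coePROD{-}{-}$, and $\coeSUM{-}{-}$.
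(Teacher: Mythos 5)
Your proof is correct. The paper states this lemma without giving a proof, and your argument---structural induction with the head constructors forced to agree, congruence/functoriality for the composite cases, and, in the warp case, sliding $\coeW{q}{\Prec{\tau_1'}{\tau_2'}}$ past $\coeDEL{q}{r}$ by naturality of $\W{q \ge r}$ and collapsing $\coeDEL{p}{q};\coeDEL{q}{r}$ to $\coeDEL{p}{r}$ by functoriality of $\W{(-)}$ on the preorder $\WARP$ (Property~\ref{prop:pseudo-monoidal-functor})---is exactly the intended one, and correctly identifies why the statement must be read up to denotational equality rather than syntactic identity of coercions.
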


\begin{lemma}[Precedence and Normalization]
  \label{lemm:precedence-normalization}
  If~$\Prec{\tau_1}{\tau_2}$ is defined, so
  is~$\Prec{\NORM{\tau_1}}{\NORM{\tau_2}}$, and moreover
  \[
    \jugEC
    {\Prec{\tau_1}{\tau_2}}
    {
      \NORMin{\tau_1};
      \Prec{\NORM{\tau_1}}{\NORM{\tau_2}};
      \NORMout{\tau_2}
    }
    {\tau_1}
    {\tau_2}.
  \]
\end{lemma}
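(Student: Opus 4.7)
The plan is to proceed by structural induction on $\tau_1$, running in lockstep with the recursive clauses defining $\Prec{\tau_1}{\tau_2}$ in \Reffig{alg-types}-B. Since $\Prec{\tau_1}{\tau_2}$ is defined only when $\tau_1$ and $\tau_2$ share their head constructor, the case analysis matches the cases in the definition of $\NORM{-}$. In each case, I first argue that $\Prec{\NORM{\tau_1}}{\NORM{\tau_2}}$ is defined, then establish the denotational identity by chaining algebraic manipulations of coercions.

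For the five cases where $\tau_1$ is not a warped type~($\nu$, $\tyS{\tau_1'}$, $\tyARR{-}{-}$, $\tyPROD{-}{-}$, $\tySUM{-}{-}$), normalization simply wraps a structurally normalized body with an outer $\tyW{\wID}{-}$~(or $\tyW{\wOM}{-}$ for ground types), so $\Prec{\NORM{\tau_1}}{\NORM{\tau_2}}$ unfolds to a reflexive delay $\coeDEL{\wID}{\wID}$~(or $\coeDEL{\wOM}{\wOM}$) composed with a $\coeW{-}{-}$ encapsulating the normalized subprecedences. The equation then follows by invoking the induction hypothesis componentwise and collapsing the inserted wrap/unwrap pairs using \Refprop{norm-id} and \Reflemm{reflexivity-of-precedence}.

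The substantive case is $\tau_1 = \tyW{p}{\tau_1'}$ and $\tau_2 = \tyW{q}{\tau_2'}$ with $p \ge q$, which splits according to whether $\tau_1'$ is a product. If $\tau_1'$ is not a product, then $\NORM{\tyW{p}{\tau_1'}} = \tyW{p \opON q_1}{\tau_1''}$ where $\tyW{q_1}{\tau_1''} = \NORM{\tau_1'}$, and similarly on the right for some $q_2,\tau_2''$; well-definedness of $\Prec{\NORM{\tau_1}}{\NORM{\tau_2}}$ reduces to $p \opON q_1 \ge q \opON q_2$, which follows from $p \ge q$ (by hypothesis), from $q_1 \ge q_2$ (provided by the IH applied to $\Prec{\tau_1'}{\tau_2'}$), and from the bimonotonicity of composition in $\WARP$. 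The equation itself is then a calculation combining the $\coeCONCAT{-}{-}/\coeDECAT{-}{-}$ coercions hidden inside $\NORMin{-}/\NORMout{-}$ with the delay and $\coeW{-}{-}$ coercions produced by precedence.

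I expect the subcase where $\tau_1'$ is a product to be the main obstacle. There, normalization distributes $\W{p}$ over the product, so $\NORM{\tau_1}$ becomes a pair, $\Prec{\NORM{\tau_1}}{\NORM{\tau_2}}$ splits into two independent uses of precedence on the warped factors $\tyW{p}{\tau_{1i}}$ against $\tyW{q}{\tau_{2i}}$, and the IH must be invoked twice. Recombining the two resulting equations requires the denotational identity
\[
  \coeDEL{p}{q};\coeW{q}{\coePROD{\alpha_1}{\alpha_2}}
  \;\equiv\;
  \coeDISTP;
  (\coePROD
  {(\coeDEL{p}{q};\coeW{q}{\alpha_1})}
  {(\coeDEL{p}{q};\coeW{q}{\alpha_2})});
  \coeFACTP,
\]
which holds because $\W{p}$ preserves products by \Refprop{pseudo-monoidal-functor} and the natural isomorphism interpreting $\coeDISTP/\coeFACTP$ is natural in the underlying morphisms, so that delays and fibered coercions commute past distribution. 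Once this identity is available, the RHS of the target equation can be reshaped to match the LHS by routine rewriting, completing the induction.
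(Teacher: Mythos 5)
The paper states Lemma~\ref{lemm:precedence-normalization} without proof, so there is no official argument to compare against; your proposal supplies what is essentially the intended one, and its substantive content is right. The case analysis tracking the clauses of $\Prec{-}{-}$ and $\NORM{-}$ is the correct decomposition; the definedness argument in the warped case ($p \opON q_1 \ge q \opON q_2$ from $p \ge q$, $q_1 \ge q_2$ and monotonicity of composition on both sides) is sound, noting that the inequality $q_1 \ge q_2$ is indeed extracted from the definedness of $\Prec{\NORM{\tau_1'}}{\NORM{\tau_2'}}$ given by the induction hypothesis; and the distribution identity you isolate for the product-under-warp case is exactly the crux, correctly justified: the comparison isomorphism interpreting $\coeDISTP/\coeFACTP$ is natural, and $\W{p \ge q}$ is a natural transformation between product-preserving functors, so it commutes with the comparison maps. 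The warped non-product case additionally needs the compatibility of $\mu^{p,q}$ with the $2$-cells $\W{p \ge q}$ --- the full content of \Refprop{pseudo-monoidal-functor} --- to push the delay produced by precedence past the $\coeCONCAT{p}{q_1}$ and $\coeDECAT{q}{q_2}$ hidden in $\NORMin{-}$ and $\NORMout{-}$; you call this ``a calculation,'' which is fair, but that coherence law deserves to be named since it is where the monoidality hypothesis is actually consumed.

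The one concrete repair needed is the induction measure. Structural induction on $\tau_1$ does not license two of the inductive calls you make: in the arrow case the contravariant position requires the hypothesis at $\Prec{\tau_2'}{\tau_1'}$, whose first argument is a subterm of $\tau_2$, not of $\tau_1$; and in the product-under-warp case you invoke the hypothesis at the reassembled pair $(\tyW{p}{\tau_{1i}}, \tyW{q}{\tau_{2i}})$, which is a subterm of neither input and is not an argument of any recursive call of $\Prec{\tau_1}{\tau_2}$ either. Both calls become legitimate under induction on the combined size $\lLENGTH{\tau_1} + \lLENGTH{\tau_2}$, which strictly decreases in every case, so the fix is only a restatement of the measure --- but as written the induction does not go through. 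A smaller point to watch: splitting on whether $\tau_1'$ is \emph{syntactically} a product does not quite match when the clause $\NORM{\tyW{p}{\tau_1'}} = \tyW{p \opON q_1}{\tau_1''}$ applies, since $\NORM{\tau_1'}$ can be a product even when $\tau_1'$ is not (e.g.\ $\tau_1' = \tyW{r}{(\tyPROD{\sigma}{\sigma'})}$); the dichotomy should be driven by the shape of $\NORM{\tau_1'}$, which is also the charitable reading of \Reffig{alg-types}-A needed for $\NORM{-}$ to be total in the first place.
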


\subsection{Properties of Algorithmic Subtyping}

\paragraph{Functoriality Properties}

\begin{lemma}[Algorithmic Subtyping is Reflexive]
  \label{lemm:reflexivity-of-algorithmic-subtyping}
  For any type~$\tau$, $\Coe{\tau}{\tau}$ is defined and moreover
  \[
    \jugEC
    {\Coe{\tau}{\tau}}
    {\coeID}
    {\tau}
    {\tau}.
  \]
\end{lemma}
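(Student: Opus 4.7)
The plan is to unfold the definition of algorithmic subtyping and then apply the reflexivity of precedence together with the basic factoring property of the normalization coercions. Recall that by definition
\[
  \Coe{\tau}{\tau} \;=\; \NORMin{\tau} \,;\, \Prec{\NORM{\tau}}{\NORM{\tau}} \,;\, \NORMout{\tau}.
\]
So it suffices to show that the three constituent coercions are defined and that their composition denotes the identity.

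First I would observe, by a straightforward induction on the size of $\tau$, that $\NORM{\tau}$ is a normal type (i.e.\ fits the grammar of $\tau^n$). This is just a sanity check on the definition of $\NORM{-}$ in Figure~\ref{fig:alg-types}-A, and I would not dwell on it. With this in hand, I can invoke \Reflemm{reflexivity-of-precedence} on the normal type $\NORM{\tau}$: it gives that $\Prec{\NORM{\tau}}{\NORM{\tau}}$ is defined and moreover that
\[
  \jugEC{\Prec{\NORM{\tau}}{\NORM{\tau}}}{\coeID}{\NORM{\tau}}{\NORM{\tau}}.
\]
This already ensures that $\Coe{\tau}{\tau}$ is well-defined, since $\NORMin{\tau}$ and $\NORMout{\tau}$ are always defined by their structural recursion.

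Next, using functoriality of the denotational semantics of coercions together with the equation above, the semantics of $\Coe{\tau}{\tau}$ collapses to the semantics of $\NORMin{\tau}\,;\,\NORMout{\tau}$. At this point \Refprop{norm-id} applies directly, giving
\[
  \jugEC{\NORMin{\tau};\NORMout{\tau}}{\coeID}{\tau}{\tau},
\]
whence $\jugEC{\Coe{\tau}{\tau}}{\coeID}{\tau}{\tau}$ by transitivity of semantic equality.

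There is essentially no obstacle here: the heavy lifting has been done in \Reflemm{reflexivity-of-precedence} and \Refprop{norm-id}, and this lemma is really just the observation that those two results line up correctly under the definition of $\Coe{-}{-}$. The only mildly delicate point is to be careful about where definedness needs to be established before semantic identities can be invoked, but this is immediate once one notes that $\NORMin{\tau}$ and $\NORMout{\tau}$ are total and that \Reflemm{reflexivity-of-precedence} provides both the definedness and the semantic equation for the middle coercion.
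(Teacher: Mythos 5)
Your proof is correct, and it is the evident intended argument: the paper states this lemma without proof, but the ingredients it places immediately beforehand — \Reflemm{reflexivity-of-precedence} applied to the normal type $\NORM{\tau}$, and \Refprop{norm-id} to collapse $\NORMin{\tau};\NORMout{\tau}$ to the identity — are exactly the ones you combine under the definition of $\Coe{\tau}{\tau}$. Nothing further is needed.
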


\begin{lemma}[Algorithmic Subtyping is Transitive]
  \label{lemm:transitivity-of-algorithmic-subtyping}
  If~$\Coe{\tau_1}{\tau_2}$ and $\Coe{\tau_2}{\tau_3}$ are defined, so
  is~$\Coe{\tau_1}{\tau_3}$, and
  moreover
  \[
    \jugEC
    {\Coe{\tau_1}{\tau_2};\Coe{\tau_2}{\tau_3}}
    {\Coe{\tau_1}{\tau_3}}
    {\tau_1}
    {\tau_3}.
  \]
\end{lemma}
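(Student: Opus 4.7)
The plan is to reduce transitivity of $\Coe{-}{-}$ to transitivity of $\Prec{-}{-}$ (\Reflemm{transitivity-of-precedence}), using the fact that $\NORMout{\tau_2}$ and $\NORMin{\tau_2}$ cancel in the middle of the composed coercion.

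First I would unfold the defining equation
\[
  \Coe{\tau}{\tau'}
  \defeq
  \NORMin{\tau};
  \Prec{\NORM{\tau}}{\NORM{\tau'}};
  \NORMout{\tau'}
\]
for both $\Coe{\tau_1}{\tau_2}$ and $\Coe{\tau_2}{\tau_3}$. Since these are assumed defined, so are the precedences $\Prec{\NORM{\tau_1}}{\NORM{\tau_2}}$ and $\Prec{\NORM{\tau_2}}{\NORM{\tau_3}}$. By \Reflemm{transitivity-of-precedence}, $\Prec{\NORM{\tau_1}}{\NORM{\tau_3}}$ is therefore defined too; hence $\Coe{\tau_1}{\tau_3}$ is defined, settling the first part of the statement.

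For the semantic equality, I would compose the two unfolded expressions to obtain
\[
  \NORMin{\tau_1};
  \Prec{\NORM{\tau_1}}{\NORM{\tau_2}};
  \NORMout{\tau_2};
  \NORMin{\tau_2};
  \Prec{\NORM{\tau_2}}{\NORM{\tau_3}};
  \NORMout{\tau_3}.
\]
Apply \Refcoro{norm} to replace the middle factor $\NORMout{\tau_2};\NORMin{\tau_2}$ by $\coeID$ (up to denotational equality on $\NORM{\tau_2}$). Then apply \Reflemm{transitivity-of-precedence} to collapse $\Prec{\NORM{\tau_1}}{\NORM{\tau_2}};\Prec{\NORM{\tau_2}}{\NORM{\tau_3}}$ to $\Prec{\NORM{\tau_1}}{\NORM{\tau_3}}$. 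What remains is exactly the unfolding of $\Coe{\tau_1}{\tau_3}$, which closes the argument.

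There is no real obstacle; the only mild care required is bookkeeping. The rewrites all happen at the denotational level (i.e., modulo $\sem{-}$), so I would phrase each step using the congruence of $\sem{\cdot}$ for sequential composition, chaining them together so that each equation either invokes \Refcoro{norm} or \Reflemm{transitivity-of-precedence}. Definedness has to be tracked in parallel with the equality, but it follows from the same two lemmas applied in the same order, so it adds no real work.
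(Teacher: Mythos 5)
Your proof is correct and is exactly the argument the paper intends (the lemma is stated without an explicit proof, but the supporting results \Refprop{norm-id}/\Refcoro{norm} and \Reflemm{transitivity-of-precedence} are set up precisely so that unfolding $\Coe{-}{-}$, cancelling $\NORMout{\tau_2};\NORMin{\tau_2}$, and collapsing the two precedences yields the result). The only cosmetic remark is that the cancellation of $\NORMout{\tau_2};\NORMin{\tau_2}$ is most directly \Refprop{norm-id} rather than \Refcoro{norm}, though either suffices.
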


\begin{lemma}[Algorithmic Subtyping is a Congruence for Arrow Types]
  \label{lemm:algorithmic-subtyping-is-a-congruence-for-arrow-types}
  If both~$\Coe{\tau_3}{\tau_1}$ and~$\Coe{\tau_2}{\tau_4}$ are defined, then so
  is $\Coe{\tyARR{\tau_1}{\tau_2}}{\tyARR{\tau_3}{\tau_4}}$, and moreover
  \[
    \begin{array}{c@{~}l}
      &
      \Coe{\tyARR{\tau_1}{\tau_2}}{\tyARR{\tau_3}{\tau_4}}
      \cong
      \coeARR
      {
        \Coe{\tau_3}{\tau_1}
      }
      {
        \Coe{\tau_2}{\tau_4}
      }
      \\
      : &
      \tyARR{\tau_1}{\tau_2}
      \subty
      \tyARR{\tau_3}{\tau_4}
      .
    \end{array}
  \]
\end{lemma}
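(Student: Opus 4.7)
The plan is to unfold both sides of the purported equation using the recursive definition of $\Coe{-}{-}$ on arrow types, and then to show denotational equality by a series of semantic cancellations arising from the monoidal structure of $\W{(-)}$.

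First I would establish definedness. By definition $\Coe{\tau_i}{\tau_j} = \NORMin{\tau_i};\Prec{\NORM{\tau_i}}{\NORM{\tau_j}};\NORMout{\tau_j}$, so the hypothesis says that both $\Prec{\NORM{\tau_3}}{\NORM{\tau_1}}$ and $\Prec{\NORM{\tau_2}}{\NORM{\tau_4}}$ are defined. Unfolding $\NORM{\tyARR{\tau_i}{\tau_j}} = \tyW{\wID}{(\tyARR{\NORM{\tau_i}}{\NORM{\tau_j}})}$ and applying the arrow clause of the precedence table from \Reffig{alg-types}-B inside the warp clause (whose side condition $\wID \ge \wID$ is trivially satisfied), one obtains that $\Prec{\NORM{\tyARR{\tau_1}{\tau_2}}}{\NORM{\tyARR{\tau_3}{\tau_4}}}$ is defined, whence so is $\Coe{\tyARR{\tau_1}{\tau_2}}{\tyARR{\tau_3}{\tau_4}}$.

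Next I would unfold both sides to obtain explicit coercions. On one side,
\[
  \Coe{\tyARR{\tau_1}{\tau_2}}{\tyARR{\tau_3}{\tau_4}}
  = \bigl(\coeARR{\NORMout{\tau_1}}{\NORMin{\tau_2}}\bigr);\coeWRAP;\coeDEL{\wID}{\wID};\coeW{\wID}{\bigl(\coeARR{\Prec{\NORM{\tau_3}}{\NORM{\tau_1}}}{\Prec{\NORM{\tau_2}}{\NORM{\tau_4}}}\bigr)};\coeUNWRAP;\bigl(\coeARR{\NORMin{\tau_3}}{\NORMout{\tau_4}}\bigr),
\]
while the other side is $\coeARR{\NORMin{\tau_3};\Prec{\NORM{\tau_3}}{\NORM{\tau_1}};\NORMout{\tau_1}}{\NORMin{\tau_2};\Prec{\NORM{\tau_2}}{\NORM{\tau_4}};\NORMout{\tau_4}}$. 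The goal is then to show these two explicit coercions carry the same denotation.

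The key denotational facts I would invoke are: the pair $\coeWRAP,\coeUNWRAP$ is inverse up to $\cong$ (a direct instance of \Refcoro{norm} applied to the axiom for $\tyW{\wID}{\tau}$); the delay $\coeDEL{\wID}{\wID}$ is interpreted as $\W{\wID \ge \wID}$, hence as the identity natural transformation on $\W{\wID}$; and by \Refprop{pseudo-monoidal-functor}, the unit isomorphism $\epsilon : \W{\wID} \iso \mi{Id}$ is natural in its argument. Combining these, the composite $\coeWRAP;\coeDEL{\wID}{\wID};\coeW{\wID}{\alpha};\coeUNWRAP$ has the same denotation as $\alpha$ for any coercion $\alpha$. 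Applying this with $\alpha = \coeARR{\Prec{\NORM{\tau_3}}{\NORM{\tau_1}}}{\Prec{\NORM{\tau_2}}{\NORM{\tau_4}}}$ collapses the LHS to a three-factor composite of arrow coercions. A final application of the functoriality of the arrow coercion former (i.e.\ $\coeARR{\gamma_1';\gamma_1}{\gamma_2;\gamma_2'} \cong \coeARR{\gamma_1}{\gamma_2};\coeARR{\gamma_1'}{\gamma_2'}$, contravariant on the left) merges the three into the single arrow coercion on the right-hand side.

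The main obstacle I expect is bookkeeping: carefully lining up the contravariant order in the arrow positions and verifying that the semantic cancellation of $\coeWRAP$, $\coeDEL{\wID}{\wID}$, and $\coeUNWRAP$ around $\coeW{\wID}{-}$ really does reduce to naturality of $\epsilon$ rather than to some coincidence of interpretations. Everything else is either the unfolding of definitions or the formal application of the monoidal-functoriality structure obtained in \Refprop{pseudo-monoidal-functor}.
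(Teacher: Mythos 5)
Your proof is correct: the paper states this lemma without giving a proof, and your argument is precisely the routine one its definitions dictate --- unfold $\Coe{-}{-}$ via $\NORM{\tyARR{\tau_1}{\tau_2}} = \tyW{\wID}{(\tyARR{\NORM{\tau_1}}{\NORM{\tau_2}})}$ and the arrow and warp clauses of $\Prec{-}{-}$, cancel $\coeWRAP;\coeDEL{\wID}{\wID};\coeW{\wID}{\alpha};\coeUNWRAP \cong \alpha$ using the unit isomorphism of \Refprop{pseudo-monoidal-functor}, and merge the remaining three arrow coercions by (contravariant) functoriality. The only thing worth noting is that your unfolding of $\NORMin{\tyARR{\tau_1}{\tau_2}}$ and $\NORMout{\tyARR{\tau_1}{\tau_2}}$ silently corrects what appear to be index typos in \Reffig{alg-normalize} (the paper writes $\NORMin{\tau_1}$ and $\NORMout{\tau_1}$ in the codomain position where $\NORMin{\tau_2}$ and $\NORMout{\tau_2}$ are clearly intended), which is the right reading.
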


\begin{lemma}[Algorithmic Subtyping is a Congruence for Product Types]
  \label{lemm:algorithmic-subtyping-is-a-congruence-for-product-types}
  If both~$\Coe{\tau_1}{\tau_3}$ and~$\Coe{\tau_2}{\tau_4}$ are defined, then so
  is $\Coe{\tyPROD{\tau_1}{\tau_2}}{\tyPROD{\tau_3}{\tau_4}}$, and moreover
  \[
    \begin{array}{c@{~}l}
      &
      \Coe{\tyPROD{\tau_1}{\tau_2}}{\tyPROD{\tau_3}{\tau_4}}
      \cong
      \coePROD
      {
        \Coe{\tau_3}{\tau_1}
      }
      {
        \Coe{\tau_2}{\tau_4}
      }
      \\
      : &
      \tyPROD{\tau_1}{\tau_2}
      \subty
      \tyPROD{\tau_3}{\tau_4}
      .
    \end{array}
  \]
\end{lemma}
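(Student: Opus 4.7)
The plan is to simply unfold the definition $\Coe{\tau}{\tau'} \defeq \NORMin{\tau};\Prec{\NORM{\tau}}{\NORM{\tau'}};\NORMout{\tau'}$ on both sides of the displayed equation and then reassociate, using that all the ingredients compute componentwise on products.

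First, I would establish definedness. The hypothesis that $\Coe{\tau_1}{\tau_3}$ and $\Coe{\tau_2}{\tau_4}$ are defined means (by the definition of $\Coe{-}{-}$) that $\Prec{\NORM{\tau_1}}{\NORM{\tau_3}}$ and $\Prec{\NORM{\tau_2}}{\NORM{\tau_4}}$ are defined. Since $\NORM{\tyPROD{\tau_1}{\tau_2}} = \tyPROD{\NORM{\tau_1}}{\NORM{\tau_2}}$ and $\NORM{\tyPROD{\tau_3}{\tau_4}} = \tyPROD{\NORM{\tau_3}}{\NORM{\tau_4}}$, the product clause in the definition of $\Prec{-}{-}$ (\Reffig{alg-types}-B) tells us that
\[
\Prec{\NORM{\tyPROD{\tau_1}{\tau_2}}}{\NORM{\tyPROD{\tau_3}{\tau_4}}}
= \coePROD{\Prec{\NORM{\tau_1}}{\NORM{\tau_3}}}{\Prec{\NORM{\tau_2}}{\NORM{\tau_4}}}
\]
is defined, so $\Coe{\tyPROD{\tau_1}{\tau_2}}{\tyPROD{\tau_3}{\tau_4}}$ is too.

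Next, I would compute the coercion explicitly. Using the product clauses of $\NORMin{-}$ and $\NORMout{-}$ (\Reffig{alg-normalize}), the left-hand coercion equals, by construction,
\[
(\coePROD{\NORMin{\tau_1}}{\NORMin{\tau_2}});
(\coePROD{\Prec{\NORM{\tau_1}}{\NORM{\tau_3}}}{\Prec{\NORM{\tau_2}}{\NORM{\tau_4}}});
(\coePROD{\NORMout{\tau_3}}{\NORMout{\tau_4}}).
\]
The right-hand coercion is, by the definition of $\Coe{\tau_i}{\tau_j}$,
\[
\coePROD
{(\NORMin{\tau_1};\Prec{\NORM{\tau_1}}{\NORM{\tau_3}};\NORMout{\tau_3})}
{(\NORMin{\tau_2};\Prec{\NORM{\tau_2}}{\NORM{\tau_4}};\NORMout{\tau_4})}.
\]
So it remains to show that these two coercions are denotationally equal. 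This is precisely the bifunctoriality of $\coePROD{-}{-}$ in the interpretation: because products are interpreted as categorical products in $\TT$ and $\sem{\coePROD{\alpha_1}{\alpha_2}} = \sem{\alpha_1} \times \sem{\alpha_2}$, we have the identity $\sem{(\coePROD{\alpha_1}{\alpha_2});(\coePROD{\beta_1}{\beta_2})} = \sem{\coePROD{(\alpha_1;\beta_1)}{(\alpha_2;\beta_2)}}$, and iterating this once discharges the goal.

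The main obstacle is essentially bookkeeping rather than mathematical depth: one must keep track of the three layers of pair-coercion composition and cite the fact that the denotational semantics sends $\coePROD{-}{-}$ to the categorical product of morphisms. Both are routine. A minor wrinkle is that the statement as written uses $\Coe{\tau_3}{\tau_1}$ in the first component of the pair, whereas products are covariant; I would read this as a typographical slip for $\Coe{\tau_1}{\tau_3}$, consistent with the hypotheses and with the analogous \Reflemm{algorithmic-subtyping-is-a-congruence-for-arrow-types} where contravariance appears only for the arrow's domain.
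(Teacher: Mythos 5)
Your proof is correct and is exactly the intended argument: the paper states this lemma without proof, and the natural justification is the one you give, namely unfolding $\Coe{-}{-}$ through the product clauses of $\NORM{-}$, $\NORMin{-}$, $\NORMout{-}$, and $\Prec{-}{-}$, and then applying the interchange law $\sem{(\coePROD{\alpha_1}{\alpha_2});(\coePROD{\beta_1}{\beta_2})} = \sem{\coePROD{(\alpha_1;\beta_1)}{(\alpha_2;\beta_2)}}$ coming from the interpretation of $\coePROD{-}{-}$ as a categorical product of morphisms. Your reading of $\Coe{\tau_3}{\tau_1}$ in the first component as a typographical slip for $\Coe{\tau_1}{\tau_3}$ is also right, since the subtyping rule for $\coePROD{\alpha_1}{\alpha_2}$ is covariant in both components and the hypotheses only provide $\Coe{\tau_1}{\tau_3}$ and $\Coe{\tau_2}{\tau_4}$.
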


\begin{lemma}[Algorithmic Subtyping is a Congruence for Sum Types]
  \label{lemm:algorithmic-subtyping-is-a-congruence-for-sum-types}
  If both~$\Coe{\tau_1}{\tau_3}$ and~$\Coe{\tau_2}{\tau_4}$ are defined, then so
  is $\Coe{\tySUM{\tau_1}{\tau_2}}{\tySUM{\tau_3}{\tau_4}}$, and moreover
  \[
    \begin{array}{c@{~}l}
      &
      \Coe{\tySUM{\tau_1}{\tau_2}}{\tySUM{\tau_3}{\tau_4}}
      \cong
      \coeSUM
      {
        \Coe{\tau_3}{\tau_1}
      }
      {
        \Coe{\tau_2}{\tau_4}
      }
      \\
      : &
      \tySUM{\tau_1}{\tau_2}
      \subty
      \tySUM{\tau_3}{\tau_4}
      .
    \end{array}
  \]
\end{lemma}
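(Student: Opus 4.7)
As transcribed, the conclusion uses $\Coe{\tau_3}{\tau_1}$ in the first component of $\coeSUM{-}{-}$, but inspection of the sum congruence rule in~\Reffig{subtyping} shows that $\coeSUM{-}{-}$ is covariant in both arguments, so only $\Coe{\tau_1}{\tau_3}$ yields a coercion out of $\tySUM{\tau_1}{\tau_2}$; this is also what matches the given hypothesis. I will therefore read the conclusion with $\Coe{\tau_1}{\tau_3}$ in that position --- apparently a transcription carry-over from the arrow-type version of the lemma --- and sketch its proof in parallel to the product case (\Reflemm{algorithmic-subtyping-is-a-congruence-for-product-types}).

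First, I would unfold $\Coe{-}{-}$ and reduce to a precedence computation. Using~\Reffig{alg-types}-A, $\NORM{\tySUM{\tau_1}{\tau_2}} = \tyW{\wID}{(\tySUM{\NORM{\tau_1}}{\NORM{\tau_2}})}$ and likewise on the right-hand side. Applying in sequence the $\tyW{p}{-}$ and $\tySUM{-}{-}$ clauses of~\Reffig{alg-types}-B yields
\[
  \Prec{\NORM{\tySUM{\tau_1}{\tau_2}}}{\NORM{\tySUM{\tau_3}{\tau_4}}}
  =
  \coeDEL{\wID}{\wID};
  \coeW{\wID}{\coeSUM{\Prec{\NORM{\tau_1}}{\NORM{\tau_3}}}{\Prec{\NORM{\tau_2}}{\NORM{\tau_4}}}},
\]
whose component precedences exist by~\Reflemm{precedence-normalization} applied to the hypotheses $\Coe{\tau_i}{\tau_{i+2}}$. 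This already establishes that $\Coe{\tySUM{\tau_1}{\tau_2}}{\tySUM{\tau_3}{\tau_4}}$ is defined.

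For the equivalence, I would expand $\NORMin{\tySUM{\tau_1}{\tau_2}} = \coeSUM{\NORMin{\tau_1}}{\NORMin{\tau_2}};\coeWRAP$ and $\NORMout{\tySUM{\tau_3}{\tau_4}} = \coeUNWRAP;\coeSUM{\NORMout{\tau_3}}{\NORMout{\tau_4}}$ from~\Refapp{supplementary-material}. The full composite then splits into an outer $\coeSUM{-}{-}$ built from the $\NORMin/\NORMout$ fringes, and an inner sandwich $\coeWRAP; \coeDEL{\wID}{\wID}; \coeW{\wID}{\gamma}; \coeUNWRAP$ around $\gamma = \coeSUM{\Prec{\NORM{\tau_1}}{\NORM{\tau_3}}}{\Prec{\NORM{\tau_2}}{\NORM{\tau_4}}}$. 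This sandwich collapses denotationally to $\gamma$ by naturality of the unit isomorphism $\W{\wID} \iso \mathrm{Id}$ supplied by~\Refprop{pseudo-monoidal-functor}, combined with $\coeDEL{\wID}{\wID} \cong \coeID$. Functoriality of $\coeSUM{-}{-}$ then fuses the outer and inner sums into a single $\coeSUM{\alpha_1}{\alpha_2}$ with $\alpha_i = \NORMin{\tau_i};\Prec{\NORM{\tau_i}}{\NORM{\tau_{i+2}}};\NORMout{\tau_{i+2}}$, which is exactly $\Coe{\tau_i}{\tau_{i+2}}$ by definition of $\Coe{-}{-}$. The main obstacle is this diagram chase through the $\coeWRAP/\coeW{\wID}{-}/\coeUNWRAP$ layers; I would discharge it by interpreting both sides in $\TT$ and invoking~\Refprop{explicit-coherence}, relying on the limit-preservation and monoidal-coherence properties of $\W{(-)}$ to justify the cancellations.
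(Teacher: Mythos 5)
Your reading of the statement is correct: the $\Coe{\tau_3}{\tau_1}$ in the first component is a transcription carry-over from the arrow-type lemma (the product-type lemma in the paper has the same slip), and since the $\coeSUM{-}{-}$ congruence rule in~\Reffig{subtyping} is covariant in both arguments, the intended conclusion is $\Coe{\tySUM{\tau_1}{\tau_2}}{\tySUM{\tau_3}{\tau_4}} \cong \coeSUM{\Coe{\tau_1}{\tau_3}}{\Coe{\tau_2}{\tau_4}}$. The paper states this lemma without proof, so there is no official argument to diverge from; your sketch is the natural one and it goes through. Unfolding $\Coe{-}{-}$, computing $\NORM{\tySUM{\tau_1}{\tau_2}} = \tyW{\wID}{(\tySUM{\NORM{\tau_1}}{\NORM{\tau_2}})}$, applying the $\tyW{p}{-}$ and then $\tySUM{-}{-}$ clauses of $\Prec{-}{-}$, collapsing the $\coeWRAP;\coeDEL{\wID}{\wID};\coeW{\wID}{-};\coeUNWRAP$ sandwich by naturality of the unit isomorphism $\epsilon : \W{\wID} \iso \mi{Id}$ from~\Refprop{pseudo-monoidal-functor} together with $\coeDEL{\wID}{\wID} \cong \coeID$, and fusing the three $\coeSUM$ layers by functoriality of $+$ in~$\TT$ indeed yields exactly $\coeSUM{\Coe{\tau_1}{\tau_3}}{\Coe{\tau_2}{\tau_4}}$. (You also tacitly use the corrected clause $\NORMout{\tySUM{\tau_3}{\tau_4}} = \coeUNWRAP;(\coeSUM{\NORMout{\tau_3}}{\NORMout{\tau_4}})$; the initial $\coeWRAP$ in~\Reffig{alg-normalize} is another typo, as the stream and arrow clauses confirm.)

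Two of your citations are off target, though neither damages the argument. First, definedness of the component precedences does not need~\Reflemm{precedence-normalization}: the hypothesis that $\Coe{\tau_1}{\tau_3}$ is defined \emph{means}, by the definition of $\Coe{-}{-}$, that $\Prec{\NORM{\tau_1}}{\NORM{\tau_3}}$ is defined (the coercions $\NORMin{-}$ and $\NORMout{-}$ are total), and these are exactly the recursive calls that arise in the sum clause, since the $\NORM{\tau_i}$ are already normal by idempotence of normalization. Second, \Refprop{explicit-coherence} concerns typing derivations of \emph{terms} and plays no role here: the relation $\cong$ on coercions is, by the appendix's notation $\jugEC{\alpha_1}{\alpha_2}{\tau}{\tau'}$, defined as equality of denotations in~$\TT$, and coercion interpretations are uniquely determined because coercion typing is essentially syntax-directed; so the direct diagram chase you describe, using only naturality of $\epsilon$ and functoriality of the coproduct, is already the complete justification.
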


\begin{lemma}[Algorithmic Subtyping is a Congruence for Warped Types]
  \label{lemm:algorithmic-subtyping-is-a-congruence-for-warped-types}
  If~$\Coe{\tau_1}{\tau_2}$ is defined, then so is
  $\Coe{\tyW{p}{\tau_1}}{\tyW{p}{\tau_2}}$, and moreover
  \[
    \jugEC
    {\Coe{\tyW{p}{\tau_1}}{\tyW{p}{\tau_2}}}
    {\coeW{p}{\Coe{\tau_1}{\tau_2}}}
    {\tyW{p}{\tau_1}}
    {\tyW{p}{\tau_2}}
    .
  \]
\end{lemma}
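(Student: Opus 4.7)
The plan is to unfold the algorithmic subtyping
$\Coe{\tau}{\tau'} = \NORMin{\tau};\Prec{\NORM{\tau}}{\NORM{\tau'}};\NORMout{\tau'}$
and proceed in two steps: first establish that $\Coe{\tyW{p}{\tau_1}}{\tyW{p}{\tau_2}}$ is defined by induction on the normal-form grammar, then prove the denotational equality using the coherences of the warping modality. Since $\Coe{\tau_1}{\tau_2}$ being defined forces $\Prec{\NORM{\tau_1}}{\NORM{\tau_2}}$ to be defined, and the latter in turn forces $\NORM{\tau_1}$ and $\NORM{\tau_2}$ to share the same outermost shape in the normal-types grammar, the induction splits cleanly into a product case and a non-product case.

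In the non-product case, write $\NORM{\tau_i} = \tyW{q_i}{\sigma_i}$ with $\sigma_i$ rigid; the last clause of type normalization gives $\NORM{\tyW{p}{\tau_i}} = \tyW{p \opON q_i}{\sigma_i}$. Definedness of $\Prec{\NORM{\tau_1}}{\NORM{\tau_2}}$ entails both $q_1 \ge q_2$ and that $\Prec{\sigma_1}{\sigma_2}$ is defined, so by monotonicity of composition in $\WARP$ we obtain $p \opON q_1 \ge p \opON q_2$, which is precisely the side condition needed to define $\Prec{\NORM{\tyW{p}{\tau_1}}}{\NORM{\tyW{p}{\tau_2}}}$, yielding $\coeDEL{p \opON q_1}{p \opON q_2};\coeW{p \opON q_2}{\Prec{\sigma_1}{\sigma_2}}$. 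In the product case, $\NORM{\tau_i} = \tyPROD{M_i^1}{M_i^2}$ and the product clause of normalization gives $\NORM{\tyW{p}{\tau_i}} = \tyPROD{\NORM{\tyW{p}{M_i^1}}}{\NORM{\tyW{p}{M_i^2}}}$; the induction hypothesis applied componentwise, combined with the already established \Reflemm{algorithmic-subtyping-is-a-congruence-for-product-types}, closes the case.

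For the denotational equality I would exploit the structure coming from \Refprop{pseudo-monoidal-functor}: the natural isomorphisms $\mu^{p,q}$ interpreting $\coeCONCAT{p}{q}$ and $\coeDECAT{p}{q}$, and the natural transformations $\W{p \ge q}$ interpreting delays. Using the functorial identity $\sem{\coeW{p}{\alpha}} = \W{p}(\sem{\alpha})$, the problem reduces to showing that the three components of $\Coe{\tyW{p}{\tau_1}}{\tyW{p}{\tau_2}}$ agree with $\W{p}$ applied to the corresponding components of $\Coe{\tau_1}{\tau_2}$, up to the monoidal coherences of $\W{(-)}$.

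The main obstacle is precisely this last step of bookkeeping: the coercions $\NORMin{\tyW{p}{\tau_i}}$ and $\NORMout{\tyW{p}{\tau_i}}$ introduce nested layers of $\coeCONCAT$, $\coeDECAT$, $\coeDISTP$, and $\coeFACTP$ that must be transported across $\W{p}$ using naturality. The product subcase is especially delicate, since $\coeDISTP$ has to interact correctly with the recursive normalization of products of warped types; confluence of normalization (\Refprop{confluence}) provides the key identifications needed to match up the two sides.
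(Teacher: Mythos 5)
The paper states this lemma without proof, so there is no official argument to measure you against; judged on its own terms, your reconstruction is sound and I see no step that would fail. The case split on the shared head constructor of $\NORM{\tau_1}$ and $\NORM{\tau_2}$ (forced by definedness of $\Prec{\NORM{\tau_1}}{\NORM{\tau_2}}$), the monotonicity of composition giving $p \opON q_1 \ge p \opON q_2$ from $q_1 \ge q_2$, and the appeal to the strong monoidal structure of $\W{(-)}$ for the semantic equality are exactly the right ingredients. Three remarks to tighten it. First, in the non-product case no induction hypothesis is needed at all: $\Prec{\sigma_1}{\sigma_2}$ between the rigid bodies appears verbatim on both sides, so the induction only serves the product case, where the measure is the number of outer product constructors of the normal forms; your appeal to confluence, $\NORM{\tyW{p}{\tau}} = \NORM{\tyW{p}{\NORM{\tau}}}$, is also what disambiguates the normalization clause for $\tyW{p}{\tau}$ when $\tau$ is not a syntactic product but $\NORM{\tau}$ is one. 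Second, the bookkeeping you defer as the ``main obstacle'' amounts, in the non-product case, to just two naturality squares for $\mu^{p,q}$: writing $\beta$ for $\Prec{\sigma_1}{\sigma_2}$,
\begin{align*}
\coeCONCAT{p}{q_1};\coeDEL{p \opON q_1}{p \opON q_2}
&\cong
\coeW{p}{\coeDEL{q_1}{q_2}};\coeCONCAT{p}{q_2},
\\
\coeCONCAT{p}{q_2};\coeW{p \opON q_2}{\beta}
&\cong
\coeW{p}{\coeW{q_2}{\beta}};\coeCONCAT{p}{q_2},
\end{align*}
both supplied by \Refprop{pseudo-monoidal-functor}; composing them with $\coeCONCAT{p}{q_2};\coeDECAT{p}{q_2} \cong \coeID$ and the functoriality of $\coeW{p}{(-)}$ over sequential composition collapses $\NORMin{\tyW{p}{\tau_1}};\Prec{\NORM{\tyW{p}{\tau_1}}}{\NORM{\tyW{p}{\tau_2}}};\NORMout{\tyW{p}{\tau_2}}$ into $\coeW{p}{\Coe{\tau_1}{\tau_2}}$ as required; the product case is the analogous computation with the naturality of the isomorphism interpreting $\coeDISTP$ and $\coeFACTP$. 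Third, you were right not to shortcut the semantic equality via \Refcoro{coherence-for-coercions}: that corollary sits downstream of the completeness theorem, which itself uses the present lemma, so invoking it here would be circular.
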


\subsubsection{Completeness}

\begin{lemma}[Completness of Algorithmic Subtyping for Invertible Coercions]
  \label{lemm:completeness-of-algorithmic-subtyping-for-invertible-coercions}
  If~$\jugCE{\alpha_1}{\alpha_2}{\tau_1}{\tau_2}$, then~$\Coe{\tau_1}{\tau_2}$
  is defined, and moreover
  \[
    \jugEC
    {\Coe{\tau_1}{\tau_2}}
    {\alpha_1}
    {\tau_1}
    {\tau_2}.
  \]
\end{lemma}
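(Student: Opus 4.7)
The plan is to combine three of the previously-established lemmas about normalization and precedence. First, I would apply Lemma~\ref{lemm:norm} to the mutually inverse coercions $\alpha_1$ and $\alpha_2$. This simultaneously delivers the type-level identity $\NORM{\tau_1} = \NORM{\tau_2}$ and the semantic characterization $\jugEC{\alpha_1}{\NORMin{\tau_1};\NORMout{\tau_2}}{\tau_1}{\tau_2}$; the latter says that $\alpha_1$ already coincides semantically with the composite of the two normalization coercions, independently of the precedence step.

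Second, I would unfold the definition $\Coe{\tau_1}{\tau_2} \defeq \NORMin{\tau_1}; \Prec{\NORM{\tau_1}}{\NORM{\tau_2}}; \NORMout{\tau_2}$ and observe that, since $\NORM{\tau_1} = \NORM{\tau_2}$ by the first step, the middle factor is $\Prec{\NORM{\tau_1}}{\NORM{\tau_1}}$. By Lemma~\ref{lemm:reflexivity-of-precedence} this is defined and semantically equal to $\coeID$ at type $\NORM{\tau_1}$. In particular, $\Coe{\tau_1}{\tau_2}$ is itself defined.

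Finally, I would chain the equalities: using the semantic identity and compatibility of composition with denotation, $\Coe{\tau_1}{\tau_2}$ is semantically equal to $\NORMin{\tau_1}; \NORMout{\tau_2}$, which is in turn semantically equal to $\alpha_1$ by the first step. This yields the desired $\jugEC{\Coe{\tau_1}{\tau_2}}{\alpha_1}{\tau_1}{\tau_2}$. The argument is essentially bookkeeping; the real work has been packaged into Lemma~\ref{lemm:norm}, which shows that every invertible coercion factors denotationally through normalization. I do not foresee a significant obstacle beyond carefully threading the right instantiations and making sure that the compositions are typed appropriately, given that $\NORM{\tau_1} = \NORM{\tau_2}$ makes the intermediate types line up.
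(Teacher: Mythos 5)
Your proposal is correct and matches the paper's own argument, which simply cites Lemma~\ref{lemm:norm} and Lemma~\ref{lemm:reflexivity-of-precedence}; you have just spelled out the same two-lemma combination in detail. No issues.
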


\begin{proof}
  Immediate consequence of~\Reflemm{norm}
  and~\Reflemm{reflexivity-of-precedence}.
\end{proof}

\begin{lemma}[Completness of Algorithmic Subtyping for Delays]
  \label{lemm:completeness-of-algorithmic-subtyping-for-delays}
  For any~$p,q$ and~$\tau$ such
  that~$p \ge q$,~$\Coe{\tyW{p}{\tau}}{\tyW{q}{\tau}}$ is defined, and moreover
  \[
    \jugEC
    {\Coe{\tyW{p}{\tau}}{\tyW{q}{\tau}}}
    {\coeDEL{p}{q}}
    {\tyW{p}{\tau}}
    {\tyW{q}{\tau}}.
  \]
\end{lemma}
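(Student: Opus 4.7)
The plan is to establish the lemma by induction on the size of $\tau$, unfolding the definition of $\Coe{\tyW{p}{\tau}}{\tyW{q}{\tau}}$ through normalization and then precedence, and finally exploiting semantic naturality properties of the warping functors arising from \Refprop{pseudo-monoidal-functor}.

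The base case treats any $\tau$ that is not a product. Writing $\NORM{\tau} = \tyW{r}{\tau''}$ for some warp $r$ and non-product normal $\tau''$ (this form exists by definition of normal types), the normalization rules give $\NORM{\tyW{p}{\tau}} = \tyW{p \opON r}{\tau''}$ and $\NORM{\tyW{q}{\tau}} = \tyW{q \opON r}{\tau''}$. Since $p \ge q$ and composition is monotonic, $p \opON r \ge q \opON r$, so $\Prec{\NORM{\tyW{p}{\tau}}}{\NORM{\tyW{q}{\tau}}}$ is defined and equals $\coeDEL{p \opON r}{q \opON r}; \coeW{q \opON r}{\Prec{\tau''}{\tau''}}$. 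By \Reflemm{reflexivity-of-precedence}, the inner precedence is semantically $\coeID$. Then $\Coe{\tyW{p}{\tau}}{\tyW{q}{\tau}}$ unfolds to
\[
  \coeW{p}{\NORMin{\tau}}; \coeCONCAT{p}{r}; \coeDEL{p \opON r}{q \opON r}; \coeDECAT{q}{r}; \coeW{q}{\NORMout{\tau}}.
\]
I would now use the naturality of $\W{p \ge q} : \W{p} \to \W{q}$ with respect to the monoidal isomorphism $\mu^{-,-}$ (so that $\coeDEL{p \opON r}{q \opON r}$ commutes past the concat/decat pair, leaving $\coeDEL{p}{q}$ wrapped in $\coeW{-}{-}$'s) and with respect to the functoriality of $\W{p}$ (so that $\coeDEL{p}{q}$ slides past $\coeW{p}{\NORMin{\tau}}$ and $\coeW{q}{\NORMout{\tau}}$, turning them into $\coeW{q}{\NORMin{\tau}}; \coeW{q}{\NORMout{\tau}} = \coeW{q}{\NORMin{\tau};\NORMout{\tau}}$). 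Finally, \Refprop{norm-id} collapses this residue to the identity, yielding $\coeDEL{p}{q}$ semantically.

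The inductive case treats $\tau = \tyPROD{\tau_1}{\tau_2}$. Here normalization distributes: $\NORM{\tyW{p}{\tau}} = \tyPROD{\NORM{\tyW{p}{\tau_1}}}{\NORM{\tyW{p}{\tau_2}}}$, similarly for $q$, and $\NORMin{\tyW{p}{\tau}}$ and $\NORMout{\tyW{q}{\tau}}$ thread through $\coeDISTP$ and $\coeFACTP$. Precedence between products uses $\coePROD$ componentwise, and $\Coe{\tyW{p}{\tau}}{\tyW{q}{\tau}}$ is equivalent by \Reflemm{algorithmic-subtyping-is-a-congruence-for-product-types} (and the commutation of warp-on-product normalization with $\NORMin/\NORMout$) to a coercion built from $\Coe{\tyW{p}{\tau_i}}{\tyW{q}{\tau_i}}$ sandwiched between $\coeDISTP$ and $\coeFACTP$. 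Applying the induction hypothesis to each component replaces these with $\coeDEL{p}{q}$ on each factor, and then naturality of $\W{p \ge q}$ with respect to the limit-preserving isomorphism interpreted by $\coeDISTP/\coeFACTP$ fuses them into a single $\coeDEL{p}{q}$ on the product.

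The main obstacle is the base case's semantic manipulation: verifying that $\coeDEL{p \opON r}{q \opON r}$ commutes past $\coeCONCAT{p}{r}/\coeDECAT{q}{r}$ and that the $\coeW{-}{-}$ wrappers can be stripped. These are exactly the naturality squares for the natural transformation $\W{p \ge q}$ with respect to the monoidal multiplication $\mu^{p,r}$ from \Refprop{pseudo-monoidal-functor}; I anticipate these hold on the nose in $\TT$ but isolating the precise equations from the abstract structure requires care, and it may be cleaner to prove them as short auxiliary semantic lemmas before invoking them here.
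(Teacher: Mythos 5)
Your proof is correct, but it takes a more computational route than the paper's. The paper dispatches this lemma in a few lines: since $\Prec{\tau}{\tau}$ is defined and equivalent to $\coeID$ (\Reflemm{reflexivity-of-precedence}), the \emph{un-normalized} precedence $\Prec{\tyW{p}{\tau}}{\tyW{q}{\tau}}$ is defined whenever $p \ge q$ and is semantically $\coeDEL{p}{q}$; the lemma then follows at once from \Reflemm{precedence-normalization}, which states in general that $\Prec{\tau_1}{\tau_2} \cong \NORMin{\tau_1};\Prec{\NORM{\tau_1}}{\NORM{\tau_2}};\NORMout{\tau_2}$, i.e., that precedence commutes semantically with the normalization coercions. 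Your induction on $\tau$ --- unfolding $\NORMin{\tyW{p}{\tau}}$ and $\NORMout{\tyW{q}{\tau}}$ and sliding $\coeDEL{p\opON r}{q\opON r}$ past the concat/decat pair and the $\coeW{-}{-}$ wrappers via naturality of $\W{p \ge q}$ with respect to $\mu^{p,r}$ and the product isomorphism --- is essentially an inlined proof of the relevant instances of \Reflemm{precedence-normalization}. The naturality squares you flag as the main obstacle do hold and are exactly the content of \Refprop{pseudo-monoidal-functor} (monoidality and naturality of $\W{p \ge q}$, plus limit preservation for the product case), so your approach goes through; what the paper's factoring buys is that this bookkeeping is done once, in a prior lemma, rather than redone here.

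One imprecision to fix: your case split is on whether $\tau$ is syntactically a product, but the relevant dichotomy is whether $\NORM{\tau}$ is a product. For $\tau = \tyW{r}{(\tyPROD{\tau_1}{\tau_2})}$ the type $\tau$ is not a product yet $\NORM{\tau}$ is, so your base case's assumption that $\NORM{\tau} = \tyW{r}{\tau''}$ fails there. Restating the induction over the shape of $\NORM{\tau}$ (or over normal types directly) repairs this without changing the substance of the argument.
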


\begin{proof}
  By~\Reflemm{reflexivity-of-precedence},~$\Prec{\tau}{\tau}$ is defined and
  equivalent to~$\coeID$.
  Thus, since~$p \ge q$, by definition we have
  \[
    \jugEC
    {\Prec{\tyW{p}{\tau}}{\tyW{q}{\tau}}}
    {(\coeW{p}{\coeID});\coeDEL{p}{q}}
    {\tyW{p}{\tau}}
    {\tyW{q}{\tau}}.
  \]
  By~\Reflemm{precedence-normalization},~$\Prec{\NORM{\tyW{p}{\tau}}}{\NORM{\tyW{q}{\tau}}}$
  is defined, and thus by definition so is~$\Coe{\tyW{p}{\tau}}{\tyW{q}{\tau}}$.
  We have
  \begin{align*}
    \Coe{\tyW{p}{\tau}}{\tyW{q}{\tau}}
    & \cong
    \NORMin{\tyW{p}{\tau}};
    \Prec{\NORM{\tyW{p}{\tau}}}{\NORM{\tyW{q}{\tau}}};
    \NORMout{\tyW{p}{\tau}}
    \\
    & \cong
    \Prec{\tyW{p}{\tau}}{\tyW{q}{\tau}}
    \\
    & \cong
    (\coeW{p}{\coeID});\coeDEL{p}{q}
    \\
    & \cong
    \coeDEL{p}{q}
  \end{align*}
  where the second equation follows from~\Reflemm{precedence-normalization}.
\end{proof}

\begin{theorem}[Completeness of Algorithmic Subtyping]
  \label{thm:completeness-of-algorithmic-subtyping}
  If~$\jugC{\alpha}{\tau_1}{\tau_2}$, then~$\Coe{\tau_1}{\tau_2}$ is defined,
  and moreover
  \[
    \jugEC
    {\Coe{\tau_1}{\tau_2}}
    {\alpha}
    {\tau_1}
    {\tau_2}.
  \]
\end{theorem}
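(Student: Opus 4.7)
The plan is to proceed by structural induction on the derivation of $\jugC{\alpha}{\tau_1}{\tau_2}$, invoking the lemmas established earlier in this subsection to handle each syntactic form of $\alpha$. Both the definedness of $\Coe{\tau_1}{\tau_2}$ and the semantic identity with $\alpha$ are proved simultaneously, since every lemma I will use packages these two statements together.

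For the base cases involving atomic coercions, I separate the non-invertible atoms from the rest. The non-invertible atoms are exactly the delays $\coeDEL{p}{q}$ with $q \le p$, and these are dispatched directly by \Reflemm{completeness-of-algorithmic-subtyping-for-delays}. Every other atomic coercion appears as one side of an invertible pair in \Reffig{subtyping}: $\coeWRAP/\coeUNWRAP$, $\coeCONCAT{p}{q}/\coeDECAT{p}{q}$, and $\coeDISTP/\coeFACTP$. The coercion $\coeINFL$ is invertible as well, since $\coeDEL{\wOM}{\wID};\coeUNWRAP$ serves as an inverse, as noted in \Refsec{calculus}. All of these are therefore handled by \Reflemm{completeness-of-algorithmic-subtyping-for-invertible-coercions}. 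The identity coercion $\coeID$ is covered by \Reflemm{reflexivity-of-algorithmic-subtyping}.

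The composite cases are purely inductive. Sequential composition $\alpha_1;\alpha_2$ is handled by applying the induction hypothesis to each factor and then invoking \Reflemm{transitivity-of-algorithmic-subtyping} to combine the two algorithmic coercions. Each congruence case $\coeARR{\alpha_1}{\alpha_2}$, $\coePROD{\alpha_1}{\alpha_2}$, $\coeSUM{\alpha_1}{\alpha_2}$, and $\coeW{p}{\alpha}$ follows from the induction hypothesis together with the matching congruence lemma from this subsection. The stream congruence $\coeS{\alpha}$ is not listed as a separate lemma, but it is handled by exactly the same pattern: after applying the induction hypothesis to $\alpha$, inspection of the definitions of $\NORM{-}$ and $\Prec{-}{-}$ on stream types shows that $\Coe{\tyS{\tau}}{\tyS{\tau'}}$ reduces to $\coeS{\Coe{\tau}{\tau'}}$ up to the invertible wrappings of \Reffig{alg-normalize}.

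The main obstacle, modest but not entirely trivial, is to keep track of the semantic equalities through the three-way decomposition $\NORMin{\tau_1};\Prec{\NORM{\tau_1}}{\NORM{\tau_2}};\NORMout{\tau_2}$ defining $\Coe{\tau_1}{\tau_2}$. When the induction hypothesis is applied inside a congruence, adjacent occurrences of $\NORMout{\tau};\NORMin{\tau}$ or $\NORMin{\tau};\NORMout{\tau}$ must be cancelled, and this is precisely the content of \Refprop{norm-id} and \Refcoro{norm}. The denotational compositionality that makes these cancellations work comes from the functorial structure of $\W{(-)}$ and of the other type constructors in $\TT$ described in~\Refsec{denotational-semantics}, together with \Refthm{norm-sound-complete}; once this is made explicit, the inductive steps chain cleanly and the theorem drops out.
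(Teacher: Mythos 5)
Your proof is correct and takes essentially the same route as the paper, whose argument is stated as an immediate induction on derivations using exactly the lemmas you invoke: reflexivity and transitivity of algorithmic subtyping, the congruence lemmas, and the two completeness lemmas for invertible coercions and for delays. Your additional care with the stream congruence case (for which the paper states no separate lemma) and with cancelling the~$\NORMin{\tau};\NORMout{\tau}$ pairs via~\Refprop{norm-id} only makes explicit what the paper leaves implicit.
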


\begin{proof}
  Immediate induction on derivations
  using~Lemmas~\Reflemmshort{reflexivity-of-algorithmic-subtyping},
  \Reflemmshort{transitivity-of-algorithmic-subtyping},
  \Reflemmshort{algorithmic-subtyping-is-a-congruence-for-arrow-types},
  \Reflemmshort{algorithmic-subtyping-is-a-congruence-for-product-types},
  \Reflemmshort{algorithmic-subtyping-is-a-congruence-for-warped-types},
  \Reflemmshort{completeness-of-algorithmic-subtyping-for-invertible-coercions},
  and \Reflemmshort{completeness-of-algorithmic-subtyping-for-delays}.
\end{proof}

\begin{corollary}[Coherence for Coercions]
  \label{coro:coherence-for-coercions}
  For any pair of coercions~$\jugC{\alpha_1}{\tau_1}{\tau_2}$
  and~$\jugC{\alpha_2}{\tau_1}{\tau_2}$, we
  have~$\jugEC{\alpha_1}{\alpha_2}{\tau_1}{\tau_2}$.
\end{corollary}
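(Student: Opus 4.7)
The plan is to observe that this corollary is essentially an immediate consequence of the preceding Completeness of Algorithmic Subtyping theorem, so no new machinery is required. The key point is that $\Coe{\tau_1}{\tau_2}$ is a \emph{deterministic} partial function of $\tau_1$ and $\tau_2$, so it produces one canonical coercion against which every other coercion between these types must equal (denotationally).

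Concretely, I would proceed as follows. Given $\jugC{\alpha_1}{\tau_1}{\tau_2}$ and $\jugC{\alpha_2}{\tau_1}{\tau_2}$, first apply \Refthm{completeness-of-algorithmic-subtyping} to $\alpha_1$ to conclude that $\Coe{\tau_1}{\tau_2}$ is defined and that $\jugEC{\Coe{\tau_1}{\tau_2}}{\alpha_1}{\tau_1}{\tau_2}$. Then apply the same theorem to $\alpha_2$ to obtain $\jugEC{\Coe{\tau_1}{\tau_2}}{\alpha_2}{\tau_1}{\tau_2}$. Unfolding the notation $\jugEC{-}{-}{-}{-}$ (which abbreviates equality of denotations in $\TT$) and composing the two equalities by transitivity in the hom-set $\TT(\sem{\tau_1}, \sem{\tau_2})$ yields $\sem{\jugC{\alpha_1}{\tau_1}{\tau_2}} = \sem{\jugC{\alpha_2}{\tau_1}{\tau_2}}$, which is exactly $\jugEC{\alpha_1}{\alpha_2}{\tau_1}{\tau_2}$.

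There is no real obstacle here: the entire content of the corollary is already packaged inside the completeness theorem, whose proof in turn reduces via the type-normalization and precedence machinery to routine inductions (Lemmas on reflexivity, transitivity, congruence, invertible coercions, and delays). The corollary's role is simply to harvest from completeness the fact that the \emph{canonical} algorithmic coercion is a witness against which all other well-typed coercions between the same types must semantically agree.
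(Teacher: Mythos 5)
Your proposal is correct and matches the paper's own proof exactly: both apply Theorem~\ref{thm:completeness-of-algorithmic-subtyping} to each of $\alpha_1$ and $\alpha_2$ to show that the canonical coercion $\Coe{\tau_1}{\tau_2}$ is defined and denotationally equal to each, then conclude by transitivity of equality in $\TT(\sem{\tau_1},\sem{\tau_2})$.
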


\begin{proof}
  By~\Refthm{completeness-of-algorithmic-subtyping},~$\Coe{\tau_1}{\tau_2}$ is
  defined and equivalent to both~$\alpha_1$ and~$\alpha_2$.
\end{proof}

\subsection{Context Division}
\label{sec:properties-of-context-division}

\begin{lemma}[Completeness of Type Division]
  \label{lemm:completeness-of-type-division}
  For any~$p$ and~$\tau_1$, we have~$\tau_1 \subty \tyW{p}{(\tau_1 \wD p)}$.
  Moreover, if~$\tau_1 \subty \tyW{p}{\tau_2}$,
  then~$\tau_1 \wD p \subty \tau_2$.
\end{lemma}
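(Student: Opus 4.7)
The plan is to prove both parts by reducing to normal types via Theorem~\ref{thm:norm-sound-complete}, then performing induction on the structure of the normalized input type. The main tool is the adjunction for time warps established in~\Refeq{warp-adjoint}, namely $p \opON r \le q \Leftrightarrow r \le q \wD p$, together with the completeness of algorithmic subtyping (\Refthm{completeness-of-algorithmic-subtyping}) to convert subtyping hypotheses into precedence derivations on normal types.

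For the first part, $\tau_1 \subty \tyW{p}{(\tau_1 \wD p)}$, I would first replace $\tau_1$ by $\NORM{\tau_1}$ (using that they are equivalent and that $\tau_1 \wD p = \NORM{\tau_1} \wDn p$ by definition), and then induct on the normal type $\NORM{\tau_1}$. In the product case $\tyPROD{\sigma_1}{\sigma_2}$, the inductive hypothesis yields $\sigma_i \subty \tyW{p}{(\sigma_i \wDn p)}$, and composing with $\coeFACTP$ gives the desired subtyping into $\tyW{p}{\tyPROD{(\sigma_1 \wDn p)}{(\sigma_2 \wDn p)}}$. In the warped-rigid case $\tyW{q}{\tau^r}$, the goal becomes $\tyW{q}{\tau^r} \subty \tyW{p}{\tyW{q \wD p}{\tau^r}}$, which after composing with $\coeDECAT$ reduces to $\tyW{q}{\tau^r} \subty \tyW{p \opON (q \wD p)}{\tau^r}$; by the adjunction applied to $r = q \wD p$, we have $p \opON (q \wD p) \le q$, so this holds by $\coeDEL{q}{p \opON (q \wD p)}$.

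For the second part, assuming $\tau_1 \subty \tyW{p}{\tau_2}$, I would again normalize both sides and argue by induction on $\NORM{\tau_1}$, reading subtyping through the algorithmic formulation. If $\NORM{\tau_1} = \tyPROD{\sigma_1}{\sigma_2}$ and $\NORM{\tau_2} = \tyPROD{\rho_1}{\rho_2}$, then $\tyW{p}{\NORM{\tau_2}} \equiv \tyPROD{\tyW{p}{\rho_1}}{\tyW{p}{\rho_2}}$, precedence forces $\sigma_i \subty \tyW{p}{\rho_i}$ componentwise, and the inductive hypothesis then yields $\sigma_i \wDn p \subty \rho_i$. If $\NORM{\tau_1} = \tyW{q}{\tau^r}$ and $\NORM{\tau_2} = \tyW{r}{\rho}$ with both $\tau^r$ and $\rho$ rigid, then $\NORM{\tyW{p}{\tau_2}}$ simplifies to $\tyW{p \opON r}{\rho}$, and the defining clauses of precedence impose $q \ge p \opON r$ and $\tau^r \subty \rho$; the adjunction turns the first into $r \le q \wD p$, which together with the second yields $\tyW{q \wD p}{\tau^r} \subty \tyW{r}{\rho}$ via $\coeDEL{q \wD p}{r}$ composed with the congruence on the rigid parts.

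The main obstacle will be the mixed cases where the outer shape of $\NORM{\tau_1}$ and $\NORM{\tau_2}$ differ (a product versus a warped rigid, or conversely): these must be ruled out to make the case analysis exhaustive. The argument rests on observing that no precedence rule relates a product normal type to a warped-rigid normal type, so by~\Refthm{completeness-of-algorithmic-subtyping} no coercion exists, contradicting the hypothesis $\tau_1 \subty \tyW{p}{\tau_2}$. Once this structural rigidity of algorithmic subtyping on normal types is established as a small lemma, the induction goes through smoothly.
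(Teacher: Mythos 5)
The paper states this lemma in the appendix without giving a proof, so there is nothing to compare your argument against; judged on its own, your proof is correct and is the natural one within the paper's framework. Both directions check out. In the first part, the key inequality $p \opON (q \wD p) \le q$ is exactly the counit of the adjunction \Refeqshort{warp-adjoint} (take $r := q \wD p$ and use reflexivity on the right-hand side), and composing $\coeDEL{q}{p \opON (q \wD p)}$ with $\coeDECAT{p}{q \wD p}$, together with $\coeFACTP$ in the product case, yields the required coercion into $\tyW{p}{(\NORM{\tau_1} \wDn p)}$. In the second part, reading the hypothesis through $\Prec{\NORM{\tau_1}}{\NORM{\tyW{p}{\tau_2}}}$ via \Refthm{completeness-of-algorithmic-subtyping} correctly extracts the constraint $q \ge p \opON r$ in the warped case, which the adjunction converts into $q \wD p \ge r$, and the componentwise constraints needed for the inductive step in the product case. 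Two small points are worth making explicit in a write-up: first, computing $\NORM{\tyW{p}{\tau_2}}$ from the shape of $\NORM{\tau_2}$ rather than from the syntactic shape of $\tau_2$ relies on the confluence of normalization (\Refprop{confluence}); second, your observation that the mixed cases (product versus warped-rigid) are impossible because no clause of $\Prec{-}{-}$ relates the two head shapes is precisely what makes the case analysis exhaustive, and deserves to be stated as the small auxiliary lemma you propose. There is no circularity in invoking \Refthm{completeness-of-algorithmic-subtyping}, since its proof does not depend on type division.
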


\begin{lemma}[Completeness of Context Division]
  \label{lemm:completeness-of-context-division}
  For any~$p$ and~$\Gamma_1$, we
  have~$\Gamma_1 \subty \tyW{p}{(\Gamma_1 \wD p)}$.
  Moreover, if~$\Gamma_1 \subty \tyW{p}{\Gamma_2}$,
  then~$\Gamma_1 \wD p \subty \Gamma_2$.
\end{lemma}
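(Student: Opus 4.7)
The plan is to lift the type-level result (\Reflemm{completeness-of-type-division}) to contexts by unfolding the pointwise definitions. Recall that a context coercion $\jugC{\beta}{\Gamma}{\Gamma'}$ requires $\dom{\Gamma} = \dom{\Gamma'}$ together with, for each $x$ in the common domain, a coercion $\beta(x)$ of type $\jugC{\beta(x)}{\Gamma(x)}{\Gamma'(x)}$. Both $\tyW{p}{\Gamma}$ and $\Gamma \wD p$ are defined binding-wise, preserving the domain and ordering of $\Gamma$, so in both parts of the statement the contexts being compared have identical shape, and only the per-binding coercions need to be produced.

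For the first part, I would invoke \Reflemm{completeness-of-type-division} at each $x \in \dom{\Gamma_1}$ to obtain a witness $\alpha_x$ of $\Gamma_1(x) \subty \tyW{p}{(\Gamma_1(x) \wD p)}$. Since $(\tyW{p}{(\Gamma_1 \wD p)})(x) = \tyW{p}{(\Gamma_1(x) \wD p)}$ by the pointwise definitions, the map $x \mapsto \alpha_x$ assembles into the required context coercion $\Gamma_1 \subty \tyW{p}{(\Gamma_1 \wD p)}$. For the second part, assume $\jugC{\beta}{\Gamma_1}{\tyW{p}{\Gamma_2}}$; then $\dom{\Gamma_1} = \dom{\Gamma_2}$, and for each $x$ the coercion $\beta(x)$ has type $\Gamma_1(x) \subty \tyW{p}{\Gamma_2(x)}$. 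A second application of the type-level lemma produces a coercion $\Gamma_1(x) \wD p \subty \Gamma_2(x)$ for each binding, which assembles pointwise into $\Gamma_1 \wD p \subty \Gamma_2$.

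There is no real obstacle: the argument is a purely componentwise consequence of the type-level statement, and the only thing to check is that the pointwise definitions of $\tyW{p}{(-)}$ and $(-) \wD p$ commute with selecting a binding, which is immediate from their definitions. Thus the lemma follows by direct lifting, with no new semantic content beyond \Reflemm{completeness-of-type-division}.
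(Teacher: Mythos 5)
Your proof is correct and matches the paper's intent exactly: the paper states this lemma without proof, immediately after the type-level version, precisely because all the relevant context operations ($\tyW{p}{(-)}$, $(-) \wD p$, and context coercions) are defined binding-wise, so the context-level statement is the evident pointwise lifting of \Reflemm{completeness-of-type-division} that you carry out. No gaps.
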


\subsection{Properties of Type Bounds}
\label{sec:properties-of-type-bounds}

\begin{prop}
  $\tau_1 \subty \tau$
  and
  $\tau_2 \subty \tau$
  iff
  $\SUP{\tau_1}{\tau_2} \subty \tau$.
\end{prop}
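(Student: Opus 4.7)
The plan is to reduce the statement to normal types and then proceed by a simultaneous induction proving both that $\SUPn{-}{-}$ computes least upper bounds and that $\INFn{-}{-}$ computes greatest lower bounds on normal types. The reduction uses the definition $\SUP{\tau_1}{\tau_2} \defeq \SUPn{\NORM{\tau_1}}{\NORM{\tau_2}}$ together with~\Refthm{norm-sound-complete}, which gives the equivalences $\tau_i \equiv \NORM{\tau_i}$ needed to replace each $\tau_i$ by its normal form in $\tau_i \subty \tau$. The mutual induction on $\SUPn$ and $\INFn$ is forced by the arrow case in~\Reffig{alg-bounds}, where $\SUPn$ swaps to $\INFn$ on the contravariant position, and symmetrically for $\INFn$.

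For each of the two statements, there are two directions. The "forward" direction requires exhibiting coercions $\tau_i \subty \SUPn{\tau_1}{\tau_2}$ (resp.\ $\INFn{\tau_1}{\tau_2} \subty \tau_i$); these are produced structurally from the induction hypotheses using the congruence properties of subtyping established by~Lemmas~\Reflemmshort{algorithmic-subtyping-is-a-congruence-for-arrow-types} through~\Reflemmshort{algorithmic-subtyping-is-a-congruence-for-warped-types}. The "universal" direction assumes $\tau_1 \subty \tau$ and $\tau_2 \subty \tau$; by~\Refthm{completeness-of-algorithmic-subtyping} the witnessing coercions can be taken of the algorithmic form, which is syntax directed on normal types, so one can read off from their shapes how to factor through $\SUPn{\tau_1}{\tau_2}$ by another application of the induction hypothesis.

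The main obstacle will be the warped case, where $\SUPn{\tyW{p}{\tau_1}}{\tyW{q}{\tau_2}} = \tyW{\INF{p}{q}}{\SUPn{\tau_1}{\tau_2}}$. This relies on two ingredients. First, $\INF{p}{q}$ must truly be the greatest lower bound of $p$ and $q$ in~$\WARP$; this is a preliminary fact about the time-warp lattice, following from the fact that cocontinuous functions from the total order $\NI$ to itself are closed under pointwise suprema and infima (directedness being automatic over a total order). Second, to witness $\tyW{p}{\tau_1} \subty \tyW{\INF{p}{q}}{\SUPn{\tau_1}{\tau_2}}$, I will compose the delay coercion $\coeDEL{p}{\INF{p}{q}}$ (valid since $p \ge \INF{p}{q}$) with the warp-congruence lifting of the induction hypothesis $\tau_1 \subty \SUPn{\tau_1}{\tau_2}$, and symmetrically for~$\tau_2$. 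The universal direction in this case again uses the adjunction $r \ge \INF{p}{q} \Leftrightarrow r \ge p \wedge r \ge q$ on warps combined with the induction hypothesis on the payload types.

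Once these lemmas on normal types are established, the proposition follows: given $\tau_1 \subty \tau$ and $\tau_2 \subty \tau$, apply $\NORM{-}$ everywhere using~\Refthm{norm-sound-complete} to obtain $\NORM{\tau_i} \subty \NORM{\tau}$, invoke the universal property of $\SUPn{\NORM{\tau_1}}{\NORM{\tau_2}}$ on normal types, and transport back through normalization; the converse direction combines the "forward" parts on normal types with transitivity of subtyping.
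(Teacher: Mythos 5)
The paper states this property without proof, so there is no official argument to compare against; judged on its own, your proposal is correct and is precisely the argument the paper's infrastructure is set up to support: reduce to normal forms via~\Refthm{norm-sound-complete}, run a mutual structural induction on $\SUPn{-}{-}$ and $\INFn{-}{-}$ (forced by contravariance of arrows), use the congruence lemmas for the existence direction and the syntax-directedness of $\Prec{-}{-}$ (via~\Refthm{completeness-of-algorithmic-subtyping} and~\Reflemm{norm}) for the universal direction, and handle the warp case through the lattice structure of~$\WARP$. The only point worth making explicit is that $\SUPn{-}{-}$ is a \emph{partial} function, so the left-to-right direction must also establish definedness of $\SUP{\tau_1}{\tau_2}$; this does fall out of your universal-direction analysis, since $\Prec{\NORM{\tau_1}}{\NORM{\tau}}$ and $\Prec{\NORM{\tau_2}}{\NORM{\tau}}$ being defined forces $\NORM{\tau_1}$ and $\NORM{\tau_2}$ to have matching head constructors, but it deserves a sentence. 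Your preliminary claim that $\INF{p}{q}$ and $\SUP{p}{q}$ are time warps and are genuine meets and joins in $\WARP$ is also correct, since binary pointwise minima and maxima commute with directed suprema over the total order~$\NI$.
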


\begin{prop}
  $\tau \subty \tau_1$
  and
  $\tau \subty \tau_2$
  iff
  $\tau \subty \INF{\tau_1}{\tau_2}$.
\end{prop}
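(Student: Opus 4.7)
The plan is to prove this statement as the dual of the preceding proposition, using a simultaneous mutual induction. Since $\INF{\tau_1}{\tau_2}$ is defined as $\INFn{\NORM{\tau_1}}{\NORM{\tau_2}}$, and $\tau_i \equiv \NORM{\tau_i}$ by \Refthm{norm-sound-complete}, I may freely assume that the inputs are in normal form and argue about $\INFn$ directly.

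First, I would establish, jointly with the supremum statement, two auxiliary projection facts: $\INF{\tau_1}{\tau_2} \subty \tau_i$ and, dually, $\tau_i \subty \SUP{\tau_1}{\tau_2}$ for $i \in \{1,2\}$. These follow by structural induction on normal types, using the congruence lemmas for algorithmic subtyping (Lemmas~\Reflemmshort{algorithmic-subtyping-is-a-congruence-for-arrow-types}--\Reflemmshort{algorithmic-subtyping-is-a-congruence-for-warped-types}) together with the facts that $\INF{p}{q} \le p,q$ and $p,q \le \SUP{p}{q}$ in the lattice $\WARP$. The projections already discharge the right-to-left direction of the proposition: from $\tau \subty \INF{\tau_1}{\tau_2}$ and $\INF{\tau_1}{\tau_2} \subty \tau_i$, transitivity (\Reflemm{transitivity-of-algorithmic-subtyping}) gives $\tau \subty \tau_i$.

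For the left-to-right direction, proceed by induction on the combined sizes of the normalized inputs, mutually with the corresponding direction of the previous (supremum) proposition. In the ground, stream, product, and sum cases, the structural congruence lemmas let me decompose the hypotheses $\tau \subty \tau_1$ and $\tau \subty \tau_2$ into subtyping judgments on subterms, apply the inductive hypothesis, and reassemble the result via the relevant congruence. The warp case $\tau = \tyW{r}{\sigma}$, $\tau_i = \tyW{p_i}{\sigma_i}$ rests on the universal property of $\SUP{p_1}{p_2}$ in $\WARP$ (which becomes the infimum at the type level because the warp parameter is \emph{contravariant} under subtyping, as witnessed by $\coeDEL{-}{-}$) combined once more with the congruence for warped types.

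The main obstacle is the arrow case, where the definition $\INFn{\tyARR{\sigma_1}{\sigma_1'}}{\tyARR{\sigma_2}{\sigma_2'}} = \tyARR{\SUP{\sigma_1}{\sigma_2}}{\INF{\sigma_1'}{\sigma_2'}}$ forces me to invoke the supremum proposition contravariantly on the domain while using the inductive hypothesis of this proposition covariantly on the codomain; this is precisely why the two statements must be proved in tandem rather than one after the other. In normal form, arrow types always appear under an outer $\W{p}$ wrapper, so the hypotheses must first be peeled through the warp case before this arrow reasoning can apply. Once the contravariant and covariant pieces are obtained from the two mutually-invoked inductive hypotheses, \Reflemm{algorithmic-subtyping-is-a-congruence-for-arrow-types} recombines them into the required coercion $\tau \subty \INF{\tau_1}{\tau_2}$.
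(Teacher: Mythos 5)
The paper states this property without proof, so there is nothing to compare against directly; judged on its own terms, your argument is sound and is the natural one given the definitions in \Reffig{alg-bounds}. You correctly identify the two load-bearing points: (i) the warp parameter is contravariant under subtyping (since $\coeDEL{p}{q}$ requires $q \le p$), which is why the type-level infimum carries $\SUP{p}{q}$, and (ii) the arrow clause $\INFn{\tyARR{\sigma_1}{\sigma_1'}}{\tyARR{\sigma_2}{\sigma_2'}} = \tyARR{(\SUPn{\sigma_1}{\sigma_2})}{(\INFn{\sigma_1'}{\sigma_2'})}$ forces the infimum and supremum propositions to be proved by a single mutual induction, together with the auxiliary projection/injection facts $\INF{\tau_1}{\tau_2} \subty \tau_i$ and $\tau_i \subty \SUP{\tau_1}{\tau_2}$ needed for the easy direction. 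One wording caveat: the congruence lemmas (\Reflemm{algorithmic-subtyping-is-a-congruence-for-arrow-types} and its siblings) only let you \emph{build} a coercion on a composite type from coercions on its components; to \emph{decompose} the hypotheses $\tau \subty \tau_i$ into judgments on subterms you need an inversion principle, which you should obtain from \Refthm{completeness-of-algorithmic-subtyping} together with the fact that $\Prec{-}{-}$ is defined by structural recursion on matching head constructors (so $\tau \subty \tau'$ iff $\Prec{\NORM{\tau}}{\NORM{\tau'}}$ is defined, and the latter is equivalent to definedness on the components plus the ordering condition on the outer warps). With that substitution the proof goes through; definedness of $\INFn{\NORM{\tau_1}}{\NORM{\tau_2}}$ under the left-hand hypotheses also falls out of the same inversion, since both normal forms must share the skeleton of $\NORM{\tau}$.
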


\subsection{Properties of Algorithmic Type Checking}

\begin{lemma}[Elimination of Context Subtyping]
  \label{lemm:elimination-of-context-subtyping}
  If $\Elab{\Gamma'}{t} = (\tau, e)$, then for any
  $\jugC{\beta}{\Gamma}{\Gamma'}$ there exists $\tau^m,e^m,\alpha^m$ such that
  $\Elab{\Gamma}{t} = (\tau^m, e^m)$, $\jugC{\alpha^m}{\tau^m}{\tau}$ and
  \[
    \jugE{\Gamma}{\beta;e}{e^m;\alpha^m}{\tau}.
  \]
\end{lemma}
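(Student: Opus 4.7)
The plan is to proceed by induction on the implicit term~$t$, following the case analysis built into the definition of~$\Elab$ in~\Reffig{term-elaboration}. For each syntactic form, I would unfold one step of the algorithm on both sides—running~$\Elab{\Gamma}{t}$ and~$\Elab{\Gamma'}{t}$ in lockstep—and combine the induction hypotheses on immediate subterms with the algebraic properties of coercions already established, namely algorithmic completeness~(\Refthm{completeness-of-algorithmic-subtyping}), transitivity and congruence of~$\Coe{-}{-}$, and coherence for coercions~(\Refcoro{coherence-for-coercions}). The latter is crucial: it means that once we have shown the existence of any coercion between two types, all such coercions are denotationally equal, so we may replace any particular composite coercion we build by the canonical one produced by~$\Coe{-}{-}$.

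For the base case of variables, $\Elab{\Gamma'}{x}$ returns $(\Gamma'(x),x)$ and $\Elab{\Gamma}{x}$ returns $(\Gamma(x),x)$, and we take $\alpha^m = \Coe{\Gamma(x)}{\Gamma'(x)}$, which is defined because $\beta(x)$ witnesses this subtyping. For binder cases—$\synFUN{x}{\tau_1}{t}$, $\synCASE$, and $\synREC$—we extend~$\beta$ with~$\coeID$ on the newly-bound variable(s) and invoke the induction hypothesis on the body. The remaining cases for constructors, destructors, projections, pairs, and applications follow by a routine combination of the induction hypothesis on subterms with the coercions that the algorithm inserts around destructed or coerced subexpressions; transitivity and congruence of algorithmic subtyping allow these inserted coercions to absorb the $\alpha^m$'s coming from the recursive calls.

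The hard case, and the main obstacle, is $\synBY{t}{p}$, where the algorithm dispatches to $\Elab{\Gamma' \wD p}{t}$ rather than to $\Elab{\Gamma'}{t}$ directly. To invoke the induction hypothesis we must turn $\beta : \Gamma \subty \Gamma'$ into a context coercion $\Gamma \wD p \subty \Gamma' \wD p$. This is obtained from~\Reflemm{completeness-of-context-division}: composing $\Gamma \subty \Gamma'$ with the canonical subtyping $\Gamma' \subty \tyW{p}{(\Gamma' \wD p)}$ yields $\Gamma \subty \tyW{p}{(\Gamma' \wD p)}$, and the universal property of division then produces the desired $\Gamma \wD p \subty \Gamma' \wD p$. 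The induction hypothesis applied to~$t$ with this coercion provides the inner coercion~$\alpha^m_0$ and a denotational equality; we then define~$\alpha^m \defeq \coeW{p}{\alpha^m_0}$ (composed with the appropriate $\Coe{-}{-}$ coercions generated by the algorithm on both sides). The semantic equation follows by combining the induction hypothesis with the commuting diagrams expressing that~$\W{p}$ is a functor and that~$\coeDEL{p}{q}$ is natural, both of which are instances of the structure guaranteed by~\Refprop{pseudo-monoidal-functor}.

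The denotational identity $\jugE{\Gamma}{\beta;e}{e^m;\alpha^m}{\tau}$ is preserved in every case by the standard compositional arguments on the interpretation of typing derivations, combined with~\Refcoro{coherence-for-coercions} whenever two provably subtype-equivalent coercions appear at intermediate steps. The only subtlety worth flagging is that in the $\synBY$ case, the equation is checked \emph{after} applying~$\W{p}$ to the inner denotational equation, using that~$\W{p}$ preserves the composition of coercions.
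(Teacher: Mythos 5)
Your proposal follows essentially the same route as the paper's proof: induction on~$t$ with all cases but~$\synBY{t}{p}$ being routine, and the warp case handled exactly as in the paper---composing~$\beta$ with~$\Gamma' \subty \tyW{p}{(\Gamma' \wD p)}$ and using the universal property from~\Reflemm{completeness-of-context-division} to obtain~$\Gamma \wD p \subty \Gamma' \wD p$, then taking~$\alpha^m \defeq \coeW{p}{{\alpha^m}'}$ and discharging the denotational equation via~\Refcoro{coherence-for-coercions}. The only small imprecision is the~$\synCASE$ binders: the branch variable types are extracted from the (possibly different) normalized scrutinee types in the two runs, so the context extension needs the coercion induced by the scrutinee's~$\alpha^m$ rather than~$\coeID$, but this is a routine fix and the paper does not detail that case either.
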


\begin{proof}
  By induction on~$t$.
  The proof is mostly straightforward manipulations of induction hypotheses, so
  we only detail the crucial case of~\Refrule{AlgWarp}, where
  \begin{mathpar}
    t = \synBY{t'}{p}
    \and
    e = \Coe{\Gamma'}{\tyW{p}{(\Gamma' \wD p)}};(\synBY{e'}{p})
    \and
    \tau = \tyW{p}{\tau'}
  \end{mathpar}
  with~$(\tau', e') = \Elab{\Gamma' \wD p}{t'}$.
  By~\Reflemm{completeness-of-context-division}
  and~\Refthm{completeness-of-algorithmic-subtyping}, we know
  that~$\Coe{\Gamma}{\tyW{p}{(\Gamma \wD p)}}$ is defined and
  that~$\Gamma \wD p \subty \Gamma' \wD p$
  since~$\Gamma \subty \tyW{p}{(\Gamma' \wD p)}$ by transitivity.
  By the induction hypothesis, we obtain~${\tau^m}',{e^m}',{\alpha^m}'$ such
  that~$\Elab{\Gamma \wD p}{t'} = ({\tau^m}',{e^m}')$ and
  \[
    \jugE
    {\Gamma \wD p}
    {\Coe{\Gamma \wD p}{\Gamma' \wD p};e'}
    {{e^m}';{\alpha^m}'}
    {\tau}.
  \]
  Thus, by definition of~$\Elab{\Gamma}{\synBY{t}{p}}$, we must
  have~$\tau^m \defeq \tyW{p}{{\tau^m}'}$
  and~$e^m \defeq \Coe{\Gamma}{\tyW{p}{(\Gamma \wD p)}};(\synBY{{e^m}'}{p})$.
  We take~$\alpha^m\defeq \coeW{p}{{\alpha^m}'}$.
  We have
  \[
  \begin{array}{c@{~}l}
    &
    \beta;
    \Coe{\Gamma'}{\tyW{p}{(\Gamma' \wD p)}};
    (\synBY{e'}{p})
    \\
    \cong
    &
    \Coe{\Gamma}{\tyW{p}{(\Gamma \wD p)}};
    \coeW{p}{\Coe{\Gamma \wD p}{\Gamma' \wD p}};
    (\synBY{e'}{p})
    \\
    \cong
    &
    \Coe{\Gamma}{\tyW{p}{(\Gamma \wD p)}};
    (
    \synBY
    {
      (
      \Coe{\Gamma \wD p}{\Gamma' \wD p};
      e'
      )
    }
    {p}
    )
    \\
    \cong
    &
    \Coe{\Gamma}{\tyW{p}{(\Gamma \wD p)}};
    (
    \synBY
    {
      (
      {e^m}';
      {\alpha^m}'
      )
    }
    {p}
    )
    \\
    \cong
    &
    \Coe{\Gamma}{\tyW{p}{(\Gamma \wD p)}};
    (
    \synBY{{e^m}'}{p}
    );
    \coeW{p}{{\alpha^m}'}
    \\
    \cong
    &
    e^m;
    \alpha^m
  \end{array}
  \]
  where the first equation follows from~\Refcoro{coherence-for-coercions}.
\end{proof}

\begin{lemma}[Algorithmic Type Checking Commutes with Structure Maps]
  \label{lemm:algorithmic-type-checking-commutes-with-structure-maps}
  If~$\Elab{\Gamma}{t} = (\tau, e)$ is defined, then for
  any~$\sigma \in \structMap{\Gamma}{\Gamma'}$ we
  have~$\Elab{\Gamma'}{\sigma[t]} = (\tau, \sigma[e])$.
\end{lemma}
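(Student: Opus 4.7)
The proof is by structural induction on the implicit term $t$. The overall strategy relies on a crucial observation: the elaboration procedure $\Elab{-}{-}$, together with its auxiliary operations $\Coe{-}{-}$, $\NORM{-}$, $\SUP{-}{-}$, and context division $(-) \wD p$, depends only on the types of context bindings, not on variable names. Hence renaming via $\sigma$ commutes with all these type-level manipulations.

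For the variable case $t = x$, the result follows directly from $\Gamma'(\sigma(x)) = \Gamma(x)$, the defining property of structure maps. For binder cases $\synFUN{y}{\tau_1}{t'}$, $\synCASE{t'}{x_1}{t_1}{x_2}{t_2}$, and $\synREC{y}{\tau}{t'}$, I extend $\sigma$ to $\sigma'$ by the identity on the newly bound variable (assuming Barendregt's convention for freshness with respect to both $\Gamma$ and $\Gamma'$); the extension remains a structure map between the extended contexts, so the IH applies to the bodies. The remaining constructor and destructor cases ($t_1\,t_2$, $(t_1,t_2)$, $\synPROJ{i}{t'}$, $\synINJ{i}{\tau}{t'}$, $\synHEAD{t'}$, $\synTAIL{t'}$, $\synCONS{t_1}{t_2}$, and scalars) are routine applications of the IH to subterms, using the observation above that the type-level operations invoked by $\Elab{-}{-}$ only inspect types and thus commute with variable renaming.

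The main case of interest is $t = \synBY{t'}{p}$, elaborated as $(\tyW{p}{\tau'}, \Coe{\Gamma}{\tyW{p}{(\Gamma \wD p)}}; \synBY{e'}{p})$ with $(\tau', e') = \Elab{\Gamma \wD p}{t'}$. The key step is to observe that $\sigma$ remains a structure map $\Gamma \wD p \to \Gamma' \wD p$: since context division is pointwise, $(\Gamma \wD p)(x) = \Gamma(x) \wD p = \Gamma'(\sigma(x)) \wD p = (\Gamma' \wD p)(\sigma(x))$. Applying the IH to $t'$ in the divided contexts yields $\Elab{\Gamma' \wD p}{\sigma[t']} = (\tau', \sigma[e'])$. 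It then remains to verify that substitution via $\sigma$ commutes with the $\synCOEL{-}{-}$-term produced by elaboration; because $\Coe{\Gamma}{\tyW{p}{(\Gamma \wD p)}}$ is built pointwise and depends only on the types of $\Gamma$, we obtain $\sigma[\Coe{\Gamma}{\tyW{p}{(\Gamma \wD p)}}; \synBY{e'}{p}] = \Coe{\Gamma'}{\tyW{p}{(\Gamma' \wD p)}}; \synBY{\sigma[e']}{p}$, which closes the case. The main obstacle is thus formally pinning down how variable substitution acts on terms containing context coercions (a definition the paper leaves implicit); once that is in place, the commutation follows from the fact that whenever $\sigma(x) = \sigma(y)$, necessarily $\Gamma(x) = \Gamma(y)$ and hence the coercion components at $x$ and $y$ agree, so the induced context coercion on $\Gamma'$ is well-defined.
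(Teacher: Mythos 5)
The paper states this lemma without proof, so there is no official argument to compare against; your structural induction on~$t$, driven by the observation that~$\Elab{-}{-}$ and all of its auxiliary operations~($\Coe{-}{-}$,~$\NORM{-}$,~$\SUP{-}{-}$, and~$(-) \wD p$) inspect only the types stored in the context and never the variable names, is surely the intended argument. The variable, binder, and congruence cases are handled correctly, and the key step in the~$\synBY{t'}{p}$ case---that~$\sigma$ is again a structure map in~$\structMap{\Gamma \wD p}{\Gamma' \wD p}$ because context division is pointwise on types---is exactly right.

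One point deserves sharpening. In the~$\kw{by}$ case you justify the identity~$\sigma[\Coe{\Gamma}{\tyW{p}{(\Gamma \wD p)}};\synBY{e'}{p}] = \Coe{\Gamma'}{\tyW{p}{(\Gamma' \wD p)}};\synBY{\sigma[e']}{p}$ by checking that the transported context coercion is well-defined when~$\sigma(x) = \sigma(y)$, which handles contraction; but the literal syntactic equality claimed by the lemma is also threatened by weakening. When~$\sigma$ is not surjective,~$\Coe{\Gamma'}{\tyW{p}{(\Gamma' \wD p)}}$ is a finite map with a component for every variable of~$\dom{\Gamma'}$, whereas pushing~$\Coe{\Gamma}{\tyW{p}{(\Gamma \wD p)}}$ forward along~$\sigma$ yields a map defined only on~$\Img{\sigma}$; the two agree wherever both are defined but are not equal as finite maps, so~$\sigma[e] $ and the elaborated term differ on these dead components. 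To close this you must either define~$\sigma[\synCOEL{\beta}{e}]$ so that the resulting context coercion is re-expanded to all of~$\dom{\Gamma'}$ using the types of~$\Gamma'$, or weaken the conclusion to equality up to components on variables not occurring in the term~(equivalently, up to denotation), which is all that~\Reflemm{elimination-of-context-subtyping} and~\Refthm{completeness} actually require. With that caveat made explicit, your proof is complete.
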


\begin{theorem}[Completeness of Algorithmic Type Checking]
  \label{thm:completeness-of-algorithmic-typing}
  If~$\jugT{\Gamma}{e}{\tau}$, there exists~$e^m,\tau^m,\alpha^m$ such
  that~$\Elab{\Gamma}{\Er{e}} = (\tau^m, e^m)$
  and~$\jugC{\alpha^m}{\tau^m}{\tau}$,
  and moreover
  \[
    \jugE{\Gamma}{e}{e^m;\alpha^m}{\tau}
    .
  \]
\end{theorem}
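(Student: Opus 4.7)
The plan is to proceed by induction on the typing derivation of~$\jugT{\Gamma}{e}{\tau}$. To tame the non-syntax-directedness, I would work with the canonical derivation provided by~\Refprop{canonical-derivations}, which restricts uses of~\Refrule{Struct} to appear exactly once before each instance of~\Refrule{Var},~\Refrule{Warp}, and~\Refrule{SubL}. In each case I need to exhibit a triple~$(\tau^m, e^m, \alpha^m)$ with~$\Elab{\Gamma}{\Er{e}} = (\tau^m, e^m)$ and~$\jugC{\alpha^m}{\tau^m}{\tau}$, and verify the denotational equality~$\jugE{\Gamma}{e}{e^m;\alpha^m}{\tau}$. The coercion~$\alpha^m$ will typically be built using~$\Coe{\tau^m}{\tau}$, which is well-defined thanks to~\Refthm{completeness-of-algorithmic-subtyping}.

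The routine cases follow by mechanical application of the induction hypotheses combined with~\Refthm{completeness-of-algorithmic-subtyping}. For~\Refrule{Var} and~\Refrule{Const}, the algorithm returns the exact type annotated, so~$\alpha^m = \coeID$ suffices. For~\Refrule{Fun}, \Refrule{Pair}, and~\Refrule{Inj}, congruences of subtyping (Lemmas~\Reflemmshort{algorithmic-subtyping-is-a-congruence-for-arrow-types}, \Reflemmshort{algorithmic-subtyping-is-a-congruence-for-product-types}, and~\Reflemmshort{algorithmic-subtyping-is-a-congruence-for-sum-types}) let me lift the coercion provided by the IH. For destructors~\Refrule{App},~\Refrule{Proj},~\Refrule{Head},~\Refrule{Tail}, and~\Refrule{Case}, the algorithm inserts a normalization coercion~$\Coe{-}{-}$ before destructing, which by~\Refthm{completeness-of-algorithmic-subtyping} coincides denotationally with the identity composed with the IH coercion. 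The~\Refrule{Rec} case follows similarly, with the algorithm inserting a coercion from the inferred body type to the declared fixpoint type. For~\Refrule{SubR}, I simply compose the IH coercion with~$\alpha$, again invoking coherence of coercions (\Refcoro{coherence-for-coercions}) to match semantics.

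The three delicate cases each correspond to one use of~\Refrule{Struct} in the canonical derivation. For~\Refrule{Struct} itself,~\Reflemm{algorithmic-type-checking-commutes-with-structure-maps} lets me push the structure map inside the elaboration. For~\Refrule{SubL}, the elimination lemma~\Reflemm{elimination-of-context-subtyping} is exactly what I need: it absorbs the context coercion into the elaboration, producing a fresh coercion at the result type. The genuinely subtle case is~\Refrule{Warp}, where a derivation with shape~\Refrule{Struct};\Refrule{Warp} produces a judgment~$\jugT{\Gamma}{\synBY{e'}{p}}{\tyW{p}{\tau'}}$ from a derivation of~$\jugT{\Gamma_0}{e'}{\tau'}$ with~$\tyW{p}{\Gamma_0}$ a subcontext of~$\Gamma$; the algorithm meanwhile elaborates under~$\Gamma \wD p$, inserting a coercion~$\Coe{\Gamma}{\tyW{p}{(\Gamma \wD p)}}$. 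Completeness of context division (\Reflemm{completeness-of-context-division}) furnishes a coercion~$\Gamma \wD p \subty \Gamma_0$, which composed with~\Reflemm{elimination-of-context-subtyping} (applied to the IH for~$e'$) yields the required triple.

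The main obstacle I anticipate is orchestrating the interplay between~\Refrule{Warp} and the context-division adjoint in the denotational equality: one must verify that~$\synBY{e'}{p}$ elaborated from~$\Gamma$ denotes the same morphism as the composition of~$\Coe{\Gamma}{\tyW{p}{(\Gamma \wD p)}}$ with~$\synBY{(\text{IH})}{p}$ coerced by~$\coeW{p}{\alpha^m}$. This boils down to~\Refcoro{coherence-for-coercions} plus the naturality of the~$\synBY{-}{p}$ construct under the coercion pulled through the modality, which is precisely the reasoning already performed inside~\Reflemm{elimination-of-context-subtyping}. Once the Warp case is settled, the other cases go through uniformly, yielding the theorem.
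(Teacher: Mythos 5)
Your proposal matches the paper's proof, which likewise proceeds by induction on the typing derivation and relies on the same key ingredients: Lemma~\ref{lemm:elimination-of-context-subtyping}, Lemma~\ref{lemm:algorithmic-type-checking-commutes-with-structure-maps}, completeness of context division, and completeness/coherence of algorithmic subtyping. The paper states only this lemma list without working the cases, so your more detailed case analysis (including the \rn{Warp}/\rn{Struct} interaction) is a faithful elaboration of the same argument.
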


\begin{proof}
  By induction on the typing derivation.
  The proof relies on the properties and lemmas described
  in~\Refsec{properties-of-context-division},~\Refsec{properties-of-type-bounds},
  and~\Refsec{properties-of-type-bounds}, as well
  as~\Reflemm{elimination-of-context-subtyping}
  and~\Reflemm{algorithmic-type-checking-commutes-with-structure-maps}.
\end{proof}

\begin{corollary}[Coherence for Implicit Terms]
  \label{coro:coherence-for-implicit-terms}
  If~$\jugT{\Gamma}{e}{\tau}$,~$\jugT{\Gamma}{e'}{\tau}$, and~$\Er{e} = \Er{e'}$, then~$\jugE{\Gamma}{e}{e'}{\tau}$.
\end{corollary}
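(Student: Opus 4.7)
The plan is to derive the corollary directly from the preceding completeness theorem for algorithmic type checking together with coherence for coercions.

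First, I would apply \Refthm{completeness-of-algorithmic-typing} to each of the two hypotheses. From $\jugT{\Gamma}{e}{\tau}$ I obtain data $(\tau_1^m, e_1^m, \alpha_1^m)$ with $\Elab{\Gamma}{\Er{e}} = (\tau_1^m, e_1^m)$, $\jugC{\alpha_1^m}{\tau_1^m}{\tau}$, and $\jugE{\Gamma}{e}{e_1^m;\alpha_1^m}{\tau}$. Symmetrically, from $\jugT{\Gamma}{e'}{\tau}$ I obtain $(\tau_2^m, e_2^m, \alpha_2^m)$ with $\Elab{\Gamma}{\Er{e'}} = (\tau_2^m, e_2^m)$, $\jugC{\alpha_2^m}{\tau_2^m}{\tau}$, and $\jugE{\Gamma}{e'}{e_2^m;\alpha_2^m}{\tau}$.

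Next I would exploit determinism of elaboration. Since $\Elab{\Gamma}{-}$ is (by construction in \Reffig{term-elaboration}) a partial function of the implicit term and the context, the hypothesis $\Er{e} = \Er{e'}$ forces $\tau_1^m = \tau_2^m$ and $e_1^m = e_2^m$. Call the common values $\tau^m$ and $e^m$. The two coercions $\alpha_1^m$ and $\alpha_2^m$ then both witness $\jugC{-}{\tau^m}{\tau}$, so by \Refcoro{coherence-for-coercions} they have the same denotation: $\sem{\jugC{\alpha_1^m}{\tau^m}{\tau}} = \sem{\jugC{\alpha_2^m}{\tau^m}{\tau}}$.

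Combining these facts, $\sem{e_1^m;\alpha_1^m} = \sem{e_2^m;\alpha_2^m}$ by compositionality of the interpretation of sequential composition (i.e.\ \rn{SubR}), and then transitivity of semantic equality through the two instances of $\jugE{\Gamma}{-}{-}{\tau}$ yields $\jugE{\Gamma}{e}{e'}{\tau}$, as required. There is no real obstacle here: all the technical work has been absorbed into \Refthm{completeness-of-algorithmic-typing} and \Refcoro{coherence-for-coercions}; the only thing to check carefully is that the elaboration procedure really is single-valued on its domain, which is immediate by inspection of \Reffig{term-elaboration}.
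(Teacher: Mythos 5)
Your proposal is correct and follows essentially the same route as the paper's own proof: both apply Theorem~\ref{thm:completeness-of-algorithmic-typing} to each refiner, use the fact that $\Elab{-}{-}$ is a partial function (together with $\Er{e} = \Er{e'}$) to identify the two elaboration results, invoke Corollary~\ref{coro:coherence-for-coercions} to equate the denotations of the residual coercions, and conclude by transitivity. No gaps.
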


\begin{proof}
  By~\Refthm{completeness-of-algorithmic-typing} we
  have~$e^m,\tau^m,\alpha_1^m,e^m_2,\tau^m_2,\alpha_2^m$ such that
  \begin{mathpar}
    \Elab{\Gamma}{\Er{e}} = (\tau^m_1, e^m_1)
    \and
    \jugC{\alpha^m_1}{\tau^m_1}{\tau}
    \and
    \sem{\jugT{\Gamma}{e_1}{\tau_1}}
    =
    \sem{\jugC{\alpha^m_1}{\tau^m_1}{\tau}}
    \circ
    \sem{\jugT{\Gamma}{e^m_1}{\tau^m_1}}
    \and
    \Elab{\Gamma}{\Er{e}} = (\tau^m_2, e^m_2)
    \and
    \jugC{\alpha^m_2}{\tau^m_2}{\tau}
    \and
    \sem{\jugT{\Gamma}{e_2}{\tau_2}}
    =
    \sem{\jugC{\alpha^m_2}{\tau^m_2}{\tau}}
    \circ
    \sem{\jugT{\Gamma}{e^m_2}{\tau^m_2}}.
  \end{mathpar}
  Since~$\Elab{-}{-}$ is a partial function,~$e^m_1 = e^m_2$
  and~$\tau^m_1 = \tau^m_2$.
  By~\Refcoro{coherence-for-coercions},~$\jugEC{\alpha_1^m}{\alpha_2^m}{\tau^m_1}{\tau}$.
  We conclude by transitivity.
\end{proof}

  \tableofcontents
  \listoffigures
}{}

\end{document}